\newcommand{\be}{\begin{equation}}
\newcommand{\ee}{\end{equation}}
\newcommand{\landau}{\mbox{\begin{scriptsize}$\mathcal{O}$\end{scriptsize}}}
\newcommand{\llaa}{\left\langle\hspace{-0.14cm}\left\langle}
\newcommand{\rraa}{\right\rangle\hspace{-0.14cm}\right\rangle}
\newcommand{\laa}{\langle\hspace{-0.08cm}\langle}
\newcommand{\raa}{\rangle\hspace{-0.08cm}\rangle}
\newcommand{\LZ}{L^2(\mathbb{R}^3,\mathbb{C})}
\newcommand{\LZN}{L^2(\mathbb{R}^{3N},\mathbb{C})}
\newcommand{\C}{\mathcal{K}(\phi)(\alpha(\Psi,\phi)+N^{-\eta})}
\newcommand{\Cphi}{(\|\phi\|_\infty+(\ln N)^{1/3}\|\nabla\phi\|_{6,loc})}
\newcommand{\CphiA}{(\|\phi\|_\infty+(\ln N)^{1/3}\|\nabla\phi\|_{6,loc}+\|\dot A\|_\infty)}
\newcommand{\kinf}{\mathcal{K}\in\mathcal{F}}
\newcommand{\Cgamma}{\mathcal{K}(\phi)(\Gamma(\Psi,\phi)+N^{-\eta})}
\newcommand{\dt}{\frac{\text{d}}{\text{d}t}}
\newcommand{\im}{\text{i}}
\newtheorem{theorem}{Theorem}[section]
\newtheorem{lemma}[theorem]{Lemma}
\newtheorem{notation}[theorem]{Notation}
\newtheorem{corollary}[theorem]  {Corollary}
\newtheorem{remark}[theorem]  {Remark}
\newtheorem{definition}[theorem] {Definition}
\newtheorem{proposition}[theorem]{Proposition}
\newenvironment{proof}{\emph{Proof:}}{\begin{flushright} $ \Box $ \end{flushright}}
\renewcommand{\phi}{\varphi}
\newcommand{\potdiff}{Z}
\newcommand{\as}{\gamma^a}
\newcommand{\bs}{\gamma^b}
\newcommand{\cs}{\gamma^c}
\newcommand{\ds}{\gamma^d}
\newcommand{\ajk}{\gamma_{j,k}^a}
\newcommand{\bjk}{\gamma_{j,k}^b}
\newcommand{\cjk}{\gamma_{j,k}^c}
\newcommand{\djk}{\gamma_{j,k}^d}
\newcommand{\ejk}{\gamma_{j,k}^e}
\newcommand{\fjk}{\gamma_{j,k}^f}
\begin{document}

\title{Derivation of the Time Dependent Gross Pitaevskii Equation with External Fields}


\author{Peter Pickl\footnote{
pickl@itp.phys.ethz.ch
Institute for Theoretical Physics, ETH H\"onggerberg, CH-8093
Z\"urich, Switzerland}}

%
%


%


\maketitle

\begin{abstract}

Using a new method \cite{pickl1} it is possible to derive mean field equations from the microscopic $N$ body Schr\"odinger evolution of interacting particles
 without using BBGKY
 hierarchies.

Recently this method was used to derive the Hartree equation for singular interactions \cite{knowles} and the Gross Pitaevskii equation without positivity condition on the interaction \cite{pickl2} where one had to restrict the  scaling behavior of the interaction.

In this paper more general scalings shall be considered assuming positivity of the interaction.

\end{abstract}
\newpage
\tableofcontents\newpage
\section{Introduction}


In this paper we analyze the dynamics of a Bose condensate
of $N$ interacting particles when the external trap --- described by
an external potential $A_t$ --- is changed, for example removed.

We are interested in solutions of the $N$-particle Schr\"odinger
equation \be\label{schroe} \im\dt \Psi_t = H\Psi_t \ee with
some symmetric (under exchange of any two variables) $\Psi_0$ we shall specify below and the Hamiltonian
\be\label{hamiltonian}
 H=-\sum_{j=1}^N \Delta_j+\sum_{1\leq  j< k\leq  N} V_{\beta}(x_j-x_k) +\sum _{j=1}^N A_t(x_j)
 \ee
acting on the Hilbert space $\LZN$, where $\beta\in\mathbb{R}$
stands for the scaling behavior of the interaction. Note, that $\Psi$ depends on $N$. For ease of notation this shall not be indicated (as well as for many other $N$-dependent objects). The
$V_{\beta}$  scale with the particle number in
such a way, that the total interaction energy is (like the total kinetic energy of the $N$ particles) of order one.

For the moment one may think of an interaction
which is given by $V_{\beta}(x)= N^{-1+3\beta} V(N^\beta x)$ for a
compactly supported, spherically symmetric, positive potential $V\in
L^\infty$. The interactions we shall choose below will be of a more
general form.

The $A_t$ describing the trap potential is a time
dependent external potential which we shall choose
--- in contrast to $V_{\beta}$
--- $N$-independent. Note, that $H$ conserves symmetry, i.e. for
any symmetric function $\Psi_0$ also $H\Psi_0$ and thus $\Psi_t$ is
symmetric.

Assume moreover that the initial wave function $\Psi_0$ is a
condensate in the sense that the reduced one particle marginal density
$$\mu^{\Psi_0}:=\int \Psi^*(\cdot,x_2,\ldots,x_N)\Psi(\cdot,x_2,\ldots,x_N)d^3x_2\ldots d^3x_N$$
converges to $|\phi_0\rangle\langle\phi_0|$ in operator norm.

 Under these and some additional technical assumptions
we shall show that also $\mu^{\Psi_t}$ will be a condensate, i.e.  that
there exist $L^2$ functions $\phi_t$ such that in operator norm
$$\lim_{N\to\infty}\mu^{\Psi_t}=|\phi_t\rangle\langle\phi_t|$$ uniform in $t$ on any
compact subset of $\mathbb{R}^+$ and --- under additional decay
conditions on $\phi_t$
--- uniform in $t\in \mathbb{R}^+$.

In addition we shall show that $\phi_t$ solves the differential
equation \be\label{meanfield}i\frac{d}{dt}
\phi_t=\left(-\Delta+A_t+\overline{V}_{\phi_t}\right)\phi_t\ee with $\phi_0$ as
above, where the ``mean field'' $\overline{V}_{\phi_t}$ depends on $\phi_t$
itself, so (\ref{meanfield}) is a non-linear equation.

 For different
regimes of $\beta$ different effective mean field potentials will
appear: For $\beta=0$ each particle feels
$N^{-1}\sum_{j=2}^{N}V(x_1-x_j)\approx \int V(x-y)|\phi_t|^2(y)d^3y$ interactions
as long as the particles are roughly $|\phi_t|^2$-distributed. Hence the mean field is given by $\overline{V}_{\phi_t}=V\star |\phi_t|^2$. This case is less involved than scalings
$0<\beta\leq  1$, thus it fits best to introduce the new method (see \cite{pickl1}).

For $0<\beta$ the interaction becomes $\delta$-like. To be able to
``average out'' the potential it is important to control the
microscopic structure of $\Psi_t$. Assuming that the energy of
$\Psi_t$ is small, the microscopic structure is --- whenever two
particles approach  --- roughly given by the zero energy scattering
state of the potential $V_{\beta}$. Let us for the moment call this zero energy scattering state $f^N_\beta(x)$.
Changing to coordinates $y=N^\beta x$ the zero energy scattering state satisfies
$$N^{2\beta}(-\Delta+\frac{1}{2}N^{-1+\beta}V(y))f^N_\beta(y)=0\;.$$

For $\beta=1$ the scaling of the potential is such that the zero
energy scattering state $f^N_\beta(x)$ of the potential $V_{\beta}$
just scales with $y$, i.e. $f_1^N(x)=f_1^1(Nx)$. Since $\int V_\beta(x) f^N_\beta(x)dx$ equals $8\pi$ times the scattering length of $V_\beta$ it follows that the mean field is given
by $2a|\phi_t|^2$, where $a/(4\pi)$ is the scattering length of $V$.

The microscopic structure formed by the wave functions enables us to generalize the interactions
 when  $\beta=1$ and $V$ is compactly supported:
Since the scattering length of the potential is always smaller than the radius of the support of the potentials,
the coupling constant of the interaction may grow arbitrarily fast in $N$ in that case. Hence we shall also
consider interactions of the form
\be\label{otherscalings}V_{1,\mu}(x)=N^\mu V(N^{-1}x)\ee
with $\mu> 2$. In this case the wave function avoids the interaction regions and still the scattering length and thus the effect of each interaction
is of order $N^{-1}$.

For $0<\beta<1$ the scaling is ``softer'' and the microscopic
structure disappears as $N\to\infty$. Thus the mean field is given
by $V_{\phi_t}=\|V\|_1|\phi_t|^2$. One can also argue, that for
``soft scalings'' the scattering length is in good approximation
given by the first order Born approximation and thus roughly the $L_1$-norm
of the interaction divided by $8\pi$.

Note that the cases $\beta<0$ and $\beta>1$ are of minor interest: In both cases the interaction becomes negligible. In the case $\beta<0$  the interactions are more or less constant over the support of $\Psi$, in the case $\beta>1$  the radius of the support (and with it the scattering length) of the interaction shrinks faster than $N^{-1}$.  Thus the effective interaction felt by each particle becomes negligible.

A proof for the cases   $0<\beta\leq  1$ without external fields based
on a hierarchical method  analogous to BBGKY hierarchies can be found in \cite{erdos1,erdos2}. The simpler, one dimensional case is treated in \cite{adami}. We
shall give an alternative proof in three dimensions including time dependent external
potentials and generalize to scalings of the form (\ref{otherscalings}). Furthermore we shall prove that the convergence holds uniform in time, assuming that $\phi_t$ shows sufficient decay behavior.

In recent years there has been a growing number of experiments with Bose Einstein condensates where the influence of the mean field has been analyzed (see for example \cite{koehl}). In many of these experiments the condensate propagates while an external field is present, for example in the well known atomic laser experiments the condensates are dropped in the gravitational field. Thus a theoretical understanding of the dynamics of Bose Einstein condensates in external fields is appreciated.

\section{Formulation of the Problem}

\subsection{The new method}

The method we shall  use in this paper is in details explained in  \cite{pickl1}. Heuristically speaking it is based on the idea of counting for each time $t$ the relative number of those particles which are
not in the state $\phi_t$  and estimating the time derivative of that value. To put that onto a rigorous level we need to define some projectors first.
\begin{definition}\label{defpro}
Let $\phi\in\LZ$.
\begin{enumerate}
\item For any $1\leq  j\leq  N$ the
projectors $p_j^\phi:\LZN\to\LZN$ and $q_j^\phi:\LZN\to\LZN$ are given by
\begin{align*} p_j^\phi\Psi=\phi(x_j)\int\phi^*(x_j)\Psi(x_1,\ldots,x)d^3x_j\;\;\;\forall\;\Psi\in\LZN
\end{align*}
and $q_j^\phi=1-p_j^\phi$.

We shall also use the bra-ket notation
$p_j^\phi=|\phi(x_j)\rangle\langle\phi(x_j)|$.
\item
For any $0\leq  k\leq  N$ we define the set $$\mathcal{A}_k:=\{(a_1,a_2,\ldots,a_N): a_j\in\{0,1\}\;;\;
\sum_{j=1}^N a_l=k\}$$ and the orthogonal projector $P_{k}^\phi$
acting on $\LZN$ as
$$P_{k}^\phi:=\sum_{a\in\mathcal{A}_k}\prod_{j=1}^N\big(p_{j}^{\phi}\big)^{1-a_j} \big(q_{j}^{\phi}\big)^{a_j}\;.$$
For
negative $k$ and $k>N$ we set $P_{k}^\phi:=0$.
\item
For any function $m:\mathbb{N}^2\to\mathbb{R}^+_0$ we define the
operator $\widehat{m}^{\phi}:\LZN\to\LZN$ as
\be\label{hut}\widehat{m}^{\phi}:=\sum_{j=0}^N m(j,N)P_j^\phi\;.\ee
We shall also need the shifted operators
$\widehat{m}^{\phi}_d:\LZN\to\LZN$ given by
$$\widehat{m}^{\phi}_d:=\sum_{j=d}^{N+d} m(j+d,N)P_j^\phi\;.$$
\end{enumerate}
\end{definition}



\subsection{Derivation of the Gross-Pitaevskii equation}

This paper deals with the case $0<\beta\leq  1$ only. Then (\ref{meanfield})
becomes the Gross Pitaevskii equation \be\label{GP} \im\frac{d}{dt}
\phi_t=\left(-\Delta +A_t\right) \phi_t+
2a|\phi_t|^2\phi_t:=h^{GP}\phi_t\;.\ee

Following \cite{pickl1} we shall define a functional $\alpha:\LZN\otimes\LZ\to\mathbb{R}^+$ such that
\begin{enumerate}
 \item

$\dt\alpha(\Psi_t,\phi_t)$ can be estimated by $\alpha(\Psi_t,\phi_t)+\landau(1)$,
giving good control of $\alpha(\Psi_t,\phi_t)$ via Gr\o nwall.

\item $\alpha(\Psi,\phi)\to0$
implies convergence the reduced one particle density matrix of $\Psi$ to
$|\phi\rangle\langle\phi|$ in trace norm. 
\end{enumerate}

In the case $\beta=0$ it turned out that the choice $$\alpha(\Psi,\phi)=\llaa\Psi,
\left(\widehat{n}^{\phi}\right)^j\Psi\rraa$$ (again $n(k,N)=\sqrt{k/N}$ and $\laa\cdot\raa$ is scalar product on $\LZN$) for arbitrary $j>0$ does the job (see for example \cite{pickl1} and \cite{knowles}, where the cases $j=2$ respectively $j=1$ are treated for different interactions).

Depending on the particular setting slight adjustments of the functional $\alpha$ are sometimes needed to
get sufficient control of $\dt\alpha(\Psi_t,\phi_t)$. When dealing with interactions which peak very fast as $N$ tends to infinity, adding
a functional which takes care of the smoothness of $\Psi$   proves to be helpful. Doing the estimates
it turns out that one needs that $\|\nabla_1 q^\phi_1\Psi\|$ is small (see Lemma \ref{hnorms} (d)). With the Gr\o nwall argument in mind the first idea one might have is to add precisely this term to
$\alpha$, but on the other hand the time derivative of  $\|\nabla_1 q^\phi_1\Psi\|$ is hard to control. Therefore we add
the difference of the energy per particle of $\Psi$ and the Gross-Pitaevskii-energy of $\phi$ to our functional. It is natural to assume that --- if $\mu^\Psi\to|\phi\rangle\langle\phi|$ ---
this difference is initially small and one expects that during time evolution the energy change per particle of $\Psi$ and the
energy change of $\phi$
are approximately the same.

Therefore we shall need the energy functional $\mathcal{E}:\LZN\to \mathbb{R}$
$$
\mathcal{E}(\Psi)=N^{-1}\laa\Psi,H\Psi\raa\;,$$
as well as the
Gross Pitaevskii energy functional $\mathcal{E}^{GP}:\LZ\to \mathbb{R}$  \begin{align}\label{energyfunct}
\mathcal{E}^{GP}(\phi):=&\langle\nabla\phi,\nabla\phi\rangle+\langle\phi,(A_t+a|\phi|^2)\phi\rangle
=\langle\phi,
(h^{GP}-a|\phi|^2)\phi\rangle\;.
\end{align}
Doing the estimates it turns out that $\|\nabla_1 q^\phi_1\Psi\|$ is small in terms of the energy difference plus $\laa\Psi,\widehat{n}^{\phi}\Psi\raa$.
Therefore we choose $\alpha$ in the following way:
\begin{definition}\label{defalpha}
Let 
 $n(k,N):=\sqrt{k/N}$. We define for any $N\in\mathbb{N}$ the functional
$\alpha:\LZN\times\LZ\to\mathbb{R}^+_0$
$$\alpha(\Psi,\phi):=\laa\Psi,\widehat{n}^\phi\Psi\raa+|\mathcal{E}(\Psi)-\mathcal{E}^{GP}(\phi)|\;.$$
\end{definition}

To get good control of
$\laa\Psi_t,\widehat{n}^{\phi_t}\Psi_t\raa$, the
solutions $\phi_t$ of the Gross Pitaevskii equation we shall
consider have to satisfy some additional conditions.

\begin{definition}\label{defGP}
We define the set of ``good'' solutions of the Gross-Pitaevskii equation
$$\mathcal{G}:=\{\phi_t:\im\frac{d}{dt} \phi_t=h^{GP}\phi_t;\;\|\phi_t\|_\infty+\|\Delta\phi_t\|<\infty\;\forall\;t\geq0\}\;.$$
\end{definition}
Furthermore we shall --- depending on $\beta$ --- need some
conditions on the interaction $V_{\beta}$. These conditions shall
include the potentials we used in the introduction, i.e. potentials
which scale like $V_\beta (x)= N^{-1+3\beta} V(N^\beta x)$ as well as scalings of the form (\ref{otherscalings}) for
compactly supported, spherically symmetric, positive potentials $V\in
 L^\infty$.

\begin{definition}\label{defpot}
Let $a>0$. For any $0<\beta\leq 1$ we define the auxiliary set
$$\mathcal{U}_\beta:=\{V_{\beta}\text{ pos. and spher. symm., } V_{\beta}(x)=0\;\forall\;x>R N^{-\beta}\text{ for some }R<\infty \}$$
as well as the set of potentials with appropriate scaling behavior for $0<\beta<1$
\begin{align*}\mathcal{V}_{\beta}:=\{V_{\beta}\in
\mathcal{U}_\beta:
&\lim_{N\to\infty}N^{1-3\beta}\|V_{\beta}\|_\infty<\infty;
 \\&\lim_{N\to\infty}N^{\eta}|\;\|NV_{\beta}\|_1-2a|<\infty\text{ for some
}\eta>0\}\;, \end{align*} and for $\beta=1$
$$\mathcal{V}_{1}:=\{V_{1}\in \mathcal{U}_1: \lim_{N\to\infty}N^{\eta}|4\pi N scat(V_{1})-a|<\infty \text{
for some
}\eta>0\}\;,$$ where $scat(V)$ is the scattering length of the
potential $V$.
\end{definition}

With these definitions we arrive at the main Theorem:

\begin{theorem}\label{theorem}

Let $0<\beta\leq 1$, let   $V_{\beta}\in\mathcal{V}_{\beta}$, let $A_t$ be an external potential with $\sup_{x\in\mathbb{R}^3, t\in\mathbb{R}}|\dot A_t|<\infty$. Let
$\phi_t\in\mathcal{G}$ and $\Psi_0$ be symmetric with $\|\Psi_0\|=1$.
 Then there exists a $\eta>0$ and constants $C_1,C_2<\infty$ such that
\be\label{theoremeq}\alpha(\Psi_t,\phi_t)\leq  C_1e^{C_2(\ln N)^{1/3}\int_0^t
\|\phi_s\|_\infty+\|\nabla\phi_s\|_{6,loc}+\|\dot{A}_s\|_\infty ds}\left(\alpha(\Psi_0,\phi_0)+  N^{-\eta}\right)\;,\ee
where  $\|\cdot\|_{6,loc}:\LZ\to\mathbb{R}^+$ is the ``local $L^6$-norm'' given by $$\|\phi\|_{6,loc}:=\sup_{x\in\mathbb{R}^3}\|\mathds{1}_{|\cdot-x|\leq  1} \phi\|_6\;.$$
\end{theorem}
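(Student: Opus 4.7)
The plan is to establish a Gronwall-type differential inequality
\begin{equation*}
\dt\alpha(\Psi_t,\phi_t)\leq C\,\CphiA\,\bigl(\alpha(\Psi_t,\phi_t)+N^{-\eta}\bigr)
\end{equation*}
and then integrate it by the standard Gronwall lemma to obtain (\ref{theoremeq}). I would split $\alpha=\alpha_1+\alpha_2$ with $\alpha_1:=\laa\Psi_t,\widehat{n}^{\phi_t}\Psi_t\raa$ and $\alpha_2:=|\mathcal{E}(\Psi_t)-\mathcal{E}^{GP}(\phi_t)|$ and differentiate each separately. Since every $p_j^{\phi_t}$ satisfies $\im\partial_t p_j^{\phi_t}=[h^{GP}_j,p_j^{\phi_t}]$, where $h^{GP}_j$ denotes the GP generator acting on the $j$-th coordinate, one checks $\im\partial_t\widehat{n}^{\phi_t}=[H^{GP},\widehat{n}^{\phi_t}]$ with $H^{GP}:=\sum_j h^{GP}_j$, and combining with the Schr\"odinger evolution of $\Psi_t$ yields
\begin{equation*}
\dt\alpha_1=\im\laa\Psi_t,\,[H-H^{GP},\widehat{n}^{\phi_t}]\,\Psi_t\raa.
\end{equation*}
The kinetic and external-field parts cancel, leaving $H-H^{GP}=\sum_{j<k}V_\beta(x_j-x_k)-2a\sum_j|\phi_t(x_j)|^2$.

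\textbf{Two-body commutator.} By the symmetry of $\Psi_t$ the previous expression reduces to a two-body estimate. Following \cite{pickl1}, I would insert $\mathds{1}=p_j^{\phi_t}+q_j^{\phi_t}$ for $j=1,2$ on both sides of $V_\beta(x_1-x_2)$, use the shift identity $\widehat{n}^{\phi_t}q_j^{\phi_t}=q_j^{\phi_t}\widehat{n}_1^{\phi_t}$ to move $\widehat{n}^{\phi_t}$ across projectors, and collect the resulting mixed $p/q$-type contributions. For $\beta<1$, Definition~\ref{defpot} gives $\|NV_\beta\|_1\to 2a$ at rate $N^{-\eta}$, so the dominant $p_1p_2\,V_\beta\,p_1p_2$ piece cancels the mean-field subtraction and the remaining $p_1p_2\,V_\beta\,p_1q_2$, $q_1q_2$, etc.\ terms are controlled by Cauchy--Schwarz together with a Sobolev-type estimate that converts $V_\beta$ into the factors $\|\phi_t\|_\infty$ and $\|\nabla\phi_t\|_{6,loc}$. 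For $\beta=1$ the $L^1$-mass of $V_\beta$ is governed by the Born approximation rather than the scattering length, so one must first replace $V_\beta$ by a mollified auxiliary potential $U_\beta$ of softer scaling which carries the correct $a$; the replacement error is absorbed via the zero-energy scattering equation of $V_\beta$ together with the gradient bound $\|\nabla_1 q_1^{\phi_t}\Psi_t\|^2\leq C(\alpha_2+\alpha_1+N^{-\eta})$ obtained by expanding $\mathcal{E}(\Psi_t)-\mathcal{E}^{GP}(\phi_t)$ in the projector decomposition---which is precisely the reason for adding $\alpha_2$ to $\alpha$.

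\textbf{Energy term and main obstacle.} For $\alpha_2$, the Schr\"odinger and GP evolutions give $\dt\mathcal{E}(\Psi_t)=\laa\Psi_t,\dot A_t(x_1)\Psi_t\raa$ and $\dt\mathcal{E}^{GP}(\phi_t)=\langle\phi_t,\dot A_t\phi_t\rangle$ (interaction and kinetic pieces being time-independent), so the difference is a one-body expectation; a projector decomposition together with $\laa\Psi_t,q_1^{\phi_t}\Psi_t\raa\leq\alpha_1$ (from $k/N\leq\sqrt{k/N}$ for $k\leq N$) bounds it by $C\|\dot A_t\|_\infty(\alpha+N^{-\eta})$. The main obstacle is the two-body commutator at $\beta=1$: the short-range correlation of $\Psi_t$ on scale $N^{-1}$ must be extracted by the $U_\beta$-replacement, and the resulting Coulomb-like tail of the scattering solution, integrated against $q_j^{\phi_t}\Psi_t$'s, forces a weighted H\"older estimate whose logarithmic behaviour over a ball of radius of order one produces the factor $(\ln N)^{1/3}$ in front of $\|\nabla\phi_t\|_{6,loc}$. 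Once these two-body bounds are proved, the Gronwall inequality above closes and the theorem follows by integration.
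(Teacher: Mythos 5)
Your high-level plan --- differentiate $\alpha$, reduce to a two-body commutator via symmetry, decompose with $p/q$ projectors, and close via Gr\o nwall --- matches the paper's outline, and your treatment of $\alpha_2$ (the energy-difference term producing the $\|\dot A_t\|_\infty$ contribution) is essentially what the paper does in Lemma \ref{ableitung}. However, the proposal as written cannot close the Gr\o nwall inequality for $\beta\geq 1/3$, and it makes an assertion that is explicitly false at $\beta=1$.

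First, the term you label as controllable ``by Cauchy--Schwarz together with a Sobolev-type estimate'' includes the fully off-diagonal contribution $N(N-1)\laa\Psi,q_1q_2(\widehat{n}_{-2}-\widehat{n})V_\beta(x_1-x_2)p_1p_2\Psi\raa$. The paper's own heuristic shows that the bound this estimate yields is proportional to $\|V_\beta\|_\infty$, which is only small when $\beta<1/3$. For $1/3\leq\beta<1$ the paper must replace $\alpha$ by an adjusted functional $\Gamma=\sum_{j+k\leq 5}2^{-j-k}\gamma_{j,k}+|\mathcal E-\mathcal E^{GP}|$, where each $\gamma_{j,k}$ carries factors of the microscopic correction $g_{1/4,\beta}=1-f_{1/4,\beta}$ built from the zero-energy scattering state (Definitions \ref{microscopic}, \ref{RST}, \ref{defgammaxi}). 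This five-level iterative dressing of the projectors, whose derivative cancels the problematic interaction term and replaces it by the smoothed potential $W_{1/4}f_{1/4,\beta}\in\mathcal V_{1/4}$, is the central mechanism you have omitted; a bare Gr\o nwall on $\alpha$ itself does not close.

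Second, your claim that $\|\nabla_1 q_1^{\phi_t}\Psi_t\|^2\leq C(\alpha_2+\alpha_1+N^{-\eta})$ is precisely what fails at $\beta=1$: a non-negligible fraction of the kinetic energy is consumed building the short-range correlation structure, so $\|\nabla_1 q_1\Psi\|$ does \emph{not} become small. The paper's Lemma \ref{potdiffer} is stated for $0<\beta<1$ only. At $\beta=1$ one must instead excise neighbourhoods of the scattering centres (the sets $\overline{\mathcal A}_1,\overline{\mathcal B}_1$ of Definition \ref{hdetail}), prove that $\|\mathds 1_{\mathcal A_1}\nabla_1 q_1\Psi\|^2$ is small (Lemma \ref{kineticenergy}), reprove the key kinetic estimate in this cut-off form (Lemma \ref{qqqterm}), and make a second adjustment of the functional with $g_{8/9,1}$ (Definition \ref{lambda2}). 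Also note: you have the Born-approximation remark reversed --- it is for $\beta<1$ that $\|V_\beta\|_1/(8\pi)$ approximates the scattering length; for $\beta=1$ the two genuinely differ, which is exactly why $\mathcal V_1$ is defined via the scattering length rather than the $L^1$-norm. And the $(\ln N)^{1/3}$ comes from $\|h_{\beta_1,\beta}\|_3$ and $\|g_{8/9,1}\|_3$ over a ball of radius $N^{-\beta_1}$ rather than radius one.
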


\begin{remark}
\begin{enumerate}

\item Lieb, Seiringer and Yngvason have proven that for the ground state $\Psi^{gs}$ of a trapped Bose gas and the ground state $\phi^{gs}$ of the respective Gross-Pitaevskii energy functional $\mathcal{E}(\Psi^{gs})-\mathcal{E}^{GP}(\phi^{gs})\to0$ as $N\to\infty$ \cite{lsy}. In \cite{ls} Lieb and Seiringer show that $\mu^{\Psi^{gs}}\to|\phi^{gs}\rangle\langle\phi^{gs}|$. Hence for the ground state of a trapped Bose gas $\lim_{N\to\infty}\alpha(\Psi^{gs},\phi^{gs})=0$.

\item For all
$\eta>0$ one can find a $N>0$ such that $(\ln N)^{1/3}<\eta \ln N$. Thus $e^{(\ln N)^{1/3}}\leq  C e^{\eta\ln N}=CN^\eta$, so if $\alpha(\Psi_0,\phi_0)\leq  C N^{\eta}$ for some $\eta>0$  and if $\int_0^t \|\phi_s\|_\infty+\|\nabla\phi_s\|_{6,loc}+\|\dot{A}_s\|_\infty ds<\infty$
it follows that the right hand side of (\ref{theoremeq}) is small.

\item Using Sobolev $\|\nabla\phi_s\|_{6,loc}\leq \|\nabla\phi_s\|_{6}\leq \|\Delta\phi\|$. Thus $\|\nabla\phi_s\|_{6,loc}$ can be bounded by the square of the Gross-Pitaevskii Energy. 
     
     On the other hand $\|\nabla\phi_s\|_{6,loc}\leq \|\nabla\phi\|_\infty$. Since we are in the defocussing regime one expects when the potential is turned of  that $\|\phi\|_\infty$ and $\|\nabla\phi\|_\infty$ decay like $t^{-3/2}$. Whenever $\int_0^\infty \|\phi_s\|_\infty+\|\nabla\phi_s\|_{6,loc}+\|\dot{A}_s\|_\infty ds<\infty$ the right hand side of (\ref{theoremeq}) is small uniform in $t$.

\item

It has been shown in \cite{pickl1} that $$\lim_{N\to\infty}\laa\Psi,\widehat{n}^{\phi}\Psi\raa=0$$  implies weak convergence of
the reduced one particle density matrix of $\Psi$ against $|\phi\rangle\langle\phi|$ and vice versus. For other equivalent definitions of asymptotic 100\% condensation see \cite{michelangeli}.

\item

 The set $\mathcal{V}_1$ includes potentials with scalings of the form (\ref{otherscalings}).




\end{enumerate}

\end{remark}

\subsection{Skeleton of the Proof}

We shall prove the Theorem via  Gr\o nwall, so
our goal is to show that there exists a $\eta>0$ such that
\be\label{groen}\dt\alpha(\Psi_t,\phi_t)\leq  C(\alpha(\Psi_t,\phi_t)+N^{-\eta})\;. \ee
Therefore we shall define a functional $\alpha':\LZN\otimes\LZN\to\mathbb{R}$ such that $\dt\alpha(\Psi_t,\phi_t)\leq \alpha'(\Psi_t,\phi_t)$. It is convenient to split up $\alpha'=\alpha'_0+\alpha'_1+\alpha'_2$ and treat these summands separately (see Definition \ref{alphasplit} and Lemma \ref{ableitung}).  Then we will show that we can find a respective bound for
$\alpha'(\Psi_t,\phi_t)$. A nice feature of the method we use is that we can avoid propagation estimates on $\Psi_t$ to get (\ref{groen}): Similar as in in \cite{pickl1}
one can estimate the functional $\alpha'(\Psi,\phi)$ uniform in $\Psi$ and $\phi$ in terms of $\alpha(\Psi,\phi)$
and $N^{-\eta}$ times some polynomial in $\|\phi\|_\infty$, $\|\nabla\phi\|_{6,loc}$ and $\|\Delta\phi\|<\infty$. Under the assumption
$\phi_t\in\mathcal{G}$ we get (\ref{groen}).

The proof is organized as follows:
\begin{enumerate}

\item

The respective estimates of the $\alpha'_{0,1,2}(\Psi,\phi)$ shall be given in  section \ref{secb1}. The procedure is similar
as in \cite{pickl1}.
It turns out that
\begin{itemize}
\item We get good control of $\alpha'_0$ for all $0<\beta\leq 1$.
 \item We get sufficient control of $\alpha_1$ for $\beta<1/3$, only.
\item  For $\alpha_2$ some of the estimates are in terms of $\|\nabla_1q_1\Psi\|$ and some estimates require that $\beta<1$.
\end{itemize}

So the next step will be to show that $\|\nabla_1q_1\Psi\|$ is small:
For later reference we shall give in section \ref{secsmooth} an estimate of the interaction energy and an implicit estimate on $\|\nabla_1q_1\Psi\|$
which holds for all $0<\beta\leq  1$.
The result will be used in section \ref{secabl} to control $\|\nabla_1q_1\Psi\|$ in terms of $\alpha(\Psi,\phi)$
and $N^{-\eta}$ for some $\eta>0$ under the restriction $0<\beta<1$.

This enables us to finish the proof of the Theorem for $\beta<1/3$ using Gr\o nwall (section \ref{secproof1}).

\item
After that we   generalize the proof of the Theorem to the case $\beta<1$. We already have good control of $\alpha'_0$ and $\alpha'_2$. To make  $\alpha'_1$ controllable we use the microscopic structure  to adjust $\alpha$ in such a way that the respective adjusted
$\alpha'_1$ is controllable.
Therefore we need some estimates on the microscopic structure of the wave function.
These are given in section \ref{secmic}. In section \ref{secadj1} we adjust $\alpha$ and prove, that the adjustment in fact changes the respective $\alpha_1'$ such that it is controllable for all $0<\beta\leq  1$.
Then we complete in section \ref{secproof3} the proof of the Theorem for $0<\beta<1$.

\item Our final goal is to treat the case $\beta=1$. To be able to use our results of the previous sections, we have to generalize
our estimates on $\|\nabla_1q_1\Psi\|$ to the case $\beta=1$ first. It turns out that $\|\nabla_1q_1\Psi\|$ is in fact
not small for $\beta=1$: Some non-negligible part of the kinetic energy is used to build up the microscopic structure
in that case. Nevertheless we are able to control the kinetic energy of $q_1\Psi$ outside some small set around the
positions of the other particles (section \ref{secabl2}).

In the next section we show that this new estimate is in fact sufficient to recover our old estimates, in particular
Lemma \ref{hnorms} (d).

Similar as in (b) we now make another adjustment of $\alpha$ using again the microscopic structure. We adjust $\alpha$ in such a way that the respective $\alpha'_2$ is controllable also for $\beta=1$ (section \ref{secadj2}).

Finally we complete the proof of the Theorem (section \ref{secproof3}).

\end{enumerate}

\section{Preliminaries}

\begin{notation}
\begin{enumerate}
 \item
Throughout the paper hats $\;\widehat{\cdot}\;$ shall solemnly be
used in the sense of Definition \ref{defpro} (c). The label $n$ shall always be used for the function $n(k,N)=\sqrt{k/N}$.
\item
In the following we shall omit the upper index $\phi$ on $p_j$,
$q_j$, $P_j$, $P_{j,k}$ and $\widehat{\cdot}$. It shall be replaced exclusively in a few formulas where their $\phi$-dependence plays an important role.
\item
We shall need the operator $H^{GP}:=\sum_{j=1} h_j^{GP}$, where
$h_j^{GP}$ is the Gross Pitaevskii (\ref{GP}) operator acting on the
$j^{\text{th}}$ particle.

\item
In our estimates below we shall need  the operator norm $\|\cdot\|_{op}$ defined for any linear operator $f:\LZN\to\LZN$ by
$$\|f\|_{op}:=\sup_{\|\Psi\|=1}\|f\Psi\|\;.$$

\item
Constants appearing in estimates will generically be denoted by $C$. We shall not distinguish constants
 appearing in a sequence of estimates, i.e. in $X\leq  CY\leq  CZ$ the constants may differ.

\end{enumerate}

\end{notation}

First we need some properties of the objects defined in Definition \ref{defpro}

\begin{lemma}\label{kombinatorik}
\begin{enumerate}

\item For any weights $m,r:\mathbb{N}^2\to\mathbb{R}^+_0$ we have
that
$$\widehat{m}\widehat{r}\,=\widehat{mr}=\widehat{r}\,\widehat{m}\;\;\;\;\;\;\;\;\;\;\widehat{m}p_j=p_j\widehat{m}\;\;\;\;\;\;\;\;\;\;\widehat{m}P_{k}=P_{k}\widehat{m}\;.$$
\item Let $n:\mathbb{N}^2\to\mathbb{R}^+_0$ be given by $n(k,N):=\sqrt{k/N}$.
Then the square of $\widehat{n}$ (c.f. (\ref{hut}))
equals the relative particle number operator of particles not in the
state $\phi$, i.e.
\be\label{partnumber}\left(\widehat{n}\right)^2=N^{-1}\sum_{j=1}^Nq_j\;.\ee

\item For any weight $m:\mathbb{N}^2\to\mathbb{R}^+_0$ and any function $f:\mathbb{R}^6\to\mathbb{R}$ and any
$j,k=0,1,2$  $$\widehat{m} Q_j f(x_1,x_2)Q_k=
Q_j f(x_1,x_2)\widehat{m}_{j-k}Q_k\;,$$ where
$Q_0:=p_1 p_2$, $Q_1\in\{p_1q_2,q_1p_2\}$ and
$Q_2:=q_1q_2$.

\item For any weight $m:\mathbb{N}^2\to\mathbb{R}^+_0$ and any function $f:\mathbb{R}^6\to\mathbb{R}$
$$[f(x_1,x_2),\widehat{m}]=\left[f(x_1,x_2),p_1p_2(\widehat{m}-\widehat{m}_2)+p_1q_2(\widehat{m}-\widehat{m}_1)+q_1p_2(\widehat{m}-\widehat{m}_1)\right]$$

\item Let $f\in L^1$, $g\in L^2$, $h\in L^3$ with $h(x)=0$ for all $|x|>0$.
\begin{align}\label{kombeqa}\|p_j f(x_j-x_k)p_j\|_{op}
\leq&  \|f\|_1\|\phi\|_\infty^2\;,
\\ \label{kombeqb}\|g(x_j-x_k)p_j\|_{op}\leq&  \|g\|_2\|\phi\|_\infty
\;,
\\\label{kombeqc}\|h(x_j-x_k)\nabla p_j\|_{op}\leq&  \|h\|_3\|\nabla\phi\|_{6,loc}
\;.\end{align}

\end{enumerate}
\end{lemma}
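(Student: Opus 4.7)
The plan is to handle the five parts in order, with each resting on two elementary facts: the projectors $\{P_k\}_{k=0}^N$ are mutually orthogonal and sum to the identity, and $p_j$ (resp.\ $q_j$) annihilates every summand of $P_k$ in which the coordinate $j$ carries a $q$ (resp.\ a $p$). From this it follows at once that $p_j$ commutes with each $P_k$, because restricting the sum defining $P_k$ to tuples $a$ with $a_j=0$ gives the same operator whether $p_j$ multiplies on the left or right. The identities in (a) then reduce $\widehat{m}\widehat{r}$ to $\sum_k m(k,N)r(k,N)P_k$ via $P_jP_k=\delta_{jk}P_k$, and $[\widehat{m},p_j]=[\widehat{m},P_k]=0$ follow from what was just said. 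For (b), observe that on the range of $P_k$ exactly $k$ of the $q_j$'s act as the identity and the rest vanish (by the same annihilation rule), so $\sum_j q_j P_k = k P_k$; summing in $k$ and dividing by $N$ gives $N^{-1}\sum_j q_j = \widehat{n^2}=(\widehat{n})^2$.

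For (c) I would use a counting argument. Write $Q_k\Psi$ as a superposition of components on which particles $1,2$ together carry $k$ factors of $q$ and particles $3,\dots,N$ carry some number $\ell$ of $q$'s; the operator $f(x_1,x_2)$ acts trivially on coordinates $3,\dots,N$, so it preserves $\ell$, and the subsequent projection by $Q_j$ forces the total $q$-count to be $\ell+j$. Thus on the component of $Q_k\Psi$ with $\ell$ $q$'s among particles $3,\dots,N$, the operator $\widehat{m}$ applied on the left of $Q_j f Q_k$ picks up the weight $m(\ell+j,N)$, while $\widehat{m}_{j-k}$ applied to $Q_k$ (which has total $q$-count $\ell+k$) picks up $m((\ell+k)+(j-k),N)=m(\ell+j,N)$; the two weights agree, giving the identity. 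Part (d) is then a bookkeeping corollary: inserting $1=p_1p_2+p_1q_2+q_1p_2+q_1q_2$ on both sides of $[f,\widehat{m}]$ and using (c) to pull $\widehat{m}$ through each $Q_j f Q_k$ block, the diagonal blocks $j=k$ contribute nothing, and collecting the off-diagonal contributions exactly reproduces the commutator with $p_1p_2(\widehat{m}-\widehat{m}_2)+p_1q_2(\widehat{m}-\widehat{m}_1)+q_1p_2(\widehat{m}-\widehat{m}_1)$, since the shift $d=j-k$ equals $-2$ or $-1$ in the relevant blocks.

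The estimates in (e) are standard. For \eqref{kombeqa}, $p_jf(x_j-x_k)p_j$ is the multiplication operator on the $j$-th slot by the constant $\int|\phi(y)|^2f(y-x_k)\,d^3y$, which is bounded by $\|\phi\|_\infty^2\|f\|_1$ uniformly in $x_k$. For \eqref{kombeqb}, apply $gp_j$ to $\Psi$, compute the square-norm by integrating first over $x_j$, and use $\int|g(x_j-x_k)|^2|\phi(x_j)|^2\,d^3x_j\le\|\phi\|_\infty^2\|g\|_2^2$. For \eqref{kombeqc}, the same strategy gives $\int|h(x_j-x_k)|^2|\nabla\phi(x_j)|^2\,d^3x_j$; since $h$ is supported in a unit ball (reading the intended localization), H\"older with exponents $3/2$ and $3$ bounds this by $\|h\|_3^2\bigl(\int_{|x_j-x_k|\le 1}|\nabla\phi(x_j)|^6\bigr)^{1/3}\le\|h\|_3^2\|\nabla\phi\|_{6,loc}^2$.

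None of the steps is a genuine obstacle; the only point requiring care is keeping the index shifts straight in (c) and (d), where it is easy to confuse $\widehat{m}_{d}$ with $\widehat{m}_{-d}$. I would verify the shift once on a single component of given total $q$-count and then use the rest of the argument mechanically.
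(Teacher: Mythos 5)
Your proposal is correct and follows essentially the same route as the paper's: orthogonality and completeness of the $P_k$ for (a) and (b), the $q$-count on coordinates $3,\dots,N$ for the shift formula (c), the block decomposition with (c) for (d), the convolution identity $p_jf(x_j-x_k)p_j=p_j(f\star|\phi|^2)(x_k)$ plus Young/H\"older for (e). Two small remarks: in (d) the relevant shifts in the block decomposition are $j-k\in\{\pm1,\pm2\}$, not only $-2,-1$ (the $+$ shifts in $p_1p_2(\widehat m-\widehat m_2)+\cdots$ become $-$ shifts in the $(0,k)$ blocks after pulling through via (c)), though you rightly flag this as the place requiring care; and in (e) part \eqref{kombeqb} the paper avoids your direct integration by reducing to \eqref{kombeqa} via $\|gp_j\|_{op}^2=\|p_jg^2p_j\|_{op}$, a purely cosmetic difference.
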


\begin{proof}\begin{enumerate}
 \item
follows immediately from Definition \ref{defpro}, using that $p_j$
and $q_j$ are orthogonal projectors.

\item Note that $\cup_{k=0}^N\mathcal{A}_k=\{0,1\}^N$, so $1=\sum_{k=0}^N P_k$. Using also
$(q_k)^2=q_k$ and $q_k p_k=0$ we get \begin{align*}
N^{-1}\sum_{k=1}^Nq_k=N^{-1}\sum_{k=1}^Nq_k\sum_{j=0}^N
P_j= N^{-1}\sum_{j=0}^N\sum_{k=1}^Nq_k
P_j=N^{-1}\sum_{j=0}^Nj P_j\end{align*} and (b) follows.

\item
Using the definitions above we have  \begin{align*} \widehat{m}
Q_j f(x_1,x_2)Q_k
=&\sum_{l=0}^N m(l)P_l Q_jf(x_1,x_2)Q_k
\end{align*}
The number of projectors $q_j$ in $P_l Q_j$ in the coordinates $j=3,\ldots,N$ is equal to $l-j$. The $p_j$ and $q_j$ with $j=3,\ldots,N$ commute with $Q_jf(x_1,x_2)Q_k$. Thus
$P_l Q_jf(x_1,x_2)Q_k= Q_jf(x_1,x_2)Q_kP_{l-j+k}$ and
\begin{align*}
\widehat{m}
Q_j f(x_1,x_2)Q_k=& \sum_{l=0}^N  m(l) Q_jf(x_1,x_2)Q_kP_{l-j+k}
\\&\hspace{-3cm}= \sum_{l=k-j}^{N+k-j}  Q_jf(x_1,x_2)m(l+j-k)P_{l} Q_k
=Q_j f(x_1,x_2)\widehat{m}_{j-k}Q_k\;.
 \end{align*}

\item First note that
\begin{align}\label{multsev}
&\hspace{-1cm}[f(x_1,x_2),\widehat{m}]-\left[f(x_1,x_2),p_1p_2(\widehat{m}-\widehat{m}_2)+p_1q_2(\widehat{m}-\widehat{m}_1)
+q_1p_2(\widehat{m}-\widehat{m}_1)\right]%
\nonumber\\=&[f(x_1,x_2),q_1q_2\widehat{m}]+\left[f(x_1,x_2),p_1p_2\widehat{m}_2+p_1q_2\widehat{m}_1
+q_1p_2\widehat{m}_1\right]%
\;.
\end{align}
We shall show that the right hand side is zero.
Multiplying the right hand side with $p_1p_2$ from the left one gets
\begin{align*}
&p_1p_2f(x_1,x_2)q_1q_2\widehat{m}+p_1p_2f(x_1,x_2)p_1p_2\widehat{m}_2-p_1p_2\widehat{m}_2f(x_1,x_2)
\\&+p_1p_2f(x_1,x_2)p_1q_2\widehat{m}_1+p_1p_2f(x_1,x_2)q_1p_2\widehat{m}_1%
\end{align*}
Using (c) the latter is zero.
Multiplying (\ref{multsev}) with $p_1q_2$ from the left one gets
\begin{align*}
&p_1q_2f(x_1,x_2)q_1q_2\widehat{m}+p_1q_2f(x_1,x_2)p_1p_2\widehat{m}_2+
p_1q_2f(x_1,x_2)p_1q_2\widehat{m}_1\\&+p_1q_2f(x_1,x_2)q_1p_2\widehat{m}_1-p_1q_2\widehat{m}_1f(x_1,x_2)
\end{align*}
Using (c) the latter is zero. Also multiplying with $q_1p_2$ yields zero due to symmetry in interchanging
$x_1$ with $x_2$.
Multiplying (\ref{multsev}) with $q_1q_2$ from the left one gets
\begin{align*}
&q_1q_2f(x_1,x_2)\widehat{m}q_1q_2-q_1q_2\widehat{m}f(x_1,x_2)+
q_1q_2f(x_1,x_2)p_1p_2\widehat{m}_2+\\&q_1q_2f(x_1,x_2)p_1q_2\widehat{m}_1+q_1q_2f(x_1,x_2)q_1p_2\widehat{m}_1
\end{align*}
which is again zero, thus (\ref{multsev}).

\item To show (\ref{kombeqa}) we use the notation $p_j=|\phi(x_j)\rangle\langle\phi(x_j)|$
\begin{align}\nonumber p_j f(x_j-x_k)p_j=& |\phi(x_j)\rangle\langle \phi(x_j)|f(x_j-x_k)|\phi(x_j)\rangle\langle\phi(x_j)|
\\&\hspace{-3cm}=\;\;\;|\phi(x_j)\rangle\rangle(f\star|\phi|^2)(x_k)\langle\phi(x_j)|=p_j (f\star|\phi|^2)(x_k)\;.
\label{faltungorigin}
\end{align}
It follows that $$\|p_j f(x_j-x_k)p_j\|_{op}\leq  \|f\|_1\|\phi\|_\infty^2\;.$$ With Young we get (\ref{kombeqa}).

For (\ref{kombeqb}) we write
\begin{align*}
\|g(x_j-x_j)p_j\|_{op}^2=&\sup_{\|\Psi\|=1}\|g(x_j-x_j)p_j\Psi\|^2=
\\=&\sup_{\|\Psi\|=1}\laa \Psi,p_j  g(x_j-x_j)^2 p_j\Psi\raa\\\leq& \|p_j  g(x_j-x_j)^2p_j\|_{op}\;.
\end{align*}
With (\ref{kombeqa}) we get (\ref{kombeqb}).
For (\ref{kombeqc}) we have using Young's inequality
\begin{align*}
\|h(x_j-x_k)\nabla p_j\|_{op}\leq&  \sup_{y\in\mathbb{R}^3}|\langle \nabla\phi,h^2(\cdot-y)\nabla\phi\rangle|^{1/2}
\\\leq&  \sup_{y\in\mathbb{R}^3}\|h^2\|^{1/2}_{3/2}\|\mathds{1}_{|\cdot-y|\leq 1}|\nabla\phi|^2\|^{1/2}_{3}\\=&\|h\|_3\|\nabla\phi\|_{6,loc}\;.
\end{align*}
\end{enumerate}
\end{proof}
When doing the estimates we will encounter wave functions where some of the symmetry is broken (at this point the
reader should exemplarily think of the wave function
$V_\beta(x_1-x_2)\Psi$ which is not symmetric under exchange of the variables $x_1$ and $x_3$ for example). Therefore we want to formulate some of
our results for wave functions which are not symmetric under exchange of any two variables $x_j$, $x_k$. This leads to the following
definition

\begin{definition}
We define for any finite set $\mathcal{M}\subset\mathbb{N}$ the space $\mathcal{H}_{\mathcal{M}}\subset\LZN$ of
functions which are symmetric in all variables but those in $\mathcal{M}$
\begin{align*}\Psi\in \mathcal{H}_{M}\Leftrightarrow& \Psi(x_1,\ldots,x_j,\ldots,x_k,\ldots,x_N)=\Psi(x_1,\ldots,x_k,\ldots,x_j,\ldots,x_N)\\&\text{ for all } j,k\notin\mathcal{M}\;.
\end{align*}
and the operator norm $\|\cdot\|_{\mathcal{M}}$ on $\mathcal{H}_{\mathcal{M}}\to\LZN$ by
$$\|A\|_{\mathcal{M}}:=\sup_{\Psi,\chi\in\mathcal{H}_{\mathcal{M}};\|\Psi\|=\|\chi\|=1}\laa\chi,A\Psi\raa\;.$$

\end{definition}

With Definition \ref{defpro} we arrive directly at the following Lemma
based on combinatorics of the $p_j$ and $q_j$:
\begin{lemma}\label{kombinatorikb}

For any $f:\mathbb{N}^2\to\mathbb{R}^+_0$ and any finite set $\mathcal{M}_a\subset\mathbb{N}$ with $1\notin\mathcal{M}_a$
and any finite set $\mathcal{M}_b\subset\mathbb{N}$ with $1,2\notin\mathcal{M}_b$ there exists a $C>\infty$ such that
\begin{align}\label{komb1}
\left\| \widehat{f} q_1\Psi\right\|^2\leq& C
\|\widehat{f}\widehat{n}\Psi\|^2\;\;\;\;\;\;\;\;\;\;\text{ for any }\Psi\in\mathcal{H}_{\mathcal{M}_a},\;N>|\mathcal{M}_a|\\
\label{komb2} \left\| \widehat{f}
q_1 q_2\Psi\right\|^2\leq& C
\|\widehat{f}(\widehat{n})^2\Psi\|^2\;\;\;\;\;\text{ for any }\Psi\in\mathcal{H}_{\mathcal{M}_b},\;N>|\mathcal{M}_b|\;.\end{align}

\end{lemma}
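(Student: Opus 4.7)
The plan is to use that $\widehat{f}$ commutes with every $p_j$ and hence with every $q_j=1-p_j$ by Lemma~\ref{kombinatorik}(a), together with the identity $(\widehat{n})^2=N^{-1}\sum_{j=1}^N q_j$ from Lemma~\ref{kombinatorik}(b), to convert an expression involving a specific $q_1$ or $q_1q_2$ into a symmetric sum over all indices which is then recognized as $\|\widehat{f}\widehat{n}\Psi\|^2$ or $\|\widehat{f}(\widehat{n})^2\Psi\|^2$.

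For (\ref{komb1}), I first write $\|\widehat{f}q_1\Psi\|^2=\laa\Psi,q_1\widehat{f}^2\Psi\raa$ using $q_1^2=q_1$ and $[\widehat{f},q_1]=0$. Each summand $\laa\Psi,q_j\widehat{f}^2\Psi\raa=\|\widehat{f}q_j\Psi\|^2$ is non-negative, and the partial symmetry of $\Psi$ forces all such terms with $j\notin\mathcal{M}_a$ to agree with the $j=1$ term. Hence
$$(N-|\mathcal{M}_a|)\,\|\widehat{f}q_1\Psi\|^2\;\leq\;\sum_{j=1}^N\laa\Psi,q_j\widehat{f}^2\Psi\raa\;=\;N\,\|\widehat{f}\widehat{n}\Psi\|^2,$$
which yields (\ref{komb1}) with $C$ bounded uniformly in $N>|\mathcal{M}_a|$ (the worst case being $N=|\mathcal{M}_a|+1$, which gives $C=|\mathcal{M}_a|+1$).

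The argument for (\ref{komb2}) is structurally identical: starting from $\|\widehat{f}q_1q_2\Psi\|^2=\laa\Psi,q_1q_2\widehat{f}^2\Psi\raa$, one notes that for distinct $j\ne k$ the operator $q_jq_k$ is a commuting product of orthogonal projections, hence itself a non-negative self-adjoint projection, and $\laa\Psi,q_jq_k\widehat{f}^2\Psi\raa=\|\widehat{f}q_jq_k\Psi\|^2\ge 0$. Partial symmetry extends the $(1,2)$-term to any ordered pair of distinct indices outside $\mathcal{M}_b$; summing those $(N-|\mathcal{M}_b|)(N-|\mathcal{M}_b|-1)$ pairs and filling in the remaining non-negative terms to complete a full double sum gives
$$(N-|\mathcal{M}_b|)(N-|\mathcal{M}_b|-1)\,\|\widehat{f}q_1q_2\Psi\|^2\;\leq\;\sum_{j,k=1}^N\laa\Psi,q_jq_k\widehat{f}^2\Psi\raa\;=\;N^2\,\|\widehat{f}(\widehat{n})^2\Psi\|^2,$$
where the last equality uses $(\widehat{n})^4=N^{-2}\sum_{j,k}q_jq_k$. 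Since $1,2\notin\mathcal{M}_b$ forces $N-|\mathcal{M}_b|\ge 2$, the prefactor is bounded uniformly in $N$. There is no real obstacle; the one thing to keep straight is that the terms added or removed when passing between $\sum_{j\notin\mathcal{M}}$ and $\sum_{j=1}^N$ (or the analogous pair sums) are all non-negative quantities of the form $\|\widehat{f}q_j\Psi\|^2$ or $\|\widehat{f}q_jq_k\Psi\|^2$, so the inequalities go in the claimed direction.
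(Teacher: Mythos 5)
Your argument is correct and mirrors the paper's proof: both use the identity $(\widehat{n})^2 = N^{-1}\sum_{j} q_j$ (and its square for the second estimate), invoke partial symmetry to identify each term indexed outside $\mathcal{M}$ with the $q_1$ respectively $q_1q_2$ term, and then discard the remaining non-negative summands. The only superficial difference is the direction of the chain---you build up from $\|\widehat{f}q_1\Psi\|^2$ to the full symmetric sum, while the paper starts from $\|\widehat{f}\widehat{n}\Psi\|^2$ and restricts the sum downward---so this is essentially the same computation read in the opposite order.
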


\begin{proof}
Let $\Psi\in\mathcal{H}_{\mathcal{M}_a}$ for some finite set $1\in\mathcal{M}_a\subset\mathbb{N}$.
For (\ref{komb1}) we can write using symmetry of $\Psi$ and Lemma \ref{kombinatorik} (b)
\begin{align*}\|\widehat{f}\widehat{n}\Psi\|^2
=&\laa\Psi,(\widehat{f})^{2}(\widehat{n})^2\Psi\raa
=N^{-1}\sum_{k=1}^N\laa\Psi,(\widehat{f})^{2}q_k\Psi\raa
\\\geq&  N^{-1}\sum_{k\notin\mathcal{M}_a}\laa\Psi,(\widehat{f})^{2}q_k\Psi\raa
=\frac{N-|\mathcal{M}_a|}{N}\laa\Psi,(\widehat{f})^{2}q_1\Psi\raa
\\=&\frac{N-|\mathcal{M}_a|}{N}\|\widehat{f}
q_1\Psi\|^2\;.
 \end{align*}
Similarly we have for $\Psi\in\mathcal{H}_{\mathcal M_b}$
 \begin{align*} \|\widehat{f}
(\widehat{n})^2\Psi\|^2 =&\laa\Psi,(\widehat{f}
)^2(\widehat{n})^4\Psi\raa
\geq N^{-2}\sum_{j,k\notin\mathcal{M}_b}\laa\Psi,(\widehat{f} )^2q_j
q_k\Psi\raa
\\=&\frac{N-|\mathcal M|(N-|\mathcal M|-1)}{N^2}\laa\Psi,(\widehat{f} )^2q_1
q_2\Psi\raa+\frac{|\mathcal M|}{N^2}\laa\Psi,(\widehat{f}
)^2q_1\Psi\raa
\\\geq& \frac{N-|\mathcal M|(N-|\mathcal M|-1)}{N^2} \|\widehat{f} q_1q_2\Psi\|
 \end{align*}
and the Lemma follows.
\end{proof}
Our next step is to define the functionals $\alpha'_{j}$, $j=0,1,2$ which, as explained above, control the time derivative of $\alpha(\Psi_t,\phi_t)$.
\begin{definition}\label{alphasplit}
Using the notation
$$\potdiff_\beta(x_j,x_k):=V_{\beta}(x_j-x_k)-\frac{2a}{N-1}|\phi|^2(x_j)-\frac{2a}{N-1}|\phi|^2(x_k)$$
we define  functionals $\alpha_{0,1,2}':\LZN\to\mathbb{R}^+$ by
\begin{align}\label{fnochdao} \alpha_{0}'(\Psi,\phi)%
=&\left|\laa\Psi_t,\dot A\Psi_t\raa-\langle\phi,\dot A\phi\rangle\right|\\
\label{fnochda} \alpha_{1}'(\Psi,\phi)
=&2N(N-1)\Im\left(\laa\Psi ,\potdiff_\beta(x_1,x_2) p_1p_2(\widehat{n}-\widehat{n}_{2})
\Psi\raa\right) \\
\label{fnochda2}
\alpha_{2}'(\Psi,\phi)%
=&4N(N-1)\Im\left(\laa\Psi ,\potdiff_\beta(x_1,x_2) p_1q_2(\widehat{n}-\widehat{n}_1)
\Psi\raa\right) \;.\end{align}

\end{definition}

\begin{lemma}\label{ableitung}

For any solution of the Schr\"odinger equation $\Psi_t$ and any
solution of the Gross-Pitaevskii equation $\phi_t$ we have
\be\label{lemmaableitungeq}\left|\frac{d}{dt} \alpha(\Psi_t,\phi_t)\right|\leq  \sum_{j=0}^2\alpha'_j(\Psi_t,\phi_t)\;.\ee

\end{lemma}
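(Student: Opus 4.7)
\emph{Proof plan.} The plan is to split $\alpha(\Psi,\phi)=\laa\Psi,\widehat n^{\phi}\Psi\raa+|\mathcal E(\Psi)-\mathcal E^{GP}(\phi)|$ and bound each summand's time derivative separately, using the elementary fact $|\dt|f||\leq|\dt f|$ and the triangle inequality to match the first summand to $\alpha_1'+\alpha_2'$ and the second to $\alpha_0'$.

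For the energy difference: both $H$ and the GP energy are time dependent only through $A_t$. The usual Schr\"odinger cancellation $\dt\laa\Psi_t,H_t\Psi_t\raa=\laa\Psi_t,\dot H_t\Psi_t\raa$ combined with the symmetry of $\Psi_t$ gives $\dt\mathcal E(\Psi_t)=\laa\Psi_t,\dot A_t(x_1)\Psi_t\raa$. The standard energy-conservation argument for the nonlinear GP flow (insert $\dot\phi_t=-\im h^{GP}\phi_t$ into $\dt\langle\phi_t,(h^{GP}-a|\phi_t|^2)\phi_t\rangle$ and check that the cubic contribution cancels) yields $\dt\mathcal E^{GP}(\phi_t)=\langle\phi_t,\dot A_t\phi_t\rangle$. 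The difference is precisely $\alpha_0'$.

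For the $\widehat n^{\phi_t}$ piece: I would first verify $\dot p_j^{\phi_t}=-\im[h_j^{GP},p_j^{\phi_t}]$ by direct differentiation of $|\phi_t(x_j)\rangle\langle\phi_t(x_j)|$, and lift this to $\dot{\widehat n}^{\phi_t}=-\im[H^{GP},\widehat n^{\phi_t}]$ using that $h_l^{GP}$ commutes with the factors $p_j^{\phi_t},q_j^{\phi_t}$ whenever $l\neq j$. Combined with the Schr\"odinger equation this gives
$$\dt\laa\Psi_t,\widehat n\Psi_t\raa=\im\laa\Psi_t,[H-H^{GP},\widehat n]\Psi_t\raa.$$
Symmetrizing $\sum_j|\phi|^2(x_j)=\tfrac{1}{N-1}\sum_{j<k}(|\phi|^2(x_j)+|\phi|^2(x_k))$ rewrites $H-H^{GP}=\sum_{j<k}\potdiff_\beta(x_j,x_k)$, and the permutation symmetry of $\Psi_t$ collapses the double sum to $\tfrac{N(N-1)}{2}\laa\Psi_t,[\potdiff_\beta(x_1,x_2),\widehat n]\Psi_t\raa$.

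The final step is to apply Lemma \ref{kombinatorik}(d), which replaces $\widehat n$ in the commutator by $A:=p_1p_2(\widehat n-\widehat n_2)+(p_1q_2+q_1p_2)(\widehat n-\widehat n_1)$. Each summand of $A$ is a product of pairwise commuting self-adjoint operators (by Lemma \ref{kombinatorik}(a) the $\widehat n,\widehat n_d$ commute with every $p_j,q_j$), so $A=A^*$ and hence $\laa\Psi,[\potdiff_\beta(x_1,x_2),A]\Psi\raa=2\im\,\Im\laa\Psi,\potdiff_\beta(x_1,x_2)A\Psi\raa$. The $x_1\leftrightarrow x_2$ symmetry of both $\Psi$ and $\potdiff_\beta(x_1,x_2)$ equates the $p_1q_2$ and $q_1p_2$ contributions, and after taking absolute values we obtain a bound of the exact form $\alpha_1'+\alpha_2'$. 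The main bookkeeping obstacle will be in this last paragraph---checking the self-adjointness of $A$, and correctly tracking the combinatorial factors of $2$ arising from the pair sum and from the $x_1\leftrightarrow x_2$ symmetry---once this is done the conclusion (\ref{lemmaableitungeq}) drops out.
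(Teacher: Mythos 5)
Your proposal matches the paper's proof essentially step for step---splitting $\alpha$, the Schr\"odinger cancellation for $\mathcal{E}(\Psi_t)$, the identity $\dot{\widehat{n}}^{\phi_t}=-\im[H^{GP},\widehat{n}^{\phi_t}]$ (which you helpfully verify at the level of $\dot p_j^{\phi_t}$; the paper takes it for granted), collapsing the pair sum by symmetry, Lemma~\ref{kombinatorik}(d), and self-adjointness of $A$ to pass to imaginary parts---and your clean argument for $\dt\mathcal{E}^{GP}(\phi_t)=\langle\phi_t,\dot A_t\phi_t\rangle$ via $2\Re\langle\dot\phi_t,h^{GP}\phi_t\rangle=0$ is in fact tidier than the paper's. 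On the bookkeeping you flag as the main risk: your count $\tfrac{N(N-1)}{2}$ is the correct number of unordered pairs in $\sum_{j<k}\potdiff_\beta(x_j,x_k)$, while the paper's displayed line writes $N(N-1)$ and the prefactors $2N(N-1)$, $4N(N-1)$ in Definition~\ref{alphasplit} are calibrated to that larger factor, so a careful version of your argument yields $\dt\laa\Psi_t,\widehat{n}^{\phi_t}\Psi_t\raa=\pm\tfrac12(\alpha_1'+\alpha_2')$ rather than the full $\alpha_1'+\alpha_2'$---which only helps: (\ref{lemmaableitungeq}) (read, as it is used downstream, with $|\alpha_j'|$ on the right) then holds with a factor of two to spare.
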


\begin{proof} For the proof of the Lemma we shall restore the upper index $\phi_t$ to pay respect to the time dependence of
$\widehat{n}^{\phi_t}$.
We have for the time derivative of the first summand of $\alpha$
\begin{align*}\frac{d}{dt}\laa\Psi_t,\widehat{n}^{\phi_t}\Psi_t\raa=&-i\laa H\Psi_t
,\widehat{n}^{\phi_t}\;\Psi_t\raa+i\laa\Psi_t
,\widehat{n}^{\phi_t}\;H\Psi_t\raa\\&+i\laa \Psi_t
,[H^{GP}_t,\widehat{n}^{\phi_t}\;]\Psi_t\raa
\nonumber\\=&-i\laa\Psi_t
,[H-H^{GP}_t,\widehat{n}^{\phi_t}\;]\Psi_t\raa\\=&-iN(N-1)\laa\Psi_t
,[\potdiff_\beta(x_1,x_2),\widehat{n}^{\phi_t}\;]\Psi_t\raa
\;.
\end{align*}

Using Lemma \ref{kombinatorik} (d) it follows that the latter equals
\begin{align*}&-iN(N-1)\laa\Psi_t
,[\potdiff_\beta(x_1,x_2),p_1p_2(\widehat{n}^{\phi_t}-\widehat{n}^{\phi_t}_2)]\Psi_t\raa
\\
&-2iN(N-1)\laa\Psi_t
,[\potdiff_\beta(x_1,x_2),p_1q_2(\widehat{n}^{\phi_t}-\widehat{n}^{\phi_t}_1)]\Psi_t\raa\;.
\end{align*}
Since $\potdiff_\beta$ is selfadjoint this is $\alpha_{1}'+\alpha_{2}'$.

For the second summand of $\alpha$ we have
\begin{align}\label{enbound2}&
\nonumber\frac{d}{dt}\left(\mathcal{E}(\Psi_t)-\mathcal{E}^{GP}(\phi_t)\right)
=\laa\Psi_t,\dot A_t(x_1)\Psi_t\raa-\langle\phi_t,a\left(\frac{d}{dt}|\phi_t|^2\right)\phi_t\rangle
\\\nonumber&-\langle\phi_t,\dot A_t\phi_t\rangle-\langle\phi_t,[(h^{GP}-a|\phi_t|^2),h^{GP}]\phi_t\rangle
\\\nonumber=&\laa\Psi_t,\dot A_t(x_1)\Psi_t\raa-\langle\phi_t,\dot A_t\phi_t\rangle
+\langle\phi_t,[a|\phi_t|^2,h^{GP}]\phi_t\rangle\\&-\langle\phi_t,[a|\phi_t|^2,h^{GP}]\phi_t\rangle\;.
\end{align}
Hence \be\label{ablediff}\dt\left|\mathcal{E}(\Psi_t)-\mathcal{E}^{GP}(\phi_t)\right|\leq \alpha'_{0}(\Psi,\phi)\ee
which proves the Lemma.

\end{proof}

\section{Control of the $\alpha'$ for $\beta<1/3$}\label{secb1}

As a first step we shall prove the Theorem for  scalings $\beta<1/3$. It is not surprising that this case is special: $\beta<1/3$ means that
the mean distance of two particle is much smaller than the radius of the support of the interactions as $N\to\infty$. Thus we are in a regime
where for $|\Psi|^2$-typical  configurations   many of the interactions overlap and one arrives directly at a mean field picture.

Our goal is now to control the functionals $\alpha'_{0,1,2}$ in such a way, that we can conclude that $\alpha(\Psi_t,\phi_t)$ is small via Gr\o nwall. Remember that for the $\phi$'s we are interested in   $\|\phi\|_\infty$, and $\|\nabla\phi\|_{6,loc}\geq\|\Delta\phi\|$ are bounded (see Definition \ref{defGP}). Hence it is sufficient to estimate
$\alpha'_{j}$, $j=0,1,2$ in terms of $\alpha+N^{-\eta}$ for some $\eta>0$ times an arbitrary polynomial in $\|\phi\|_\infty$, $\|\nabla\phi\|_{6,loc}$ and $\|\Delta\phi\|$. So we define
\begin{definition}\label{defK}
The set $\mathcal{F}$ of functionals $\LZ\to\mathbb{R}^+$ is given by
\begin{align*}\mathcal{K}\in\mathcal{F}\Leftrightarrow&\text{ there exists a polynomial }K:(\mathbb{R}^+)^3\to\mathbb{R}^+\text{ such that }
\\&\mathcal{K}(\phi)=K(\|\phi\|_\infty,\|\Delta\phi\|)\;.\end{align*}
\end{definition}

And we want to show that for some $\kinf$, some $\eta>0$ and $j=0,1,2$
$$\alpha'_{j}\leq  \mathcal{K}(\phi)(\alpha(\Psi,\phi)+N^{-\eta})$$
uniform in $(\Psi,\phi)\in\LZN\otimes\LZ$.

Rewrite $\alpha_{1,2}'$ multiplying $\potdiff_\beta$ with  $1=p_1p_1+p_1q_2+q_1p_2+q_1q_1$ from the right.
Lemma \ref{kombinatorik} (c) shows that $$p_1p_2\potdiff_\beta(x_1,x_2) p_1p_2(\widehat{n}-\widehat{n}_{2})=p_1p_2(\widehat{n}-\widehat{n}_{2})\potdiff_\beta(x_1,x_2) p_1p_2$$
is selfadjoint and so is $p_1q_2\potdiff_\beta(x_1,x_2) p_1q_2(\widehat{n}-\widehat{n}_1)$.\\The operator $q_1p_2\potdiff_\beta(x_1,x_2) p_1q_2(\widehat{n}-\widehat{n}_1)$ is invariant under adjunction plus simultaneous exchange of the variable $x_1$ and $x_2$. Thus the sandwiches with $\Psi$ of the respective operators are real and using Lemma \ref{kombinatorik} (c)
\begin{align}
\label{alpha1} \alpha_{1}'(\Psi,\phi)
=&4N(N-1)\Im\left(\laa\Psi ,p_1q_2(\widehat{n}_{-1}-\widehat{n}_{1})\potdiff_\beta(x_1,x_2) p_1p_2
\Psi\raa\right) \\
\label{alpha1b} &+2N(N-1)\Im\left(\laa\Psi ,q_1q_2(\widehat{n}_{-2}-\widehat{n})\potdiff_\beta(x_1,x_2) p_1p_2
\Psi\raa\right) \\
\label{alpha2}
\alpha_{2}'(\Psi,\phi)%
=&4N(N-1)\Im\left(\laa\Psi ,p_1p_1\potdiff_\beta(x_1,x_2) p_1q_2(\widehat{n}-\widehat{n}_1)
\Psi\raa\right)
\\\label{alpha2b}&+4N(N-1)\Im\left(\laa\Psi ,q_1q_2(\widehat{n}_{-1}-\widehat{n})\potdiff_\beta(x_1,x_2) p_1q_2
\Psi\raa\right)
\;.
\end{align}

The task of this section is to estimate all the terms on the right hand sides of (\ref{alpha1}) and (\ref{alpha2})
as well as $\alpha_0'$. This will be done in Lemma \ref{hnorms} below, but let us first give some heuristic arguments why they are small

\begin{itemize}
  \item From a physical point of view (\ref{alpha1}) and (\ref{alpha2}) are the most important. Here we use that in leading order the interaction and the mean field cancel out. Note first that one of the mean field parts in $\potdiff_\beta$ is zero: $p_jq_j=0$, thus $p_1q_1a|\phi(x_1)|p_1p_2=0$.
For the interaction part in $\potdiff_\beta$ we use formula (\ref{faltungorigin}):
$p_1V_\beta(x_1-x_2)p_1=V_\beta\star|\phi|^2(x_2)p_1$.
Since $V_\beta$ is $\delta$-like and its integral is  $a/N$ the latter is $\approx a|\phi|^2(x_2)p_1$, cancelling out
most of the mean field part in $\potdiff_\beta$. Thus
 (\ref{alpha1}) and (\ref{alpha2}) are small.

\item
Since there is neither a $p_1$ nor $p_2$ on the left side of $V_\beta$ in (\ref{alpha1b})  the latter seems at first view to grow with $N$.
It is indeed not small for general non-symmetric normalized $\Psi$: If all the mass of $\Psi$ was concentrated in an area where $x_1\approx x_2$ (which is of course not possible for symmetric $\Psi$), then (\ref{alpha1b}) would in fact grow with $N$. So to estimate (\ref{alpha1b}) we have to use symmetry of $\Psi$. The trick is to estimate $$2N\Im\left(\laa (\widehat{n}_{-2}-\widehat{n})q_1\Psi ,\sum_{j=2}^Nq_j\potdiff_\beta(x_1,x_j) p_1p_j
\Psi\raa\right) $$
which is for symmetric $\Psi$ equal to (\ref{alpha1b}). Note that in view of Lemma \ref{kombinatorik} (b) $(\widehat{n}_{-2}-\widehat{n})q_1\Psi$ is of order $N^{-1}$. Using Cauchy Schwarz we have to control
\begin{align*}\|\sum_{j=2}^Nq_j\potdiff_\beta(x_1,x_j) p_1p_j
\Psi\|^2=\sum_{2\leq  j\leq  N}\laa\Psi, p_1p_j\potdiff_\beta(x_1,x_j)q_j\potdiff_\beta(x_1,x_j) p_1p_j\Psi\raa
\\+2\sum_{2\leq  k<j\leq  N}\laa\Psi, p_1p_k\potdiff_\beta(x_1,x_k)q_kq_j\potdiff_\beta(x_1,x_j) p_1p_j\Psi\raa
\end{align*}
The first line has only $N$ summands. Since $\|V_\beta\|_1$ is of order $N^{-1}$ this line is small if $\|V_\beta\|_\infty\ll1$ which is the case for $\beta<1/3$.

For the second line we can write
\begin{align*}\sum_{2\leq  k<j\leq  N}\laa \sqrt{\potdiff_\beta(x_1,x_k)}p_k\sqrt{\potdiff_\beta(x_1,x_j)}p_1q_j\Psi,\\ \sqrt{\potdiff_\beta(x_1,x_j)}p_j \sqrt{\potdiff_\beta(x_1,x_k)}p_1q_k\Psi\raa\;.
\end{align*} Now we have enough projectors $p$ on both sides of the interactions to be able to integrate them against $\phi$ (c.f.
  Lemma \ref{kombinatorik} (e)) and it is clear that it at least does not grow with $N$.
Below we shall show that for $\beta<1/3$ (\ref{alpha1b}) and (\ref{alpha2}) are in fact bounded by $\alpha+N^{-\eta}$ for some $\eta>0$.

\item To show that (\ref{alpha2b}) is small
one needs to use smoothness of $\Psi$.
We do so in the following way: We introduce a potential $U_{\beta_1,\beta}$ with moderate scaling behavior and the same $L^1$ norm as $V_\beta$. This is done in definition \ref{udef}. The scaling $\beta_1$ of $U_{\beta_1,\beta}$ will be chosen such that (\ref{alpha2b}) with $V_\beta$ replaced by $U_{\beta_1,\beta}$ can be controlled. The difference ---  (\ref{alpha2b}) with $V_\beta$ replaced by $V_\beta-U_{\beta_1,\beta}$ --- we integrate by parts. It follows that (\ref{alpha2b}) can be controlled in terms of $\|\nabla_1q_1\Psi\|$ which is small in view of Lemma \ref{potdiffer}.

\end{itemize}

Before we prove the Theorem let us first do the preparations needed to control (\ref{alpha2b}) as explained right above, i.e. introduce the smeared out interaction $U_{\beta_1,\beta}$.

\begin{definition}\label{udef}
For any $0\leq  \beta_1\leq  \beta\leq 1$ and any $V_{\beta}\in\mathcal{V}_\beta$  we define
$$U_{\beta_1,\beta}(\mathbf{x}):=\left\{
         \begin{array}{ll}
           \frac{3}{4\pi}\|V_{\beta}\|_1N^{3\beta_1}, & \hbox{for $x<N^{-\beta_1}$;} \\
           0, & \hbox{else.}
         \end{array}
       \right.
$$
and \be \label{defh}h_{\beta_1,\beta}(x):=\int |x-y|^{-1}
(V_{\beta}(y)-U_{\beta_1,\beta}(y))d^3y \ee \end{definition}
\begin{lemma}\label{ulemma}
For any $0\leq  \beta_1\leq  \beta<1$ and any $V_{\beta}\in\mathcal{V}_\beta$
\begin{align}\label{nulltez}\Delta h=&V_{\beta}-U_{\beta_1,\beta}\;,\hspace{1cm}&U_{\beta_1,\beta}&\in \mathcal{V}_{\beta_1}\;,\\
\label{erstez}\|h_{\beta_1,\beta}\|\leq&  CN^{-1-\beta_1/2}\;,
 &\|h_{\beta_1,\beta}\|_3&\leq  CN^{-1}(\ln N)^{1/3}\;,
\\\label{zweitez}\|\nabla h_{\beta_1,\beta}\|_1\leq& C N^{-1-\beta_1}\;,
 &\|\nabla h_{\beta_1,\beta}\|&\leq  CN^{-1+\beta/2}
\;. \end{align}

\end{lemma}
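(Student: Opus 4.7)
The first identity $\Delta h_{\beta_1,\beta}=V_\beta-U_{\beta_1,\beta}$ (up to the convention in the constant absorbed into the definition) is immediate from $-\Delta(4\pi|x|)^{-1}=\delta$ applied under the integral. That $U_{\beta_1,\beta}\in\mathcal{V}_{\beta_1}$ is a verification of the four defining properties: positivity and spherical symmetry are built into the definition; the support lies in a ball of radius $N^{-\beta_1}$ so $R=1$ works; $N^{1-3\beta_1}\|U_{\beta_1,\beta}\|_\infty=\frac{3}{4\pi}N\|V_\beta\|_1\to\frac{3a}{2\pi}$ using $V_\beta\in\mathcal{V}_\beta$; and $\|NU_{\beta_1,\beta}\|_1=\|NV_\beta\|_1$ exactly by construction, so the last condition is inherited from $V_\beta$.

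The key geometric input for all the $h$-estimates is that $V_\beta-U_{\beta_1,\beta}$ is spherically symmetric with vanishing integral (the equal $L^1$ norms are precisely the reason for the prefactor $\frac{3}{4\pi}\|V_\beta\|_1N^{3\beta_1}$ in the definition of $U_{\beta_1,\beta}$). By Newton's shell theorem the Newtonian potential of a spherically symmetric, zero-integral source vanishes outside the source, so $h_{\beta_1,\beta}$ and $\nabla h_{\beta_1,\beta}$ are both supported in the ball $|x|\le N^{-\beta_1}$. I then write $h_{\beta_1,\beta}=G_V-G_U$ with $G_V(x)=\int V_\beta(y)|x-y|^{-1}d^3y$ and $G_U(x)=\int U_{\beta_1,\beta}(y)|x-y|^{-1}d^3y$, and bound each inside that ball. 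For $G_U$, the explicit radial formula $G_U(r)=\frac32\|V_\beta\|_1 N^{\beta_1}-\frac12\|V_\beta\|_1 N^{3\beta_1}r^2$ on $r\le N^{-\beta_1}$ gives $\|G_U\|_\infty\le CN^{-1+\beta_1}$ and $|\nabla G_U(x)|\le C\|V_\beta\|_1 N^{3\beta_1}|x|$. For $G_V$ I split the ball $|x|\le N^{-\beta_1}$ into the inner region $|x|\le R N^{-\beta}$, where I use $\|V_\beta\|_\infty\le CN^{-1+3\beta}$ and $\int_{|y-x|\le RN^{-\beta}}|y-x|^{-1}d^3y\le CN^{-2\beta}$, and the annulus $RN^{-\beta}<|x|\le N^{-\beta_1}$, where Newton's theorem gives exactly $G_V(x)=\|V_\beta\|_1|x|^{-1}$ and $|\nabla G_V(x)|=\|V_\beta\|_1|x|^{-2}$.

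Integrating the explicit bounds on these two regions yields all four estimates by straightforward power counting: $\|G_U\|_2^2\le C N^{-2-\beta_1}$, and in the annulus $\int_{RN^{-\beta}}^{N^{-\beta_1}}N^{-2}r^{-2}\cdot r^2\,dr\le CN^{-2-\beta_1}$ gives $\|h_{\beta_1,\beta}\|\le CN^{-1-\beta_1/2}$. The logarithm in the $L^3$ bound appears precisely from $\int_{RN^{-\beta}}^{N^{-\beta_1}}N^{-3}r^{-3}\cdot r^2\,dr=CN^{-3}\log(N^{\beta-\beta_1})\le CN^{-3}\log N$, yielding $\|h_{\beta_1,\beta}\|_3\le CN^{-1}(\ln N)^{1/3}$. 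Similarly, $\|\nabla G_U\|_1\le C\|V_\beta\|_1 N^{-\beta_1}$ and the annulus integral $\int_{RN^{-\beta}}^{N^{-\beta_1}}N^{-1}r^{-2}\cdot r^2\,dr\le CN^{-1-\beta_1}$ give $\|\nabla h_{\beta_1,\beta}\|_1\le CN^{-1-\beta_1}$, while the $L^2$ annulus integral $\int_{RN^{-\beta}}^{N^{-\beta_1}}N^{-2}r^{-4}\cdot r^2\,dr\le CN^{-2+\beta}$ dominates and gives $\|\nabla h_{\beta_1,\beta}\|\le CN^{-1+\beta/2}$.

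The only delicate point is the $L^3$ estimate: each of the pieces $G_V$ and $G_U$ taken separately is essentially of size $N^{-1+\beta_1}$ on a set of volume $N^{-3\beta_1}$, which only gives $CN^{-1}$; the logarithmic factor cannot be avoided because it comes from the intermediate annulus, where neither the $\|V_\beta\|_\infty$-based bound nor the inverse-distance cancellation is tight. Every other estimate reduces, after the Newton-shell truncation, to a polynomially bounded integral and is essentially routine.
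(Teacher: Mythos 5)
Your proposal is correct and follows the same electrostatics argument as the paper (zero total charge, Newton's shell theorem confining $h_{\beta_1,\beta}$ and $\nabla h_{\beta_1,\beta}$ to $|x|\le N^{-\beta_1}$, then power counting on the resulting radial integrals). You are in fact slightly more careful than the paper's own write-up: the paper's one-line assertion that $|h_{\beta_1,\beta}(x)|\le CN^{-1}|x|^{-1}$ implies the $L^3$ bound would give a divergent integral at the origin unless one also invokes the flat bound $|h_{\beta_1,\beta}|\le CN^{-1+\beta}$ inside the support of $V_\beta$; your explicit inner-ball/annulus split supplies exactly that missing cutoff and correctly identifies the annulus $RN^{-\beta}<|x|\le N^{-\beta_1}$ as the source of the $\ln N$ factor.
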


\begin{proof} The Lemma gives in fact a well known result of standard electrostatics: $V_\beta$ can be understood as a
given charge density, $U_{\beta_1,\beta}$ was defined in such a way, that the ``total charge'' is zero. Hence the
potential $h_{\beta,\beta_1}$ is constant
outside the support of $U_{\beta_1,\beta}$ in our case, by definition ($h_{\beta,\beta_1}(x)=0$ decays like $x^{-1}$ as $x\to\infty$) this
constant is zero.

The first statement of the Lemma is almost trivial
 \begin{align*}\Delta h(x)=&\int \Delta |x-y|^{-1}
(V_{\beta}(y)-U_{\beta_1,\beta}(y))d^3y
\\=&V_{\beta}(x)-U_{\beta_1,\beta}(x)\;.
\end{align*}
By definition $U_{\beta_1,\beta}\in\mathcal{U}_{\beta_1}$,  $\|U_{\beta_1,\beta}\|_1=\|V_{\beta}\|_1$ and
$\lim_{N\to\infty}N^{1-3\beta_1}\|U_{\beta_1,\beta}\|_\infty<\infty$. Since $V_\beta\in\mathcal{V}_\beta$ we get that
$\lim_{N\to\infty}N^{1+\eta}(\|U_{\beta_1,\beta}\|_1-a/N)<\infty$ implying $U_{\beta_1,\beta}\in\mathcal{V}_{\beta_1}$
which completes the proof of line (\ref{nulltez}).
Since $\|U_{\beta_1,\beta}\|_1=\|V_{\beta}\|_1$ and
$\Delta h(x)=0$ for $x>N^{-\beta_1}$ it follows that
$h=0$ for $x>N^{-\beta_1}$. Furthermore $|h_{\beta_1,\beta}(x)|<CN^{-1}|x|^{-1}$ and $|\nabla h_{\beta_1,\beta}(x)|<CN^{-1}|x|^{-2}$,
 implying the two equations in line (\ref{erstez}) as well as the first equation in (\ref{zweitez}).

To get the second equation in (\ref{zweitez}) we write for $\nabla h_{\beta_1,\beta}$
\begin{align*} \nabla h_{\beta_1,\beta}(x)=&\int \mathds{1}_{|x-y|<N^{-\beta}}\frac{x-y}{|x-y|^{3}}
(V_{\beta}(y)-U_{\beta_1,\beta}(y))d^3y
\\&+\int \mathds{1}_{|x-y|>N^{-\beta}}\frac{x-y}{|x-y|^{3}}
(V_{\beta}(y)-U_{\beta_1,\beta}(y))d^3y
\end{align*}
Using Young's inequality it follows that
\begin{align*}
 \|\nabla h_{\beta_1,\beta}\|_{\infty}\leq& \left\| \mathds{1}_{|\cdot|<N^{-\beta}}\frac{(\cdot)}{|\cdot|^{3}}\right\|_1
\left(\|V_{\beta}\|_\infty+\|U_{\beta_1,\beta}\|_\infty\right)
\\&+\left\|\mathds{1}_{|\cdot|>N^{-\beta}}\frac{(\cdot)}{|\cdot|^{3}}\right\|_\infty\left(\|V_{\beta}\|_1+\|U_{\beta_1,\beta}\|_1\right)
\\\leq& CN^{-\beta} N^{-1+3\beta}+CN^{2\beta}N^{-1}\;.
\end{align*}
Since  $|\nabla h_{\beta_1,\beta}|<CN^{-1}|x|^{-2}$
\begin{align*}
\|\nabla h_{\beta_1,\beta}\|^2\leq& C\int_{N^{-\beta}}^\infty N^{-2}|x|^{-4}dx + CN^{-3\beta}\|\nabla h_{\beta_1,\beta}\|_\infty^2
\\\leq& C N^{-2+\beta}+C N^{-2+\beta}\;.
\end{align*}
\end{proof}

We now arrive at the central point of this section which is estimating $\alpha'_0$ (\ref{fnochdao}), $\alpha'_1$ ((\ref{alpha1}) and (\ref{alpha1b})) and $\alpha'_2$ ((\ref{alpha2}) and (\ref{alpha2b})) following the strategy explained above.

\begin{lemma}\label{hnorms}

Let $\mathcal{M}\subset\mathbb{N}$ with $1,2\notin\mathcal{M}$ and $m:\mathbb{N}^2\to\mathbb{R}^+$ with $m\leq  n^{-1}$.
\begin{enumerate}

\item Let $0<\beta\leq 1$,  $f\in L^\infty$. Then there exists a $C<\infty$ such that
$$\left|\laa\Psi,f(x_1)\Psi\raa-\langle\phi,f(x)\phi\rangle\right|\leq 2\|f\|_\infty C\alpha(\Psi,\phi)$$
for any   $\Psi\in\mathcal{H}_{\mathcal{M}}$.

\item Let $0<\beta\leq  1$, $V_\beta\in\mathcal{U}_\beta$ with   $\lim_{N\to\infty}N^{\eta}(N\| V_{\beta}f_{\beta_1,\beta}\|_1-a\|\leq \infty$ for some $\eta>0$.  Then
there exists a  $\mathcal{K}\in\mathcal{F}$ and a $\eta>0$ such that  $$N\|p_1p_2
\potdiff_\beta(x_1,x_2) q_1p_2\widehat{m}\|_{\mathcal{M}}\leq  \mathcal{K}(\phi)\|\phi\|_\infty N^{-\eta}
$$

\item Let $0<\beta<1/3$, $V_\beta\in\mathcal{V}_\beta$.  Then
there exists a  $\mathcal{K}\in\mathcal{F}$ and a $\eta>0$ such that  $$N|\laa\Psi
p_1p_2\potdiff_\beta(x_1,x_2)\widehat{m}q_1q_2\chi\raa| \leq  \mathcal{K}(\phi)\|\phi\|_\infty
((\laa\Psi,\widehat{n}\Psi\raa\laa\chi,\widehat{n}\chi\raa)^{1/2}+N^{-\eta})\;.
$$
for   any $\Psi,\chi\in\mathcal{H}_{\mathcal M}$.

\item  Let $0<\beta<1$, $V_\beta\in\mathcal{V}_\beta$. Then  there exists a  $\mathcal{K}\in\mathcal{F}$ and a $\eta>0$ such that
\begin{align*} N|\laa\Psi p_1q_2\potdiff_\beta(x_1,x_2)\widehat{m}q_1q_2\Psi\raa|
\leq& \mathcal{K}(\phi)(\|\phi\|_\infty+(ln N)^{1/3}\|\nabla\phi\|_{6,loc})
\\&(\laa\Psi,\widehat{n}\Psi\raa+\|\nabla_1q_1\Psi\|^2+N^{-\eta})\;.
\end{align*}
for any 
symmetric $\Psi$ (i.e. $\Psi\in\mathcal{H}_{\emptyset}$).

\end{enumerate}

\end{lemma}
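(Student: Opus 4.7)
The four estimates all follow a common paradigm: reduce an $N$-particle matrix element to a scalar bound via the projector identities of Lemma \ref{kombinatorik}, then extract smallness from one or more of (i) $\|q_1\Psi\|^2\leq C\laa\Psi,\widehat n\Psi\raa\leq C\alpha$ (via Lemma \ref{kombinatorikb} and $(\widehat n)^2\leq\widehat n$), (ii) the $N^{-\eta}$ closeness of $(NV_\beta)*|\phi|^2$ to $2a|\phi|^2$ granted by the hypothesis on $V_\beta$, (iii) the scaling $\|V_\beta\|_\infty=O(N^{-1+3\beta})$, which is small precisely when $\beta<1/3$, or (iv) the smoothness quantity $\|\nabla_1 q_1\Psi\|^2$ that appears on the right in (d).

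For part (a) I would decompose $\Psi=p_1\Psi+q_1\Psi$. The identity $p_1f(x_1)p_1=\langle\phi,f\phi\rangle p_1$ gives $\laa\Psi,p_1fp_1\Psi\raa-\langle\phi,f\phi\rangle=-\langle\phi,f\phi\rangle\|q_1\Psi\|^2$, which is $O(\|f\|_\infty\alpha)$. The diagonal term $\laa\Psi,q_1fq_1\Psi\raa$ is immediate from $\|q_1\Psi\|^2\leq C\alpha$. The cross term $\laa p_1\Psi,f(x_1)q_1\Psi\raa$ naively yields only $O(\alpha^{1/2})$; to squeeze it to $O(\alpha)$ I view $f(x_1)$ as a trivially two-variable function, insert $1=p_2+q_2$ on both sides, and apply the shift identity Lemma \ref{kombinatorik}(c) to pull a factor of $\widehat n$ onto the $p_1\Psi$ side so Cauchy-Schwarz delivers $\|\widehat n\Psi\|^2\leq\alpha$. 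For part (b) I compute $p_2V_\beta(x_1-x_2)p_2=(V_\beta*|\phi|^2)(x_1)p_2$ and $p_2|\phi|^2(x_2)p_2=\|\phi\|_4^4 p_2$; the $\|\phi\|_4^4$ contribution is killed on the right by $p_1q_1=0$, leaving $p_1G(x_1)q_1p_2$ with $G=(V_\beta*|\phi|^2)-\frac{2a}{N-1}|\phi|^2$. Bounding $\|NG\|_\infty$ then splits into $\bigl|\|NV_\beta\|_1-2a\bigr|\cdot\|\phi\|_\infty^2=O(\|\phi\|_\infty^2 N^{-\eta})$ from the hypothesis, plus a Morrey-type estimate $\bigl||\phi(x-z)|^2-|\phi(x)|^2\bigr|\leq C\|\phi\|_\infty|z|^{1/2}\|\nabla\phi\|_6$ which, combined with $\mathrm{supp}\,V_\beta\subset\{|z|\leq RN^{-\beta}\}$, yields $C\|\phi\|_\infty\|\Delta\phi\|N^{-\beta/2}$.

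For part (c), I would use the symmetry of $\Psi,\chi\in\mathcal H_{\mathcal M}$ to rewrite $N\laa\Psi,p_1p_2Z_\beta\widehat m q_1q_2\chi\raa$ up to a bounded prefactor as $\sum_{j\notin\mathcal M,\,j\neq 1}\laa\Psi,p_1p_jZ_\beta(x_1,x_j)\widehat m q_1q_j\chi\raa$. A Cauchy-Schwarz separates $\Psi$ from $\chi$; the $\chi$-side produces $\|q_1\widehat m\chi\|\leq C\alpha_\chi^{1/2}$ via Lemma \ref{kombinatorikb}. Expanding $\bigl\|\sum_j q_jZ_\beta(x_1,x_j)p_1p_j\Psi\bigr\|^2$ yields a diagonal $j=k$ piece bounded by $CN\|V_\beta\|_\infty\|V_\beta\|_1\|\phi\|_\infty^2=O(\|\phi\|_\infty^2 N^{-1+3\beta})$, which is $O(N^{-\eta})$ exactly under the hypothesis $\beta<1/3$; and off-diagonal $j\neq k$ pieces which I rewrite by the factoring trick of the bullet list preceding the lemma, placing a $p_j$ on both sides of each interaction so that (\ref{kombeqb}) of Lemma \ref{kombinatorik}(e) furnishes a further $\|V_\beta\|_1\|\phi\|_\infty^2=O(\|\phi\|_\infty^2/N)$ per $V_\beta$. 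The double sum combines to $O(\|\phi\|_\infty^4\alpha_\Psi)$, and Cauchy-Schwarz with the $\chi$-side closes the bound.

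Part (d) is the technical heart of the section, since only one $p$ stands on the left and the convolution trick used in (b) is unavailable. Following the programme announced in the bullet list before the lemma, I split $V_\beta=U_{\beta_1,\beta}+\Delta h_{\beta_1,\beta}$ per Definition \ref{udef} and Lemma \ref{ulemma}. For the $U_{\beta_1,\beta}$ contribution a direct Cauchy-Schwarz uses $\|U_{\beta_1,\beta}^{1/2}p_1\|_{op}\leq C\|\phi\|_\infty N^{-1/2}$ from (\ref{kombeqb}) and $\|U_{\beta_1,\beta}^{1/2}q_1q_2\widehat m\Psi\|\leq\|U_{\beta_1,\beta}\|_\infty^{1/2}\|q_1q_2\widehat m\Psi\|$, yielding $O(\|\phi\|_\infty N^{3\beta_1/2}\alpha)$, acceptable for $\beta_1$ sufficiently small. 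For the $\Delta h$ contribution I integrate by parts in $x_1$,
\[\laa p_1q_2\Psi,(\Delta_1 h)\widehat m q_1q_2\Psi\raa=-\laa\nabla_1 p_1q_2\Psi,(\nabla_1 h)\widehat m q_1q_2\Psi\raa-\laa p_1q_2\Psi,(\nabla_1 h)\nabla_1\widehat m q_1q_2\Psi\raa,\]
decompose each gradient via $\nabla_1=p_1\nabla_1+q_1\nabla_1$, and use the $p_1$-pieces against $\|\nabla h\|$ together with (\ref{kombeqc}) of Lemma \ref{kombinatorik}(e) and the Lemma \ref{ulemma} bounds $\|h\|_3\leq CN^{-1}(\ln N)^{1/3}$, $\|\nabla h\|_1\leq CN^{-1-\beta_1}$, $\|\nabla h\|\leq CN^{-1+\beta/2}$; the $q_1\nabla_1$-pieces produce the announced summand $\|\nabla_1 q_1\Psi\|^2$. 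The mean-field $\frac{2a}{N-1}|\phi|^2(x_i)$ pieces of $Z_\beta$ are handled analogously by direct Cauchy-Schwarz. The \emph{main obstacle} is the joint optimization of $\beta_1\in(0,\beta)$: the $U$-piece wants $\beta_1$ small while the $\Delta h$-piece improves for $\beta_1$ closer to $\beta$, and one must verify that some intermediate $\beta_1$ makes every resulting error term beat the prefactor $N$ by a uniform power $N^{-\eta}$.
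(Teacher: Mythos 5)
Parts (a), (b), and (d) are essentially on the right track. Part (b) is a genuine alternative: the paper writes $V_\beta - a_N\delta = \Delta h$ and integrates by parts with Young to land on $\|\nabla h\|_1\|\nabla|\phi|^2\|$, whereas you use a Morrey/Sobolev modulus-of-continuity estimate directly on $(V_\beta\star|\phi|^2)-|\phi|^2\|V_\beta\|_1$; both routes reach the same $\mathcal{K}(\phi)\|\phi\|_\infty N^{-\eta}$ bound and the Morrey argument is arguably more elementary. In (d) your outline of splitting $V_\beta=U_{\beta_1,\beta}+\Delta h_{\beta_1,\beta}$ and integrating by parts is the correct strategy, but the ``joint optimization over $\beta_1$'' you announce as the main obstacle is not actually present: the paper simply takes $\beta_1=0$ when proving (d), since the target estimate allows $\|\nabla_1q_1\Psi\|^2$ with coefficient $1$; the flexibility in $\beta_1$ with the bonus $N^{-\beta_1}$ only matters later for Lemma~\ref{qqqterm} when $\beta=1$. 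Also note that for the $U$ contribution with $\beta_1>0$ your estimate produces a prefactor $N^{3\beta_1/2}$ that grows, so ``$\beta_1$ sufficiently small'' really has to mean $\beta_1=0$.

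The genuine gap is in part (c). You claim ``the $\chi$-side produces $\|q_1\widehat m\chi\|\leq C\alpha_\chi^{1/2}$ via Lemma~\ref{kombinatorikb}.'' For $m\leq n^{-1}$ this is false in general: Lemma~\ref{kombinatorikb} gives $\|\widehat m q_1\chi\|\leq C\|\widehat m\widehat n\chi\|$, and $mn\leq 1$ only yields the trivial bound $\|\widehat m q_1\chi\|\leq C$, with no smallness whatsoever. To extract an $\alpha_\chi^{1/2}$ factor one would need something like $m\leq 1$, which is not available. This is precisely the reason the paper introduces the cutoff decomposition $m=m^a+m^b$ at $k\sim N^{1-\varepsilon}$. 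For the tail piece $m^b$ one has $m^bn^2\leq n$, so $\|(\widehat m^b)^{1/2}q_1\chi\|^2\leq C\laa\chi,\widehat n\chi\raa$, and simultaneously $\sup|m^b|^{1/2}=N^{\varepsilon/4}$ controls the diagonal of (\ref{symback}); for the head piece $m^a$ one exploits instead that the accompanying weight $\widehat n\widehat g_2$ is $\leq N^{-\varepsilon/2}$, delivering the $N^{-\eta}$ term. Without this splitting your Cauchy–Schwarz leaves a factor of order $1$ on the $\chi$-side, and the target bound $\mathcal{K}(\phi)\|\phi\|_\infty((\laa\Psi,\widehat n\Psi\raa\laa\chi,\widehat n\chi\raa)^{1/2}+N^{-\eta})$ is not reached. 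You should insert the $m^a/m^b$ decomposition and redo the diagonal/off-diagonal accounting of $\|\sum_j q_j V_\beta\widehat r\,p_1p_j\Psi\|^2$ separately for each piece.
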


\begin{proof}
\begin{enumerate}
\item Using $1=p_1+p_2$ and Lemma \ref{kombinatorikb}
\begin{align*}
 |\laa\Psi &,f(x_1)\Psi \raa-\langle\phi ,f(x)\phi \rangle |\nonumber\\\nonumber=&
\big|\laa\Psi ,p_1f(x_1)p_1\Psi \raa-\langle\phi ,f(x)\phi \rangle
+2\Re\left(\laa\Psi ,q_1f(x_1)p_1\Psi \raa\right)\\&+\laa\Psi ,q_1f(x_1)q_1\Psi \raa\big|
\\\nonumber\leq& (1-\|p_1\Psi \|^2)\langle\phi ,f(x)\phi \rangle
+2\left|\Re\left(\laa\Psi ,q_1\widehat{n}^{-1/2}f(x_1)\widehat{n}^{1/2}_1p_1\Psi \raa\right)\right|
\\&+\laa\Psi ,q_1f(x_1)q_1\Psi \raa
\\\leq& \alpha(\Psi,\phi)\|f\|_\infty
+C\alpha(\Psi,\phi)\|f\|_\infty+\alpha(\Psi,\phi)\|f\|_\infty
\;.
\end{align*}

\item

In view of \ref{kombinatorikb}
\begin{align*}
N \laa\Psi, p_1p_2
\potdiff_\beta(x_1,x_2) q_1p_2\widehat{m}\Psi\raa
\leq&  N\|p_1p_2
\potdiff_\beta(x_1,x_2) q_1p_2\|_{op}\|q_1\widehat{m}\Psi\|
\\\leq&  CN\|p_1p_2
\potdiff_\beta(x_1,x_2) q_1p_2\|_{op}\;.\end{align*}
$\|p_1p_2
\potdiff_\beta(x_1,x_2) q_1p_2\|_{op}$ can be estimated using $p_1q_1=0$ and (\ref{faltungorigin}):
\begin{align*}& \|p_1p_2
(V_\beta(x_1-x_2)-\frac{2a}{N-1}|\phi(x_1)|^2-\frac{2a}{N-1}|\phi(x_2)|^2) q_1p_2\|_{op}\\=&\|p_1p_2
(V_\beta(x_1-x_2)-\frac{2a}{N-1}|\phi(x_1)|^2
p_2\|_{op}\\\leq& \|p_1p_2\left((V_\beta\star|\phi|^2)(x_1)-\frac{2a}{N-1}|\phi(x_1)|^2\right)\|_{op}
\end{align*}
We introduce $a_N:=\|V_\beta\|_1$ Since $V_\beta\in\mathcal{V}_\beta$ we can find a $\eta>0$ such that
\begin{align*}
\leq& \|p_1\left((V_\beta-a_N\delta)\star|\phi|^2\right)(x_1)\|_{op}+CN^{-\eta}\|\phi\|^2_\infty
\\\leq& \|\phi\|_{\infty}\|(V_\beta-a_N\delta)\star|\phi|^2\|+CN^{-\eta}\|\phi\|^2_\infty\;.
\end{align*}
Let $h\in L^\infty$ be given by $$\Delta h(x)=V_\beta(x)-a_N\delta(x)\;.$$ As above (see Lemma \ref{ulemma}) we have that
$h(x)=0$ for $x>R N^{-\beta}$, where $R N^{-\beta}$ is the radius of the support of $V_\beta$ and that $\|\nabla h\|_1\leq  N^{-1-\beta}$. Partial integration and Young's inequality give that
\begin{align*}
\|(V_\beta-a_N\delta)\star|\phi|^2\|=&\|(\nabla h)\star(\nabla|\phi|^2)\|
\\\leq& \|\nabla h\|_1\|\nabla|\phi|^2\|\leq 2\|\nabla h\|_1\|\nabla\phi\|\;\|\phi\|_\infty\;.
\end{align*}
Hence \begin{align}\label{formelfuerkin}
\|p_1p_2\left((V_\beta\star|\phi|^2)(x_1)-\frac{2a}{N-1}|\phi(x_1)|^2\right)p_2\|_{op}\leq
C\|\phi\|_\infty N^{-\eta}(\|\nabla\phi\|\;+\|\phi\|_\infty)
\end{align}
for some $\eta>0$ and (b) follows.

\item


Let us first find an upper bound for $$\|\sum_{j\in\mathcal{M}} q_jf(x_1,x_j)\widehat{r}\,p_1
p_j\Psi\|^2$$ for general $r:\mathbb{N}^2\to\mathbb{R}^+$, $f\in L^2\cap L^1$ and $\Psi\in\mathcal{H}_{\mathcal M}$ with
$1,2\notin\mathcal M$.
Using Lemma \ref{kombinatorikb}
\begin{align}\label{symback}\|\sum_{j\notin\mathcal{M}} &q_jf(x_1-x_j)\widehat{r}\,p_1
p_j\Psi\|^2
\\\nonumber=&\sum_{j\neq k\notin\mathcal{M}}\laa \widehat{r}\,p_1p_j\Psi,f(x_1-x_j)q_jq_kf(x_1-x_k)
p_kp_1
\widehat{r}\,\Psi\raa
\\\nonumber&+\sum_{j\notin\mathcal{M}}\laa \widehat{r}\,\Psi,p_1  p_j
f(x_1-x_j)q_jf(x_1-x_j)p_1 p_j\widehat{r}\,\Psi\raa
\\\nonumber\leq& \sum_{j\neq k\notin\mathcal{M}}\laa q_k\widehat{r}\,\Psi,p_1\sqrt{f}(x_1-x_k)
p_j
\sqrt{f}(x_1-x_j)\\\nonumber&\hspace{2cm}\sqrt{f}(x_1-x_k)p_k\sqrt{f}(x_1-x_j)p_1
q_j\widehat{r}\,\Psi\raa
\\\nonumber&+\sum_{j\notin\mathcal{M}}\laa \widehat{r}\,\Psi,p_1  p_j
f(x_1-x_j)f(x_1-x_j)p_1 p_j\widehat{r}\,\Psi\raa
\\\nonumber\leq& N^2\|\sqrt{f}(x_1-x_2)p_1\|_{op}^4\;\|
q_2\widehat{r}\,\Psi\|^2
%
+N\|f^2\|_1\|\phi  \|_\infty^2\|\widehat{r}\,\|^2_{op}
\\\nonumber\leq& N^2\|\phi  \|_\infty^4\|f\|_1^2\;\|
\widehat{n}\widehat{r}\,\Psi\|^2
+N\|f\|^2\|\phi  \|_\infty^2\sup_{1\leq  k\leq  N}|r(k,N)^2|\;.
 \end{align}
We now come back to (c).
We define for
some $\varepsilon>0$ we shall specify below the functions
$m^{a,b}:\mathbb{N}^2\to\mathbb{R}^+$ by $$m^a(k,N):=m(k,N)\;\;\text{ for }\;\;
k<N^{1-\varepsilon}\;;\;\;\;\;\;m^a(k,N)=0\;\;\text{ for }\;\;k\geq N^{1-\varepsilon}$$ and
$m^b=m-m^a$. Note also that $p_2|\phi(x_1)|^2q_2=p_1|\phi(x_2)|^2q_1=0$. It follows that (c) is bounded by
\be\label{zweifreunde}N|\laa\Psi ,p_1  p_2
V_\beta(x_1-x_2) \widehat{m}^aq_1 q_2 \chi\raa|+N|\laa\Psi
,p_1  p_2 V_\beta(x_1-x_2) \widehat{m}^bq_1 q_2 \chi\raa| \;.\ee Defining also $g:\mathbb{N}^2\to\mathbb{R}^+$ by
$g(k,N)=1$ for $k<N^{1-\varepsilon}$, $g(k,N)=0$ for $k\geq N^{1-\varepsilon}$ we
have that $m^a=m^ag$ and the first summand in (\ref{zweifreunde}) equals
\begin{align}N\laa\Psi & ,\widehat{g}_{2}p_1 p_2 V_\beta(x_1-x_2) q_1
q_2 \widehat{m}^a\chi\raa
\nonumber\\=&
N(N-|\mathcal{M}|-1)^{-1}\laa\Psi ,\sum_{j\notin\mathcal{M}}\widehat{g}_{2}p_1 p_j
V_\beta(x_1-x_j) q_1 q_j \widehat{m}^a\chi\raa
 \nonumber\\\leq&
\|\sum_{j\notin\mathcal{M}} \widehat{g}_{2}q_jV_\beta(x_1-x_j)p_1
p_j\Psi\|\;\|\widehat{m}^a q_1 \chi\|\;.
 \end{align}
In view of (\ref{symback}) and Lemma \ref{kombinatorikb} the latter is bounded by
\begin{align*} N\|\phi&  \|_\infty^2\|V_\beta\|_1\;\|
\widehat{n}\widehat{g}_{2}\chi\|
+N^{1/2}\|V_\beta\|\;\|\phi  \|_\infty\sup_{1\leq  k\leq  N}|g(k,N)|
\\\leq& CN^{-\varepsilon/2}\|\phi\|_\infty^2\;
+CN^{-1/2+3\beta/2}\|\phi\|_\infty\;.
\end{align*}
 Using Lemma \ref{kombinatorikb} the second summand in (\ref{zweifreunde}) is bounded by \begin{align}
N\laa\Psi &,p_1 p_2 f(x_1,x_2) q_1 q_2
\widehat{m}^b\chi\raa
\nonumber\\=&
N(N-|\mathcal{M}|-1)^{-1}\laa\Psi ,\sum_{j\notin\mathcal{M}}(\widehat{m}^b_{2})^{1/2}p_1 p_j
f(x_1,x_j) q_1 q_j (\widehat{m}^b)^{1/2}\chi\raa
 \nonumber\\\leq&
C\|\sum_{j\notin\mathcal{M}} q_jf(x_1,x_j)(\widehat{m}^b_{2})^{1/2}p_1
p_j\Psi\|\;\|(\widehat{m}^b)^{1/2} q_1 \chi\|
\nonumber\\\label{l233}\leq&
C\|\sum_{j\notin\mathcal{M}} q_jf(x_1,x_j)(\widehat{m}^b_{2})^{1/2}p_1
p_j\Psi\|\;\sqrt{\alpha(\chi,\phi)}\;.
 \end{align}
Since $m(k,N)<\sqrt{N/k}$ one has $\sup_{1\leq  k\leq
N}|(m^b(k,N))^{1/2}|=N^{\varepsilon/4}$ and
$$\|
(\widehat{m}^b_{2})^{1/2}\widehat{n}\Psi\|^2\leq  C\|\widehat{n}^{1/2}_{2}\Psi\|^2\leq  C\alpha(\Psi,\phi)+CN^{-1/2}\;.$$
Thus (\ref{symback}) and Lemma \ref{kombinatorikb} imply that the second summand in (\ref{zweifreunde}) is bounded by
\begin{align*}
C\sqrt{\alpha(\chi,\phi)}&\left(N\|\phi  \|_\infty^2\|f\|_1\;\sqrt{\alpha(\Psi,\phi)}
+N^{1/2}\|f\|\;\|\phi  \|_\infty N^{\varepsilon/4}\right)
\\\leq& C\|\phi  \|_\infty^2\sqrt{\alpha(\chi,\phi)\alpha(\Psi,\phi)}
+CN^{-1/2-3\beta/2+\varepsilon}\|\phi\|_\infty
\end{align*}
Summarizing we have that
 \begin{align*} N|\laa\Psi ,p_1 p_2
f(x_1,x_2) \widehat{m}q_1 q_2 \Psi\raa|\leq&
C\|\phi
\|_\infty^2\sqrt{\alpha(\chi,\phi)\alpha(\Psi,\phi)}\\&\hspace{-0.4cm}+C\|\phi
\|_\infty N^{(-1+3\beta)/2+\varepsilon/4}+
C\|\phi
\|_\infty^2N^{-\varepsilon/2}\;.
 \end{align*}
Choosing $0<\varepsilon<(-1+3\beta)/2$ and
$\eta<\min\{-1+3\beta+2\varepsilon,\varepsilon/2\}$ (c) follows.
\item Let $U_{\beta_1,\beta}$ be given by definition \ref{udef}.
 As a first step we show that
for $0\leq \beta_1<\beta<1$ and for $h_{\beta_1,\beta}$ given by Definition \ref{udef} there exists a
$\mathcal{K}\in\mathcal{F}$ and a $\eta>0$ such that
\begin{align}\label{hnormd}N\laa\Psi ,& p_1q_2(V_\beta(x_1-x_2)-U_{\beta_1,\beta}(x_1-x_2))\widehat{m}q_1q_2\Psi\raa\\
\nonumber\leq&
\mathcal{K}(\phi)(\|\phi\|_\infty+\|\nabla\phi\|_{6,loc}(\ln N)^{1/3})\\&\nonumber(\laa\Psi,\widehat{n}\Psi\raa+N^{-\beta_1}
\|\nabla_1q_1\Psi\|^2+N^{-\eta})
\;.
\end{align}
Lemma \ref{ulemma} and integration by parts gives
\begin{align}\nonumber N\laa\Psi, & p_1q_2(V_\beta(x_1-x_2)-U_{\beta_1,\beta}(x_1-x_2))\widehat{m}q_1q_2\Psi\raa\\=&N|\laa\Psi ,q_1
p_2\widehat{m}_1 (\nabla_1
h_{\beta_1,\beta}(x_1-x_2))q_2\nabla_1q_1
\Psi\raa|\label{sa}
\\&+N|\laa\nabla_1q_1  \Psi ,p_2
(\nabla_1 h_{\beta_1,\beta}(x_1-x_2)) q_1 q_2\widehat{m} \Psi\raa|\label{snewb}
\;. \end{align}
For $(\ref{sa})$ we write
\begin{align*}
(\ref{sa})\leq& N(N-1)^{-1}|\laa\Psi ,\sum_{k=2}^Nq_1  p_k\widehat{m}_1 (\nabla_1
h_{\beta_1,\beta}(x_1-x_k))q_k\nabla_1q_1
\Psi\raa|
\\\leq& C\|\sum_{k=2}^N q_k(\nabla_1
h_{\beta_1,\beta}(x_1-x_k))q_1  p_k\widehat{m}_1\Psi \|\;\|\nabla_1q_1\Psi\|\;.
\end{align*}
For the first factor we have
\begin{align}\nonumber
\|\sum_{k=2}^N& q_k(\nabla_1
h_{\beta_1,\beta}(x_1-x_k))q_1  p_k\widehat{m}_1\Psi \|^2
\\\label{snewc}\leq& 2|\laa\Psi, \sum_{2\leq  k<j\leq  N} q_1  p_j\widehat{m}_1(\nabla_1
h_{\beta_1,\beta}(x_1-x_j)) q_j\\\nonumber&\hspace{4cm}q_k(\nabla_1
h_{\beta_1,\beta}(x_1-x_k))q_1  p_k\widehat{m}_1\Psi \raa|
\\&\label{sd}+|\laa\Psi, \sum_{2\leq  k\leq  N} q_1  p_k\widehat{m}_1(\nabla_1
h_{\beta_1,\beta}(x_1-x_k)) \\\nonumber&\hspace{4cm}q_k(\nabla_1
h_{\beta_1,\beta}(x_1-x_k))q_1  p_k\widehat{m}_1\Psi \raa|
\;.
\end{align}
Using that $\nabla_1
h_{\beta_1,\beta}(x_1-x_k)=-\nabla_k
h_{\beta_1,\beta}(x_1-x_k)$ and integrating by parts gives
\begin{align*}
(\ref{snewc})&\\\leq& N^2|\laa\Psi, q_3q_1  p_2\widehat{m}_1(\nabla_3
h_{\beta_1,\beta}(x_1-x_3))(\nabla_2
h_{\beta_1,\beta}(x_1-x_2)) q_2q_1  p_3\widehat{m}_1\Psi \raa|
\\\leq&
N^2|\laa\nabla_2 \nabla_3q_3q_1  p_2\widehat{m}_1\Psi,
h_{\beta_1,\beta}(x_1-x_2)
h_{\beta_1,\beta}(x_1-x_3) q_2q_1  p_3\widehat{m}_1\Psi \raa|
\\&+N^2|\laa\nabla_3q_3q_1  p_2\widehat{m}_1\Psi,
h_{\beta_1,\beta}(x_1-x_2)
h_{\beta_1,\beta}(x_1-x_3) \nabla_2q_2q_1  p_3\widehat{m}_1\Psi \raa|
\\&+
N^2|\laa q_3q_1 \nabla_2 p_2\widehat{m}_1\Psi,
h_{\beta_1,\beta}(x_1-x_2)
h_{\beta_1,\beta}(x_1-x_3) q_2q_1  \nabla_3p_3\widehat{m}_1\Psi \raa|
\\&+N^2|\laa q_3q_1  p_2\widehat{m}_1\Psi,
h_{\beta_1,\beta}(x_1-x_2)
h_{\beta_1,\beta}(x_1-x_3) \nabla_2\nabla_3q_2q_1  p_3\widehat{m}_1\Psi \raa|
\\\leq&
N^2\| \nabla_3q_3q_1  \widehat{m}_1\Psi\|\;\|h_{\beta_1,\beta}(x_1-x_2)\nabla_2p_2\|_{op}
\|h_{\beta_1,\beta}(x_1-x_3)p_3\|_{op}
\| q_2q_1  \widehat{m}_1\Psi \|
\\&+N^2\|\nabla_3q_3q_1\widehat{m}_1\Psi\|\;\|  p_2
h_{\beta_1,\beta}(x_1-x_2)\|_{op}
\|h_{\beta_1,\beta}(x_1-x_3)p_3\|_{op} \|\nabla_2q_2q_1  \widehat{m}_1\Psi \|
\\&+
N^2\|q_3q_1 \widehat{m}_1\Psi\|\;
\|h_{\beta_1,\beta}(x_1-x_2)\nabla_2 p_2\|_{op}
\|h_{\beta_1,\beta}(x_1-x_3)\nabla_3p_3\|_{op} \|q_2q_1  \widehat{m}_1\Psi \|
\\&+N^2\|q_3q_1 \widehat{m}_1\Psi\|
 \;\|h_{\beta_1,\beta}(x_1-x_2)p_2\|_{op}
\|h_{\beta_1,\beta}(x_1-x_3)\nabla_3p_3\|_{op} \|\nabla_2q_2q_1  \widehat{m}_1\Psi \|
\;.
\end{align*}
Note that \begin{align*} q_1 \nabla_2q_2\widehat{m} \Psi=&q_1 p_2\nabla_2q_2\widehat{m} \Psi+q_1 q_2\nabla_2q_2\widehat{m} \Psi
\\=&q_1 p_2\widehat{m}_1\nabla_2q_2 \Psi+q_1 \widehat{m}q_2\nabla_2q_2 \Psi\;.
\end{align*}

With  Lemma \ref{kombinatorikb} it follows that
\be\label{nablamitm}\|q_1 \nabla_2q_2\widehat{m} \Psi\|\leq  C \|\nabla_2q_2 \Psi\|\;.\ee
This and Lemma \ref{kombinatorik} (e) give
$$(\ref{snewc})\leq
C(N^{2}\|\nabla_1q_1\Psi\|^2\|h\|_2^2\|\phi\|_\infty^2+N^2\laa\Psi,\widehat{n}\Psi
\raa\|q_1\Psi\|^2\|h\|_3^2\|\nabla\phi\|_{6,loc}^2)\;.$$
with Lemma \ref{ulemma} it follows that
$$(\ref{snewc})\leq
C(N^{-\beta_1}\|\phi\|_\infty^2\|\nabla_1q_1\Psi\|^2+\laa\Psi,\widehat{n}\Psi
\raa\|q_1\Psi\|^2\|\nabla\phi\|_{6,loc}^2(\ln N)^{2/3})\;.$$

For $(\ref{sd})$ we have
\begin{align*}
(\ref{sd})\leq& N
\|\widehat{m}_1q_1\Psi\|^2\|   p_2(\nabla_1
h_{\beta_1,\beta}(x_1-x_2))\|_{op}^2
\\\leq& C N^{1-2+\beta}\|\phi\|_\infty^2\;.
\end{align*}
It follows that $(\ref{sa})$ is bounded by the right hand side of (\ref{hnormd}).

To control $(\ref{snewb})$ we use once more that $$\nabla_1 h_{\beta_1,\beta}(x_1-x_2)=-\nabla_2 h_{\beta_1,\beta}(x_1-x_2)$$
and integrate by parts
\begin{align*}
(\ref{snewb})\leq& |\laa\nabla_2p_2\nabla_1q_1  \Psi ,
h_{\beta_1,\beta}(x_1-x_2) q_1 q_2\widehat{m} \Psi\raa|
\\&+|\laa\nabla_1q_1  \Psi ,p_2
 h_{\beta_1,\beta}(x_1-x_2) q_1 \nabla_2q_2\widehat{m} \Psi\raa|
\\\leq&  \|\nabla_1q_1  \Psi\|
\|h_{\beta_1,\beta}(x_1-x_2)\nabla_2 p_2\|_{op}\| q_1 q_2\widehat{m} \Psi\|
\\&+\|\nabla_1q_1  \Psi\|\;\|p_2
 h_{\beta_1,\beta}(x_1-x_2)\|_{op} \|q_1 \nabla_2q_2\widehat{m} \Psi\|
\\\leq& C
\|\nabla_1q_1\Psi\|\left(\|q_1\Psi\|\;(\ln N)^{1/3}\|\nabla\phi\|_{6,loc}+N^{-\beta_1}
\|\phi\|_\infty\|\nabla_1q_1\Psi\| \right)\;.
\end{align*}
and (\ref{hnormd}) follows.

Now (\ref{hnormd}) together with Lemma \ref{ulemma} gives
\begin{align*}
&\hspace{-0.8cm}|N\laa\Psi p_1q_2\potdiff_\beta(x_1,x_2)\widehat{m}q_1q_2\chi\raa| \leq  CN|\laa\Psi ,p_1q_2(U_{0,\beta}(x_1-x_2)
\widehat{m}q_1q_2
\Psi\raa|
\\&\hspace{1.7cm}+CN|\laa\Psi ,p_1q_2\frac{a}{N-1}|\phi(x_1)|^2\widehat{m}q_1q_2
\Psi\raa|
\\&\hspace{1.7cm}+CN|\laa\Psi ,p_1q_2(V_\beta(x_1-x_2)-U_{0,\beta}(x_1-x_2))\widehat{m}q_1q_2
\Psi\raa|
\\\leq&
CN\|q_2\Psi\|\;\|p_1U_{0,\beta}(x_1-x_2)\|_{op} \|\widehat{m}q_1q_2
\Psi\|
\\& +C\|q_2\Psi\|\;\|\phi\|_\infty^2\|\widehat{m}q_1q_2
\Psi\|
\\& +\mathcal{K}(\phi)(\|\phi\|_\infty+\|\nabla\phi\|_{6,loc(\ln N)^{1/3}})
(\laa\Psi,\widehat{n}\Psi\raa+\|\nabla_1q_1\Psi\|^2+N^{-1/2})\;.
\end{align*}
Since by definition $N\|U_{0,\beta}\|\leq  C$ we get with Lemma \ref{kombinatorikb} that\\
$N\|p_1U_{0,\beta}(x_1,x_2)\|_{op}\leq  C\|\phi\|_\infty$ and (d) follows.

\end{enumerate}

\end{proof}

\subsection{Controlling the smoothness of $\Psi$}\label{secsmooth}

To get good control for the term in Lemma \ref{hnorms} (d) we need in addition a bound on $\|\nabla_1q_1\Psi\|$ in terms
of $\alpha(\Psi,\phi)+\landau(1)$. $\|\nabla_1q_1\Psi\|$ is a part of the kinetic energy of $\Psi$ so the idea is to
show that the other contributions to the energy $\mathcal E(\Psi)$ are in leading order cancelled out by $\mathcal{E}^{GP}$
and thus $\|\nabla_1q_1\Psi\|\approx \mathcal E(\Psi)-\mathcal{E}^{GP}\leq \alpha(\Psi,\phi)$.
In the next Lemma we estimate as a first step
 $\|\nabla_1q_1\Psi\|$ in terms of $\alpha$ and the difference between interaction and effective mean field. Note that
the following Lemma holds for any
$0\leq  \beta\leq  1$ and shall be useful when we generalize to $\beta=1$. We shall use this estimate in the following
section to control $\|\nabla_1q_1\Psi\|$ in terms of $\alpha(\Psi,\phi)+\landau(1)$ restricting to $0<\beta<1$.
For later reference we shall also show that the total interaction energy $\|\sqrt{V_\beta(x_1-x_2)}\Psi\|^2$ stays
bounded.

\begin{lemma}\label{totalE}
For any
$0< \beta\leq  1$ there exists a $\mathcal{K}\in\mathcal{F}$ such that
\begin{align}
\label{smooth1}N\|\sqrt{V_\beta(x_1-x_2)}\Psi\|^2
\leq& \alpha(\Psi,\phi)+\|\nabla\phi\|^2+2\|A\|_\infty+2a\|\phi\|_\infty^2
\\\|\nabla_1q_1\Psi\|^2
\leq& \laa\Psi,(2a|\phi(x_1)|^2-(N-1)V_\beta(x_1-x_2))\Psi\raa\nonumber
\\\label{smooth2}&+\mathcal{K}(\phi)\alpha(\Psi,\phi)
\;.\end{align}
uniform in $(\Psi,\phi)\in \LZN\otimes\LZ$.
\end{lemma}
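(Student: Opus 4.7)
\begin{proof1}{Lemma \ref{totalE} (plan)}

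The plan is to derive both bounds from the symmetry-reduced form of the energy functional
$$\mathcal{E}(\Psi)=\|\nabla_1\Psi\|^2+\frac{N-1}{2}\laa\Psi,V_\beta(x_1-x_2)\Psi\raa+\laa\Psi,A(x_1)\Psi\raa,$$
which is obtained immediately by writing $\laa\Psi,H\Psi\raa$ and exploiting exchange symmetry in the kinetic, pair-interaction and one-body terms.

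For (\ref{smooth1}) I would simply exploit the positivity of $V_\beta$ and of the kinetic energy: dropping $\|\nabla_1\Psi\|^2\geq 0$ and bounding $|\langle A\rangle_\Psi|\leq\|A\|_\infty$ isolates $(N-1)\langle V_\beta\rangle/2\leq \mathcal{E}(\Psi)+\|A\|_\infty$. Then I would use the defining bound $\mathcal{E}(\Psi)\leq \mathcal{E}^{GP}(\phi)+\alpha(\Psi,\phi)$ built into $\alpha$, together with the elementary estimate $\mathcal{E}^{GP}(\phi)=\|\nabla\phi\|^2+\langle\phi,A\phi\rangle+a\|\phi\|_4^4\leq \|\nabla\phi\|^2+\|A\|_\infty+a\|\phi\|_\infty^2$ (using $\|\phi\|_2=1$). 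Rearranging gives a bound on $N\langle V_\beta\rangle$ of the claimed form (the precise constants come out of the last arithmetic step).

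For (\ref{smooth2}) the approach is to reroute $\|\nabla_1q_1\Psi\|^2$ through the GP one-body operator. The key algebraic identity, obtained by subtracting the symmetrized microscopic and GP Hamiltonians, is
$$\laa\Psi,h^{GP}_1\Psi\raa=\mathcal{E}(\Psi)+2a\laa\Psi,|\phi(x_1)|^2\Psi\raa-\frac{N-1}{2}\laa\Psi,V_\beta(x_1-x_2)\Psi\raa.$$
Decomposing via $1=p_1+q_1$ and using $p_1 h^{GP}_1 p_1=\langle\phi,h^{GP}\phi\rangle\, p_1$ lets one solve for $\laa q_1\Psi,h^{GP}_1q_1\Psi\raa$ in terms of the identity above, the quantity $\langle\phi,h^{GP}\phi\rangle\|q_1\Psi\|^2$, and the cross term $\Re\laa p_1\Psi,h^{GP}_1 q_1\Psi\raa$. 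Finally I would extract the kinetic piece via $\|\nabla_1 q_1\Psi\|^2=\laa q_1\Psi,h^{GP}_1q_1\Psi\raa-\laa q_1\Psi,(A+2a|\phi|^2)(x_1)q_1\Psi\raa$, and bound the two error-type terms using $\|q_1\Psi\|^2\leq\laa\Psi,\widehat{n}^2\Psi\raa\leq\alpha$ together with Lemma \ref{hnorms}(a) applied to $f=A$ and $f=|\phi|^2$ (to replace $\langle A\rangle_\phi$ and $\|\phi\|_4^4$ by $\langle A\rangle_\Psi$ and $\laa|\phi(x_1)|^2\raa_\Psi$ up to $\mathcal{K}(\phi)\alpha$).

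The delicate step, and the one I expect to be the main obstacle, is the cross term $\Re\laa p_1\Psi,h^{GP}_1 q_1\Psi\raa$. A direct Cauchy--Schwarz using $\|q h^{GP}\phi\|\leq\|\Delta\phi\|+\|A\|_\infty+2a\|\phi\|_\infty^2=:\mathcal{K}(\phi)$ (the vector $q h^{GP}\phi=h^{GP}\phi-\langle\phi,h^{GP}\phi\rangle\phi\perp\phi$) only gives $\mathcal{K}(\phi)\|q_1\Psi\|\leq \mathcal{K}(\phi)\sqrt{\alpha}$, whereas the stated bound needs $\mathcal{K}(\phi)\alpha$. To cure this I would insert $1=\widehat{n}^{-1/2}\widehat{n}^{1/2}$ and use that $\widehat{n}$ commutes with $q_1$ (Lemma \ref{kombinatorik}(a)), together with the combinatorial shift of $\widehat{n}$ across $h^{GP}_1 p_1$: the $q$-component of $h^{GP}\phi$ pushes any $P_k$-component of $p_1\Psi$ into $P_{k+1}$, so after the shift one picks up an extra factor $(k+1)/N$ that turns the $\sqrt{\alpha}$ into $\alpha$ up to an $N^{-1}$ remainder absorbable into $\mathcal{K}\alpha$. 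After substituting and collecting, the $-\frac{N-1}{2}\langle V_\beta\rangle$ from the identity combines with the $\frac{N-1}{2}\langle V_\beta\rangle$ hidden in $\mathcal{E}-\langle\phi,h^{GP}\phi\rangle$ to produce the coefficient $-(N-1)\langle V_\beta\rangle$ appearing in the claim, while the $|\phi|^2$ contributions add to $2a\langle|\phi|^2\rangle_\Psi$, yielding (\ref{smooth2}).

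\end{proof1}
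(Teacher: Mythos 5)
Your proposal follows essentially the same route as the paper. For (\ref{smooth1}) both drop the nonnegative kinetic term from $\mathcal{E}(\Psi)$, bound the external and GP potential contributions crudely in $L^\infty$, and use $|\mathcal{E}(\Psi)-\mathcal{E}^{GP}(\phi)|\leq\alpha$. For (\ref{smooth2}) both expand a one-body quadratic form with $1=p_1+q_1$, isolate $\|\nabla_1 q_1\Psi\|^2$, and control the $p_1$--$q_1$ cross term by inserting the weights $\widehat{n}^{1/2}$, $\widehat{n}^{-1/2}$ and shifting one factor across $p_1$ via Lemma \ref{kombinatorik}(c), so that Cauchy--Schwarz produces two factors each of order $\laa\Psi,\widehat{n}\Psi\raa^{1/2}$; the paper does this for $\laa\nabla_1 q_1\Psi,\nabla_1 p_1\Psi\raa$ while you do it for $\laa p_1\Psi,h^{GP}_1 q_1\Psi\raa$, but since $h^{GP}_1-(-\Delta_1)=A+2a|\phi|^2$ is a bounded multiplication this is only a repackaging of where the potential pieces sit (folded into your cross term rather than handled through Lemma \ref{hnorms}(a)), and the resulting error terms are the same. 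One bookkeeping caveat worth resolving before writing it out: the correct split is $\mathcal{E}(\Psi)=\|\nabla_1\Psi\|^2+\frac{N-1}{2}\laa\Psi,V_\beta(x_1-x_2)\Psi\raa+\laa\Psi,A(x_1)\Psi\raa$ and $\mathcal{E}^{GP}(\phi)=\|\nabla\phi\|^2+\langle\phi,A\phi\rangle+a\langle\phi,|\phi|^2\phi\rangle$, so your $h^{GP}$ pivot actually yields $a\laa\Psi,|\phi(x_1)|^2\Psi\raa-\frac{N-1}{2}\laa\Psi,V_\beta(x_1-x_2)\Psi\raa$ on the right, not the coefficients $2a$ and $N-1$ appearing in the stated bound; the paper's own (\ref{ensplit}) carries the same factor-of-two discrepancy, which is harmless for the downstream application since Lemma \ref{potdiffer}(b) bounds the same combination and the factor of two is simply absorbed into the constants in $\mathcal{K}\in\mathcal{F}$.
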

\begin{proof}

Using
symmetry of $\Psi$
\begin{align}\label{ensplit}
\mathcal{E}(\Psi)-\mathcal{E}^{GP}(\phi)=&\|\nabla_1\Psi\|^2-\|\nabla\phi\|^2+
(N-1)\laa\Psi,V_\beta(x_1-x_2)\Psi\raa\\\nonumber&+\laa\Psi,A(x_1)\Psi\raa-\langle\phi,
2a|\phi|^2+A(x_1)\phi\rangle
\;.
\end{align}
Since
$|\mathcal{E}(\Psi)-\mathcal{E}^{GP}(\phi)|\leq \alpha(\Psi,\phi)$ $$(N-1)\|\sqrt{V_\beta(x_1-x_2)}\Psi\|^2\leq
\alpha(\Psi,\phi)+\|\nabla\phi\|^2+2\|A\|_\infty+2a\|\phi\|_\infty^2$$
and (\ref{smooth1}) follows.

For (\ref{smooth2}) note that \begin{align*}|\laa \nabla_1 q_1\Psi,\nabla_1
p_1\Psi\raa|=&|\laa
q_1\Psi,\widehat{n}^{1/2}_1\widehat{n}^{-1/2}_1 \Delta_1p_1\Psi\raa|
\\=|\laa
q_1\Psi,\widehat{n}^{-1/2}_1\Delta_1p_1
\widehat{n}^{1/2}_{2}\Psi\raa|\leq&
\|\Delta\phi\|\;\|\widehat{n}^{-1/2}_1q_1\Psi\|\;\|\widehat{n}^{1/2}_{2}p_1\Psi\|\;.
\end{align*} Thus with Lemma \ref{kombinatorikb} and using that
$\sqrt{\frac{k+2}{N}}\leq  3\sqrt{\frac{k}{N}}$ we
get that \be \label{mixedder} |\laa \nabla_1 q_1\Psi,\nabla_1
p_1\Psi\raa|\leq  C\|\Delta\phi\|\alpha(\Psi,\phi)\;, \ee
implying
$$\|\nabla_1 \Psi\|^2-\|\nabla_1 p_1\Psi\|^2\geq\|\nabla_1
q_1\Psi\|^2-C\|\Delta\phi\|\alpha(\Psi,\phi)\;.$$ Since
\be\label{ekinmin}\|\nabla_1
p_1\Psi\|^2=\|p_1\Psi\|^2\;\|\nabla\phi\|^2=(1-\|q_1\Psi\|^2)\|\nabla\phi\|^2\ee
it follows that \be\label{onehand}\|\nabla_1 \Psi\|^2-\|\nabla\phi\|^2\geq\|\nabla_1
q_1\Psi\|^2-C\|\Delta\phi\|\alpha(\Psi,\phi)-\|\nabla\phi\|^2\alpha(\Psi,\phi)\;.\ee

Using this
and Lemma \ref{hnorms} (a) setting $f=A+2a|\phi|^2$ we get with (\ref{ensplit})
\begin{align*} \mathcal{E}(\Psi)-\mathcal{E}^{GP}(\phi)
\geq&  \|\nabla_1
q_1\Psi\|^2+\laa\Psi,((N-1)V_\beta(x_1-x_2)-2a|\phi(x_1)|^2)\Psi\raa
\\&-C(\|A\|_\infty+2a\|\phi\|^2_\infty+\|\Delta\phi\|+\|\nabla\phi\|^2)\alpha(\Psi,\phi)\;.
\end{align*}
Since $$\|\nabla\phi\|^2=\langle\nabla\phi,\nabla\phi\rangle=-\langle\phi,\Delta\phi\rangle\leq \|\Delta\phi\|$$
we get (b).
\end{proof}

\subsection{Controlling $\|\nabla_1q_1\Psi\|$ for $\beta<1$}\label{secabl}

With Lemma \ref{totalE} we have found a bound on $\|\nabla_1q_1\Psi\|$ for all $0<\beta\leq  1$ in terms of
effective mean field minus interaction.
Note that in view of Lemma \ref{totalE} (a) the interaction is bounded. The mean-field term is bounded by
 $\laa\Psi,|\phi|^2\Psi\raa\leq  \|\phi\|_\infty^2$ and we have for any $0< \beta\leq  1$
\be\label{ablabsch}
\|\nabla_1q_1\Psi\|^2\leq \mathcal{K}(\phi)(\alpha(\Psi,\phi)+1)\;.
\ee
But --- as explained above --- we want to show that $\|\nabla_1q_1\Psi\|$ is small using that the effective mean field
cancels out the
leading order of the interaction.
\begin{lemma}\label{potdiffer}
Let  $0<\beta<1$, $V_\beta\in\mathcal{V}_\beta$ and $m:\mathbb{N}^2\to\mathbb{R}^+$ with $m\leq  n^{-1}$.
Then there exists a $\eta>0$ and a $\mathcal{K}\in\mathcal{F}$ such that for
   any $\Psi\in\mathcal{H}_{\emptyset}$ and any $\phi\in \LZ$
\begin{enumerate}

 \item
$$N|\laa\Psi,p_1p_2V_\beta(x_1-x_2) (1-p_1p_2)\Psi\raa|\leq  \C\;.
$$

\item \begin{align*}
\laa
\Psi(2a|\phi(x_1)|^2)-(N-1)V_{\beta}(x_1-x_2)\Psi\raa
\leq \C
\;. \end{align*}

\item
\be\label{kinenzit}\|\nabla_1q_1\Psi\|^2\leq  \C\;,\ee


\end{enumerate}
\end{lemma}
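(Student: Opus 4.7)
The three parts are most naturally proven in the order $(a)\Rightarrow(b)\Rightarrow(c)$. Part (c) is the immediate corollary of (b): inequality \eqref{smooth2} of Lemma \ref{totalE} gives
$\|\nabla_1 q_1 \Psi\|^2 \leq \laa\Psi, (2a|\phi(x_1)|^2 - (N-1)V_\beta(x_1-x_2))\Psi\raa + \mathcal{K}(\phi)\alpha(\Psi,\phi)$,
and inserting the bound (b) into the first summand reproduces exactly the right-hand side of (c).

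For (a), my plan is to decompose $1 - p_1 p_2 = p_1 q_2 + q_1 p_2 + q_1 q_2$ and to use the symmetry of $\Psi$ under $x_1 \leftrightarrow x_2$ to identify the $p_1 q_2$ and $q_1 p_2$ contributions. Writing $V_\beta = Z_\beta + \tfrac{2a}{N-1}(|\phi(x_1)|^2 + |\phi(x_2)|^2)$ produces the mean-field-subtracted potential $Z_\beta$ that appears throughout Lemma \ref{hnorms}. Several of the mean-field contributions vanish by $p_j q_j = 0$ --- e.g., $p_1 p_2 |\phi(x_2)|^2 q_1 p_2 = p_1 q_1 p_2 |\phi|^2(x_2) p_2 = 0$ --- while the surviving mean-field pieces simplify via the identity $p_1 |\phi|^2 q_1 = p_1|\phi|^2 - \|\phi\|_4^4 p_1$ into expressions of the form $\laa p_1 p_2 \Psi, |\phi(x_j)|^2 (1-p_1 p_2)\Psi\raa$ that are controlled by Lemma \ref{hnorms}(a) together with the shift identities of Lemma \ref{kombinatorik}. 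For the $Z_\beta q_1 p_2$ contribution, Lemma \ref{hnorms}(b) furnishes $N\|p_1 p_2 Z_\beta q_1 p_2\|_{op} \leq \mathcal{K}(\phi)\|\phi\|_\infty N^{-\eta}$, which combined with $\|q_1 \Psi\| \leq \sqrt{\alpha}$ from Lemma \ref{kombinatorikb} yields $\mathcal{K}(\phi) N^{-\eta}\sqrt{\alpha} \leq \mathcal{K}(\phi)(\alpha + N^{-2\eta})$. For $Z_\beta q_1 q_2$, Lemma \ref{hnorms}(c) applies directly when $\beta < 1/3$; for $1/3 \leq \beta < 1$ one passes through the symmetrized identity $\laa\Psi, p_1 p_2 Z_\beta q_1 q_2 \Psi\raa = (N-1)^{-1}\laa\Psi, \sum_{j=2}^N p_1 p_j Z_\beta(x_1,x_j) q_1 q_j \Psi\raa$ and applies Cauchy--Schwarz in the sum, recovering the second $\sqrt\alpha$ factor via Lemma \ref{kombinatorikb}.

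For (b), I insert $1 = p_1 p_2 + (1-p_1 p_2)$ on both sides of $G := 2a|\phi(x_1)|^2 - (N-1)V_\beta(x_1-x_2)$. The diagonal piece is a scalar multiple of $p_1 p_2$: by the convolution identity \eqref{faltungorigin} one has $p_1 p_2 G p_1 p_2 = \bigl(2a\|\phi\|_4^4 - (N-1)\langle\phi, (V_\beta\star|\phi|^2)\phi\rangle\bigr) p_1 p_2$, and the scalar is $\mathcal{O}(N^{-\eta})$ by the $\mathcal{V}_\beta$ scaling $N\|V_\beta\|_1 \to 2a$ together with a partial integration against the potential $h_{0,\beta}$ from Lemma \ref{ulemma}. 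The off-diagonal contributions $p_1 p_2 G (1-p_1 p_2)$ and $(1-p_1p_2)G p_1 p_2$ split into a $V_\beta$ piece controlled by part (a) and a $|\phi|^2$ piece controlled by Lemma \ref{hnorms}(a). Finally, the double-$(1-p_1p_2)$ term has the favorable sign in the interaction: since $V_\beta \geq 0$, the contribution $-(N-1)\laa\Psi, (1-p_1p_2)V_\beta(1-p_1p_2)\Psi\raa$ is non-positive and may be dropped for the upper bound, while the mean-field remainder is bounded by $2\|\phi\|_\infty^2 \|(1-p_1p_2)\Psi\|^2 \leq 4\|\phi\|_\infty^2 \alpha(\Psi,\phi)$ via Lemma \ref{kombinatorikb}.

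The main obstacle is the $Z_\beta q_1 q_2$ sector in (a) beyond $\beta<1/3$: a direct Cauchy--Schwarz estimate yields only $\sqrt\alpha$, and one cannot invoke Lemma \ref{hnorms}(d) here, because that bound contains $\|\nabla_1 q_1 \Psi\|^2$ --- precisely the quantity that (c) is trying to control in terms of $\alpha$, which would create a circular argument. The resolution is the symmetrization over particle indices sketched above, which redistributes the $q_1 q_2$ mass over the $N-1$ pairs so that both the convolution-based operator-norm bound and an additional $\sqrt\alpha$ factor from Lemma \ref{kombinatorikb} are simultaneously available.
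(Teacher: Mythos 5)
Your overall architecture is right and matches the paper's: (a) is proven by decomposing $1-p_1p_2=p_1q_2+q_1p_2+q_1q_2$, (b) by inserting $1=p_1p_2+(1-p_1p_2)$ on both sides of $G=2a|\phi(x_1)|^2-(N-1)V_\beta(x_1-x_2)$ and dropping the non-positive $V_\beta$ double-$q$ term, and (c) is then immediate from Lemma \ref{totalE} (\ref{smooth2}). Your treatment of the diagonal piece of (b) via (\ref{faltungorigin}) and (\ref{formelfuerkin}), the off-diagonal $|\phi|^2$ pieces via Lemma \ref{hnorms}(a), and the favorable sign of the $(1-p_1p_2)V_\beta(1-p_1p_2)$ term is exactly what the paper does. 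The $q_1p_2$ sector of (a) via the operator-norm bound in Lemma \ref{hnorms}(b) combined with $\|q_1\Psi\|\leq\sqrt{\alpha}$ is also correct.

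The gap is in the $q_1q_2$ sector of (a) for $1/3\leq\beta<1$. Your proposed fix -- ``symmetrized identity plus Cauchy--Schwarz in the sum, recovering a second $\sqrt{\alpha}$ via Lemma \ref{kombinatorikb}'' -- is precisely the mechanism of Lemma \ref{hnorms}(c) and of formula (\ref{symback}), and it does not extend past $\beta<1/3$. The obstruction is the diagonal term in (\ref{symback}): after symmetrizing $\laa\Psi,p_1p_2V_\beta q_1q_2\Psi\raa$ over $j=2,\dots,N$ and applying Cauchy--Schwarz, the single-sum term contributes $N^{1/2}\|V_\beta\|_2\|\phi\|_\infty$, and since $\|V_\beta\|_2\sim N^{-1+3\beta/2}$ this is of order $N^{-1/2+3\beta/2}$, which is not $o(1)$ once $\beta\geq 1/3$. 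This is exactly why the paper restricts Lemma \ref{hnorms}(c) to $\beta<1/3$; rerunning the same symmetrization argument does not buy you anything new.

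You correctly identified the circularity danger in invoking Lemma \ref{hnorms}(d), but you drew the wrong conclusion from it: the paper's resolution is not to avoid $\nabla_1q_1\Psi$ altogether, but to use the \emph{crude, non-circular} a priori bound (\ref{ablabsch}), namely $\|\nabla_1q_1\Psi\|^2\leq\mathcal{K}(\phi)(\alpha(\Psi,\phi)+1)$, which follows directly from Lemma \ref{totalE} and positivity of $V_\beta$ and is valid for all $\beta$. The paper's actual treatment of the $q_1q_2$ sector for $1/3\leq\beta<1$ is: split $V_\beta=U_{1/4,\beta}+(V_\beta-U_{1/4,\beta})=U_{1/4,\beta}+\Delta h_{1/4,\beta}$ using Definition \ref{udef} and Lemma \ref{ulemma}; the $U_{1/4,\beta}$ part has effective scaling $1/4<1/3$ and so falls under Lemma \ref{hnorms}(c); the $\Delta h_{1/4,\beta}$ part is integrated by parts (formula (\ref{l234}) and the estimates (\ref{inta})--(\ref{intd})), producing factors of $\|\nabla_1q_1\Psi\|$ or $\|\nabla_2q_2\Psi\|$ that are merely $O(1)$ by (\ref{ablabsch}) and are always accompanied by a small power of $N$ coming from $\|h_{1/4,\beta}\|$, $\|\nabla h_{1/4,\beta}\|_1$, etc. This is the essential additional ingredient your argument is missing.
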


\begin{proof}
\begin{enumerate}
\item
Since $1-p_1p_1=p_1q_2+q_1p_2+q_1q_2$ it suffices to show that
\begin{align} N|\laa\Psi,p_1p_2V_\beta(x_1-x_2) p_1q_2\Psi\raa|\leq& \C\label{splitt1}
\\N|\laa\Psi,p_1p_2V_\beta(x_1-x_2) q_1q_2\Psi\raa|\leq& \C\label{splitt2}
\end{align}
For (\ref{splitt1}) we use Lemma \ref{kombinatorik} (c) and (e) as well as Lemma \ref{kombinatorikb} to get
\begin{align*}
(\ref{splitt1})=&N|\laa\Psi,\widehat{n}_1^{-1/2}p_1p_2V_\beta(x_1-x_2) \widehat{n}_2^{1/2}p_1q_2\Psi\raa|
\\\leq& N\|\widehat{n}_1^{-1/2}\Psi\|\;\|p_1V_\beta(x_1-x_2)p_1\|_{op}\|\widehat{n}_2^{1/2}q_2\Psi\|
\\\leq& C \|\phi\|_\infty^2\alpha(\Psi,\phi)\;.
\end{align*}
For $0< \beta<1/3$ (\ref{splitt2}) is Lemma \ref{hnorms} (c) for the special case $\widehat{m}=1$  (recall that $p_1
|\phi|^2(x_2)p_1=p_2 |\phi|^2(x_1)p_2=0$).
To generalize to $0<\beta<1$ we use Definition \ref{udef}
\begin{align*}
N|\laa\Psi,p_1p_2V_\beta (x_1-x_2)q_1q_2\Psi\raa|\;\leq \; N|\laa\Psi,p_1p_2U_{1/4,\beta} (x_1-x_2)q_1q_2\Psi\raa|&
\\+N|\laa\Psi,p_1p_2(V_\beta-U_{1/4,\beta}) (x_1-x_2)q_1q_2\Psi\raa|&
\end{align*}
Since $U_{1/4,\beta}\in\mathcal{V}_{1/4}$ (Lemma \ref{ulemma}) the first summand has the right bound (Lemma \ref{hnorms} (c)).
To finish the proof of the Lemma we verify the following formula, which shall be also of use later on.

 \be\label{l234}N|\laa\Psi ,p_1  p_2 (\Delta
h_{1/4,\beta})(x_1-x_2) q_1 q_2 \Psi\raa| \leq  C(\|\phi
 \|_\infty+\|\nabla\phi  \|_\infty) N^{-\eta} \ee to get (\ref{splitt2}) in full generality.
Integrating by parts we get
\begin{align}\nonumber& N|\laa\Psi ,p_1 p_2 (\Delta h_{1/4,\beta})q_1 q_2 \Psi\raa|
\\\label{inta}&\hspace{1.5cm}\leq
N|\laa\Psi ,p_1  p_2  (\nabla_1 h_{1/4,\beta}(x_1-x_2))\nabla_1q_1 q_2
\Psi\raa|
\\&\label{intb}\hspace{2cm}+N|\laa\nabla_1p_1  p_2\Psi ,
(\nabla_1 h_{1/4,\beta}(x_1-x_2)) q_1 q_2 \Psi\raa|
\;. \end{align}
 To control $(\ref{inta})$ we use similar ideas as in the proof of Lemma \ref{hnorms} \begin{align*}
(\ref{inta})\leq& C\|\sum_{j=2}^{N}q_j
(\nabla_1 h_{1/4,\beta}(x_1-x_j)) p_1  p_j\Psi \|\;\|\nabla_1q_1\Psi\|\;. \end{align*} The second factor is bounded (see
(\ref{ablabsch})). For the first factor
we write \begin{align}\nonumber \|&\sum_{j=2}^{N}q_j (\nabla_1 h_{1/4,\beta}(x_1-x_j)) p_1
p_j\Psi \|^2\\&=\label{intc}
\sum_{j\neq k\neq 1}\laa\Psi, p_1  p_k (\nabla_1
h_{1/4,\beta}(x_1-x_k))q_kq_j (\nabla_1 h(x_1-x_j)) p_1 p_j\Psi\raa
\\\label{intd}&+\sum_{j=2}^{N}\laa\Psi, p_1  p_j (\nabla_1
h_{1/4,\beta}(x_1-x_j))^2 p_1 p_j\Psi\raa
\;.
 \end{align}
$(\ref{intc})$ can be estimated using Lemma \ref{kombinatorik} (e) and Lemma \ref{ulemma} \begin{align*}
(\ref{intc})=&
\sum_{j\neq k\neq 1}\laa\Psi, p_1  q_j
 (\nabla_1 h(x_1-x_j))p_j p_k(\nabla_1h_{1/4,\beta}(x_1-x_k))p_1q_k \Psi\raa
\\\leq& N^2\|p_1(\nabla_1 h(x_1-x_j))p_j\|_{op}^2
=N^2\|\phi\|_\infty^4\|\nabla_1
h_{1/4,\beta}\|_1^2
\\\leq& CN^2\|\phi\|_\infty^4 N^{-5/2}=C\|\phi\|_\infty^4 N^{-1/2}\;.
\end{align*}

$(\ref{intd})$ is bounded by
$$N\|\nabla h_{1/4,\beta}(x_1-x_j)p_1\|_{op}^2\leq  N\|\nabla h_{1/4,\beta}(x_1-x_j)\|^2\|\phi\|_\infty^2\leq
CN^{-1+\beta}\|\phi\|_\infty^2\;.$$
It follows that
$(\ref{inta})$ is bounded by $\mathcal{K}(\phi)N^{-\eta}$ for some $\eta>0$.

Integration by parts yields for $(\ref{intb})$
\begin{align*}
(\ref{intb})\leq& N|\laa\Delta_1p_1  p_2\Psi ,
 h_{1/4,\beta}(x_1-x_2) q_1 q_2 \Psi\raa|
\\&+N|\laa\nabla_1p_1  p_2\Psi ,
 h_{1/4,\beta}(x_1-x_2) \nabla_1q_1 q_2 \Psi\raa|
\\\leq& N\|\Delta_1p_1\Psi\| \|p_2 h_{1/4,\beta}(x_1-x_2)\|_{op}
\\&+N\|\nabla_1p_1\Psi\| p_2  h_{1/4,\beta}(x_1-x_2)\|_{op} \|\nabla_2
q_2 \Psi\|
\\\leq& N\|\Delta\phi\|\;\|\phi\|_\infty\;\|h_{1/4,\beta}\|
\\&+N\|\nabla\phi  \|\;\|\phi  \|_\infty\;\|h_{1/4,\beta}\|\;\|\nabla_2
q_2 \Psi\|
 \end{align*}
with Lemma \ref{ulemma} and (\ref{ablabsch}) we get (\ref{l234}) and (\ref{splitt2}) follows.

\item

We get using   selfadjointness of the
multiplication operators \begin{align}
\nonumber\laa
\Psi,&\left(2a|\phi(x_1)|^2-(N-1)V_{\beta}(x_1-x_2)\right)\Psi\raa
\\\label{zeile1}=&\laa p_1p_2\Psi,\left(2a|\phi(x_1)|^2-(N-1)V_{\beta}(x_1-x_2)\right)p_1p_2\Psi\raa
\\\label{zeile2}&+2\Re\laa p_1p_2\Psi,\left(2a|\phi(x_1)|^2-(N-1)V_{\beta}(x_1-x_2)\right)(1-p_1p_2)\Psi\raa
\\&\label{positiverterm}-(N-1)\laa (1-p_1p_2)\Psi,V_{\beta}(x_1-x_2)(1-p_1p_2)\Psi\raa
\\\label{zeile3}&+2a\laa (1-p_1p_2)\Psi,|\phi(x_1)|^2(1-p_1p_2)\Psi\raa
\;. \end{align}
(\ref{zeile1}) is controlled by formula (\ref{formelfuerkin}).

Using symmetry of $\Psi$ and $p_2|\phi(x_1)|^2q_2=0$, the absolute value of (\ref{zeile2})
 is bounded by
 \begin{align*} &\hspace{-1cm}2\left|\llaa
p_1p_2\widehat{n}_2^{1/2}\Psi,2a|\phi(x_1)|^2\widehat{n}_1^{-1/2}q_1p_2\Psi\rraa\right|
\\&+2(N-1)\left|\llaa
p_1p_2\Psi,2V_{\beta}(x_1-x_2)(1-p_1p_2)\Psi\rraa\right|
\end{align*}
The first line is controlled by Lemma \ref{kombinatorik} (e) with Lemma \ref{kombinatorikb} and thus bounded by $\|\phi\|_{\infty}^2\alpha(\Psi,\phi)$. The second line is controlled by by part (a)
of the Lemma. Thus
we can find a $\kinf$ such that
$$(\ref{zeile2})\leq  \C\;. $$
Positivity of $V_{\beta}$ implies
that line (\ref{positiverterm}) is negative.
(\ref{zeile3}) is bounded by $$\|(p_1q_2+q_1p_1+q_1q_1)\Psi\|^2\|\phi\|^2_\infty\leq  C\|\phi\|^2_\infty\alpha(\Psi,\phi)
$$
 and we get (b).
\item follows from (b) with
  Lemma \ref{totalE}.
\end{enumerate}
\end{proof}

\subsection{Proof of the Theorem for $0<\beta<1/3$}\label{secproof1}

With Lemma \ref{hnorms} and Lemma \ref{potdiffer} we can now estimate $\alpha'_j$ for $j=0,1,2$ for $\beta<1/3$.
 We arrive directly at the following Corollary.
\begin{corollary}\label{alphaallg}
Let $0<\beta<1/3$, $V_\beta\in \mathcal{V}_\beta$. Then there
exists a $\kinf$ and a $\eta>0$ such that for any symmetric $\Psi$, any $\phi$ and any $j=0,1,2$
\be\label{alphacontrol}|\alpha'_j(\Psi,\phi)|\leq  \CphiA\C\;.\ee

\end{corollary}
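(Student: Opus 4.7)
The plan is to handle each $\alpha'_j$ separately by applying the relevant part of Lemma \ref{hnorms}, using the decomposition (\ref{alpha1})--(\ref{alpha2b}) introduced right after Definition \ref{alphasplit}, together with Lemma \ref{potdiffer} to control the kinetic term that appears in one of the bounds.

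First I would dispose of $\alpha'_0$. Since $\alpha'_0(\Psi,\phi) = |\laa\Psi,\dot A_t(x_1)\Psi\raa - \langle\phi,\dot A_t\phi\rangle|$ and $\dot A_t$ is bounded by hypothesis, Lemma \ref{hnorms}(a) with $f = \dot A_t$ immediately yields $\alpha'_0 \leq C\|\dot A_t\|_\infty \alpha(\Psi,\phi)$, which is of the required form.

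Next I would estimate $\alpha'_1$, which splits as (\ref{alpha1}) + (\ref{alpha1b}). For (\ref{alpha1}), the operator sandwiched has the form $p_1 p_2 \potdiff_\beta q_1 p_2 \widehat{m}$ with $\widehat{m} = \widehat{n}_{-1}-\widehat{n}_1$ (bounded by $n^{-1}$ up to a constant). Lemma \ref{hnorms}(b) gives $N\|p_1 p_2 \potdiff_\beta q_1 p_2 \widehat{m}\|_{\{2\}} \leq \mathcal{K}(\phi)\|\phi\|_\infty N^{-\eta}$, and pairing with $\Psi$ via Cauchy--Schwarz absorbs the extra factor $N$ coming from $N(N-1)$. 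For (\ref{alpha1b}) the shifted weight $\widehat{n}_{-2}-\widehat{n}$ is of order $N^{-1}n^{-1}$, and Lemma \ref{hnorms}(c) applied with $\Psi=\chi$ delivers directly the bound $\mathcal{K}(\phi)\|\phi\|_\infty(\laa\Psi,\widehat{n}\Psi\raa + N^{-\eta})$, which is what we need. This uses $\beta<1/3$ (since Lemma \ref{hnorms}(c) requires it).

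For $\alpha'_2 =$ (\ref{alpha2}) + (\ref{alpha2b}), the first summand (\ref{alpha2}) is bounded just as (\ref{alpha1}) using Lemma \ref{hnorms}(b) (by taking adjoints, since $p_1 p_2 \potdiff_\beta p_1 q_2 \widehat{m}$ is the adjoint of the operator treated there up to swapping shifts of the weight). The genuinely delicate piece is (\ref{alpha2b}), which is controlled by Lemma \ref{hnorms}(d): it yields an estimate involving $\|\nabla_1 q_1\Psi\|^2$, and here is where the main obstacle sits — $\alpha$ by itself does not contain a gradient term. This is precisely the reason for Lemma \ref{potdiffer}(c), which under the hypothesis $\beta<1$ (in particular $\beta<1/3$) gives $\|\nabla_1 q_1\Psi\|^2 \leq \mathcal{K}(\phi)(\alpha(\Psi,\phi) + N^{-\eta})$. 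Substituting this bound into the conclusion of Lemma \ref{hnorms}(d) collapses (\ref{alpha2b}) to $\mathcal{K}(\phi)(\|\phi\|_\infty + (\ln N)^{1/3}\|\nabla\phi\|_{6,loc})(\alpha(\Psi,\phi) + N^{-\eta})$, of the required form.

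Combining the three estimates, using $\|\nabla\phi\|_{6,loc} \leq \|\Delta\phi\|$ (Sobolev) to absorb any $\|\nabla\phi\|_{6,loc}$-free polynomial into a new $\mathcal{K}\in\mathcal{F}$, and relabelling the finite collection of $\eta$'s by their minimum, produces (\ref{alphacontrol}) uniformly in $j=0,1,2$ and in symmetric $\Psi$. The chief structural step to highlight is the interplay between Lemma \ref{hnorms}(d) and Lemma \ref{potdiffer}(c): the former reduces $\alpha'_2$ to a kinetic quantity that is not manifestly small, and the latter shows it is small precisely because the mean field nearly cancels the interaction in the energy difference controlled by $\alpha$.
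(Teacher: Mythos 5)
Your proposal is correct and spells out precisely what the paper's terse "With Lemma \ref{hnorms} and Lemma \ref{potdiffer} we can now estimate $\alpha'_j$" proof intends: $\alpha'_0$ via Lemma \ref{hnorms}(a), the decomposition (\ref{alpha1})--(\ref{alpha2b}) handled by Lemma \ref{hnorms}(b),(c),(d) with the shifted weights dominated by $n^{-1}$, and the gradient term $\|\nabla_1q_1\Psi\|^2$ from part (d) closed off via Lemma \ref{potdiffer}(c). Same route, same key lemmas — you have simply made explicit the bookkeeping the paper leaves to the reader.
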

The  Theorem follows for $0<\beta<1/3$ via Gr\o nwall.


\section{Generalizing to $1/3\leq \beta< 1$}\label{secb2}

 For $\beta>1/3$
the radius of the interactions is much smaller than the mean distance of the particles, so the interactions do not
overlap
for typical configurations any more. Still the interaction of our $N$-body system can be approximated by
an effective mean field, let us explain why:
Whenever two or more particles come very
close the wave function is  affected on a microscopic length scale
by the interaction of the particles.
Neglecting three particle interactions the microscopic structure can be constructed from the zero energy scattering
states of
$V_\beta$.
This can be made more clear with the following
heuristic argument: In principle one could control the time evolution
of $\Psi_t$ by generalized eigenfunction expansion. The relevant
eigenfunctions are on a microscopic scale
approximately given by this zero
energy scattering state.

 By this microscopic structure
the effect of the interactions is smeared out. For the smeared out effective interactions  the old mean field argument
holds. Therefore the first step when generalizing the Theorem is
to say something about the microscopic structure of the wave functions.

Below we shall use our estimates on the microscopic structure in two
places. On the one hand it makes a better control on $\Psi_t$ and
thus a better control of $\alpha(\Psi_t,\phi_t)$ for
$\beta\geq1/3$ possible. On the other hand  for $\beta=1$ the
interaction energy of $\Psi_t$ can only be controlled in a suitable
way when the microscopic structure of $\Psi_t$ is known.

\subsection{Microscopic Structure}\label{secmic}

For technical reasons we shall make a smooth spacial cutoff of the
zero energy scattering state. We do so by defining --- depending on $V_\beta$ --- a
potential
$W_{\beta_1}\in\mathcal{U}_{\beta_1}$ with softer scaling behavior $\beta_1<\beta$ in such a way that the potential
$V_\beta-W_{\beta_1}$
has scattering length zero, i.e. the zero energy scattering
state of  $V_\beta-W_{\beta_1}$ is outside the support of  $W_{\beta_1}$ equal to one.

\begin{definition}\label{microscopic}
Let $0<\beta_1<\beta\leq  1$, $V_{\beta}\in\mathcal{V}_{\beta}$ and $a_N/(4\pi)$ be the scattering length of $V_\beta/2$.
We define the potential $W_{\beta_1}$ via
$$W_{\beta_1}(x):=\left\{
  \begin{array}{ll}
    a_N N^{3\beta_1}, & \hbox{ for $N^{-\beta_1}<x< R_{\beta_1}$;} \\
    0, & \hbox{else.}
  \end{array}
\right.$$
$R_{\beta_1}$ is the minimal value which ensures that the
scattering length of $V_{\beta}-W_{\beta_1}$ is zero.

The respective zero energy scattering state shall be denoted by
$f_{\beta_1,\beta}$, i.e.
\be\label{fdef}\left(-\Delta+\frac{1}{2}(V_{\beta}-W_{\beta_1})\right)f_{\beta_1,\beta}=0\;,\ee
we shall also need $$g_{\beta_1,\beta}=1-f_{\beta_1,\beta}\;.$$
\end{definition}

\begin{lemma}\label{defAlemma}
For any $0<\beta_1<\beta\leq 1$,
$V_{\beta}\in\mathcal{V}_{\beta}$

\begin{enumerate}

\item
$$W_{\beta_1}f_{\beta_1,\beta}\in\mathcal{V}_{\beta_1}\;,\hspace{1cm}
 \lim_{N\to\infty}N^{\eta}|N\| V_{\beta}f_{\beta_1,\beta}\|_1-a|<\infty \;.$$

\item \begin{align*}\|g_{\beta_1,\beta}\|_1&\leq  C N^{-1-2\beta_1}\;,\hspace{1.5cm}\|g_{\beta_1,\beta}\|\leq  C N^{-1-\beta_1/2}\;,\\
 \|g_{8/9,1}\|_{3}&\leq  CN^{-1}(\ln N)^{1/3}\end{align*}


\item
For any
$x_2\in\mathbb{R}^3$ and any $\Psi\in\LZN$
$$\|\mathds{1}_{|x_1-x_2|\leq  R_{\beta_1}}\nabla_1\Psi\|^2+\frac{1}{2}\laa\Psi,
(V_{\beta_1}-W_{\beta_1})(x_1-x_2)\Psi\raa\geq0\;.$$

\end{enumerate}
\end{lemma}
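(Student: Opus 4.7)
The plan splits naturally along the three claims. For part (c) the approach is the Dyson-type conjugation by the scattering state. Locally on $B:=\{|x_1-x_2|\leq R_{\beta_1}\}$ (viewed as a ball in $x_1$ with $x_2$ fixed) I would write $\Psi=f_{\beta_1,\beta}(x_1-x_2)\sigma$ and expand
\[
|\nabla_1\Psi|^2=f^2|\nabla_1\sigma|^2+|\nabla_1 f|^2|\sigma|^2+\tfrac12\nabla_1(f^2)\cdot\nabla_1|\sigma|^2.
\]
Integrating the cross term by parts in $x_1$ over $B$, the boundary contribution vanishes because by construction $R_{\beta_1}$ is the minimal radius with $f\equiv 1$ outside, so $\partial_r f=0$ on $\partial B$. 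Combined with $\Delta f=\tfrac12(V_\beta-W_{\beta_1})f$ and $\Delta(f^2)=2f\Delta f+2|\nabla f|^2$, the unwanted $|\nabla f|^2|\sigma|^2$ term cancels and one is left with $\int_B f^2|\nabla_1\sigma|^2=\int_B|\nabla_1\Psi|^2+\tfrac12\int_B(V_\beta-W_{\beta_1})|\Psi|^2\geq 0$, which is exactly the claim.

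For parts (a) and (b) I would analyse the radial scattering equation explicitly. Setting $v(r):=rf_{\beta_1,\beta}(r)$ and $k^2:=a_N N^{3\beta_1}/2$, the problem \eqref{fdef} reduces to $-v''+\tfrac12(V_\beta-W_{\beta_1})v=0$, which splits into three regions: on the support of $V_\beta$ ($r\lesssim N^{-\beta}$) one has the standard repulsive scattering ODE; on the intermediate shell $N^{-\beta}\lesssim r\leq N^{-\beta_1}$ the function $v$ is affine, $v(r)=r-c$; and on the outer shell $N^{-\beta_1}\leq r\leq R_{\beta_1}$ it is a sinusoid in $k r$. Matching continuously with $v(R_{\beta_1})=R_{\beta_1}$ and $v'(R_{\beta_1})=1$ determines $R_{\beta_1}$ together with the constants. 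I would verify that the resulting $R_{\beta_1}$ stays bounded and that $c\sim a_N\sim 1/N$, which together with $\|f\|_\infty\leq C$ and the obvious $L^\infty$-bound $\|W_{\beta_1}\|_\infty\lesssim N^{3\beta_1-1}$ immediately yields $W_{\beta_1}f_{\beta_1,\beta}\in\mathcal{V}_{\beta_1}$.

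For the second half of (a), the integration-by-parts identity $\int(V_\beta-W_{\beta_1})f_{\beta_1,\beta}=2\int\Delta f_{\beta_1,\beta}=0$ (the boundary flux at infinity vanishes since $\nabla f$ is compactly supported in $B_{R_{\beta_1}}$) gives $\|V_\beta f_{\beta_1,\beta}\|_1=\|W_{\beta_1}f_{\beta_1,\beta}\|_1$. Comparing $f_{\beta_1,\beta}$ on $\operatorname{supp} V_\beta$ to the genuine scattering state $\tilde f$ of $V_\beta/2$ (they satisfy the same ODE there, so differ only by a normalization factor read off from the matching in the middle linear region) then gives $N\|V_\beta f_{\beta_1,\beta}\|_1\to a$ at a polynomial rate, inherited from the rate $V_\beta\in\mathcal{V}_\beta$. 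The $L^p$ estimates in (b) follow from the explicit piecewise formula: in the middle region $g(r)=c/r$ with $c\sim N^{-1}$, which when integrated over $B_{R_{\beta_1}}$ produces the leading $N^{-1-2\beta_1}$ and $N^{-1-\beta_1/2}$ behaviour for $\|g\|_1$ and $\|g\|_2$ respectively; the $L^3$-bound for the specific choice $(\beta_1,\beta)=(8/9,1)$ is logarithmically critical and picks up the factor $(\ln N)^{1/3}$ from $\int_{N^{-1}}^{R_{8/9}}(1/r)^3 r^2\,dr\sim\ln N$.

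The main obstacle I expect is the matching step underlying (a) and (b): the sinusoidal piece on the outer shell forces delicate bookkeeping since the phase $k(R_{\beta_1}-N^{-\beta_1})$ need not be small, and one must verify that $R_{\beta_1}$ is the \emph{minimal} zero of the matching condition (not a later zero where $\partial_r f=0$ is reached by the oscillator after the solution has changed sign), and that this minimal $R_{\beta_1}$ lies in a range where $f\geq 0$ throughout, so that the Dyson trick in (c) does not encounter a node. This positivity is also what forces the restriction $\beta_1<\beta$ in the definition.
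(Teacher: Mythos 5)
Your argument for part (c) is correct but takes a genuinely different route from the paper. The paper does \emph{not} use the Dyson/ground-state-transformation conjugation. Instead it argues by contradiction on an auxiliary many-center Schr\"odinger operator $H^{X_n}:=-\Delta+\sum_{x_k\in X_n}(V_\beta-W_{\beta_1})(\cdot-x_k)$: first $H^{X_n}\geq 0$ is deduced from the existence of the strictly positive zero-energy scattering state $F^{X_n}=\prod_{k}f_{\beta_1,\beta}(\cdot-x_k)$ (pairing $F^{X_n}$ against a hypothetical negative-energy ground state gives a sign contradiction), and then a putative violation of (c) with negative value $E<0$ is promoted to a trial function $\chi_R=\xi_R\prod_{x_k\in X_n}\Psi(\cdot-x_k)$ tiling $n\sim R^3$ translates of the bad one-body state inside a ball of radius $R$, with a power-law cutoff $\xi_R$ outside; for $R$ large the negative bulk energy $\sim ER^3$ dominates the cutoff's kinetic cost $\sim R$, contradicting $H^{X_n}\geq 0$. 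Your conjugation $\Psi=f\sigma$ followed by integration by parts, with the boundary term killed by $\partial_rf|_{r=R_{\beta_1}}=0$ and the $|\nabla f|^2|\sigma|^2$ term cancelled via $\Delta f=\tfrac12(V_\beta-W_{\beta_1})f$, is the more direct and standard proof of exactly this type of inequality; it only needs $f>0$ on $\overline{B_{R_{\beta_1}}}$ and the Neumann condition at $R_{\beta_1}$, both of which the paper establishes. (Note the paper's statement writes $V_{\beta_1}$ where $V_\beta$ is clearly meant; you implicitly read it correctly.)

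For (a) and (b) your radial-ODE analysis is in the same spirit as the paper but carries more bookkeeping than needed, so your worry about matching the sinusoidal phase on the outer shell is real but largely sidestepped in the paper. The paper introduces $j_\beta$ (your $\tilde f$), the zero-energy scattering state of $V_\beta/2$, fixes $K_{\beta_1}$ by $K_{\beta_1}f_{\beta_1,\beta}=j_\beta$ on $\{r<N^{-\beta_1}\}$ with the two-sided bound $1-a_N/(4\pi N^{-\beta_1})\leq K_{\beta_1}\leq 1$, and then relies only on qualitative facts: $f$ is radially nondecreasing and positive, $f\leq 1$, $f\geq j_\beta\geq 1-a_N/(4\pi r)$, and $R_{\beta_1}=O(N^{-\beta_1})$ (which follows from $\int(V_\beta-W_{\beta_1})f=0$). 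This gives the pointwise envelope $0\leq g_{\beta_1,\beta}(r)\leq a_N/(4\pi r)$ supported in $\{r\leq R_{\beta_1}\}$, from which all the $L^p$ bounds in (b) and the estimate $N\|gV_\beta\|_1\leq CN^{\beta-1}$ follow by direct radial integration (the $L^3$ case for $(\beta_1,\beta)=(8/9,1)$ is logarithmically critical and picks up the $(\ln N)^{1/3}$, as you noted). In particular the explicit sinusoidal phase and the location of the first turning point never need to be computed; monotonicity plus the comparison with $j_\beta$ (a maximum-principle fact, since $f-j_\beta$ is superharmonic on the outer shell with nonnegative boundary values) do the work. Your plan is sound, but the paper's route is lighter.
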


\begin{proof}

Let $j_{\beta}$ be the zero energy scattering state of the
potential $V_{\beta}/2$.

Before we prove the different points of the Lemma, let us give some properties of $f_{\beta_1,\beta}$ first.

Since $V_{\beta}$ is
positive and has compact support of radius $r$ it follows, that
$1>j_{\beta}(x)\geq 1-a_N/(4\pi x)$.

Note, that the potential $W_{\beta_1}$ is
zero inside the Ball around zero of radius $N^{-\beta_1}$, hence
$f_{\beta_1,\beta}$ is inside this Ball a multiple of
$j_{\beta}$, i.e. there exists a $K_{\beta_1}$ such that
$$K_{\beta_1}f_{\beta_1,\beta}(x)=j_{\beta}(x)\text{ for
} x<N^{-\beta_1}\;,$$ in particular the derivative $d_x K_{\beta_1}f_{\beta_1,\beta}(x)$ is positive for $x=N^{-\beta_1}$.

For $x>N^{-\beta_1}$ the $f_{\beta_1,\beta}$ ``sees'' a negative potential, namely $-W_{\beta_1}$. Due to spherical symmetry $f_{\beta_1,\beta}$ is in that region a linear combination of an in- and an outgoing spherical wave with momentum $k_0=\sqrt{a_NN^{3\beta_1}}$.

By definition $R_{\beta_1}$ is the minimal value which ensures that the
scattering length of $V_{\beta}-W_{\beta_1}$ is zero, i.e. the minimal value which ensures that $f_{\beta_1,\beta}$ is constant for $x>R_{\beta_1}$.  In other words $R_{\beta_1}$   is the minimal value satisfying
$$d_{|x|}f_{\beta_1,\beta}(|x|)\big|_{|x|=R_{\beta_1}}=0\;.$$
It follows that $f_{\beta_1,\beta}$ is a positive function and that $d_{|x|}f_{\beta_1,\beta}\geq 0$.

Finally we have to control the constant $K_{\beta_1}$ which ensures that $f_{\beta_1,\beta}(x)=1$ for $x>R_{\beta_1}$: Since
$W_{\beta_1}$ is positive, that
$K_{\beta_1}d_xf_{\beta_1,\beta}\leq
d_xj_{\beta}$ and $K_{\beta_1}f_{\beta_1,\beta}\leq  j_{\beta}$.

Since $f_{\beta_1,\beta}(x)=1$ for $x>R_{\beta_1}$ and
$\lim_{x\to\infty}j_{\beta}(x)=1$ we get that
$K_{\beta_1}\leq  1$. On the other hand we have, since \be\label{fabschink}1\geq f(N^{-\beta_1})=j_{\beta}(N^{-\beta_1})/K_{\beta_1}\geq(1-a_N/(4\pi N^{-\beta_1}))/K_{\beta_1}\ee that
\be\label{fabschink2}(1-a_N/(4\pi N^{-\beta_1}))\leq  K_{\beta_1}\leq  1\;.\ee

\begin{enumerate}
 \item Using that $ f_{\beta_1,\beta}\geq j_{\beta}$ it follows that
\be\label{gbound}|g_{\beta_1,\beta}(x)|\leq  a_N/(4\pi x)\;.\ee Thus \begin{align*}
 N\|g_{\beta_1,\beta}V_\beta\|_1=&N\int_{0}^{N^{-\beta}}|g_{\beta_1,\beta}(x)V_\beta(x)|x^2dx+N\int_{N^{-\beta}}^{\infty}|g_{\beta_1,\beta}(x)V_\beta(x)|x^2dx
 \\\leq& C NN^{-1+3\beta}N^{-1-2\beta}+CNN^{-1+\beta}N^{-1}=CN^{\beta-1}\;.
 \end{align*}
Since $V_{\beta}\in\mathcal{V}_\beta$ it follows that there exists a $\eta>0$ such  that
$\lim_{N\to\infty}N^{\eta}|\|V_\beta\|-a|<\infty$ and we get the second statement in (a).

The scattering length of the potential $V_{\beta}-W_{\beta_1}$ is zero. Thus $\int (V_\beta(x)-W_{\beta_1}(x))f_{\beta_1,\beta}(x)dx=0$ and also $\lim_{N\to\infty}N^{\eta}(N\| W_{\beta_1}f_{\beta_1,\beta}\|_1-2a\|\leq \infty$. This implies in particular that $R_{\beta_1}$ is of order $N^{-\beta_1}$ thus $W_{\beta_1}f_{\beta_1,\beta}\in\mathcal{V}_{\beta_1}$.

\item

Since $g_{\beta_1,\beta}(x)=0$ for $x>N^{-\beta_1}$  it follows
that \begin{align*}\|g_{\beta_1,\beta}\|_1\leq&  \frac{1}{4\pi}a_N\int_{0}^{R_{\beta_1}}
|x|^{-1}d^3x\leq  CN^{-1-2\beta_1}
\\\|g_{\beta_1,\beta}\|^2\leq&
\frac{1}{16\pi^2}
a_N^2\int_{0}^{R_{\beta_1}} |x|^{-2}d^3x\leq  CN^{-2-\beta_1}
\\\|g_{\beta_1,\beta}\|_{3}^{3}\leq&  \frac{1}{64 \pi^3}a_N^{3}\int_{0}^{R_{\beta_1}}
|x|^{-3}d^3x\leq  CN^{-3}\ln N
\;.  \end{align*}

\item

To prove (c) we first show that for any $n\in \mathbb{N}$ and any subset $X_n\subset\mathbb{R}^3$ with $|X_n|=n$ which is
 such that the supports of the potentials $W_{\beta_1}(\cdot-x)$ are pairwise disjoint for any $x\in X_n$ the operator
$$H^{X_n}:=-\Delta+\sum_{x_k\in X_n} (V_{\beta_1}(\cdot-x_k)-W_{\beta_1}(\cdot-x_k))$$
is nonnegative.

This one can see in the following way: For any such $X_n$ the zero energy scattering state of $H^{X_n}$ is given by
$$F^{X_n}_{\beta_1,\beta}:=\prod_{x_k\in X_n}f_{\beta_1,\beta}(\cdot-x_k)\;.$$
By construction the $f_{\beta_1,\beta}$ are positive, so is $F^{X_n}_{\beta_1,\beta}$.
Assume now that $H^{X_n}$ is not nonnegative, i.e. that there exists a ground state $\Psi\in L^2$ of $H^{X_n}$ of
negative energy $E$. Since the phase of the ground state can be chosen such that  the ground state is  positive we get
\be\label{contra}\laa F^{X_n}_{\beta_1,\beta},H^{X_n}\Psi\raa=\laa F^{X_n}_{\beta_1,\beta},E\Psi\raa<0\;.\ee

On the other hand we have since $F^{X_n}_{\beta_1,\beta}$ is the zero energy scattering state $$\laa
F^{X_n}_{\beta_1,\beta},H^{X_n}\Psi\raa=\laa H^{X_n}F^{X_n}_{\beta_1,\beta},\Psi\raa=0\;.$$
This contradicts (\ref{contra}) and the nonnegativity of $H^{X_n}$ follows.

Having shown that the $H^{X_n}$ are nonnegative we prove (c) by contradiction. Assume that there exists a $\Psi\in
\mathcal{D}(H)$ such that
$$\|\mathds{1}_{|x|\leq  R_{\beta_1}}\nabla_1\Psi\|+\laa\Psi,(V_{\beta_1}(x_1)-W_{\beta_1}(x_1))\Psi\raa=E<0\;.$$ Since $V_{\beta_1}$ and $W_{\beta_1}$ are spherically symmetric we can assume that $\Psi$ is spherically symmetric and $\Psi(x)=1$ for $|x|> R_{\beta_1}$. We shall construct
now a set of points $X_n$ and a $\chi\in L^2$ such that $\laa\chi,H^{X_n}\chi\raa<0$, contradicting to nonnegativity of $H^{X_n}$.

For any $R>0$ let
$$\xi_R(x):=\left\{
         \begin{array}{ll}
           R^2/x^2, & \hbox{for $x>R$;} \\
           1, & \hbox{else.}
         \end{array}
       \right.
$$
Let now $X_n$ be a subset $X_n\subset\mathbb{R}^3$ with $|X_n|=n$ which is such that the supports of the potentials
$W_{\beta_1}(\cdot-x_k)$ lie within the Ball around zero with radius $R$ and are pairwise disjoint for any $x_k\in
X_n$. Since we are in three dimensions we can choose a $n$ which is of order $R^3$.

Let now $\chi_R:=\xi_R \prod_{x_k\in X_n} \Psi(x-x_k)$. The energy inside the ball $B_R(0)$ equals $E n$, thus it is negative
and of order $R^3$. Outside the ball we have only kinetic energy
$4\int_{x>R} R^4/x^6d^3x$ which  is of order $R$. Choosing $R$ large enough we can find a $X_n$
such that
$$\laa\chi_R,H^{X_n}\chi_R\raa$$ is negative, contradicting nonnegativity of $H^{X_n}$.

\end{enumerate}

\end{proof}

\subsection{First adjustment of the functionals}\label{secadj1}

As mentioned in the introduction we shall use the control on the
microscopic structure, i.e. the control of the zero energy
scattering state of $(V_{\beta}-W_{\beta_1})/2$ with some potential $W_{\beta_1}$ with softer scaling $\beta_1<\beta$
to get a control of $\Psi$ when $\beta$
increases. The first idea one might have is to divide $\Psi$ through
a function which approximates the microscopic structure (e.g. the
product $\prod_{j\neq k}f_{\beta_1,\beta}$ for some suitable
$0<\beta_1<\beta$), but this is not what we shall do. One reason is that $\left(\prod_{j\neq
k}f_{\beta_1,\beta}\right)^{-1}$ gets very large when many particles get very close.

Instead of dividing $\Psi$ through its microscopic structure we
equip the projectors with the respective microscopic structure to
get the desired estimates. Roughly speaking: The operator
$(-\Delta_1-\Delta_2+V_{\beta}(x_1-x_2))p_1p_2$ is hard to control for
large $\beta$ since
$ V_{\beta}(x_1-x_2)) $ is peaked for small
$|x_1-x_2|$. But since $f_{\beta_1,\beta}$ is the zero energy
scattering state of $-\Delta_1+(V_{\beta}-W_{\beta_1})/2$ it follows that
$(-\Delta_1-\Delta_2+V_{\beta}(x_1-x_2))f_{\beta_1,\beta}p_1p_2\Psi$ is
smoother.

To get good estimates we shall incorporate this idea in a very sensible way. How this can be done is easiest explained for a different functional, namely $\widetilde\alpha(\Psi,\phi)=\laa\Psi,\widehat{n}^2\Psi\raa=\laa\Psi,q_1\Psi\raa$ (see formula (\ref{partnumber})). Taking the time derivative and using that $q_1=1-p_1$ one gets among other terms
$$\dt\widetilde\alpha(\Psi_t,\phi_t)=i\sum_{j<k}N^{-1}\laa\Psi,[V_\beta(x_j-x_k),p_1]\Psi\raa\;.$$ Most of the interaction terms commute with $p_1$, only $i\sum_{1<k}N^{-1}\laa\Psi,[V_\beta(x_1-x_k),p_1]\Psi\raa$ remains. Hence the microscopic structure only for the interaction $V_\beta(x_1-x_k)$ matters.

This insight can also be used for our $\alpha_t$. Here many of the interactions cancel out due to  Lemma \ref{ableitung}) (d). 
Looking at Lemma \ref{hnorms} and considering $1/3<\beta<1$ for the moment, it is (\ref{alpha1b}) which we do not have good control of.
Consider the following functional
\begin{align}\label{adjalpha}&\alpha(\Psi,\phi)+N(N-1)\Im\left(\llaa\Psi
, q_1q_2g_{\beta_1,\beta}(x_{1}-x_{2}) (\widehat{n}-\widehat{n}_2)p_{1}p_{2}
\Psi\rraa\right)
\;.\end{align}
Taking the time derivative of this new functional one gets among other terms a
\be\label{adjust}N(N-1)\Im\left(\llaa\Psi
,q_1q_2 [-\Delta_1-\Delta_2,g_{\beta_1,\beta}(x_{1}-x_{2})] (\widehat{n}-\widehat{n}_2)p_{1}p_{2}
\Psi\rraa\right)\;.\ee
The commutator equals $(1-g_{\beta_1,\beta})(V_{\beta}(x_1-x_2)-W_{\beta_1})(x_1-x_2))$ plus mixed derivatives and one
sees, that the interactions in (\ref{alpha1b}) are ``replaced''
by $W_{\beta_1}$ for the price of new terms that have to be estimated.

Note that this adjustment  is not sufficient to get good control of the adjusted functional: Taking the time derivative of this new functional one gets terms where the potential $V_{\beta}$ does not cancel. As one shall see below one of these terms is not small for all $\beta<1$, but still it is better than (\ref{alpha1b}).
Therefore we make a similar adjustment as  before. We arrive at an iterative adjustment which leads to terms which we get better an better control of. With the iteration we will define below it turns out that five steps are enough to get sufficient control.

Guided by these ideas this section is organized as follows:
\begin{itemize}
 \item We need different weights $m^{j}$ for each step of the iterative adjustment. Note  that there is some freedom in choosing the starting point of our iteration. It does not necessarily have to be $\alpha(\Psi_0,\phi_0)$ but it should be close to a multiple of  $\alpha(\Psi_0,\phi_0)$.
Hence there are many possible choices for $m^{j}$.
All important properties the weights have to satisfy in order to generalize the Theorem are stated in Lemma \ref{mhut}. We prove the Lemma (i.e. the existence of a weight which satisfies all the important conditions) by construction.

Looking at (\ref{adjalpha}) one can already guess that starting with some weight $m^0$, $m^1$ has to satisfy $m^{1}(k,N)=m^{0}(k,N)-m^{0}(k+2,N)$. This explains (a) of Lemma \ref{mhut}. (b) ensures that the starting point of our iteration (which will be $\laa\Psi,\widehat{m}^0\Psi\raa$) is in fact close to a multiple of  $\alpha(\Psi_0,\phi_0)$. (b) and (c) of the Lemma are needed for the estimates.

\item Having defined the weights $m^{j}$ we construct some operators $R_{j,k}$, $S_{j,k}$ and $T_{j,k}$ (Definition \ref{RST}).
 These operators shall then be used to define the functionals $\gamma_{j,k}$ and $\xi_{j,k}$ (Definition \ref{defgammaxi}) as well as $\gamma_{j,k}'$ (Definition \ref{defgammastrich}) which are the basic elements of the iterative adjustment: \begin{itemize}
\item The $\gamma'_{j,k}$ are defined such, that $\dt\gamma_{j,k}(\Psi_t,\phi_t)=\gamma'_{j,k}(\Psi_t,\phi_t)$ (see Lemma \ref{firstadjlemma}).
\item
$\gamma_{0,0}(\Psi,\phi)=\laa\Psi,\widehat{m}^0\Psi\raa$ is the starting point of the iteration. $\xi_{0,0}$ plays the role of $\alpha_1'$.
Note that $\gamma_{0,0}'-\xi_{0,0}$ is small (Corollary \ref{newalpha2}).
\item
In the $l^{\text{th}}$ step of the adjustment we add $\sum_{j+k=l}\gamma_{j,k}$ to our functional.
The respective $\sum_{j+k=l}\gamma'_{j,k}$ cancels out the ``old'' term we have no sufficient control of (namely $\sum_{j+k=l-1}\xi_{j,k}$) but leads to a ``new'' term which --- as long as $j+k<5$ --- we have no sufficient control of (which is $\sum_{j+k=l}\xi_{j,k}$).
All other terms of $\sum_{j+k=l}\gamma'_{j,k}$ are controllable (Lemma \ref{firstadjest}) below.
\end{itemize}
\item
Finally we construct functionals $\Gamma$ and $\Gamma'$ with $\dt\Gamma(\Psi_t,\phi_t)=\Gamma'(\Psi_t,\phi_t)$ such that $\Gamma'(\Psi,\phi)$  can be estimated in terms of $\alpha(\Psi,\phi)$ (Corollary \ref{corproof2} (c)) and $\alpha(\Psi,\phi)$ can be estimated in terms of $\Gamma(\Psi,\phi)$ (Corollary \ref{corproof2} (b)). This allows to estimate $\Gamma(\Psi_t,\phi_t)$ via Gr\o nwall. Since $\alpha(\Psi,\phi)$ is controlled by $\Gamma(\Psi,\phi)$ (Corollary \ref{corproof2} (b)) we have good control of $\alpha(\Psi,\phi)$ for $\beta<1$.
\end{itemize}
\begin{lemma}\label{mhut}
There exists a set of weights $\{m^1,m^2,\ldots,m^5;m^j:\mathbb{N}^2\to\mathbb{R}^+ \}$

\begin{enumerate}
 \item For $0\leq  j<5$
$$m^{j+1}(k,N)=m^{j}(k,N)-m^{j}(k+2,N)$$
\item
For any $j>0$ there exist constants $c_j>0$ such that for any $k\geq 0$
\begin{align}\label{mabsch} c_j N^{-j} n^{1-2j}(k+2,N)  \leq  m^{j}(k,N)\leq&  N^{-j}n^{1-2j}(k+2,N)\;.
\end{align}

\item There exists a $C<\infty$ such that for $k\geq 0$
\begin{align*} |m^j(k,N)-m^j(k+1,N)|\leq&  C N^{-j-1}n^{-1-2j}(k+1,N)
\\ |m^j(k,N)-2m^j(k+1,N)+m^j(k+2,N)|\leq&  CN^{-j-2}n^{-3-2j}(k+1,N)\;.
\end{align*}

\end{enumerate}

\end{lemma}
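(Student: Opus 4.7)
The plan is to construct $m^0, m^1, \ldots, m^5$ explicitly from a single ``seed'' $m^0$ derived from the square-root function, and then verify (a)--(c) by standard estimates on iterated discrete differences. Set
\[
m^0(k,N) := \kappa\, N^{-1/2}\, \bigl(\sqrt{N+12} - \sqrt{k+2}\bigr)^+
\]
for a normalization constant $\kappa>0$ to be chosen at the end, and define $m^1,\ldots,m^5$ recursively by (a); then (a) holds by construction. The additive constant $\sqrt{N+12}$ is large enough that five iterations of (a) never reach the truncation in the only regime where the lemma is used, namely $0\le k\le N$ (outside of which $P_k=0$).

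To prove (b), observe that the additive constant $\sqrt{N+12}$ is annihilated after one application of the difference operator $\Delta g(x):=g(x)-g(x+2)$, so for $1\le j\le 5$,
\[
m^j(k,N) = -\kappa\,N^{-1/2}\,\Delta^j f(k+2), \qquad f(x):=\sqrt{x}.
\]
The iterated integral representation
\[
\Delta^j f(y) = (-1)^j \int_{[0,2]^j} f^{(j)}(y + t_1+\cdots+t_j)\, dt_1\cdots dt_j,
\]
combined with $f^{(j)}(x) = c_j\, x^{1/2-j}$, where $c_j:=\prod_{i=0}^{j-1}(1/2-i)$ has sign $(-1)^{j-1}$, yields
\[
m^j(k,N) = \kappa\,|c_j|\,N^{-1/2}\int_{[0,2]^j} (k+2+t_1+\cdots+t_j)^{1/2-j}\, dt_1\cdots dt_j,
\]
which is manifestly positive. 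Since $1/2-j<0$, the integrand lies between $(k+2+2j)^{1/2-j}$ and $(k+2)^{1/2-j}$, giving both inequalities of (b) with explicit $j$-dependent constants; picking $\kappa:=\bigl(\max_{1\le j\le 5}2^j|c_j|\bigr)^{-1}$ forces the upper bound to have constant exactly $1$.

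For (c), the one-step differences of $m^j$ are iterated discrete differences of $f$ in which one or two width-$2$ steps are replaced by width-$1$ steps, so the same integral formula, now with $f^{(j+1)}$ or $f^{(j+2)}$ in the integrand, yields the stated bounds (the integration argument remains in $[k+1,k+2j+2]$, so the $n^{-1-2j}(k+1,N)$ and $n^{-3-2j}(k+1,N)$ factors emerge). The main bookkeeping challenge is verifying the sign pattern so that every $m^j$ is positive: this relies on $\mathrm{sign}(c_j)=(-1)^{j-1}$ exactly cancelling the $(-1)^j$ from $\Delta^j$ uniformly in $j=1,\ldots,5$. A secondary point is that the truncation at $k\sim N+10$ is harmless, since (b) and (c) are only invoked against $\widehat m^{\,j}P_k$ with $0\le k\le N$.
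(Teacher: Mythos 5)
Your proof is correct but takes a genuinely different route from the paper's. The paper fixes the \emph{top} weight $m^5(k,N):=N^{-1/2}(k+1)^{1/2-5}$ and builds $m^4,\ldots,m^0$ by the inverse recursion: for even $N+k$ it sets $m^j(N,N)=(N+2)^{-j}$ and $m^j(k,N)=m^{j+1}(k,N)+m^j(k+2,N)$, interpolating linearly for odd $N+k$, then proves (b) by a Riemann-sum induction downward from $j=5$ and obtains (c) by recognizing the first and second forward differences of $m^j$ as shifts of $m^{j+1}$ and $m^{j+2}$. You instead start from an explicit closed-form \emph{bottom} seed $m^0$ and propagate (a) upward, reading off (b) and (c) from the mean-value integral formula for iterated finite differences; the observation that $\mathrm{sign}(c_j)=(-1)^{j-1}$ cancels the $(-1)^j$ from $\Delta^j$ uniformly in $j\le 5$ is exactly the right way to see positivity, and the normalization of $\kappa$ yields the constant $1$ in the upper bound as stated. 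Your route buys a closed form for every $m^j$ and dispenses with the paper's even/odd bookkeeping and the induction; the paper's route avoids the truncation issue at the right endpoint. Two small remarks. First, the quantities in (c) are $\delta\Delta^j f$ and $\delta^2\Delta^j f$, i.e.\ one or two width-$1$ differences \emph{composed with} (not substituted for) the $j$ width-$2$ ones, so ``replaced'' is slightly misleading, though the conclusion --- $f^{(j+1)}$ or $f^{(j+2)}$ in the integrand and the resulting $n^{-1-2j}(k+1,N)$, $n^{-3-2j}(k+1,N)$ scalings --- is exactly right. Second, for $k$ within roughly $2j$ of $N+10$ the identity $m^j(k,N)=-\kappa N^{-1/2}\Delta^j f(k+2)$ no longer holds because of the $(\cdot)^+$ cutoff, so positivity and the bounds are not automatic there; since $P_k=0$ for $k>N$, the cleanest fix is to declare $m^j(k,N):=0$ for $k>N$, which changes nothing downstream and makes the lemma hold literally for all $k\ge 0$.
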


\begin{proof}

Let $m^5(k,N):=N^{-1/2}(k+1)^{-5+1/2} $. We prove the Lemma by constructing $m^j$ for  $j\in\{0,1,\ldots,4\}$:
We define the functions $m^j$ for even $N+k$ via
\begin{align*} m^{j}(N,N)=&(N+2)^{-j}\\
m^{j}(k,N)=&m^{j+1}(k,N)+m^{j}(k+2,N)\;.\end{align*}
For odd $N+k$ we set $m^j(k)=(m^j(k-1)+m^j(k+1))/2$.
\begin{enumerate}
 \item

For even $N+k$ (a) follows by construction. For odd $N+k$ one has
\begin{align*} m^{j+1}(k,N)=&(m^{j+1}(k-1,N)+m^{j+1}(k+1,N))/2
\\=&(m^{j}(k-1,N)-m^{j}(k+1,N))/2\\&+(m^j(k+1,N)-m^j(k+3,N))/2
\\=&m^{j}(k,N)-m^{j}(k+2,N)\;.\end{align*}

\item It suffices to prove (b) for even $N+k$.  By construction it follows then also for odd $k+N$. We shall do so via
induction over $j$. For $j=5$ (\ref{mabsch}) follows directly from the definition of $m^5$.

Assume that (\ref{mabsch}) is satisfied for some $0<j\leq  5$. By construction we have for even $N+k$ that
\begin{align}
m^{j-1}(k,N)\nonumber=&m^{j-1}(N,N)+\sum_{k\leq  l< N}^{l \text{ even}} m^{j}(l,N) \\=&(N+2)^{-j}+\sum_{k\leq  l< N}^{l
\text{ even}} m^{j}(l,N)\label{nplus1}
\end{align}
By assumption there exist $c_j>0$, for later use we  assume that \be\label{cj}c_j<\frac{(2j-1)}{3^{j}}\;,\ee such that
\begin{align*}
 c_j\sum_{k\leq  l< N}^{l \text{ even}} (l+2)^{-j}n(l+2,N)\leq&  \sum_{k\leq  l< N}^{l \text{ even}}|m^{j}(l,N)|\\\leq&
\sum_{k\leq  l< N}^{l \text{ even}} (l+2)^{-j}n(l+2,N)\:.
\end{align*}

By monotonicity of the function $(\cdot)^{-j+1/2}$ for $j>0$ it follows that
\begin{align*}
c_j\frac{N^{-1/2}}{2}\int_{k}^{N-2} (x+2)^{-j+1/2}dx\leq&  \sum_{k\leq  l< N}^{l \text{ even}}|m^{j}(l,N)|\\\leq& \frac{N^{-1/2}}{2} \int_{k+2}^{N} (x+2)^{-j+1/2}dx
\end{align*}
\begin{align*}&\hspace{-2cm}\frac{c_jN^{-1/2}}{2j-1}\left((k+2)^{-j+1/2}-N^{-j+1/2}\right)
\\\leq& \sum_{k\leq  l< N}^{l \text{ even}}|m^{j}(l,N)|\leq  \frac{N^{-1/2}}{2j-1}(k+2)^{-j+1/2}\;.
\end{align*}
With (\ref{cj}) and  (\ref{nplus1}) we get for $N>0$
\begin{align*}&\hspace{-1cm}\frac{c_jN^{-1/2}}{2j-1}(k+2)^{-j+1/2}\\\leq& \frac{c_jN^{-1/2}}{2j-1}(k+2)^{-j+1/2}-\frac{c_jN^{-j}}{2j-1}+(N+2)^{-j}
\\\leq&  m^{j-1}(k,N)\leq  \frac{N^{-1/2}}{2j-1}(k+2)^{-j+1/2}\;.
\end{align*}
Since $(2j-1)^{-1}<1$ it follows that (\ref{mabsch}) holds for $j-1$.
Now (b) follows for even $N+k$ via induction.

\item Let us first prove (c) for even $N+k$. It follow that
\begin{align*}|m^j(k)-m^j(k+1)|=&|m^j(k)-(m^j(k)+m^j(k+2))/2|\\=&|m^j(k)-m^j(k+2)|/2=m^{j+1}(k)/2\;.\end{align*}
With (b) we get the first formula in (c). For the second formula we have
\begin{align*}|m^j(k)&-2m^j(k+1)+m^j(k+2)|\\=&|m^j(k)-(m^j(k)+m^j(k+2))+m^j(k+2)|=0\;.\end{align*}

For odd $N+k$ one has
\begin{align*}|m^j(k)-m^j(k+1)|=&|(m^j(k+1)+m^j(k-1))/2-m^j(k+1)|\\=&|m^j(k-1)-m^j(k+1)|/2=m^{j+1}(k-1)\end{align*}
and
\begin{align*}|m^j(k)&-2m^j(k+1)+m^j(k+2)|\\=&|(m^j(k+1)+m^j(k-1))/2-2m^j(k+1)\\&+(m^j(k+1)+m^j(k+3))/2|
\\=&\frac{1}{2}|m^j(k-1)-2m^j(k+1)+m^j(k+3)|\\=&\frac{1}{2}|m^{j+1}(k-1)-m^{j+1}(k+1)|=|m^{j+1}(k)|\;.\end{align*}
With (b) we get (c).
\end{enumerate}
\end{proof}

Having shown that there exist a weight satisfying the conditions of Lemma \ref{mhut} we use this weight to define some
operators that shall be used in our iterative adjustment of $\alpha$ below. Due to symmetry we have some arbitrariness in defining these operators, in particular in choosing the coordinates on which they act. For easier reference below we keep the coordinates $1,\ldots,4$ free, so the operators we shall define next act on the coordinates $x_5,x_6,\ldots$ only.
\begin{definition}\label{RST}
For any $ j,k\in\mathbb{N}_0$ we define the operators $Q_j$ via
\begin{align*} Q_j:=q_{2j+3}q_{2j+4}g_{1/4,\beta}(x_{2j+3}-x_{2j+4})p_{2j+3}p_{2j+4}
\end{align*}
and the operators $R_{j,k}$, $R_{j,k}$ and $R_{j,k}$ acting on $\LZN$ via
\begin{align*} R_{j,k}:=&\frac{N!}{(N-2j-2k)!k!j!}\prod_{l=1}^j Q_{l}\;\;\widehat{m}^{j+k}\prod_{l=1}^k Q^*_{l+j}
\\
S_{j,k}:=&\frac{N!}{(N-2j-2k-2)!j!k!}  \prod_{l=1}^j Q_{l}\;\;\widehat{m}^{j+k+1} \prod_{l=1}^k Q^*_{l+j}
\\
T_{j,k}:=&\frac{N!}{(N-2j-2k-2)!j!k!}  \prod_{l=1}^j Q_{l}\;\;(\widehat{m}^{j+k}-\widehat{m}^{j+k}_1) \prod_{l=1}^k Q^*_{l+j}\;.
\end{align*}
For later  use it is convenient to define $S_{-1,k}:=0$.

\end{definition}

\begin{lemma}\label{Rnorm}

Let $j,k\in\mathbb{N}_0$, $\mathcal{M}\subset\mathbb{N}$ with $|\mathcal{M}\cap\{5,6,\ldots,2j+2k+4\}|=M$. Then
\begin{align*}\|R_{j,k}\|_{\mathcal{M}}\leq  & C N^{-(j+k)/8}\|\phi\|_\infty^{j+k}\;\;&\text{ if }\mathcal{M}\cap\{5,6,\ldots,2j+2k+4\}=\emptyset\;,\\
\|R_{j,k}\|_{\mathcal{M}} \leq & N^{-1/2+M/2-(j+k)/8}\|\phi\|_\infty^{j+k}\;\;&\text{ if }\mathcal{M}\cap\{5,6,\ldots,2j+2k+4\}\neq\emptyset\;.
\end{align*}
Let furthermore $r:\mathbb{N}^2\to\mathbb{R}^+$ with $r\leq  n$, then
$$\|\widehat{r}\,S_{j,k}\|_{\mathcal{M}} \leq  C N^{1+M/2-(j+k)/8}\|\phi\|_\infty^{j+k}$$
$$\|\widehat{r}\,T_{j,k}\|_{\mathcal{M}} \leq  C N^{1+M/2-(j+k)/8}\|\phi\|_\infty^{j+k}\;.$$
\end{lemma}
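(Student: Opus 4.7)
My plan is to combine the elementary bounds on the three factor types in $R_{j,k}$, $S_{j,k}$, $T_{j,k}$ with a symmetrization argument that uses $\chi,\Psi\in\mathcal{H}_\mathcal{M}$. First, Lemma \ref{defAlemma}(b) with $\beta_1=1/4$ gives $\|g_{1/4,\beta}\|\leq CN^{-9/8}$, so Lemma \ref{kombinatorik}(e) bounds $\|Q_l\|_{op}\leq\|g_{1/4,\beta}\|\|\phi\|_\infty\leq CN^{-9/8}\|\phi\|_\infty$. Lemma \ref{mhut}(b) yields $\sup_l m^{j+k}(l,N)\leq C_{j,k}N^{-1/2}$, and the combinatorial prefactor is bounded by $C_{j,k}N^{2(j+k)}$ (resp.\ $C_{j,k}N^{2(j+k)+2}$ for $S,T$). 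After using Lemma \ref{kombinatorik}(c) to commute $\widehat{m}^{j+k}$ past the $Q$'s without changing its pointwise bound, the naive product of these estimates gives $C_{j,k}N^{7(j+k)/8-1/2}\|\phi\|_\infty^{j+k}$, already matching the claim in the worst case $M=2(j+k)$, where every pair coordinate is broken by $\mathcal{M}$.

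The missing improvement of $N^{-(j+k)+M/2}$ comes from symmetry. Using symmetry of $\chi,\Psi$ in coordinates outside $\mathcal{M}$, I would rewrite $\langle\chi,R_{j,k}\Psi\rangle$ as an average over all choices of $j+k$ disjoint unordered pairs drawn from $\{1,\ldots,N\}\setminus\mathcal{M}$. After Cauchy--Schwarz with the splitting $\widehat{m}^{j+k}=(\widehat{m}^{j+k})^{1/2}(\widehat{m}^{j+k})^{1/2}$, the estimate reduces to bounding $\|\sum_{\text{configs}}\prod_l Q^*_{P_l}\chi\|^2$. Upon expanding the squared sum, the off-diagonal cross terms (where the pair configurations on the two sides differ) reduce, via Lemma \ref{kombinatorik}(e) and the identity $p_aq_a=0$, to integrals of the mean-zero combination $A_{ab}=g(x_a-x_b)-(g\star|\phi|^2)(x_a)-(g\star|\phi|^2)(x_b)+c_g$ against $|\phi|^2$ measures, and therefore cancel or are lower-order. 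Only diagonal (matching) terms survive, and their count is reduced by a factor $N^{-(j+k)}$ compared with the naive $N^{2(j+k)}$ prefactor.

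For the diagonal sum, an iterated version of Lemma \ref{kombinatorikb} converts $2(j+k)$ $q$-projectors into $\widehat{n}^{2(j+k)}$, and a pointwise check from Lemma \ref{mhut}(b) shows $n^{2(j+k)}(l)m^{j+k}(l,N)\leq C_{j,k}N^{-(j+k)}$. This delivers exactly the extra $N^{-(j+k)}$ factor and yields $\|R_{j,k}\|_\mathcal{M}\leq CN^{-(j+k)/8}\|\phi\|_\infty^{j+k}$ for $M=0$. For $M>0$, each fixed coordinate in $\mathcal{M}\cap\{5,\ldots,2j+2k+4\}$ blocks one averaging step and costs $N^{1/2}$, producing the $N^{M/2}$ adjustment. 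The bounds for $\widehat{r}\,S_{j,k}$ follow the same scheme with $\widehat{m}^{j+k+1}$: the stronger pointwise decay $n^{2(j+k)+1}(l)m^{j+k+1}(l,N)\leq C_{j,k}N^{-(j+k+1)}$ together with $\widehat{r}\leq\widehat{n}$ and the larger combinatorial prefactor $N^{2(j+k)+2}$ produces the additional factor $N$. For $\widehat{r}\,T_{j,k}$, Lemma \ref{mhut}(c) shows that $\widehat{m}^{j+k}-\widehat{m}^{j+k}_1$ obeys the same pointwise bound as $N\widehat{m}^{j+k+1}$, giving the same estimate as for $\widehat{r}\,S_{j,k}$.

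The main obstacle will be executing the off-diagonal cancellation cleanly for multi-pair configurations: the cancellation relies iteratively on $q$-projections producing mean-zero integrands (via the convolution identity $p_a g p_a = (g\star|\phi|^2)(x_b)p_a$ combined with $p_a q_a = 0$), and must be tracked uniformly across $j+k$ simultaneous pair replacements, so that the full $N^{-(j+k)}$ symmetry gain is recovered before the final Cauchy--Schwarz step.
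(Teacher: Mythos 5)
Your naive product estimate at the start is correct and does match the worst-case $M=2(j+k)$ bound, and you have correctly identified the two key ingredients (Lemma \ref{kombinatorikb} to trade $q$-projectors for factors of $\widehat{n}$, and Lemma \ref{mhut}(b) to control $n^{a}\,m^{j+k}$). However, the route you propose for gaining the extra factor $N^{-(j+k)+M/2}$ --- rewriting $\langle\chi,R_{j,k}\Psi\rangle$ as an average over unordered pair configurations, splitting $\widehat{m}^{j+k}$ via Cauchy--Schwarz, expanding the resulting squared sum, and then arguing that the off-diagonal cross terms ``cancel or are lower-order'' because of a mean-zero property of the convolution $g\star|\phi|^2$ --- is both unnecessary and, at the crucial point, unjustified. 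Nothing in the paper supports the claim that off-diagonal configuration terms cancel: in the analogous single-pair expansion (\ref{symback}) the off-diagonal terms are not zero and must be estimated, not cancelled. For $j+k$ simultaneous pairs the cross terms involve products of $g$'s and $p$'s and do not organize themselves into mean-zero integrands; the identity $p_aq_a=0$ has already been used up inside each $Q_l$ and produces no further cancellation between distinct configurations. So the proposal would stall at exactly the step you flag as the ``main obstacle''.

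The paper never performs any symmetrization, Cauchy--Schwarz, or squared-sum expansion. It works directly with the fixed coordinates $5,\ldots,2j+2k+4$ appearing in $R_{j,k}$. Using Lemma \ref{kombinatorik}(c) it inserts compensating powers $\widehat{n}^{-2j+1}_1$ (next to $\Psi$) and $\widehat{n}^{-2k}_1$ (next to $\chi$), commuting the positive powers $\widehat{n}^{2j-1}$ and $\widehat{n}^{2k}$ into the middle so that they multiply $\widehat{m}^{j+k}$. By Lemma \ref{mhut}(b) one has the operator-norm bound $\|\widehat{n}^{2j-1}\widehat{m}^{j+k}\widehat{n}^{2k}\|_{op}\leq CN^{-j-k}$, which is the entire source of the $N^{-(j+k)}$ gain over your $\sup_l m^{j+k}(l,N)\leq CN^{-1/2}$. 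The left-over negative powers $\widehat{n}^{-2j+1}$ and $\widehat{n}^{-2k}$ are then absorbed by the $2j$ respectively $2k$ projectors $q_l$ sitting in $\prod Q_l$ and $\prod Q^*_{l+j}$ via repeated application of Lemma \ref{kombinatorikb}, and this is precisely where the symmetry $\Psi,\chi\in\mathcal{H}_\mathcal{M}$ enters. When $\mathcal{M}$ meets $\{5,\ldots,2j+2k+4\}$, Lemma \ref{kombinatorikb} fails for those coordinates and each one costs a factor $N^{1/2}$, which gives the $N^{M/2}$ correction. The bounds for $\widehat{r}\,S_{j,k}$ and $\widehat{r}\,T_{j,k}$ follow the same template with $\widehat{m}^{j+k+1}$ in place of $\widehat{m}^{j+k}$ and the prefactor $N^{2j+2k+2}$, and with Lemma \ref{mhut}(c) providing the same pointwise bound for $\widehat{m}^{j+k}-\widehat{m}^{j+k}_1$. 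In short, the ingredients you list can be combined without any configuration averaging or cancellation argument, and that direct combination is what the paper does; your proposal would need to replace the unsupported cancellation claim with precisely this operator rearrangement to close the gap.
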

\begin{proof}
First note that in view of Lemma \ref{kombinatorik} (e) and Lemma \ref{defAlemma} (b)
$$\|g_{1/4,\beta}(x_1-x_2)p_1\|_{op}\leq  C N^{-1-1/8}\|\phi\|_\infty\;.$$

Roughly estimating, $\|R_{j,k}\|_{op}$ has $j+k$ such factors, giving a power $N^{-(j+k)(1+1/8)}$. Furthermore we have
$2j+2k$ projectors $q$ in the definition of $R_{j,k}$, while $m^{j+k}$ is of order $n(k)k^{-j-k}$. So the projectors
$q$ give in view of Lemma \ref{kombinatorikb} together with $\widehat{m}^{j+k}$ a factor $N^{-j-k}$. Since
$\frac{N!}{(N-2j-2k)!}<N^{2j+2k}$ one gets by this rough estimate the result above.

In detail:
Let $\mathcal{M}\subset\mathbb{N}$.
Let $\Psi,\chi\in \mathcal{H}_{\mathcal{M}}$; $\|\Psi\|=\|\chi\|=1$. Using Lemma \ref{kombinatorik} (c)
 \begin{align*}
|\laa\Psi, R_{j,k}\chi\raa|
=&\frac{N!}{j!k!(N-2j-2k)!}\\& |\laa\Psi,
\widehat{n}^{-2j+1}_1\prod_{l=1}^{j}Q_{l}\;\;\widehat{n}^{2j-1}_{2j+1}\widehat{m}^{j+k}\widehat{n}^{2k}_{2k+1}
\prod_{l=1}^{k}Q^*_{l+j}\widehat{n}^{-2k}_1
\chi\raa|
\end{align*}
With Lemma \ref{mhut} (b) we have that $\|\widehat{n}^{2j-1}_{2j+1}\widehat{m}^{j+k}\widehat{n}^{2k}_{2k+1}\|_{op}\leq  CN^{-j-k}$, thus we get with Lemma \ref{kombinatorikb}
\begin{align*}
|\laa\Psi, R_{j,k}\chi\raa|\leq&  CN^{2j+2k} N^{-j-k}\|\widehat{n}^{-2j+1}_1\prod_{l=5}^{2j+4}q_l\Psi\|\;
\|\widehat{n}^{-2k}_1 \prod_{l=5+2j}^{2j+2k+4}q_l\chi\|
\\&\|g_{1/4,\beta}(x_1-x_2)p_1\|_{op}^{j+k}
\end{align*}
The estimate on $\|\widehat{n}^{-2j+1}_1\prod_{l=5}^{2j+4}q_l\Psi\|$ and $\|\widehat{n}^{-2k+1} \prod_{l=5+2j}^{2j+2k+4}q_l\chi\|$ now depends on the symmetry of $\Psi$ respectively $\chi$. If $\mathcal{M}\cap\{5,6,\ldots,2j+2k+4\}=\emptyset$ we can use Lemma \ref{kombinatorikb} for all $q_j$ with $j=5,\ldots,2j+2k+4$. Then it follows that $\|\widehat{n}^{-2j+1}_1\prod_{l=5}^{2j+4}q_l\Psi\|<C$ and $\|\widehat{n}^{-2k+1} \prod_{l=5+2j}^{2j+2k+4}q_l\chi\|<C$ and thus
\begin{align*}
|\laa\Psi, R_{j,k}\chi\raa|
\leq  CN^{-(j+k)/8}\|\phi\|_\infty^{j+k}\;.
\end{align*}
If $\mathcal{M}\subset\mathbb{N}$  with $|\mathcal{M}\cap\{5,6,\ldots,2j+2k+4\}|=M>0$ we can define
$\mathcal{M}_a:=\mathcal{M}\cap \{5,6,\ldots,2j+4\}$
and $\mathcal{M}_b:=\mathcal{M}\cap \{2j+5,2j+6,\ldots,2j+2k+4\}$ and assume without loss of generality that
$|\mathcal{M}_a|>0$. Then it follows that $\|\widehat{n}^{-2j+1}_1\prod_{l=5}^{2j+4}q_l\Psi\|<CN^{(|M_a|-1)/2}$ and $\|\widehat{n}^{-2k+1} \prod_{l=5+2j}^{2j+2k+4}q_l\chi\|<CN^{|M_b|/2}$ and thus
\begin{align*}
|\laa\Psi, R_{j,k}\chi\raa|
\leq  CN^{-(j+k)/8+(M-1)/2}\|\phi\|_\infty^{j+k}\;.
\end{align*}

$\widehat{r}S_{j,k}$ can be estimated in a similar way by
\begin{align*}
|\laa\Psi,& \widehat{r}\,S_{j,k}\chi\raa|\\\leq&   N^{2j+2k+2}
|\laa\Psi,
\widehat{r}\widehat{n}^{-2j-1}_1\prod_{l=1}^{j}Q_{l}\;\;\widehat{n}^{2j+1}_{2j+1}\widehat{m}^{j+k+1}\widehat{n}^{2k}_{2k+1}
\prod_{l=1}^{k}Q^*_{l+j}\widehat{n}^{-2k}_1
\chi\raa|
\\\leq& C N^{2j+2k+2-j-k-1}\|\widehat{r}\,\widehat{n}^{-2j-1}_{1}
\prod_{l=5}^{2j+4}q_l\Psi\|\;\prod_{l=5+2k}^{2j+2k+4}q_l\chi\|\\& \|g_{1/4,\beta}(x_1-x_2)p_1\|_{op}^{j+k} \|\widehat{n}^{-2k}_{1}\;.
\end{align*}
Using  Lemma \ref{kombinatorikb} $$\|\widehat{r}\widehat{n}^{-2j-1}_1\prod_{l=5}^{2j+4}q_l\Psi\|<CN^{|M_a|/2}\;\;\;\;\text{ and }\;\;\;\; \|\widehat{n}^{-2k+1} \prod_{l=5+2j}^{2j+2k+4}q_l\chi\|<CN^{|M_b|/2}\;,$$ thus
\begin{align*}
|\laa\Psi, \widehat{r}\,S_{j,k}\chi\raa|\leq&   C N^{j+k+1}N^{|M_a|/2}\|\phi\|_\infty^{j+k}N^{-9(j+k)/8}\\=&C N^{1+M/2-(j+k)/8}\|\phi\|_\infty^{j+k}\;.
\end{align*}

For the last equation  note  that in view of Lemma \ref{mhut} (c) $\widehat{m}^j-\widehat{m}^j_1\leq  \widehat{m}^j-\widehat{m}^j_2$, hence we get the same
estimate as for $\|\widehat{n}S_{j,k}\|_{\mathcal{M}}$.

\end{proof}

Using the operators defined in Definition \ref{RST} we now adjust the functional $\alpha$ as explained at the beginning of this section using the functionals $\gamma_{j,k}$ which we shall define next.
\begin{definition}\label{defgammaxi}
For any $j,k>0$ with $j+k\leq  5$ we define
\begin{align*}
\gamma_{j,k}(\Psi,\phi):=&\laa\Psi,R_{j,k}\Psi\raa
\\
\xi_{j,k}(\Psi,\phi):=&\laa\Psi,\potdiff_{\beta}(x_1,x_2)p_1p_2S_{j,k}\Psi\raa\text{ for }j,k\geq 0
\end{align*}

\end{definition}

As explained above we have after the $l^{\text{th}}$ step of our iteration a remainder $\sum_{j+k=l}\xi_{j,k}$. We wish
to show that this remainder is controllable after sufficiently many steps of iteration. It turns out that five steps
are enough:
\begin{proposition}\label{xiest}(Control of the remainder) There exists a $\kinf$ such that
 $$|\xi_{j,k}|\leq  CN^{1/2-(j+k)/8}\|\phi\|_\infty\mathcal{K}(\phi)\;.$$
\end{proposition}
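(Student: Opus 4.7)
The plan is to decompose the identity $1 = p_1p_2 + p_1q_2 + q_1p_2 + q_1q_2$ on the left of $Z_\beta(x_1,x_2)$ in the defining expression
$$\xi_{j,k}(\Psi,\phi) = \laa\Psi, Z_\beta(x_1,x_2)\, p_1p_2\, S_{j,k}\Psi\raa,$$
obtaining four pieces $\laa A\Psi, Z_\beta p_1p_2 S_{j,k}\Psi\raa$ with $A \in \{p_1p_2, p_1q_2, q_1p_2, q_1q_2\}$ to be estimated separately.

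For the three pieces with $A \in \{p_1p_2, p_1q_2, q_1p_2\}$, I would derive an operator-norm bound of the form $\|A\cdot Z_\beta\cdot p_1p_2\|_{op} \leq CN^{-1-\eta}\mathcal{K}(\phi)\|\phi\|_\infty$ following the scheme of the proof of Lemma \ref{hnorms}(b): namely, use $p_1V_\beta(x_1-x_2)p_1 = (V_\beta\star|\phi|^2)(x_2)p_1$ from (\ref{faltungorigin}) together with the scattering-length hypothesis $V_\beta \in \mathcal{V}_\beta$ (and the partial-integration trick via $\|\nabla h\|_1 \leq CN^{-1-\beta}$ where $\Delta h = V_\beta - a_N\delta$) to realize the mean-field cancellation. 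Combining this with the symmetric-state bound $\|S_{j,k}\Psi\| \leq CN^{1-(j+k)/8}\|\phi\|_\infty^{j+k}$ from Lemma \ref{Rnorm} yields a contribution of order $N^{-\eta-(j+k)/8}\mathcal{K}(\phi)\|\phi\|_\infty$, well within the claimed bound.

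The critical piece is $\laa q_1q_2\Psi, Z_\beta p_1p_2 S_{j,k}\Psi\raa$, which I would estimate by invoking Lemma \ref{hnorms}(c) applied with $\chi = S_{j,k}\Psi \in \mathcal{H}_{\{5,\ldots,2j+2k+4\}}$ (the hypothesis $1,2 \notin \mathcal{M}$ is satisfied since the broken-symmetry indices start at $5$). For $\beta < 1/3$ the lemma applies directly; for $1/3 \leq \beta < 1$ I would split $V_\beta = U_{1/4,\beta} + (V_\beta - U_{1/4,\beta})$, apply Lemma \ref{hnorms}(c) to the softer part $U_{1/4,\beta} \in \mathcal{V}_{1/4}$, and integrate by parts against $h_{1/4,\beta}$ to treat the residual (exactly as in (\ref{l234})), producing $\|\nabla_1q_1\Psi\|$-type factors which are controlled via Lemma \ref{potdiffer}(c). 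The factor $\laa S_{j,k}\Psi,\widehat{n}S_{j,k}\Psi\raa^{1/2}$ coming out of Lemma \ref{hnorms}(c) is in turn bounded by Lemma \ref{Rnorm} applied with an extra weight $\widehat{n}$ absorbed into the $\widehat{r}$-slot.

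The main obstacle is the delicate bookkeeping of exponents to recover exactly $N^{1/2-(j+k)/8}$. The $N^{1/2}$ growth arises from the combinatorial prefactor $\frac{N!}{(N-2j-2k-2)!} \sim N^{2(j+k)+2}$ inside $S_{j,k}$, partially cancelled by $\|\widehat{m}^{j+k+1}\|_{op} \sim N^{-1/2}$ (Lemma \ref{mhut}(b)) and the $2(j+k)$ symmetry savings from the $q_\ell$ projectors contained in the $Q_\ell$ acting on symmetric $\Psi$ via Lemma \ref{kombinatorikb}; the $N^{-(j+k)/8}$ decay then accumulates from each $g_{1/4,\beta}$-factor contributing $N^{-1-1/8}\|\phi\|_\infty$ via Lemma \ref{defAlemma}(b) and Lemma \ref{kombinatorik}(e). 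Tracking these powers simultaneously across the four terms, while properly identifying which instances of $\widehat{m}^{j+k+1}$ must be shifted through $Z_\beta$ using Lemma \ref{kombinatorik}(c), is where the bulk of the technical work lies.
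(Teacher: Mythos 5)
Your approach diverges substantially from the paper's, and while it is an interesting attempt to mimic the machinery used for the $\alpha_{1}'$ estimates, it has a genuine gap.

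The paper's proof is far more elementary: it does \emph{not} decompose with $1=p_1p_2+p_1q_2+q_1p_2+q_1q_2$ and does \emph{not} invoke the mean-field cancellation of Lemma~\ref{hnorms}(b)--(c). Instead it simply inserts $\widehat{n}_1^{-1}\widehat{n}_1$, applies Cauchy--Schwarz to get $|\xi_{j,k}|\leq \|p_1\potdiff_\beta(x_1,x_2)\Psi\|\,\|\widehat{n}_1^{-1}\|_{op}\,\|\widehat{n}_1 S_{j,k}\|_{\{1,2\}}$, bounds the first factor by $CN^{-1}\|\phi\|_\infty(1+\sqrt{\alpha}+\cdots)$ using only the splitting $\potdiff_\beta = \sqrt{|\potdiff_\beta|}\cdot\mathrm{sgn}\cdot\sqrt{|\potdiff_\beta|}$ plus the energy bound of Lemma~\ref{totalE} (these are formulas (\ref{hilfe1})--(\ref{hilfe3})), bounds the second by $\sqrt{N}$, and the third by Lemma~\ref{Rnorm}. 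No cancellation of the interaction against the mean field is needed here: the $N^{1/2}$ growth is simply tolerated, because for $j+k=5$ the factor $N^{-(j+k)/8}$ makes the whole thing negligible. Your approach would give a better power of $N$, but the Proposition does not need one.

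The gap in your argument: for the $q_1q_2$ piece you control the residual $(V_\beta - U_{1/4,\beta})$ via integration by parts and then invoke Lemma~\ref{potdiffer}(c) to bound the resulting $\|\nabla_1 q_1\Psi\|$ factors. Lemma~\ref{potdiffer} is stated and proved only for $0<\beta<1$, and for $\beta=1$ the quantity $\|\nabla_1 q_1\Psi\|$ is genuinely \emph{not} small (a macroscopic fraction of kinetic energy goes into the microscopic scattering structure --- this is the entire reason Section~\ref{secabl2} exists and replaces $\|\nabla_1 q_1\Psi\|$ by $\|\mathds{1}_{\mathcal{A}_1}\nabla_1 q_1\Psi\|$). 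But Proposition~\ref{xiest} is invoked without restriction on $\beta$ in Corollary~\ref{corproof3}(c) for $\beta=1$, so your argument does not cover the case where it is actually needed. A secondary problem: you propose to bound $\laa S_{j,k}\Psi,\widehat{n}S_{j,k}\Psi\raa^{1/2}$ by ``absorbing an extra $\widehat{n}$ into the $\widehat{r}$-slot'' of Lemma~\ref{Rnorm}, but that would require $r=n^{1/2}$, and $n^{1/2}\geq n$ (since $n\leq 1$), violating the hypothesis $r\leq n$; similarly, $\|S_{j,k}\Psi\|$ is not directly given by Lemma~\ref{Rnorm} and costs an extra $\|\widehat{n}_1^{-1}\|_{op}\sim\sqrt{N}$, which your exponent bookkeeping drops.
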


\begin{proof}
Let us first prove the following three formulas which shall also be of use below.
\begin{align}\label{hilfe1}
\|\sqrt{|\potdiff_\beta(x_1,x_2)|}p_1\|_{op}\leq& C N^{-1/2} \|\phi\|_\infty
\\\label{hilfe2}
\|\sqrt{|\potdiff_\beta(x_1,x_2)|}\Psi\|\leq& C N^{-1/2}\left(1+\sqrt{\alpha(\Psi,\phi)}+\|\nabla\phi\|+\|\phi\|_\infty\right)
\\\label{hilfe3}
\|p_1\potdiff_\beta(x_1,x_2)\Psi\|\leq& C N^{-1} \|\phi\|_\infty \left(1+\sqrt{\alpha(\Psi,\phi)}+\|\nabla\phi\|+\|\phi\|_\infty\right)
\end{align}
(\ref{hilfe1}) follows from Lemma \ref{hnorms} (a) together with Lemma \ref{kombinatorik} (e):
\begin{align*}\|\sqrt{|\potdiff_\beta(x_1,x_2)|}p_1\|_{op}^2=&\|p_1\potdiff_\beta(x_1,x_2)p_1\|_{op}
\\&\hspace{-2cm}\leq \|p_1V_\beta(x_1-x_2)p_1\|_{op}+\frac{2a}{N-1}\|p_1(|\phi(x_1)|^2+|\phi(x_2)|^2)p_1\|_{op}
\\&\hspace{-2cm}\leq  CN^{-1}\|\phi\|_\infty^2\;.
\end{align*}
(\ref{hilfe2}) follows from Lemma \ref{totalE} together with Lemma \ref{hnorms} (a)
\begin{align*} \|\sqrt{|\potdiff_\beta(x_1,x_2)|}\Psi\|^2\leq&  \laa\Psi,(V_\beta(x_1-x_2)+\frac{2a}{N-1}|\phi(x_1)|^2+\frac{2a}{N-1}|\phi(x_2)|^2)\Psi\raa
\\\leq& N^{-1}\alpha(\Psi,\phi)+CN^{-1}\left(1+\|\nabla\phi\|^2+\|\phi\|_\infty^2\right)
\end{align*}
and (\ref{hilfe3}) is a direct consequence of (\ref{hilfe1}) and (\ref{hilfe2}).

 It follows that
 \begin{align*}|\xi_{j,k}|\leq&  \|p_1\potdiff_{\beta}(x_1,x_2)\Psi\|\;\|\widehat{n}_1^{-1}\|_{op} \|\widehat{n}_1S_{j,k}\|_{\{1,2\}}
\\\leq&  CN^{1/2-(j+k)/8}\|\phi\|_\infty^{j+k+1}\left(1+\sqrt{\alpha(\Psi,\phi)}+\|\nabla\phi\|+\|\phi\|_\infty\right)\;.\end{align*}
\end{proof}

\begin{definition}\label{defgammastrich}
For any $j,k>0$ let the functional $\gamma'_{j,k}:\LZN\otimes\LZ\to\mathbb{R}^+$ be given by
 $$\gamma'_{j,k}:=-2\Im\left(\sum_{j+k=l}\ajk+\bjk+\cjk+\frac{1}{4}\djk+\ejk+\fjk-\xi_{j-1,k}-\frac{1}{2}\xi_{j,k}\right)
\;.$$
where  the different summands are
\begin{enumerate}

  \item [(a)] The mixed derivative term
\begin{align*}\ajk(\Psi,\phi):=&j\laa\Psi,q_{1}q_{2}[H,g(x_{1}-x_{2})]p_{1}p_{2}S_{j-1,k}\Psi\raa
\\&+j\laa\Psi,q_{1}q_{2}((W_{1/4}-V_{\beta})f_{1/4,\beta}(x_1-x_2)p_{1}p_{2}S_{j-1,k}\Psi\raa
\;.\end{align*}

\item [(b)] The smoothed out interaction term
\begin{align*}\bjk(\Psi,\phi):=&-j\laa\Psi,q_{1}q_{2}((W_{1/4}-V_{\beta})f_{1/4,\beta}(x_1-x_2)p_{1}p_{2}S_{j-1,k}\Psi\raa
\\&-\xi_{j-1,k}(\Psi,\phi)
+j\laa\Psi ,\left[\potdiff_\beta(x_5,x_6),R_{j,k}\right] \Psi\raa
\end{align*}
\item [(c)] Three particle interactions
\begin{align*} \cjk(\Psi,\phi):=(N-2j-2k)j\laa\Psi , \left[\potdiff_\beta(x_1,x_5),R_{j,k}\right] \Psi\raa
\end{align*}

\item [(d)] Interaction terms of the correction first type
\begin{align*} \djk(\Psi,\phi):=&(N-2j-2k)(N-2j-2k-1)\laa\Psi ,[\potdiff_\beta(x_1,x_2),
R_{j,k}] \Psi\raa\\&-\xi_{j,k}(\Psi,\phi)+\xi_{k,j}^*(\Psi,\phi)\end{align*}

\item [(e)] Interaction terms of the correction second type
\begin{align*}&\ejk(\Psi,\phi)=j(j-1)\laa\Psi , \left[\potdiff_\beta(x_5,x_7),R_{j,k}\right] \Psi\raa
\end{align*}

\item [(f)] Interaction terms of the correction third type
\begin{align*}&\fjk(\Psi,\phi)=jk\Im\left(\laa\Psi,[\potdiff_\beta(x_5,x_{2j+5}),R_{j,k}]\Psi\raa\right)
\end{align*}

\end{enumerate}

\end{definition}

\begin{lemma}\label{firstadjlemma}

For all $0\leq  l \leq  5$
$$\sum_{j+k=l}\dt\gamma_{j,k}(\Psi_t,\phi_t)=\sum_{j+k=l}\gamma'_{j,k}(\Psi_t,\phi_t) \;. $$

\end{lemma}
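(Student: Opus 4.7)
The plan is to differentiate $\gamma_{j,k}(\Psi_t,\phi_t)=\laa\Psi_t,R_{j,k}^{\phi_t}\Psi_t\raa$ and then reorganise the resulting commutator by the symmetry of $\Psi_t$. Since $R_{j,k}$ depends on $\phi_t$ only through the projectors inside $\widehat m^{j+k}$ (the scattering factor $g_{1/4,\beta}$ and the combinatorial prefactor are $\phi_t$-independent), and since $\dot p^{\phi_t}_l=-i[h^{GP}_l,p^{\phi_t}_l]$, the product rule gives $\dot R_{j,k}=-i[H^{GP}_t,R_{j,k}]$. Combined with $i\dt\Psi_t=H\Psi_t$ and the identity $H-H^{GP}_t=\sum_{l<m}\potdiff_\beta(x_l,x_m)$ (established exactly as in the proof of Lemma \ref{ableitung}), this yields
\[
\dt\gamma_{j,k}(\Psi_t,\phi_t)\;=\;i\sum_{l<m}\laa\Psi_t,[\potdiff_\beta(x_l,x_m),R_{j,k}]\Psi_t\raa ,
\]
and all remaining work is to identify the six types of commutator that appear.

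Let $\mathcal I_L=\{5,\dots,2j+4\}$, $\mathcal I_R=\{2j+5,\dots,2j+2k+4\}$ and $\mathcal I^c$ their complement. Because $\Psi$ is symmetric and because the combinatorial prefactor $N!/((N-2j-2k)!j!k!)$ compensates permutations preserving each of these three sets, I would partition the pairs $(l,m)$ into orbits with explicit representatives and multiplicities: both in $\mathcal I^c$ yields representative $(1,2)$ with multiplicity $(N-2j-2k)(N-2j-2k-1)/2$, producing the main part of $\djk$; one in $\mathcal I^c$ and one in $\mathcal I_L$ yields $(1,5)$ with multiplicity $j(N-2j-2k)$, giving $\cjk$; both inside the same $Q_l$ yields $(5,6)$ with multiplicity $j$, giving $\ajk$; both in $\mathcal I_L$ but in different $Q_l$'s yields $(5,7)$ with multiplicity $j(j-1)$, giving $\ejk$; and one in $\mathcal I_L$ with one in $\mathcal I_R$ yields $(5,2j+5)$ with multiplicity $jk$, giving $\fjk$. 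The mirror orbits based in $\mathcal I_R$ generate exactly the adjoint contributions, which is why the identity is stated only for the symmetric sum $\sum_{j+k=l}$, so that $R_{j,k}^{\ast}=R_{k,j}$ pairs them up.

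For the internal orbit $(5,6)$ the commutator $[\potdiff_\beta(x_5,x_6),R_{j,k}]$ collapses onto a commutator with the single factor $Q_1=q_5q_6\,g_{1/4,\beta}(x_5-x_6)\,p_5p_6$, because $\potdiff_\beta(x_5,x_6)$ is a multiplication operator in the variables $x_5,x_6$ and therefore commutes with every other $Q_{l'}$, $Q^{\ast}_{l'}$ and with every projector acting on a disjoint coordinate. The scattering identity of Definition \ref{microscopic}, $\Delta g_{1/4,\beta}=\tfrac{1}{2}(W_{1/4}-V_\beta)f_{1/4,\beta}$, then produces
\[
[H,g_{1/4,\beta}(x_5-x_6)]=-(W_{1/4}-V_\beta)f_{1/4,\beta}(x_5-x_6)-2\nabla g_{1/4,\beta}(x_5-x_6)\cdot(\nabla_5-\nabla_6).
\]
The first summand is the scattering piece added back in line two of $\ajk$ and simultaneously subtracted in line one of $\bjk$ so that they cancel; the second summand is the mixed-derivative content of $\ajk$. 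Peeling $Q_1$ off $R_{j,k}$ turns the residual operator into exactly $S_{j-1,k}$, which is why $\ajk$ and $\bjk$ are written with $S_{j-1,k}$.

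For every orbit other than $(5,6)$ the potential $\potdiff_\beta(x_l,x_m)$ fails to commute only with $\widehat m^{j+k}$; Lemma \ref{kombinatorik}(d) expands this commutator in terms of the shifts $\widehat m^{j+k}-\widehat m^{j+k}_1$ and $\widehat m^{j+k}-\widehat m^{j+k}_2$, which by Lemma \ref{mhut}(a) equal $\widehat m^{j+k+1}$-type weights, precisely the weights appearing in $S_{j,k}$ and $T_{j,k}$. Rewriting the orbit contributions in the $S_{j,k},T_{j,k}$ notation of Definition \ref{RST} produces the $\xi_{j,k},\xi_{k,j}^{\ast}$ corrections visible in $\djk$ and the $\xi_{j-1,k}$ correction in $\bjk$. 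The hard part is the bookkeeping: matching the orbit multiplicities against the descending-factorial prefactor of $R_{j,k}$ (which is what produces the clean $(N-2j-2k)$ and $(N-2j-2k)(N-2j-2k-1)$ factors in $\cjk$ and $\djk$), handling the mean-field part of $\potdiff_\beta(x_5,x_6)$ which commutes with $g$ but not with the flanking $p_5,q_5,p_6,q_6$ and thus leaks extra scattering-like contributions absorbed into $\bjk$, and using $R_{j,k}^{\ast}=R_{k,j}$ together with the symmetrization $\sum_{j+k=l}$ to verify that the adjoint pieces coming from the $\mathcal I_R$-orbits really reproduce the $\xi_{k,j}^{\ast}$ term of $\djk$ for the partner index.
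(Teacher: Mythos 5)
Your claim that the product rule yields $\dot R_{j,k}=-\im[H^{GP}_t,R_{j,k}]$ is incorrect, and this is the one place where the whole computation is subtle. The operators $Q_l=q_{2l+3}q_{2l+4}\,g_{1/4,\beta}(x_{2l+3}-x_{2l+4})\,p_{2l+3}p_{2l+4}$ are not functions of the projectors alone: the scattering factor $g_{1/4,\beta}$ is a time-independent multiplication operator sandwiched between the time-dependent $p$'s and $q$'s. Differentiating $Q_l$ and comparing with $-\im[H^{GP},Q_l]$, the commutator overcounts the term where $H^{GP}$ hits $g_{1/4,\beta}$, so
$$\dot Q_l=-\im[H^{GP},Q_l]+\im\, q_{2l+3}q_{2l+4}[H^{GP},g_{1/4,\beta}(x_{2l+3}-x_{2l+4})]p_{2l+3}p_{2l+4}\;,$$
which after the product rule on $R_{j,k}$ produces the correction $\im j\laa\Psi,q_1q_2[H^{GP},g_{1/4,\beta}(x_1-x_2)]p_1p_2 S_{j-1,k}\Psi\raa$ plus its adjoint partner (with $k$). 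Your formula $\dt\gamma_{j,k}=\im\sum_{l<m}\laa\Psi,[\potdiff_\beta(x_l,x_m),R_{j,k}]\Psi\raa$ is therefore missing this extra piece, and with it the entire mixed-derivative content of $\ajk$: no derivative terms $\nabla g\cdot\nabla$ can arise from $[\potdiff_\beta(x_l,x_m),R_{j,k}]$, since $\potdiff_\beta(x_l,x_m)$ is a multiplication operator and commutes with $g_{1/4,\beta}$, failing to commute only with the flanking projectors.

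This also means your explanation of the $(5,6)$-orbit is misplaced. You attribute $[H,g_{1/4,\beta}(x_5-x_6)]$ to the commutator $[\potdiff_\beta(x_5,x_6),R_{j,k}]$, but that commutator cannot produce $(W_{1/4}-V_\beta)f_{1/4,\beta}$ or $\nabla g\cdot(\nabla_5-\nabla_6)$: both of those only emerge from the kinetic part of $H^{GP}$ acting on the $g$-factor, i.e. from the $\dot R_{j,k}$ correction you dropped. The paper's proof carries this correction explicitly (it gives the extra line in (\ref{toprove})) and then, in the last step, combines it with the genuine $(5,6)$-orbit to assemble the $[H,g]$ commutator that defines $\ajk$, and performs the bookkeeping add-and-subtract of $(W_{1/4}-V_\beta)f_{1/4,\beta}$ and the $\xi$'s between $\ajk$, $\bjk$ and $\djk$. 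Without restoring the $\dot R$ correction, your derivation predicts $\ajk\equiv 0$, which is false.
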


\begin{proof} First note, that the $R_{j,k}$ are time dependent, since the operators $\widehat{m}^j$ depend on
$\phi_t$. The time derivative of $Q_j$ is
\begin{align*}\dot Q_j=&-\im[H^{GP},Q_j]+\im q_{2j+3}q_{2j+4}[H^{GP},g_{1/4,\beta}(x_{2j+3}-x_{2j+4})]p_{2j+3}p_{2j+4}
\end{align*}
thus by symmetry
\begin{align*}
\laa\Psi_t,(\dot R_{j,k})\Psi_t\raa=&-\im\laa\Psi_t,[H^{GP}, R_{j,k}]\Psi_t\raa
\\&+\im j\laa\Psi_t,q_{1}q_{2}[H^{GP},g_{1/4,\beta}(x_{1}-x_{2})]p_{1}p_{2}S_{j-1,k}\Psi_t\raa
\\&+\im k\laa\Psi_t,S_{j,k-1}p_{1}p_{2}[H^{GP},g_{1/4,\beta}(x_{1}-x_{2})]q_{1}q_{2}\Psi_t\raa\;.
\end{align*}
Note that after exchanging some variables the adjoint of $S_{j,k}$ equals $S_{k,j}$. Using symmetry and changing the label $k\to j$ in the last line
\begin{align*}
\dot\gamma_{j,k}(\Psi_t,\phi_t)=&\im\laa\Psi_t,[H-H^{GP},R_{j,k}]\Psi_t\raa\\&+\im
j\laa\Psi_t,q_{1}q_{2}[H^{GP},g_{1/4,\beta}(x_{1}-x_{2})]p_{1}p_{2}S_{j-1,k}\Psi_t\raa
\\&-\im j\laa\Psi_t,q_{1}q_{2}[H^{GP},g_{1/4,\beta}(x_{1}-x_{2})]p_{1}p_{2}S_{j-1,k}\Psi_t\raa^*\;.
\end{align*}
So the Lemma follows once we have shown that
\begin{align}\label{toprove}
\sum_{j+k=l}\gamma'_{j,k}(\Psi,\phi)=&\sum_{j+k=l}\im\laa\Psi,[\sum_{1\leq  l<m\leq
N}\potdiff_\beta(x_l,x_m),R_{j,k}]\Psi\raa
\nonumber\\&-2j\Im \left(\laa\Psi,q_{1}q_{2}[H^{GP},g_{1/4,\beta}(x_{1}-x_{2})]p_{1}p_{2}S_{j-1,k}\Psi\raa\right)
%
\;.
\end{align}

As above we want to get rid of the sum $1\leq  l<m\leq  N$ using that many summands are equal because of symmetry.
$R_{j,k}=\frac{N!}{(N-2j-2k)!} A_j\widehat{m}^{j+k}B_{j,k}$ breaks some of the symmetry but it is still symmetric in exchanging
any two variables with indices in $\mathcal{M}_a=\{5,6,\ldots,2j+4\}$ as well as in exchanging
any two variables with indices  in $\mathcal{M}_b=\{2j+5,2j+6,\ldots,2j+2k+4\}$ and in exchanging
any two variables with indices  in $\mathcal{M}_c=\{1,2,3,4,2j+2k+5.2j+2k+6,\ldots,N\}$.

We arrive at three different cases for the variable $x_l$: $x_l\in\mathcal{M}_a$, $x_l\in\mathcal{M}_b$ and $x_l\in\mathcal{M}_c$.
For the case $x_l\in\mathcal{M}_c$ we arrive at three different cases for the variable $x_m$. For the case $x_l\in\mathcal{M}_a$
more symmetry is broken via the factor $q_{l}q_{l\pm1}g_{1/4,\beta}(x_{l}-x_{l\pm1})p_{l}p_{l\pm1}$  (+ if $l$ is odd, $-$ if l is even)
appearing in $R_{j,k}$ (see definition \ref{RST}).
Similar for the case the case $x_l\in\mathcal{M}_b$. Hence we arrive at the following eight different summands:

\begin{align*}
\im\laa\Psi,&[\sum_{1\leq  l<m\leq
N}\potdiff_\beta(x_l,x_m),R_{j,k}]\Psi\raa
\\=&\frac{\im}{2}(N-2j-2k)(N-2j-2k-1)\laa\Psi,[\potdiff_\beta(x_1,x_2),R_{j,k}]\Psi\raa
\\&+\im(N-2j-2k)j\laa\Psi,[\potdiff_\beta(x_1,x_5),R_{j,k}]\Psi\raa
\\&+\im j\laa\Psi,[\potdiff_\beta(x_5,x_6),R_{j,k}]\Psi\raa
\\&+\im(N-2j-2k)k\laa\Psi,[\potdiff_\beta(x_1,x_{2j+5}),R_{j,k}]\Psi\raa
\\&+\im k\laa\Psi,[\potdiff_\beta(x_{2j+5},x_{2j+6}),R_{j,k}]\Psi\raa
\\&+\im j(2j-1)\laa\Psi,[\potdiff_\beta(x_5,x_7),R_{j,k}]\Psi\raa
\\&+\im j2k\laa\Psi,[\potdiff_\beta(x_5,x_{2j+5}),R_{j,k}]\Psi\raa
\\&+\im k(2k-1)\laa\Psi,[\potdiff_\beta(x_{2j+5},x_{2j+7}),R_{j,k}]\Psi\raa\;.
\end{align*}
Using  that  after exchanging some variables the adjoint of $R_{j,k}$ equals $R_{k,j}$
\begin{align*}
\im\laa\Psi,&[\sum_{1\leq  l<m\leq
N}\potdiff_\beta(x_l,x_m),R_{j,k}]\Psi\raa
\\=&\frac{\im}{4}(N-2j-2k)(N-2j-2k-1)\laa\Psi,[\potdiff_\beta(x_1,x_2),R_{j,k}]\Psi\raa
\\&-\frac{\im}{4}(N-2j-2k)(N-2j-2k-1)\laa\Psi,[\potdiff_\beta(x_1,x_2),R_{k,j}]\Psi\raa^*
\\&+\im(N-2j-2k)j\laa\Psi,[\potdiff_\beta(x_1,x_5),R_{j,k}]\Psi\raa
\\&+\im j\laa\Psi,[\potdiff_\beta(x_5,x_6),R_{j,k}]\Psi\raa
\\&-\im(N-2j-2k)k\laa\Psi,[\potdiff_\beta(x_1,x_{5}),R_{k,j}]\Psi\raa^*
\\&-\im k\laa\Psi,[\potdiff_\beta(x_5,x_6),R_{k,j}]\Psi\raa^*
\\&+\im j(2j-1)\laa\Psi,[\potdiff_\beta(x_5,x_7),R_{j,k}]\Psi\raa
\\&+\im jk\laa\Psi,[\potdiff_\beta(x_5,x_{2j+5}),R_{j,k}]\Psi\raa
\\&-\im jk\laa\Psi,[\potdiff_\beta(x_{2j+5},x_5),R_{k,j}]\Psi\raa^*
\\&-\im k(2k-1)\laa\Psi,[\potdiff_\beta(x_{5},x_{7}),R_{k,j}]\Psi\raa^*
\end{align*}
It follows that $\sum_{j+k=l}\gamma'_{j,k}(\Psi,\phi)$ equals
\begin{align*}
=&-\frac{1}{2}\sum_{j+k=l}(N-2j-2k)(N-2j-2k-1)\Im\left(\laa\Psi,[\potdiff_\beta(x_1,x_2),R_{j,k}]\Psi\raa\right)
\\&-2\sum_{j+k=l} (N-2j-2k)j\Im\left(\laa\Psi,[\potdiff_\beta(x_1,x_5),R_{j,k}]\Psi\raa\right)
\\&-2\sum_{j+k=l} j\Im\left(\laa\Psi,[\potdiff_\beta(x_5,x_6),R_{j,k}]\Psi\raa\right)
\\&-2\sum_{j+k=l} j(2j-1)\Im\left(\laa\Psi,[\potdiff_\beta(x_5,x_7),R_{j,k}]\Psi\raa\right)
\\&-2\sum_{j+k=l} jk\Im\left(\laa\Psi,[\potdiff_\beta(x_5,x_{2j+5}),R_{j,k}]\Psi\raa\right)
\\&-2\sum_{j+k=l} j\Im\left(\laa\Psi,q_{1}q_{2}[H,g_{1/4,\beta}(x_{1}-x_{2})]p_{1}p_{2}S_{j-1,k}\Psi\raa\right)\;.
\end{align*}
Adding $$-2\sum_{j+k=l}j\Im\left(\laa\Psi,q_{1}q_{2}((W_{1/4}-V_{\beta})f_{1/4,\beta}(x_1-x_2)p_{1}p_{2}S_{j-1,k}\Psi\raa\right)$$
to the  last line
 and subtracting it form the third line, as well as subtracting $-2\sum_{j+k=l}\Im(\xi_{j-1,k})$ from the third line and adding it to the total and subtracting $-2\sum_{j+k=l}\Im(\xi_{j,k})/2=-2\sum_{j+k=l}\Im((\xi_{j,k}-\xi_{k,j}^*)/4)$ from the first line and adding it to the total gives that the right hand side of (\ref{toprove}) equals
\begin{align*}
-2\Im\left(\sum_{j+k=l}\frac{1}{4}\djk+\cjk+\bjk+\ejk+\fjk+\ajk+\xi_{j-1,k}+\frac{1}{2}\xi_{j,k}\right)
\end{align*}
which proves the Lemma.
\end{proof}
Having proven that the functionals $\gamma'$ can be understood as the time derivative of the functionals $\gamma$ our next step is to control the functionals $\gamma$. To start with $\gamma_{0,0}$.
\begin{corollary}\label{newalpha2} Let $\beta<1$. Then there exists a $\kinf$ and a $\eta>0$ such that for
$\xi:=\gamma'_{0,0}-\Im\left(\xi_{0,0}\right)$
$$\xi(\Psi,\phi) \leq \mathcal{K}(\phi)(\|\phi\|_\infty+(\ln N)^{1/3}\|\nabla\phi\|_{6,loc} )
(\laa\Psi,\widehat{n}\Psi\raa+\|\nabla_1q_1\Psi\|^2+N^{-\eta})\;.$$
\end{corollary}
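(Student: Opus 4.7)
The core observation is that $\gamma_{0,0}(\Psi,\phi)=\laa\Psi,\widehat m^0\Psi\raa$ is a weighted version of the particle-number functional $\laa\Psi,\widehat n\Psi\raa$, so $\gamma'_{0,0}=\dt\gamma_{0,0}$ (via Lemma \ref{firstadjlemma}) is obtained from a commutator of the form $[\potdiff_\beta(x_1,x_2),\widehat m^0]$. This can be treated by the very same two-step procedure used for $\alpha'_1,\alpha'_2$ in Section \ref{secb1}, with the role of $\widehat n$ taken by $\widehat m^0$ throughout.

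First I would identify $\gamma'_{0,0}$ explicitly from Definition \ref{defgammastrich}. At $j=k=0$ the summands $\ajk,\cjk,\ejk,\fjk$ all vanish because of the explicit prefactor $j$ (respectively $jk$, $j(j-1)$), the summand $\bjk$ reduces to $-\xi_{-1,0}=0$ by the convention $S_{-1,k}:=0$ in Definition \ref{RST}, and $d_{0,0}=N(N-1)\laa\Psi,[\potdiff_\beta(x_1,x_2),\widehat m^0]\Psi\raa-\xi_{0,0}+\xi_{0,0}^*$ reassembles, using $\Im(\xi_{0,0}^*)=-\Im(\xi_{0,0})$, into the single expression obtained from the proof of Lemma \ref{ableitung} applied to $\widehat m^0$.

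Second I would expand $[\potdiff_\beta(x_1,x_2),\widehat m^0]$ by Lemma \ref{kombinatorik}(d) together with the telescoping identity $\widehat m^0-\widehat m^0_2=\widehat m^1$ that follows directly from Lemma \ref{mhut}(a). Since $S_{0,0}=N(N-1)\widehat m^1$, the $[\potdiff_\beta,p_1p_2\widehat m^1]$ piece contributes exactly $\Im(\xi_{0,0})$ to $\dt\gamma_{0,0}$, and this is precisely the term cancelled by the $-\Im(\xi_{0,0})$ in the definition of $\xi$. Combining the $p_1q_2$ and $q_1p_2$ parts via the symmetry of $\Psi$ under $x_1\leftrightarrow x_2$, this leaves
\begin{equation*}
\xi(\Psi,\phi)=2N(N-1)\Im\left(\laa\Psi,\potdiff_\beta(x_1,x_2)\,p_1q_2\,(\widehat m^0-\widehat m^0_1)\Psi\raa\right)\;,
\end{equation*}
which is an $\alpha'_2$-type expression (\ref{fnochda2}) with the weight $\widehat n-\widehat n_1$ replaced by $\widehat m^0-\widehat m^0_1$. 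By Lemma \ref{mhut}(c) the weight $r(k,N):=m^0(k,N)-m^0(k+1,N)$ satisfies $0\leq r\leq CN^{-1}n^{-1}(\cdot+1,N)\leq Cn^{-1}$, so the standing assumption $m\leq n^{-1}$ of Lemma \ref{hnorms} is fulfilled.

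Finally I would bound $\xi$ exactly as $\alpha'_2$ was bounded in (\ref{alpha2})--(\ref{alpha2b}): insert $1=p_1p_2+p_1q_2+q_1p_2+q_1q_2$ on the left of $\potdiff_\beta$ and note that $p_1q_2\potdiff_\beta p_1q_2(\widehat m^0-\widehat m^0_1)$ is self-adjoint (using Lemma \ref{kombinatorik}(c) and (a) to commute the weight past the projectors), while $q_1p_2\potdiff_\beta p_1q_2(\widehat m^0-\widehat m^0_1)$ is invariant under adjunction composed with the swap $x_1\leftrightarrow x_2$ for which $\Psi$ is symmetric; both therefore have real expectation and drop out of the imaginary part. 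The surviving $p_1p_2\potdiff_\beta p_1q_2\widehat r$ piece is controlled by Lemma \ref{hnorms}(b) (the symmetry $V_\beta(x_1-x_2)=V_\beta(x_2-x_1)$ gives the $p_1q_2$ analog of the $q_1p_2$ estimate stated there), producing a contribution of order $\mathcal{K}(\phi)\|\phi\|_\infty N^{-\eta}$. The surviving $q_1q_2\potdiff_\beta p_1q_2\widehat r$ piece, after moving the weight to the right via Lemma \ref{kombinatorik}(c) to put it in the form $p_1q_2\potdiff_\beta \widehat r'\,q_1q_2$ required by Lemma \ref{hnorms}(d), is bounded by the main term $\mathcal{K}(\phi)(\|\phi\|_\infty+(\ln N)^{1/3}\|\nabla\phi\|_{6,loc})(\laa\Psi,\widehat n\Psi\raa+\|\nabla_1q_1\Psi\|^2+N^{-\eta})$. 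Summing the two estimates gives the corollary. The main obstacle I anticipate is the careful bookkeeping of the various shifted weights $\widehat m^0,\widehat m^0_1,\widehat m^0_2,\widehat m^1$ and checking that the factor-of-two cancellation responsible for extracting $\Im(\xi_{0,0})$ cleanly goes through with the prefactor convention used for Lemma \ref{ableitung}.
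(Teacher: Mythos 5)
Your proposal follows the paper's proof essentially verbatim: you identify that at $j=k=0$ only the $\djk$ summand survives, reduce $\xi$ via Lemma \ref{kombinatorik}(d) and Lemma \ref{mhut}(a) to the single $p_1q_2(\widehat m^0-\widehat m^0_1)$ commutator, then split with $1=p_1p_2+p_1q_2+q_1p_2+q_1q_2$ and invoke Lemma \ref{hnorms}(b),(d) with the weight bound from Lemma \ref{mhut}(c). The only difference is that you spell out more explicitly which summands of $\gamma'_{0,0}$ vanish and why, which the paper compresses into ``only (d) remains.''
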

\begin{proof}
By Definition \ref{defgammastrich} one sees, that for $\gamma'_{0,0}-\Im\left(\xi_{0,0}\right)$ only (d) remains. And here the problematic term in fact cancels out
In view of Lemma \ref{kombinatorik} (d) and Lemma \ref{mhut} (a)
\begin{align}\nonumber\xi(\Psi,\phi)=&N(N-1)\left(
\laa\Psi,[\potdiff_{\beta}(x_1,x_2),\widehat{m}^0]\Psi\raa-\laa\Psi,[\potdiff_{\beta}(x_1,x_2),p_1p_2\widehat{m}^1]\Psi\raa\right)
\\=&\label{xiformel}2N(N-1)
\laa\Psi,[\potdiff_{\beta}(x_1,x_2),p_1q_2(\widehat{m}^0-\widehat{m}^0_1)]\Psi\raa
%
\\\nonumber=&2N(N-1)\Im\left(\laa\Psi,p_1p_2\potdiff_{\beta}(x_1,x_2)p_1q_2(\widehat{m}^0-\widehat{m}^0_1)\Psi\raa\right)
\\\nonumber&+2N(N-1)\Im\left(\laa\Psi,q_1q_2\potdiff_{\beta}(x_1,x_2)p_1q_2(\widehat{m}^0-\widehat{m}^0_1)\Psi\raa\right)
\;.
\end{align}
Since $\beta<1$ this is controlled by Lemma \ref{hnorms} (b) and (d) using the bounds from Lemma \ref{mhut} (c).
\end{proof}

\begin{lemma}\label{firstadjest}
For any $1/3\leq \beta\leq  1$, $l>0$ there exists a $\kinf$, $\eta>0$ such that
\begin{align*}&\sum_{j+k=l} \gamma'_{j,k}(\Psi,\phi)+2\Im\left(\xi_{j-1,k}(\Psi,\phi)\right)+\Im\left(\xi_{j,k}(\Psi,\phi)\right)
\\&\hspace{2cm}\leq  \Cphi\C\;.\end{align*}

\end{lemma}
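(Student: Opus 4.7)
The plan is to estimate each of the six summands $a_{j,k}, b_{j,k}, c_{j,k}, d_{j,k}, e_{j,k}, f_{j,k}$ of Definition~\ref{defgammastrich} separately, with the bound $\Cphi\,\C$. First I would deal with $a_{j,k}$, which is the only place the zero-energy scattering equation is exploited: expanding $[H^{GP}, g_{1/4,\beta}(x_1-x_2)]=[-\Delta_1-\Delta_2, g_{1/4,\beta}(x_1-x_2)]$, the double Laplacian produces the term $-2(\Delta g_{1/4,\beta})(x_1-x_2)$ which, by Definition~\ref{microscopic}, equals $(V_\beta-W_{1/4})f_{1/4,\beta}(x_1-x_2)$, cancelling exactly the correction added in the definition of $a_{j,k}$. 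What remains are the mixed-gradient terms $\langle\Psi,q_1q_2(\nabla g_{1/4,\beta})(x_1-x_2)\cdot(\nabla_1-\nabla_2)p_1p_2 S_{j-1,k}\Psi\rangle$, which after integration by parts (following the same scheme as in the proof of Lemma~\ref{hnorms}(d)) are estimated via Lemma~\ref{defAlemma}(b) (for the $L^2$/$L^3$ norms of $\nabla g_{1/4,\beta}=-\nabla f_{1/4,\beta}$), Lemma~\ref{Rnorm} (for the norm of $S_{j-1,k}$), and Lemma~\ref{potdiffer}(c) to absorb $\|\nabla_1 q_1\Psi\|^2$ into $\mathcal{K}(\phi)\C$.

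Second, for $b_{j,k}$ I would combine the $(W_{1/4}-V_\beta)f_{1/4,\beta}$ contribution with $-\xi_{j-1,k}$; this effectively replaces the singular $V_\beta$ inside $\xi_{j-1,k}$ by $W_{1/4}f_{1/4,\beta}\in\mathcal{V}_{1/4}$ (Lemma~\ref{defAlemma}(a)), together with the mean-field subtraction. Splitting $1=p_1p_2+p_1q_2+q_1p_2+q_1q_2$ on the right and applying Lemma~\ref{hnorms}(b),(c),(d) to this softer potential, with $\widehat m=\widehat m^{j+k-1}$ and $S_{j-1,k}$ estimated by Lemma~\ref{Rnorm}, produces the required bound. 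The remaining piece of $b_{j,k}$, namely $j\langle\Psi,[\potdiff_\beta(x_5,x_6),R_{j,k}]\Psi\rangle$, is expanded as a commutator reaching only those factors of $R_{j,k}$ that act on $x_5,x_6$; using $\|g_{1/4,\beta}(x_1-x_2)p_1\|_{op}\leq CN^{-1-1/8}\|\phi\|_\infty$ (Lemma~\ref{kombinatorik}(e) and Lemma~\ref{defAlemma}(b)) together with (\ref{hilfe1})--(\ref{hilfe3}) produces a gain that dominates the combinatorial prefactor. The same mechanism handles the three-particle term $c_{j,k}$, the neighbour-pair term $e_{j,k}$ and the cross-pair term $f_{j,k}$: each commutator $[\potdiff_\beta(x_p,x_q),R_{j,k}]$ is expanded, and every non-vanishing piece either carries an extra factor $g_{1/4,\beta}(\cdot)p$ (worth $N^{-9/8}\|\phi\|_\infty$) or an extra $\potdiff_\beta\cdot p$ factor (worth $N^{-1/2}\|\phi\|_\infty$), more than enough to compensate the $N^{1},N^{2}$ prefactors and the combinatorial weight of $R_{j,k}$.

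Finally, the most delicate term is $d_{j,k}$, whose $N^2$ prefactor makes it the analogue of (\ref{alpha1}), (\ref{alpha1b}), (\ref{alpha2}) and (\ref{alpha2b}) one level deeper into the iteration. I would apply Lemma~\ref{kombinatorik}(d) to the commutator $[\potdiff_\beta(x_1,x_2),R_{j,k}]$, precisely as in the computation (\ref{xiformel}) of Corollary~\ref{newalpha2}: the $p_1p_2$-component together with its adjoint produces $\xi_{j,k}-\xi_{k,j}^*$, cancelling the $-\xi_{j,k}+\xi_{k,j}^*$ already subtracted in the definition of $d_{j,k}$. What survives is the analogue of (\ref{xiformel}), a commutator with $p_1q_2(\widehat m_{j,k}-\widehat m_{j,k,1})+q_1p_2(\ldots)$ combined with a $q_1q_2(\widehat m_{j,k}-\widehat m_{j,k,2})$ piece; these are estimated by Lemma~\ref{hnorms}(b),(c),(d) (for which the weight $m^{j+k}\leq n^{-1}$ is needed, furnished by Lemma~\ref{mhut}(b)) and by Lemma~\ref{potdiffer}(c) for the $\|\nabla_1 q_1\Psi\|^2$ contribution, with the precise $\widehat m^{j+k}$-dependence controlled by the difference estimates of Lemma~\ref{mhut}(c). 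The main obstacle is this last step: making sure that when $R_{j,k}$ is expanded, the combinatorial prefactors $\tfrac{N!}{(N-2j-2k)!j!k!}$ and the $\widehat m^{j+k}$ weights combine in exactly the right way to place the remaining $p_1q_2$-type contribution into the framework of Lemma~\ref{hnorms}(d), so that the hypothesis $\beta<1$ enters only through that lemma and not through any additional symmetry of $R_{j,k}\Psi$ in the variables $1,2$ that has been partially broken by the factors $Q_l$.
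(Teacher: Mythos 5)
Your high-level plan (estimate $\ajk,\ldots,\fjk$ one at a time, use the cancellation $V_\beta f_{1/4,\beta}+(W_{1/4}-V_\beta)f_{1/4,\beta}=W_{1/4}f_{1/4,\beta}\in\mathcal V_{1/4}$ to soften $\bjk$, and expand the commutator in $\djk$ via Lemma~\ref{kombinatorik}(d)) matches the paper's proof, and your identification of the $\xi_{j,k}-\xi_{k,j}^*$ cancellation in $\djk$ is correct. However, there is a genuine gap in how you propose to close $\ajk$ and, above all, $\djk$.

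You want to route the leftover piece of $\djk$ --- the commutator with $p_1q_2\,T_{j,k}+q_1p_2\,T_{j,k}$ --- through Lemma~\ref{hnorms}(d) plus Lemma~\ref{potdiffer}(c), and you flag yourself that this is where $\beta<1$ would enter. But Lemma~\ref{firstadjest} is stated and used for the full range $1/3\le\beta\le 1$ (it feeds into Corollary~\ref{corproof3} at $\beta=1$), so an argument relying on Lemma~\ref{hnorms}(d) or Lemma~\ref{potdiffer}(c) cannot be the right one. The point you are missing is that $T_{j,k}$ is already small: it carries the weight $\widehat m^{\,j+k}-\widehat m^{\,j+k}_1$, which by Lemma~\ref{mhut}(c) is of order $N^{-j-k-1}\,n^{-1-2(j+k)}$, i.e.\ one full power of $N$ smaller than $\widehat m^{\,j+k}$ itself. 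Combined with Lemma~\ref{Rnorm} ($\|\widehat n_1 T_{j,k}\|_{\{1,2\}}\le CN^{1-(j+k)/8}\|\phi\|_\infty^{j+k}$) and $\|p_1\potdiff_\beta(x_1,x_2)\Psi\|\le CN^{-1}\|\phi\|_\infty(\dots)$ from (\ref{hilfe3}), the whole $\djk$ collapses to $CN^{-(j+k)/8}\|\phi\|_\infty^{j+k+1}$ directly, no soft-potential argument needed, uniformly in $\beta\le 1$. Similarly, for $\ajk$ the mixed-derivative remainder is bounded by $CN^{-(j+k)/8}\|\phi\|_\infty^{j+k}(\|\nabla\phi\|+\|\Delta\phi\|)$ using only (\ref{nablamitm}), the a priori bound (\ref{ablabsch}) on $\|\nabla_1 q_1\Psi\|$, and the operator-norm bound $\|g_{1/4,\beta}(x_1-x_2)p_2\|_{op}\le CN^{-9/8}\|\phi\|_\infty$ --- none of which requires $\beta<1$; invoking Lemma~\ref{potdiffer}(c) is unnecessary and would again artificially restrict the range of $\beta$. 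In short: the smallness in $\ajk$ and $\djk$ comes from the $N^{-(j+k)/8}$ decay built into $S_{j-1,k}$ and from the extra $N^{-1}$ in $T_{j,k}$, not from smoothness of $q_1\Psi$.
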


\begin{proof}

To prove the Lemma we shall estimate the imaginary parts of $\ajk,\bjk,\ldots,\fjk$ separately.

\begin{enumerate}

 \item [(a)] The commutator in $\ajk$ equals
\begin{align}\label{commu}[H,g_{1/4,\beta}(x_1-x_2)]=&-[H,f_{1/4,\beta}(x_1-x_2)]\\\nonumber=&[\Delta_1+\Delta_2,f_{1/4,\beta}(x_1-x_2)]
\\\nonumber=&(\Delta_1+\Delta_2)f_{1/4,\beta}(x_1-x_2)\\\nonumber&+(\nabla_1f_{1/4,\beta}(x_1-x_2))\nabla_1-(\nabla_2f_{1/4,\beta}(x_1-x_2))\nabla_2
\\\nonumber=&-(W_{1/4}-V_{\beta})f_{1/4,\beta}(x_1-x_2)\\\nonumber&+(\nabla_1g_{1/4,\beta}(x_1-x_2))\nabla_1-(\nabla_2g_{1/4,\beta}(x_1-x_2))\nabla_2\;.
\end{align}
This and integration by parts gives
\begin{align*}
|\ajk(\Psi,\phi)|\leq& 2j|\laa\Psi ,q_{1}q_{2}(\nabla_1g_{1/4,\beta}(x_1-x_2))\nabla_1
p_1p_2S_{j-1,k}\Psi\raa|
\nonumber\\\leq& 2j|\laa\nabla_1q_{1}q_{2}\widehat{n}^{-1}_1\Psi ,(g_{1/4,\beta}(x_1-x_2))\nabla_1
p_1p_2\widehat{n}_3S_{j-1,k}\Psi\raa
\nonumber\\&+2j|\laa\Psi ,\widehat{n}^{-1}_1q_{1}q_{2}(g_{1/4,\beta}(x_1-x_2))\Delta_1
p_1p_2\widehat{n}_3S_{j-1,k}\Psi\raa|
\nonumber\\&\hspace{-1.7cm}\leq \;\;\;2j\|\nabla_1q_1q_{2}\widehat{n}^{-1}_1\Psi\|\;\|\nabla\phi\|\;\|g_{1/4,\beta}(x_1-x_2))p_2\|_{op}
\|\widehat{n}_3S_{j-1,k}\|_{\{1,2\}}
\\&\hspace{-1cm}+2j\|\widehat{n}^{-1}_1q_{1}\Psi\|\;\|\Delta\phi\|\; \|g_{1/4,\beta}(x_1-x_2))p_2\|_{op}\|q_{2}S_{j-1,k}\|_{\{1,2\}}\:.
\end{align*}
Using (\ref{nablamitm}), Lemma \ref{kombinatorik} (e) and Lemma \ref{kombinatorikb}  the latter is bounded by
$$CN^{-(j+k)/8}\|\phi\|_\infty^{j+k}(\|\nabla\phi\|+\|\Delta\phi\|)\;.$$

\item [(b)] Using symmetry it follows that
$$\laa\Psi,\potdiff_\beta(x_5,x_6)R_{j,k}\Psi\raa
=\laa\Psi, \potdiff_\beta(x_1,x_2)q_{1}q_{2}g_{1/4,\beta}(x_{1}-x_{2})p_{1}p_{2}S_{j-1,k}\Psi\raa$$
It follows that
\begin{align*}
\bjk(\Psi,\phi)=&-j\laa\Psi,q_{1}q_{2}((W_{1/4}-V_{\beta})f_{1/4,\beta}(x_1-x_2)p_{1}p_{2}S_{j-1,k}\Psi\raa
\\&-j\laa\Psi,\potdiff_{\beta}(x_1,x_2)p_1p_2S_{j-1,k}\Psi\raa
\\&+j\laa\Psi ,\potdiff_\beta(x_1,x_2)q_{1}q_{2}g_{1/4,\beta}(x_{1}-x_{2})p_{1}p_{2}S_{j-1,k}\Psi\raa
\\&-j\laa\Psi ,R_{j,k}\potdiff_\beta(x_5,x_6)\Psi\raa
\\=&-j\laa\Psi,q_{1}q_{2}\left((W_{1/4}-V_{\beta})f_{1/4,\beta}\right)(x_1-x_2)p_{1}p_{2}S_{j-1,k}\Psi\raa
\\&-j\laa\Psi,\potdiff_{\beta}(x_1,x_2)f_{1/4,\beta}(x_1-x_2)p_1p_2S_{j-1,k}\Psi\raa
\\&-j\laa\Psi ,\potdiff_{\beta}(x_1,x_2)(1-q_{1}q_{2})g_{1/4,\beta}(x_{1}-x_{2})p_{1}p_{2}S_{j-1,k}\Psi\raa
\\&-j\laa\Psi ,R_{j,k}\potdiff_\beta(x_5,x_6)\Psi\raa
\end{align*}
and thus
\begin{align}
\nonumber|&\sum_{j+k=l}\Im(\bjk(\Psi,\phi))|
\\\leq&\label{bjka}\sum_{j+k=l}j|\laa\Psi,q_{1}q_{2}(W_{1/4}(x_1-x_2)-\frac{4a}{N-1}|\phi(x_1)^2|)\\\nonumber&\hspace{5cm}f_{1/4,\beta}(x_1-x_2)p_{1}p_{2}S_{j-1,k}\Psi\raa|
\\&\label{bjkb}+|\sum_{j+k=l}j\;\Im\laa\Psi ,p_1p_2\potdiff_\beta(x_1,x_2) f_{1/4,\beta}(x_{1}-x_{2})p_{1}p_{2}S_{j-1,k}\Psi\raa|
\\&\label{bjkc}+\sum_{j+k=l}2j|\laa\Psi ,p_1q_2\potdiff_\beta(x_1,x_2) f_{1/4,\beta}(x_{1}-x_{2})p_{1}p_{2}S_{j-1,k}\Psi\raa|
\\&+\label{bjkd}\sum_{j+k=l}j|\laa\Psi ,\potdiff_\beta(x_1,x_2)(1-q_{1}q_{2})g_{1/4,\beta}(x_{1}-x_{2})p_{1}p_{2}S_{j-1,k}\Psi\raa|
\\&+\label{bjke}\sum_{j+k=l}j|\laa\Psi ,R_{j,k}\potdiff_{\beta}(x_5,x_6) \Psi\raa|
\;.
\end{align} If $(j,k)=(1,0)$ we get in view of Lemma \ref{hnorms} (c) (recall than $Nm^0\leq  Cn^{-1}$) that there exists a $\kinf$
\begin{align*}
|(\ref{bjka})|\leq&
\sum_{j+k=l}jN^2|\laa\Psi,q_{1}q_{2}W_{1/4}f_{1/4,\beta}(x_1-x_2)p_{1}p_{2}\widehat{m}^0\Psi\raa|
\\&+\sum_{j+k=l}jN^2|\laa\Psi,q_{1}q_{2}\frac{4a}{N-1}|\phi(x_1)^2|g_{1/4,\beta}(x_1-x_2)p_{1}p_{2}\widehat{m}^0\Psi\raa|
\\\leq&  \mathcal{K}(\phi)\|\phi\|_\infty(\laa\Psi,\widehat{n}\Psi\raa+CN^{-\eta})\|\phi\|_\infty^3\;.
\end{align*}
For $(j+k)>1$ we shall use  Lemma \ref{hnorms} (c) to control $(\ref{bjka})$. Note, that although for $\chi:=S_{j-1,k}\Psi$ some symmetry is broken,  still
$\chi\in\mathcal{H}_{\mathcal{M}}$ for  $\mathcal{M}=\{5,6,\ldots,2j+2k+4\}$.
We write
\begin{align*}
|(\ref{bjka})|=&\sum_{j+k=l}j|\laa\Psi,\widehat{n}^{-1}_1q_{1}q_{2}(W_{1/4}(x_1-x_2)-\frac{4a}{N-1}|\phi(x_1)^2|)\\&f_{1/4,\beta}(x_1-x_2)\widehat{n}_3p_{1}p_{2}S_{j-1,k}\Psi\raa|
\\\leq&
\sum_{j+k=l}j\|\widehat{n}^{-1}_1q_{1}q_{2}W_{1/4}f_{1/4,\beta}(x_1-x_2)p_{1}p_{2}\|_{\mathcal{M}}\|\widehat{n}_3S_{j-1,k}\|_{\{1,2\}}
\\&+\sum_{j+k=l}\frac{2aj}{N-1}\|\widehat{n}^{-1}_1q_{1}\Psi\|\;\|\phi\|_\infty^2 \|\widehat{n}_3 S_{j-1,k}\|_{\{1,2\}}\;.
\end{align*}
Due to Lemma  Lemma \ref{defAlemma} $W_{1/4} f_{1/4,\beta}\in\mathcal{V}_{1/4}$.
Thus Lemma \ref{hnorms} (c) and Lemma \ref{Rnorm} imply that $|(\ref{bjka})|$ has the right bound.

Note that $S_{j,k}$ becomes after exchanging some of the variables the adjoint of $S_{k,j}$ selfadjoint, thus $(\ref{bjkb})=0$ for all $l$.

For $(\ref{bjkc})$ we use Lemma \ref{hnorms} (b) and Lemma \ref{Rnorm}
\begin{align*}
|(\ref{bjkc})|\leq& \sum_{j+k=l}2j|\laa\Psi ,\widehat{n}^{-1}_1p_1q_2(V_\beta f_{1/4,\beta}(x_{1}-x_{2})-\frac{2a}{N-1}|\phi(x_1)|^2%
\\&\hspace{3cm}-\frac{2a}{N-1}|\phi(x_2)|^2 )p_{1}p_{2}\widehat{n}_3S_{j-1,k}\Psi\raa|
\\&+2|\laa\Psi ,\widehat{n}^{-1}_1p_1q_2g_{1/4,\beta}(x_{1}-x_{2})(\frac{2a}{N-1}|\phi(x_1)|^2\\ &\hspace{3cm}+\frac{2a}{N-1}|\phi(x_2)|^2  )p_{1}p_{2}\widehat{n}_3S_{j-1,k}\Psi\raa|\;.
\end{align*}
\begin{align*}
\leq&C \|\widehat{n}^{-1}_1p_1q_2(V_\beta f_{1/4,\beta}(x_{1}-x_{2})-\frac{2a}{N-1}|\phi(x_2)|^2 )p_{1}p_{2}\|_{\mathcal{M}}
\|\widehat{n}_3S_{j-1,k}\|_{\{1,2\}}
\\&+C \|\widehat{n}^{-1}_1q_2\Psi\|\;\|p_1g_{1/4,\beta}(x_{1}-x_{2})\|_{op} N^{-1}\|\phi\|_\infty^2\|\widehat{n}_3S_{j-1,k}\|_{\{1,2\}}
\\\leq& CN^{-(j+k)/8}\|\phi\|_\infty^{1+j+k}(\|\phi\|_\infty+1)\;.
\end{align*}

For $(\ref{bjkd})$ note, that $(1-q_{1}q_{2})=p_1p_2+p_1q_2+q_1p_2$. Since all factors in $(\ref{bjkd})$ are symmetric in exchanging $x_1$ with $x_2$ it is sufficient to control
\be\label{suffb}
\laa\Psi ,\potdiff_\beta(x_1,x_2)p_1r_2g_{1/4,\beta}(x_{1}-x_{2})p_{1}p_{2}S_{j-1,k}\Psi\raa
\ee
for $r_2\in\{p_2,q_2\}$ to get good control of $|(\ref{bjkd})|$. Using (\ref{hilfe3}) line (\ref{suffb}) is bounded  by
\begin{align*} \|p_1&\potdiff_\beta(x_1,x_2)\Psi\|\;\|p_1g_{1/4,\beta}(x_{1}-x_{2})p_{1}\|_{op}\|\widehat{n}_1^{-1}\|_{op}
\|\widehat{n}_1S_{j-1,k}\|_{\{1,2\}}
\\\leq& CN^{-2-1/2+1/2}N^{1-(j+k)/8}\|\phi\|_{\infty}^{2+j+k}\leq  CN^{-1}\|\phi\|_{\infty}^{2+j+k}\;.
\end{align*}

For $(\ref{bjke})$ we use that $p_j=p_j^2$, thus
\begin{align*}
(\ref{bjke})=&\sum_{j+k=l}j\laa\Psi ,R_{j,k}p_5p_6\potdiff_\beta(x_5-x_6) \Psi\raa
\end{align*}
With Lemma \ref{Rnorm}  it follows that
\begin{align*}
\\|(\ref{bjke})|\leq& \sum_{j+k=l}\|R_{j,k}\|_{\{5,6\}}\|p_1\potdiff_\beta(x_1,x_2)\Psi\|
\\\leq& CN^{-1/2+1-(j+k)/8} \|\phi\|_\infty^{j+k}
N^{-1}\|\phi\|_\infty
\\\leq&  C\|\phi\|_\infty^{j+k+1} N^{-1/2}\;.
\end{align*}

\item [(c)]
Using $p_5^2=p_5$ we have $R_{j,k}=R_{j,k}p_5$, thus

\begin{align} |\cjk(\Psi,\phi)|\leq& Nj|\laa\Psi , R_{j,k}p_5\potdiff_\beta(x_1,x_5)\Psi\raa |\label{cjka}
\\\label{cjkb}&+ Nj|\laa\Psi , \potdiff_\beta(x_1,x_5)R_{j,k} \Psi\raa|
\;.
\end{align}

For $(\ref{cjka})$ we use Lemma \ref{Rnorm} and (\ref{hilfe3})

\begin{align*}
|(\ref{cjka})|\leq&  Nj\|R_{j,k}\|_{\{1,5\}}\|p_5V_\beta(x_1-x_5)\Psi\|
\\\leq& C\|\phi\|_\infty^{j+k+1} N^{-(j+k)/8}\;.
\end{align*}

For $(\ref{cjkb})$ we write using $q_3=p_1q_3+q_1q_3=p_1q_3+q_1-q_1p_3$
\begin{align}
\nonumber(\ref{cjkb})=&N|\laa\Psi ,
\potdiff_\beta(x_1,x_{3})q_{3}q_{4}g_{1/4,\beta}(x_{3}-x_{4})p_{3}p_{4}S_{j-1,k} \Psi\raa|
\\\leq&\label{cjkc} N|\laa\Psi ,
\potdiff_\beta(x_1,x_{3})p_1\widehat{n}_1^{-1}q_{3}q_{4}g_{1/4,\beta}(x_{3}-x_{4})p_{3}p_{4}\widehat{n}_3S_{j-1,k} \Psi\raa|
\\\label{cjkd}&+N|\laa\Psi ,
\potdiff_\beta(x_1,x_{3})q_1q_{4}g_{1/4,\beta}(x_{3}-x_{4})p_{3}p_{4}S_{j-1,k} \Psi\raa|
\\\label{cjke}&+N|\laa\Psi ,
\potdiff_\beta(x_1,x_{3})q_1p_{3}q_{4}g_{1/4,\beta}(x_{3}-x_{4})p_{3}p_{4}S_{j-1,k} \Psi\raa|
\;.
\end{align}

For $(\ref{cjkc})$ note that $p_1\potdiff_\beta(x_1,x_{3})\Psi\in\mathcal{H}_{\{1,3\}}$, so with
Lemma \ref{kombinatorik} (b)
$$\|\widehat{n}_1^{-1}q_4p_1\potdiff_\beta(x_1-x_{3})\Psi\|\leq  C \|p_1\potdiff_\beta(x_1-x_{3})\Psi\|\leq  CN^{-1}\|\phi\|_\infty\;.$$
Using also Lemma
 \ref{Rnorm}  we get
\begin{align*}
(\ref{cjkc})
\leq& C\|\phi\|_\infty
\; \| g_{1/4,\beta}(x_{3}-x_{4})p_{3}\|_{op}\|\widehat{n}_3S_{j-1,k}\|_{\{1,3,4\}}
\\\leq& C\|\phi\|_{\infty}^{j+k+1} N^{-(j+k)/8}\;.
\end{align*}

For $(\ref{cjkd})$ we use Lemma \ref{Rnorm}
\begin{align*}
|(\ref{cjkd})|\leq& N|\laa\sqrt{|\potdiff_\beta(x_1,x_{3})|}\Psi ,\\ &
q_{4}g_{1/4,\beta}(x_{3}-x_{4})p_{4}\sqrt{|\potdiff_\beta(x_1,x_{3})|}p_{3}
S_{j-1,k}\widehat{n}_1 \widehat{n}^{-1}_1q_1\Psi\raa|
\\\leq& CN\|\sqrt{|\potdiff_\beta(x_1,x_{3})|}\Psi\|\;\|g_{1/4,\beta}(x_{3}-x_{4})p_{4}\|_{op}\\&
\|\sqrt{|\potdiff_\beta(x_1,x_{3})|}p_{3}\|_{op}\|\widehat{n}_1S_{j-1,k}\|_{\{1,3,4\}}\|\widehat{n}^{-1}_1q_1\Psi\|
%
%
\\=&CN^{-(j+k)/8}\|\phi\|_\infty^{j+k+1}\;.
\end{align*}
Again using  Lemma \ref{Rnorm}   we get for the last term  in (c)
\begin{align*}
(\ref{cjke})\leq& 
%
N\|p_3\potdiff_\beta(x_1,x_{3})\Psi\|
\|g_{1/4,\beta}(x_{3}-x_{4})p_{4}\|_{op}
\\&\|\widehat{n}_1S_{j-1,k}\|_{\{1,3,4\}}\;\|\widehat{n}^{-1}_1q_1\Psi\|
\\\leq& C\|\phi\|_\infty^{j+k+1}N^{-(j+k)/8}\;.
\end{align*}

\item [(d)] Using Lemma \ref{kombinatorik} (d) we get that
\begin{align*} (N-2j-2k)&(N-2j-2k-1)[\potdiff_\beta(x_1,x_2),
R_{j,k}]\\=&[\potdiff_\beta(x_1,x_2),
p_1p_2S_{j,k}]+[\potdiff_\beta(x_1,x_2),
p_1q_2T_{j,k}]\\&+[\potdiff_\beta(x_1,x_2),
q_1p_2T_{j,k}]\;.
\end{align*}
Hence
\begin{align*} \djk(\Psi,\phi)=&
-2\laa\Psi ,[\potdiff_\beta(x_1,x_2),
p_1q_2T_{j,k}]\;, \Psi\raa\end{align*} thus

\begin{align*} |\djk(\Psi,\phi)|\leq&
2\laa\Psi ,\potdiff_\beta(x_1,x_2)
p_1T_{j,k}\widehat{n}_1\widehat{n}_1^{-1}q_2\Psi\raa\\&+2\laa\Psi ,
q_2\widehat{n}_1^{-1}\widehat{n}_1T_{j,k}p_1\potdiff_\beta(x_1,x_2)\Psi\raa\end{align*}
  Lemma \ref{Rnorm}  gives
\begin{align*} |\djk(\Psi,\phi)|\leq&
C\|\widehat{n}_1T_{j,k}\|_{\{1,2\}}\|p_{1}\potdiff_\beta(x_1,x_2)\Psi\|\;
\|\widehat{n}_1^{-1}q_2\Psi\|
\\\leq& C\|\phi\|_\infty^{j+k+1} N^{1-(j+k)/8-1}\;.
\end{align*}

\item[(e)] Again using  $R_{j,k}=R_{j,k}p_5$ as well as $q_1=1-p_1$ \begin{align} |\ejk(\Psi,\phi)|\leq&j(j-1)|\laa\Psi , R_{j,k}p_5\potdiff_\beta(x_5,x_7) \Psi\raa|\label{ejka}
\\&\hspace{-1cm}+(j-1)|\laa\Psi , \potdiff_\beta(x_1,x_5)p_1q_2g_{1/4,\beta}(x_1-x_2)p_1p_2S_{j-1,k} \Psi\raa|\label{ejkb}
\\&\hspace{-1cm}+(j-1)\laa\Psi , \potdiff_\beta(x_1,x_5)q_2g_{1/4,\beta}(x_1-x_2)p_1p_2S_{j-1,k} \Psi\raa|\label{ejkc}
\;.
\end{align}
 Lemma \ref{Rnorm} gives
\begin{align*}
|(\ref{ejka})|\leq& C \|R_{j,k}\|_{\{5,7\}} \|p_5\potdiff_\beta(x_5,x_7) \Psi\|\\\leq&  C\|\phi\|_\infty^{j+k+1} N^{1/2-(j+k)/8-1}
\end{align*}
as well as
\begin{align*}
|(\ref{ejkb})|\leq&  C \|p_1\potdiff_\beta(x_1,x_5) \Psi\|\;\|p_1g_{1/4,\beta}(x_1-x_2)p_1\|_{op}
\\&\|\widehat{n}_1S_{j-1,k}\|_{\{1,2,5\}}\|\widehat{n}^{-1}_1\|_{op}
\\
\leq&  C\|\phi\|^{j+k+2}_\infty N^{-1-1-1/2+3/2-(j-1+k)/8+1/2}
\end{align*}
and
\begin{align*}
|(\ref{ejkc})|=&- (j-1)\laa\sqrt{\potdiff_\beta(x_1,x_5)}\Psi ,\\&q_2 g_{1/4,\beta}(x_1-x_2)p_2\sqrt{\potdiff_\beta(x_1,x_5)}p_1S_{j-1,k} \Psi\raa
\\|(\ref{ejkc})|\leq& C\|\sqrt{\potdiff_\beta(x_1,x_5)}\Psi\|\; \|g_{1/4,\beta}(x_1-x_2)p_2\|_{op}\\&
 \|\sqrt{\potdiff_\beta(x_1,x_5)}p_1\|_{op}\|\widehat{n}_1S_{j-1,k}\|_{\{1,2,5\}}\|\widehat{n}^{-1}_1\|_{op}
\\\leq&  C\|\phi\|^{j+k+1}_\infty N^{-1/2-1-1/8-1/2+3/2-(j-1+k)/8+1/2}\;.
\end{align*}

\item[(f)]
Using as above $R_{j,k}=R_{j,k}p_5=p_{2j+5}R_{j,k}$

\begin{align*}|\fjk(\Psi,\phi)|\leq& jk\left|\laa\Psi,R_{j,k}p_{5}\potdiff_\beta(x_5,x_{2j+5})\Psi\raa\right|
\\&+jk\left|\laa\Psi,\potdiff_\beta(x_5,x_{2j+5})p_{2j+5}R_{j,k}\Psi\raa\right|
\\\leq& C \|R_{j,k}\|_{\{5,2j+5\}}\|p_{5}\potdiff_\beta(x_5,x_{2j+5})\Psi\|
\\\leq& C\|\phi\|_\infty^{j+k+1} N^{1/2-(j+k)/8-1} \;.
\end{align*}

\end{enumerate}

\end{proof}

\subsection{Proof of the Theorem for $1/3\leq \beta<  1$}\label{secproof2}

Summarizing the last section we get the following Corollary, which directly gives the Theorem.
\begin{corollary}\label{corproof2}
Let  $0<\beta<1$. There exists a functional $\Gamma:\LZN\otimes\LZ\to\mathbb{R}^+$, a functional
$\Gamma':\LZN\otimes\LZ\to\mathbb{R}$ and a $c>0$ such that
\begin{enumerate}
 \item $$|\dt\Gamma(\Psi_t,\phi_t)|\leq |\Gamma'(\Psi_t,\phi_t)|\;.$$

\item $$ c\alpha(\Psi,\phi)-CN^{-\eta}\leq  \Gamma(\Psi,\phi)\leq   \alpha(\Psi,\phi)+CN^{-\eta}$$
uniform in $\Psi,\phi$
\item There exists a functional $\kinf$ such that $$|\Gamma'(\Psi,\phi)|\leq  \CphiA\C$$ uniform in $\Psi,\phi$.
\end{enumerate}

\end{corollary}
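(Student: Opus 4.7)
The plan is to build $\Gamma$ by augmenting $\alpha$ with the iteratively constructed corrections $\gamma_{j,k}$ of Definition \ref{defgammaxi}. Concretely I set
\[
\Gamma(\Psi,\phi) := \alpha(\Psi,\phi) + \sum_{l=1}^{5} c_l \sum_{j+k=l} \gamma_{j,k}(\Psi,\phi) + C_0 N^{-\eta},
\]
where the real coefficients $c_l$ are determined by the linear recursion $c_{l+1} = -c_l/2$, with $c_1$ pinned so that the telescope cancels the problematic summand (\ref{alpha1b}) of $\alpha'_1$, and $C_0 N^{-\eta}$ is added only to guarantee $\Gamma \geq 0$. The natural companion is
\[
\Gamma'(\Psi,\phi) := \alpha'_0(\Psi,\phi) + \alpha'_1(\Psi,\phi) + \alpha'_2(\Psi,\phi) + \sum_{l=1}^{5} c_l \sum_{j+k=l} \gamma'_{j,k}(\Psi,\phi),
\]
so that property (a) is immediate from Lemma \ref{ableitung} applied to $\alpha$ and Lemma \ref{firstadjlemma} applied to the corrections.

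For property (b), Lemma \ref{Rnorm} with $\mathcal{M}=\emptyset$ (applicable because $\Psi$ is symmetric) yields $|\gamma_{j,k}(\Psi,\phi)| \leq \|R_{j,k}\|_{\emptyset} \leq C N^{-(j+k)/8} \|\phi\|_\infty^{j+k}$. Thus for each $l \geq 1$ the correction is bounded by $CN^{-1/8}\mathcal{K}(\phi)$, so summing over $1\leq l \leq 5$ gives $|\Gamma - \alpha - C_0 N^{-\eta}| \leq CN^{-\eta}$ with $\eta = 1/8$, and (b) follows with $c=1$ (the $\phi$-dependence being absorbed into the constant). Note that crucially we do \emph{not} include the $(j,k)=(0,0)$ term, since $\gamma_{0,0} = \laa\Psi,\widehat{m}^0\Psi\raa$ is only $\mathcal{O}(1)$ rather than small; the cancellation of $\Xi_0 := \Im(\xi_{0,0})$ will be generated instead by the level $l=1$ correction.

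Property (c) is the bulk of the work. I bound the contributions of $\Gamma'$ separately: $\alpha'_0$ by Lemma \ref{hnorms}(a) with $f=\dot A$; the splitting (\ref{alpha2})+(\ref{alpha2b}) of $\alpha'_2$ by Lemma \ref{hnorms}(b) and (d) respectively, where $\|\nabla_1 q_1 \Psi\|^2$ is converted via Lemma \ref{potdiffer}(c) (this is where $\beta<1$ enters crucially); the splitting (\ref{alpha1})+(\ref{alpha1b}) of $\alpha'_1$ by Lemma \ref{hnorms}(b) for the first summand, while the second summand is exactly $-2\Xi_0$ up to an explicit multiplicative constant. At each $l\geq 1$, Lemma \ref{firstadjest} gives $\sum_{j+k=l}\gamma'_{j,k} = -2\Xi_{l-1} - \Xi_l + \text{ctrl}$, where the controllable part already has the desired form $\CphiA\C$. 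With $c_1$ chosen to match the coefficient in $(\ref{alpha1b})$ and $c_{l+1}=-c_l/2$, the intermediate $\Xi_l$'s ($l=1,\dots,4$) cancel in pairs, leaving only the terminal residue $-c_5 \Xi_5$; Proposition \ref{xiest} bounds $|\Xi_5| \leq CN^{1/2-5/8}\mathcal{K}(\phi)\|\phi\|_\infty = CN^{-1/8}\mathcal{K}(\phi)\|\phi\|_\infty$, which is of the required form.

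The main obstacle is the bookkeeping in (c), namely verifying that the telescoping of the five $\Xi_l$-levels is consistent with the one geometric recursion $c_{l+1}=-c_l/2$, and that all residues generated along the way (mixed derivative $\as$, smoothed interaction $\bs$, three particle terms $\cs$, and the correction terms $\ds$, $\es$, $\fs$ of Lemma \ref{firstadjest}) simultaneously fit the uniform bound $\CphiA\C$. The choice of exactly five iteration levels is forced by the bound $|\xi_{j,k}| \leq CN^{1/2-(j+k)/8}\mathcal{K}(\phi)\|\phi\|_\infty$: we need $(j+k)/8 > 1/2$ to extract an $N^{-\eta}$ decay from the terminal remainder, and $j+k=5$ is the minimal such integer.
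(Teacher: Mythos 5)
Your construction does not work, and the reason is that you keep $\laa\Psi,\widehat{n}\Psi\raa$ (the $\widehat{n}$-weighted part of $\alpha$) as the base of the iteration, while the $\gamma_{j,k}$'s of Definition \ref{defgammaxi} are built around the operators $R_{j,k},S_{j,k}$ of Definition \ref{RST}, which carry the weights $\widehat{m}^{j+k}$ of Lemma \ref{mhut}. The cancellation scheme needs the exact recursion $m^{j+1}(k,N)=m^j(k,N)-m^j(k+2,N)$ of Lemma \ref{mhut}(a), which is a property of those specially constructed $m^j$'s and not of $n$. Concretely, the summand (\ref{alpha1b}) of $\alpha'_1$ involves the weight $\widehat{n}_{-2}-\widehat{n}$, whereas $\xi_{0,0}$ (via $S_{0,0}=N(N-1)\widehat{m}^1$) involves $\widehat{m}^1=\widehat{m}^0-\widehat{m}^0_2$. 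These are \emph{not} proportional, so your assertion that (\ref{alpha1b}) ``is exactly $-2\Xi_0$ up to an explicit multiplicative constant'' is false, and the telescoping you rely on never starts. Worse, the naive candidate $n(k,N)-n(k+2,N)$ is negative (since $n$ is increasing in $k$), so it cannot even be used as a weight in the sense of Definition \ref{defpro}(c); this is precisely why Lemma \ref{mhut} builds the $m^j$'s backwards from $m^5$ and why the iteration must start from $\gamma_{0,0}=\laa\Psi,\widehat{m}^0\Psi\raa$.

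Your stated reason for excluding $\gamma_{0,0}$ — that it is only $\mathcal{O}(1)$ rather than small — is beside the point: $\laa\Psi,\widehat{n}\Psi\raa$ is equally $\mathcal{O}(1)$, and the two are uniformly comparable by Lemma \ref{mhut}(b); this comparability is exactly why property (b) of the Corollary can only promise $c\alpha-CN^{-\eta}\leq\Gamma$ rather than $\alpha\leq\Gamma$. The correct $\Gamma$ is
\[
\Gamma(\Psi,\phi)=\sum_{j+k\leq 5}2^{-j-k}\gamma_{j,k}(\Psi,\phi)+|\mathcal{E}(\Psi)-\mathcal{E}^{GP}(\phi)|\;,
\]
with the corresponding $\Gamma'$; then (a) follows from Lemma \ref{firstadjlemma} and (\ref{ablediff}), (b) from Lemma \ref{mhut}(b) and Lemma \ref{Rnorm}, and (c) from Corollary \ref{newalpha2} (whose $\|\nabla_1q_1\Psi\|^2$ is converted via Lemma \ref{potdiffer}(c), which is where $\beta<1$ enters), Lemma \ref{firstadjest}, Proposition \ref{xiest}, and (\ref{ablediff}). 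Finally, your coefficient recursion $c_{l+1}=-c_l/2$ with alternating signs is also off: the sign alternation needed for the cancellation is already encoded in the structure of $\gamma'_{j,k}$ (Definition \ref{defgammastrich}), so the correct coefficients are the positive $2^{-l}$, and one telescopes $\sum_{l}2^{-l}\left(2\Im(\xi_{j-1,k})-\Im(\xi_{j,k})\right)$ down to the single terminal residue at $l=5$.
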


\begin{proof}
Set \begin{align*}\Gamma(\Psi,\phi):=&\sum_{ j+k\leq  5}2^{-j-k}\gamma_{j,k}(\Psi,\phi)+|\mathcal{E}(\Psi)-\mathcal{E}^{GP}(\phi)|\hspace{1cm}\text{ and}
\\ \Gamma'(\Psi,\phi):=&\sum_{ j+k\leq  5}2^{-j-k}\gamma_{j,k}'(\Psi,\phi)+\dt|\mathcal{E}(\Psi)-\mathcal{E}^{GP}(\phi)|\;.\end{align*}

\begin{enumerate}
 \item follows from Lemma \ref{firstadjlemma} with (\ref{ablediff}).
\item
\begin{align*}
\Gamma(\Psi,\phi)=&\laa\Psi,\widehat{m}^0\Psi\raa+|\mathcal{E}(\Psi)-\mathcal{E}^{GP}(\phi)|+\sum_{1\leq  j+k\leq  5}\laa\Psi,R_{j,k}\Psi\raa+\Gamma(\Psi,\phi)\end{align*}

In view of Lemma \ref{mhut} we have that
$$ c_0\alpha(\Psi,\phi)\leq \laa\Psi,\widehat{m}^0\Psi\raa+|\mathcal{E}(\Psi)-\mathcal{E}^{GP}(\phi)|\leq
\alpha(\Psi,\phi)\;.$$
The other summands are in view of Lemma \ref{Rnorm} bounded by $CN^{-\eta}$ and (b) follows.

\item Recall that $\xi_{-1,k}=0$, thus
\begin{align*}
\Gamma'(\Psi,\phi)=&\sum_{ j+k\leq  5}2^{-j-k}\left(\gamma_{j,k}'(\Psi,\phi)+2\Im(\xi_{j-1,k})-\Im(\xi_{j,k})\right)
\\&+\sum_{j+k=5}2^{-5}\Im(\xi_{j,k})
+\dt|\mathcal{E}(\Psi)-\mathcal{E}^{GP}(\phi)|
\end{align*}
The first line is controlled by Lemma \ref{firstadjest}. The second line is bounded by Proposition \ref{xiest} and (\ref{ablediff}).

\end{enumerate}
\end{proof}
From (b) and (c)   it follows that $$\Gamma'(\Psi,\phi)\leq  \CphiA\Cgamma$$ and we get via Gr\o nwall
$$\Gamma(\Psi_t,\phi_t)\leq  e^{\int_0^t (\|\phi_s\|_\infty+(\ln N)^{1/3}\|\nabla\phi_s\|_{6,loc}+\|\dot A_s\|_\infty)K(\phi_s)ds}(\Gamma(\Psi_0,\phi_0)+ N^{-\eta})\;.$$
For $\phi\in\mathcal{G}$ we have that $\sup_{s\in\mathbb{R}}\{\mathcal{K}(\phi_s)\}<\infty$.
Again using (b) we get the bound on $\alpha(\Psi_t,\phi_t)$ as stated in Theorem \ref{theorem}.

\section{Generalizing to $\beta=1$}\label{secb3}

 Lemma \ref{firstadjest} holds for $\beta=1$. When generalizing the Theorem to $\beta=1$ below one ``only'' has to adjust $\Gamma$ in Corollary \ref{corproof2} such that the $\xi$    in Corollary \ref{newalpha2}  becomes controllable.

Recall that (c.f.  Corollary \ref{newalpha2}) $$\xi=-2N(N-1)\Im\left(\laa\Psi,\potdiff_{\beta}(x_1,x_2)p_1q_2(\widehat{m}^0-\widehat{m}^0_1)\Psi\raa\right)\;.$$
The method we use is similar as above: We add a functional $\gamma$ to $\Gamma$ such that the interaction term in $\xi$ is smoothed out by the cost of additional terms. It turns out that all the additional terms are controllable so in contrast to section \ref{secadj1} this first adjustment will be sufficient.

The ``new'' interaction term will scale moderately and can --- using Lemma \ref{hnorms} (d) --- be controlled in terms of $\|q_1\nabla_1\Psi\|$. But we only got good control of $\|\nabla_1q_1\Psi\|$ for $\beta<1$. Hence we have to generalize our estimates on $\|\nabla_1q_1\Psi\|$ to $\beta=1$ first.

\subsection{Controlling $\|\nabla_1q_1\Psi\|$ for $\beta=1$}\label{secabl2}

For $\beta=1$ a relevant part of the kinetic energy is used to form the microscopic structure.
That part of the kinetic energy is concentrated around the scattering centers. Hence $\|\nabla_1q_1\Psi\|$ will in fact {\it not} be small.

The microscopic structure is --- neglecting three
particle interactions --- given by Lemma \ref{defAlemma}.
So we shall first cutoff three particle interactions, i.e. we define a cutoff function which
does not depend on $x_1$ and cuts off all parts of the wave function
where two particles $x_j, x_k$ with $j\neq k$, $j,k\neq 1$ come to
close (see $\mathcal{B}_1$ in  Definition \ref{hdetail}).

After that we shall cutoff that part of the kinetic energy which
is used to form the microscopic structure. The latter is
concentrated around the scattering centers (i.e. on the sets
$\overline{\mathcal{A}}_j$ given by Definition \ref{hdetail}).

Then we show  that $\|\mathds{1}_{\mathcal{A}_1}\nabla_1q_1\Psi\|$ is small (see Lemma \ref{kineticenergy} below).

Having good control on $\|\mathds{1}_{\mathcal{A}_1}\nabla_1q_1\Psi\|$ instead of $\|\nabla_1q_1\Psi\|$
Lemma \ref{hnorms} (d) has of course to be changed appropriately. This will be done in Lemma \ref{qqqterm}.

\begin{definition}\label{hdetail}
For any $j,k=\mathbb{N}$ let \be\label{defkleins}a_{j,k}:=\{(x_1,x_2,\ldots,x_N)\in
\mathbb{R}^{3N}: |x_j-x_k|<N^{-26/27}\}\ee

$$\overline{\mathcal{A}}_j:=\bigcup_{k\neq j}a_{j,k}\;\;\;\;\;\;\;\mathcal{A}_j:=\mathbb{R}^{3N}\backslash \overline{\mathcal{A}}_j
\;\;\;\;\;\;\;\overline{\mathcal{B}}_{j}:=\bigcup_{k,l\neq j}a_{k,l}\;\;\;\;\;\;\;\mathcal{B}_{j}:=\mathbb{R}^{3N}\backslash
\overline{\mathcal{B}}_{j}\;.$$

\end{definition}

\begin{proposition}\label{propo}
\begin{enumerate}
 \item

$$\|\mathds{1}_{\overline{\mathcal{A}}_{j}}p_1\|_{op}\leq  C\|\phi\|_\infty N^{-17/18}\;,$$
\item

$$\|\mathds{1}_{\overline{\mathcal{A}}_{j}}\Psi\|\leq  CN^{-17/27}\|\nabla_j\Psi\|\;,$$
\item
$$\|\mathds{1}_{\overline{\mathcal{B}}_{j}}\Psi\|\leq  CN^{-7/54}\|\nabla_j\Psi\|\;.$$
\end{enumerate}\end{proposition}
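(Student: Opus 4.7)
The plan is to handle the three estimates in turn, each using a different cut-off idea: a convolution-against-$|\phi|^2$ computation for (a), a Sobolev slice inequality in one coordinate for (b), and a doubled union bound together with the bosonic symmetry of $\Psi$ for (c). For (a), I would first use permutation symmetry to reduce to $j=1$, then apply the union bound $\mathds{1}_{\overline{\mathcal{A}}_1}\leq\sum_{k=2}^N\mathds{1}_{a_{1,k}}$. The convolution identity (\ref{faltungorigin}) gives $p_1\mathds{1}_{a_{1,k}}p_1=(|\phi|^2\star\mathds{1}_{B_{N^{-26/27}}})(x_k)\,p_1$, whose operator norm is bounded by $\|\phi\|_\infty^2\,|B_{N^{-26/27}}|\leq C\|\phi\|_\infty^2 N^{-26/9}$. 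Summing the $N-1$ values of $k$ yields $\|\mathds{1}_{\overline{\mathcal{A}}_1}p_1\|_{op}^2\leq C\|\phi\|_\infty^2 N^{-17/9}$, and taking square roots gives (a).

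For (b), I fix $\hat x_j:=(x_k)_{k\neq j}$ and use the Sobolev inequality $\|f\|_{L^6(\mathbb{R}^3)}^2\leq C\|\nabla f\|_{L^2(\mathbb{R}^3)}^2$ in the single variable $x_j$, combined with H\"older's inequality. The $x_j$-slice of $\overline{\mathcal{A}}_j$ at fixed $\hat x_j$ is contained in a union of at most $N-1$ balls of radius $N^{-26/27}$, so its Lebesgue measure is bounded by $CN^{-17/9}$ uniformly in $\hat x_j$. Hence $\int|\mathds{1}_{\overline{\mathcal{A}}_j}\Psi|^2\,dx_j\leq|\text{slice}|^{2/3}\|\Psi\|_{L^6_{x_j}}^2\leq CN^{-34/27}\int|\nabla_j\Psi|^2\,dx_j$, and integrating in the remaining coordinates followed by a square root gives (b).

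For (c), the obstruction is that $\overline{\mathcal{B}}_j$ is independent of $x_j$, so a single Sobolev slice in $x_j$ cannot see anything. The remedy is a nested union bound: write $\overline{\mathcal{B}}_j=\bigcup_{k\neq j}B_k$ with $B_k:=\bigcup_{l\neq k,j}a_{k,l}$, apply the slice estimate of (b) in the $x_k$-direction to each $B_k$ to obtain $\|\mathds{1}_{B_k}\Psi\|^2\leq CN^{-34/27}\|\nabla_k\Psi\|^2$, then union-bound in $k$ and use symmetry of $\Psi$ to replace $\|\nabla_k\Psi\|$ by $\|\nabla_j\Psi\|$. This gives $\|\mathds{1}_{\overline{\mathcal{B}}_j}\Psi\|^2\leq\sum_{k\neq j}\|\mathds{1}_{B_k}\Psi\|^2\leq CN\cdot N^{-34/27}\|\nabla_j\Psi\|^2=CN^{-7/27}\|\nabla_j\Psi\|^2$, which is (c) upon taking roots. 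The main subtlety is exactly this double union bound: a naive single union bound over all $N^2$ pairs $(k,l)$ would lose a factor of $N$, while the inner union bound must be resolved by Sobolev before the outer one is taken.
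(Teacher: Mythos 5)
Your proof is correct and follows essentially the same route as the paper's: a volume bound for the balls $a_{j,k}$ integrated against $\|\phi\|_\infty^2$ for (a), the H\"older--Sobolev slice inequality in the $x_j$-variable for (b), and a reduction of (c) to (b) applied in each of the remaining coordinates combined with the symmetry of $\Psi$. The only cosmetic difference is in (c): the paper passes through a pairwise-disjoint refinement $\mathcal{C}_k\subset\overline{\mathcal{A}}_k$ covering $\overline{\mathcal{B}}_j$ so that $\|\mathds{1}_{\overline{\mathcal{B}}_j}\Psi\|^2\leq\sum_k\|\mathds{1}_{\overline{\mathcal{A}}_k}\Psi\|^2$, whereas you use the equally valid and slightly more direct pointwise bound $\mathds{1}_{\overline{\mathcal{B}}_j}\leq\sum_{k\neq j}\mathds{1}_{B_k}$ with $B_k\subset\overline{\mathcal{A}}_k$, which yields the same estimate.
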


\begin{proof}
\begin{enumerate}
 \item

 \begin{align*}
 \|\mathds{1}_{\overline{\mathcal{A}}_{j}}p_1\|_{op}\leq& \|\phi\|_\infty\|\mathds{1}_{\overline{\mathcal{A}}_{j}}\|
 \\=&\|\phi\|_\infty  |\overline{\mathcal{A}}_{j}|^{1/2} \leq  \|\phi\|_\infty N^{(1-26/9)/2}
 \end{align*}

 \item
Using H\"older and Sobolev under the $x_k$-integration we get 
\begin{align*} \|
\mathds{1}_{\overline{\mathcal{A}}_{j}}\Psi\|^2=&\|\mathds{1}_{\overline{\mathcal{A}}_{j}}\Psi^2\|_1\leq  \|\mathds{1}_{\overline{\mathcal{A}}_{j}}\|_{3/2}\|\Psi^2\|_3
\leq |\overline{\mathcal{A}}_{j}|^{2/3}\|\nabla_j\Psi\|^2
\\\leq& \|\nabla_1\Psi \|^2(N N^{-26/9})^{2/3}\leq
N^{-34/27}\|\nabla_1 \Psi\|^2\;.
 \end{align*}
Since $\|\nabla_1\Psi \|<C$ (b) follows.

 \item We use that $\overline{\mathcal{B}}_{j}\subset\bigcup_{k=1}\overline{\mathcal{A}}_{k}$.
Hence one can find pairwise disjoint sets $ \mathcal{C}_k\subset\mathcal{A}_{k}$, $k=1,\ldots,N$
such that $\overline{\mathcal{B}}_{j}\subset\bigcup_{k=1} \mathcal{C}_{k}$. Since the sets $ \mathcal{C}_k$ are pairwise disjoint,
the $\mathds{1}_{ \mathcal{C}_{k}}\Psi $ are pairwise orthogonal and we get
\begin{align*}
\|\mathds{1}_{\overline{\mathcal{B}}_{j}}\Psi \|^2=\sum_{k=1}\|\mathds{1}_{ \mathcal{C}_{k}}\Psi \|^2
\leq \sum_{k=1}\|\mathds{1}_{\overline{\mathcal{A}}_{k}}\Psi \|^2\leq  CN^{-7/27}\|\nabla_j\Psi\|^2\;.
\end{align*}
\end{enumerate}
\end{proof}

Next we prepare estimates of some energy terms we shall need below.

\begin{corollary}\label{potdiffer2}
Let $V_1\in\mathcal{V}_1$, $0<\beta_1<1$. Then there exist a $\kinf$ and a $\eta>0$ such that for all
$\Psi\in\mathcal{H}_{\emptyset}$ and all $\phi\in\LZ$
$$\laa\Psi ,\left(2a|\phi(x_1)|^2-(N-1)\mathds{1}_{\mathcal{B}_{1}}W_{\beta_1}(x_1-x_2)\right)\Psi \raa\leq
\C$$

\end{corollary}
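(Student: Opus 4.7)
The strategy is to reduce to Lemma \ref{potdiffer} (b) by substituting the peaked potential $V_1$ by the smoothed version $W_{\beta_1}f_{\beta_1,1}$, which by Lemma \ref{defAlemma} (a) belongs to $\mathcal{V}_{\beta_1}$ with $\beta_1<1$ and carries the correct effective coupling. First I would write the algebraic identity
\begin{align*}
&2a|\phi(x_1)|^2-(N-1)\mathds{1}_{\mathcal{B}_1}W_{\beta_1}(x_1-x_2) \\
=\;&\bigl[2a|\phi(x_1)|^2-(N-1)W_{\beta_1}f_{\beta_1,1}(x_1-x_2)\bigr]\\
&{}-(N-1)W_{\beta_1}g_{\beta_1,1}(x_1-x_2)+(N-1)\mathds{1}_{\overline{\mathcal{B}}_1}W_{\beta_1}(x_1-x_2),
\end{align*}
so that the problem splits into three pieces.

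The first bracket is bounded in expectation by $\C$ through a direct application of Lemma \ref{potdiffer} (b), since $W_{\beta_1}f_{\beta_1,1}\in\mathcal{V}_{\beta_1}$ and $\beta_1<1$ is in the regime covered by that lemma. The middle term has pointwise nonnegative integrand: $W_{\beta_1}\geq 0$ and, as established in the proof of Lemma \ref{defAlemma}, $f_{\beta_1,1}$ is a positive, nondecreasing function bounded above by $1$, so $g_{\beta_1,1}=1-f_{\beta_1,1}\geq 0$ on $\mathrm{supp}\,W_{\beta_1}$. Hence the contribution $-(N-1)\laa\Psi,W_{\beta_1}g_{\beta_1,1}(x_1-x_2)\Psi\raa$ is at most $0$ and can simply be dropped.

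The remaining three-particle correction $T:=(N-1)\laa\Psi,\mathds{1}_{\overline{\mathcal{B}}_1}W_{\beta_1}(x_1-x_2)\Psi\raa$ is the main obstacle. The plan is to apply the form bound of Lemma \ref{defAlemma} (c) to the function $\mathds{1}_{\overline{\mathcal{B}}_1}\Psi$; since $\mathds{1}_{\overline{\mathcal{B}}_1}$ depends only on $x_2,\ldots,x_N$ it commutes with $\nabla_1$, yielding
$$
T\leq 2(N-1)\|\mathds{1}_{\overline{\mathcal{B}}_1}\mathds{1}_{|x_1-x_2|\leq R_{\beta_1}}\nabla_1\Psi\|^2+(N-1)\laa\Psi,\mathds{1}_{\overline{\mathcal{B}}_1}V_1(x_1-x_2)\Psi\raa.
$$
The $V_1$-piece is controlled by $(N-1)\laa\Psi,V_1(x_1-x_2)\Psi\raa\leq\C$ via Lemma \ref{totalE} (a) after dropping $\mathds{1}_{\overline{\mathcal{B}}_1}$ by positivity. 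The kinetic piece is the delicate one: after symmetrizing the factor $N-1$ into $\sum_{k\geq 2}\mathds{1}_{|x_1-x_k|\leq R_{\beta_1}}$ one bounds this sum (on $\mathcal{B}_1$ the balls $B_{R_{\beta_1}}(x_k)$ around the scattering centers are pairwise disjoint for $\beta_1$ large enough, so the sum is at most $1$ there) and then applies Proposition \ref{propo} (c) in the form $\|\mathds{1}_{\overline{\mathcal{B}}_1}\Psi\|\leq CN^{-7/54}\|\nabla_1\Psi\|$, together with the kinetic bound $\|\nabla_1\Psi\|^2\leq\mathcal{K}(\phi)(\alpha+1)$ from the proof of Lemma \ref{totalE}, to extract a factor $N^{-\eta}$.

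The hard part will be the kinetic part of the three-particle correction. The naive estimate $T\leq (N-1)\|W_{\beta_1}\|_\infty\|\mathds{1}_{\overline{\mathcal{B}}_1}\Psi\|^2\leq CN^{3\beta_1-7/27}\mathcal{K}(\phi)(\alpha+1)$ is only small for $\beta_1<7/81$, so the proof must arrange Cauchy--Schwarz so that $\mathds{1}_{\overline{\mathcal{B}}_1}$ hits $\Psi$ (where Proposition \ref{propo} (c) gives the smallness) rather than $\nabla_1\Psi$ (which would require control of mixed second derivatives), and must crucially use the form bound of Lemma \ref{defAlemma} (c) to trade $W_{\beta_1}$ for a kinetic contribution and an honest $V_1$-interaction energy that are separately controlled.
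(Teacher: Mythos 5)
Your algebraic split is a reasonable starting point and the first two pieces go through: the bracket $2a|\phi|^2-(N-1)W_{\beta_1}f_{\beta_1,1}$ is amenable to the argument of Lemma~\ref{potdiffer}~(b) since $W_{\beta_1}f_{\beta_1,1}\in\mathcal{V}_{\beta_1}$ with $\beta_1<1$ by Lemma~\ref{defAlemma}~(a) (strictly this is a re-run of that proof rather than a black-box invocation, because the $\alpha$ appearing in Lemma~\ref{potdiffer}~(b) is built from the Hamiltonian containing the stated potential; one must check, and it is true, that only $\laa\Psi,\widehat{n}\Psi\raa$ and the a priori bound $\|\nabla_1q_1\Psi\|^2\leq\mathcal{K}(\phi)(\alpha+1)$ are used, both of which refer only to the original $V_1$-Hamiltonian). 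The middle term $-(N-1)W_{\beta_1}g_{\beta_1,1}$ is indeed pointwise $\leq 0$. The trouble is the third piece $T=(N-1)\laa\Psi,\mathds{1}_{\overline{\mathcal{B}}_1}W_{\beta_1}(x_1-x_2)\Psi\raa$, which your decomposition produces with a \emph{plus} sign and which therefore needs a genuine upper bound. Your form-bound step trades $W_{\beta_1}$ for $(N-1)\|\mathds{1}_{\overline{\mathcal{B}}_1}\mathds{1}_{|x_1-x_2|\leq R_{\beta_1}}\nabla_1\Psi\|^2$ plus $(N-1)\laa\Psi,\mathds{1}_{\overline{\mathcal{B}}_1}V_1(x_1-x_2)\Psi\raa$, and at $\beta=1$ neither of these is small in the required sense: both are generically of order one, being exactly the energy stored in the short-range correlations. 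Concretely, your claim that $(N-1)\laa\Psi,V_1\Psi\raa\leq\C$ ``via Lemma \ref{totalE}~(a)'' misreads that lemma, which bounds $(N-1)\|\sqrt{V_1}\Psi\|^2$ by $\alpha+\|\nabla\phi\|^2+2\|A\|_\infty+2a\|\phi\|_\infty^2$ — that is, by $\alpha$ plus an $N$-independent constant, not by $\mathcal{K}(\phi)(\alpha+N^{-\eta})$. The inequality you would actually need, $(N-1)\|\mathds{1}_{\overline{\mathcal{B}}_1}\sqrt{V_1}\Psi\|^2\leq\C$, is Lemma~\ref{kineticenergy}~(b), which is proven afterwards \emph{using} Corollary~\ref{potdiffer2}, so invoking it here would be circular. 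Similarly the disjointness of the balls $B_{R_{\beta_1}}(x_k)$ that you use to bound $\sum_{k}\mathds{1}_{|x_1-x_k|\leq R_{\beta_1}}$ holds on $\mathcal{B}_1$, not on $\overline{\mathcal{B}}_1$ where your indicator lives, and Proposition~\ref{propo}~(c) gives smallness of $\|\mathds{1}_{\overline{\mathcal{B}}_1}\Psi\|$, not of $\|\mathds{1}_{\overline{\mathcal{B}}_1}\nabla_1\Psi\|$.

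The paper sidesteps the difficulty by never extracting $\mathds{1}_{\overline{\mathcal{B}}_1}W_{\beta_1}$ as a positive quantity to be upper bounded. It keeps $\mathds{1}_{\mathcal{B}_1}W_{\beta_1}$ intact, inserts $1=p_1p_2+(1-p_1p_2)$ on both sides of the quadratic form, and then the contribution (\ref{split4}), namely $-(N-1)\laa\Psi,\mathds{1}_{\mathcal{B}_1}(1-p_1p_2)W_{\beta_1}(1-p_1p_2)\mathds{1}_{\mathcal{B}_1}\Psi\raa$, comes out nonpositive and is simply discarded; only the $p_1p_2$-diagonal (\ref{split1}) and the cross term (\ref{split3}) require estimates, and each carries a smoothing $p_1p_2$ that tames $W_{\beta_1}$ (the cross term is in turn split via $\mathds{1}_{\mathcal{A}_1}$, $\mathds{1}_{\overline{\mathcal{A}}_1}$ and eventually reduces to Lemma~\ref{potdiffer}~(a) on the good set). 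Your decomposition moves the $\overline{\mathcal{B}}_1$ contribution to the side of the inequality where the positivity you would need to discard it is lost, and it lands on a quantity that at $\beta=1$ simply is not small.
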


\begin{proof}
\begin{align}\nonumber
\laa\Psi ,&\left(2a|\phi(x_1)|^2-(N-1)\mathds{1}_{\mathcal{B}_{1}}W_{\beta_1}(x_1-x_2)\right)\Psi \raa
\\\label{split1}=&\laa\mathds{1}_{\mathcal{B}_{1}}\Psi ,p_1p_2\left(2a|\phi(x_1)|^2-(N-1)W_{\beta_1}(x_1-x_2)\right)\mathds{1}_{\mathcal{B}_{1}}p_1p_2\Psi \raa
\\\label{split2}&+2a\laa\Psi , |\phi(x_1)|^2- \mathds{1}_{\mathcal{B}_{1}}p_1p_2|\phi(x_1)|^2p_1p_2\mathds{1}_{\mathcal{B}_{1}}\Psi \raa
\\\label{split3}&+2(N-1)\laa\Psi ,\mathds{1}_{\mathcal{B}_{1}}p_1p_2W_{\beta_1}(x_1-x_2)(1-p_1p_2)\mathds{1}_{\mathcal{B}_{1}}\Psi \raa
\\\label{split4}&-(N-1)\laa\Psi ,\mathds{1}_{\mathcal{B}_{1}}(1-p_1p_2)W_{\beta_1}(x_1-x_2)(1-p_1p_2)\mathds{1}_{\mathcal{B}_{1}}\Psi \raa\;.
\end{align}
Using (\ref{faltungorigin}) and (\ref{formelfuerkin}) we get that (\ref{split1}) is bounded by
$$|(\ref{split1})|\leq \C\;.$$
Due to Lemma \ref{kombinatorikb}
$$|\laa\Psi , |\phi(x_1)|^2\Psi\raa-\langle\phi,|\phi^2|\phi\rangle|\leq  C\|\phi\|_\infty^2 \alpha(\phi,\phi)\;.$$
On the other hand we have
\begin{align*}
\laa\Psi,\mathds{1}_{\mathcal{B}_{1}}p_1p_2|\phi(x_1)|^2p_1p_2\mathds{1}_{\mathcal{B}_{1}}\Psi\raa
=\langle\phi,|\phi^2|\phi\rangle\|p_1p_2\mathds{1}_{\mathcal{B}_{1}}\Psi\|^2
\end{align*}
and with Proposition \ref{propo} that
\begin{align*}
|\|p_1p_2\mathds{1}_{\mathcal{B}_{1}}\Psi\|^2-1|\leq& \left| \|p_1p_2\mathds{1}_{\mathcal{B}_{1}}\Psi\|^2-\|p_1p_2\Psi\|^2\right|
+\left|\|p_1p_2\Psi\|^2-1\right|
\\\leq& CN^{-7/54}+\alpha(\Psi,\phi)\;.
\end{align*}
Thus
\begin{align*}|(\ref{split2})|
\leq \C\;.
\end{align*}

For (\ref{split3}) we use that the support of $W_{\beta_1}(x_1-x_2)$ and $\overline{\mathcal{A}}_1$ are disjoint, thus
\begin{align} \label{split5}(\ref{split3})=&-
2(N-1)\laa\Psi ,\mathds{1}_{\mathcal{B}_{1}}p_1p_2W_{\beta_1}(x_1-x_2)p_1p_2\mathds{1}_{\mathcal{B}_{1}}\mathds{1}_{\overline{\mathcal{A}}_{1}}\Psi \raa
\\\label{split6}&+
2(N-1)\laa\Psi ,\mathds{1}_{\mathcal{B}_{1}}\mathds{1}_{\overline{\mathcal{A}}_{1}}p_1p_2W_{\beta_1}(x_1-x_2)(1-p_1p_2)\mathds{1}_{\mathcal{B}_{1}}\mathds{1}_{\mathcal{A}_{1}}\Psi \raa
\\\label{split7}&+2(N-1)\laa\Psi ,\mathds{1}_{\mathcal{B}_{1}}\mathds{1}_{\mathcal{A}_{1}}p_1p_2W_{\beta_1}(x_1-x_2)(1-p_1p_2)\mathds{1}_{\mathcal{B}_{1}}\mathds{1}_{\mathcal{A}_{1}}\Psi \raa
\;.
\end{align}
Using Proposition \ref{propo} and Lemma \ref{kombinatorik} (e) we have
$$|(\ref{split5})|\leq 2N\|p_1W_{\beta_1}(x_1-x_2)p_1\|_{op}\|\mathds{1}_{\overline{\mathcal{A}}_{1}}\Psi \|\leq
 C\|\phi\|_\infty^2N^{-17/27}\;.$$
For (\ref{split6}) we have using  Proposition \ref{propo} and Lemma \ref{kombinatorik} (e)
\begin{align*}
|\ref{split6}|\leq& 2N \|\mathds{1}_{\overline{\mathcal{A}}_{1}}\Psi\|\;\|\mathds{1}_{\overline{\mathcal{A}}_{1}}p_1\|_{op} \|p_1W_{\beta_1}(x_1-x_2)\|_{op}
\\\leq&  CN N^{-17/24} \|\phi\|_\infty N^{-17/18} \|\phi\|_\infty \|W_{\beta_1}\|
\\\leq& CN^{-11/72+3/2(\beta-1)} \|\phi\|_\infty^2\;.
\end{align*}
Note that $\mathds{1}_{\mathcal{B}_{1}}\mathds{1}_{\mathcal{A}_{1}}\mathbb{R}^{3N}\backslash\bigcup_{j\neq k}a_{j,k}$, thus $\mathds{1}_{\mathcal{B}_{1}}\mathds{1}_{\mathcal{A}_{1}}\Psi\in\mathcal{H}_{\emptyset}$. Therefore
(\ref{split7}) is controlled by Lemma \ref{potdiffer} (a) and also bounded by the right hand side of the Corollary.

Having good control of (\ref{split1}), (\ref{split2}) and (\ref{split3}) and using that (\ref{split4}) is negative the Lemma follows.

\end{proof}

\begin{lemma}\label{kineticenergy}
Let  $V_1\in\mathcal{V}_1$. Then there exists a $\eta>0$ and a $\kinf$ such that for any $\Psi \in \mathcal{H}_\emptyset$ and any
$\phi\in\LZ$
\begin{enumerate}
\item
\begin{align*}\|\mathds{1}_{\mathcal{A}_{1}}\nabla_1q_1
\Psi \|^2 \leq  \C
 \end{align*}
\item
\begin{align*} (N-1)\|\mathds{1}_{\overline{\mathcal{B}}_{1}}\sqrt{V_1}(x_1-x_2)\Psi \|^2\leq \C
 \end{align*}

\item
\begin{align*}\|\mathds{1}_{\overline{\mathcal{B}}_{1}}\nabla_1q_1\Psi \|^2\leq& \C
\end{align*}
\end{enumerate}
\end{lemma}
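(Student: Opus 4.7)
The three parts are interrelated and share the same machinery. Throughout I will fix $\beta_1$ sufficiently close to $1$ so that $R_{\beta_1}<N^{-26/27}/2$; this ensures that on $\mathcal{B}_1$ the supports of $(V_1-W_{\beta_1})(x_1-x_k)$ (and separately of $W_{\beta_1}(x_1-x_k)$) are pairwise disjoint across $k\geq 2$.

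For part (a), the starting point is the implicit kinetic energy bound of Lemma~\ref{totalE}(b), into which I substitute the decomposition $V_1 = \mathds{1}_{\mathcal{B}_1}W_{\beta_1} + \mathds{1}_{\mathcal{B}_1}(V_1-W_{\beta_1}) + \mathds{1}_{\overline{\mathcal{B}}_1}V_1$. The smoothed $W_{\beta_1}$ contribution is absorbed via Corollary~\ref{potdiffer2}; the $\mathds{1}_{\overline{\mathcal{B}}_1}V_1$ contribution is dropped by positivity of $V_1$; and for the residual $\mathds{1}_{\mathcal{B}_1}(V_1-W_{\beta_1})$ piece I sum Lemma~\ref{defAlemma}(c) over the disjoint pairs $(1,k)$ and use symmetry of $\Psi$ to obtain
$$-(N-1)\langle\Psi,\mathds{1}_{\mathcal{B}_1}(V_1-W_{\beta_1})(x_1-x_2)\Psi\rangle \leq 2\|\mathds{1}_{\overline{\mathcal{A}}_1}\nabla_1\Psi\|^2.$$
This delivers the intermediate estimate $\|\nabla_1 q_1\Psi\|^2\leq 2\|\mathds{1}_{\overline{\mathcal{A}}_1}\nabla_1\Psi\|^2 + \mathcal{K}(\phi)(\alpha+N^{-\eta})$. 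To convert it into a bound on $\|\mathds{1}_{\mathcal{A}_1}\nabla_1 q_1\Psi\|^2$, I write $\nabla_1\Psi=\nabla_1 q_1\Psi+\nabla_1 p_1\Psi$ and note that $\nabla_1 p_1\Psi=\nabla\phi(x_1)\langle\phi,\Psi\rangle_1$ has product structure; since $\int \mathds{1}_{\overline{\mathcal{A}}_1}|\nabla\phi(x_1)|^2dx_1\leq CN\cdot N^{-52/27}\|\Delta\phi\|^2$ for fixed other variables (H\"older$+$Sobolev, mimicking Proposition~\ref{propo}), this yields $\|\mathds{1}_{\overline{\mathcal{A}}_1}\nabla_1 p_1\Psi\|^2\leq \mathcal{K}(\phi)N^{-\eta}$. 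The decomposition $\|\nabla_1 q_1\Psi\|^2=\|\mathds{1}_{\mathcal{A}_1}\nabla_1 q_1\Psi\|^2+\|\mathds{1}_{\overline{\mathcal{A}}_1}\nabla_1 q_1\Psi\|^2$ then reduces (a) to producing a matching upper bound on $\|\mathds{1}_{\overline{\mathcal{A}}_1}\nabla_1 q_1\Psi\|^2$. I plan to achieve this by applying the entire argument above to $q_1\Psi$ in place of $\Psi$ (Lemma~\ref{defAlemma}(c) makes no symmetry assumption); since $q_1$ kills the leading-order condensate mass, the ``mean field of $q_1\Psi$'' terms that appear are themselves of size $\mathcal{K}(\phi)(\alpha+N^{-\eta})$, and the factor-$2$ contributions from the two copies of the argument cancel.

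Part (b) follows by symmetry, writing $(N-1)\|\mathds{1}_{\overline{\mathcal{B}}_1}\sqrt{V_1}(x_1-x_2)\Psi\|^2=\sum_{k\geq 2}\|\mathds{1}_{\overline{\mathcal{B}}_1}\sqrt{V_1}(x_1-x_k)\Psi\|^2$, decomposing $\overline{\mathcal{B}}_1$ into disjoint pieces $\mathcal{C}_j\subseteq\overline{\mathcal{A}}_j$ as in the proof of Proposition~\ref{propo}(c), and estimating each contribution by inserting $1=p_k+q_k$ and using $\|\sqrt{V_1}p_k\|_{op}\leq CN^{-1/2}\|\phi\|_\infty$ (Lemma~\ref{kombinatorik}(e)) together with $\|\mathds{1}_{\overline{\mathcal{A}}_j}\Psi\|\leq CN^{-17/27}$ (Proposition~\ref{propo}(b)). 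Part (c) is then obtained by splitting $\mathds{1}_{\overline{\mathcal{B}}_1}=\mathds{1}_{\overline{\mathcal{B}}_1\cap\mathcal{A}_1}+\mathds{1}_{\overline{\mathcal{B}}_1\cap\overline{\mathcal{A}}_1}$: the first piece is majorised by $\|\mathds{1}_{\mathcal{A}_1}\nabla_1 q_1\Psi\|^2$ and controlled by (a), while on the triple-overlap set $\overline{\mathcal{B}}_1\cap\overline{\mathcal{A}}_1$ I use $\nabla_1 q_1=q_1\nabla_1-[\nabla_1,p_1]$, exploiting that $\mathds{1}_{\overline{\mathcal{B}}_1}$ depends only on $x_2,\dots,x_N$ and therefore commutes with $\nabla_1 p_1$; the remaining $\|q_1\mathds{1}_{\overline{\mathcal{B}}_1}\mathds{1}_{\overline{\mathcal{A}}_1}\nabla_1\Psi\|$ is bounded by combining Proposition~\ref{propo}(a)--(c) with the uniform kinetic-energy bound $(\ref{ablabsch})$.

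The hard part will be closing (a). The factor of $2$ in front of $\|\mathds{1}_{\overline{\mathcal{A}}_1}\nabla_1\Psi\|^2$ in the intermediate estimate is intrinsic to the $\frac12$ in Lemma~\ref{defAlemma}(c) and cannot be removed by a na\"ive Cauchy--Schwarz with small $\epsilon$: the microscopic kinetic energy on $\overline{\mathcal{A}}_1$ is genuinely of order one and must be absorbed via a structural cancellation rather than a pointwise bound. The cancellation afforded by applying Lemma~\ref{defAlemma}(c) and Corollary~\ref{potdiffer2} once more to $q_1\Psi$ is the crucial step that the rest of the section will rely on.
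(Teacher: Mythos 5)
Your strategy and the paper's are the same up to the point where the positivity of Lemma~\ref{defAlemma}(c) enters, but the way you invoke it introduces the coefficient~$2$ that you yourself flag, and the fix you propose does not repair it. Applying the whole argument once more to $q_1\Psi$ cannot help: $q_1$ is a projector, so $q_1(q_1\Psi)=q_1\Psi$ and the ``second copy'' of the estimate is literally the same as the first; the microscopic scattering structure on $\overline{\mathcal{A}}_1$ lives in $q_1\Psi$ and not in $p_1\Psi$, so replacing $\Psi$ by $q_1\Psi$ removes nothing of order one; and the two inputs you need to run the argument --- Lemma~\ref{totalE}(b) and Corollary~\ref{potdiffer2} --- assume full bosonic symmetry, which $q_1\Psi\in\mathcal{H}_{\{1\}}$ does not have. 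So there is no cancellation between ``two copies'' to be had, and with the coefficient $2$ the chain
$\|\mathds{1}_{\mathcal{A}_1}\nabla_1 q_1\Psi\|^2+\|\mathds{1}_{\overline{\mathcal{A}}_1}\nabla_1 q_1\Psi\|^2 \leq 2\|\mathds{1}_{\overline{\mathcal{A}}_1}\nabla_1 q_1\Psi\|^2+\mathcal{K}(\phi)(\alpha+N^{-\eta})$
only gives $\|\mathds{1}_{\mathcal{A}_1}\nabla_1 q_1\Psi\|^2\leq\|\mathds{1}_{\overline{\mathcal{A}}_1}\nabla_1 q_1\Psi\|^2+\ldots$, which is vacuous since the right side is of order one.

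What the paper actually uses is the positivity with coefficient \emph{one}, paired with the \emph{full} kinetic energy $\|\mathds{1}_{\mathcal{B}_1}\mathds{1}_{\overline{\mathcal{A}}_1}\nabla_1\Psi\|^2$ of $\Psi$ (not of $q_1\Psi$) on the cut set. Concretely, it first rewrites
\begin{align*}
\|\mathds{1}_{\mathcal{B}_1}\mathds{1}_{\overline{\mathcal{A}}_1}\nabla_1 q_1\Psi\|^2
= \|\mathds{1}_{\mathcal{B}_1}\mathds{1}_{\overline{\mathcal{A}}_1}\nabla_1\Psi\|^2
- 2\Re\laa\mathds{1}_{\mathcal{B}_1}\mathds{1}_{\overline{\mathcal{A}}_1}\nabla_1 q_1\Psi,\mathds{1}_{\mathcal{B}_1}\mathds{1}_{\overline{\mathcal{A}}_1}\nabla_1 p_1\Psi\raa
- \|\mathds{1}_{\mathcal{B}_1}\mathds{1}_{\overline{\mathcal{A}}_1}\nabla_1 p_1\Psi\|^2,
\end{align*}
where the last two terms are of size $\mathcal{K}(\phi)N^{-\eta}$, precisely by the bound $\|\mathds{1}_{\overline{\mathcal{A}}_1}\nabla_1 p_1\Psi\|\leq\mathcal{K}(\phi)N^{-\eta}$ that you derive. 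It then asserts that the combination
$\|\mathds{1}_{\mathcal{B}_1}\mathds{1}_{\overline{\mathcal{A}}_1}\nabla_1\Psi\|^2+\laa\Psi,\sum_{j\neq1}\mathds{1}_{\mathcal{B}_1}(V_1-W_{\beta_1})(x_1-x_j)\Psi\raa$ is non-negative --- this is line~(\ref{vielezeilenb}) --- which is the coefficient-one version of the positivity. That version is what the proof of Lemma~\ref{defAlemma}(c) in fact establishes: the operator $H^{X_n}=-\Delta+\sum_k(V_{\beta}-W_{\beta_1})(\cdot-x_k)$ there carries no factor $\tfrac12$, and its non-negativity (via the ground-state substitution $F^{X_n}$) yields the coefficient-one statement. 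Read the $\tfrac12$ in the displayed statement of Lemma~\ref{defAlemma}(c) as a slip; with it the paper's own claim about (\ref{vielezeilenb}) would also fail. Once you take coefficient one, your route closes exactly: after inserting the positivity and Corollary~\ref{potdiffer2} into Lemma~\ref{totalE}(b), the term $\|\mathds{1}_{\mathcal{B}_1}\mathds{1}_{\overline{\mathcal{A}}_1}\nabla_1 q_1\Psi\|^2$ cancels between the two sides, leaving
$\|\mathds{1}_{\mathcal{A}_1}\nabla_1 q_1\Psi\|^2+\|\mathds{1}_{\overline{\mathcal{B}}_1}\mathds{1}_{\overline{\mathcal{A}}_1}\nabla_1 q_1\Psi\|^2+(N-1)\|\mathds{1}_{\overline{\mathcal{B}}_1}\sqrt{V_1}(x_1-x_2)\Psi\|^2\leq\C$.
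Since each summand is non-negative, this single inequality delivers (a), (b), and the bound (\ref{gleich}) needed for (c) simultaneously; the separate direct estimates you sketch for (b) and (c) are then unnecessary, which is fortunate because it is not clear they close (in (b) the insertion $1=p_k+q_k$ leaves a $q_k$-contribution with no obvious gain, and in (c) the step of commuting $q_1$ past $\mathds{1}_{\overline{\mathcal{A}}_1}$ is not justified since both depend on $x_1$).
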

\begin{proof}
\begin{itemize}
 \item [(a)+(b)]
For any $0<\beta_1<1$ we have
\begin{align*}
\|\nabla_1&q_1\Psi \|^2+\laa\Psi ,((N-1)V_1(x_1-x_2)-2a|\phi (x_1)|^2)\Psi \raa
\nonumber\\=&\|\mathds{1}_{\mathcal{A}_{1}}\nabla_1q_1\Psi
\|^2
 + \|\mathds{1}_{\overline{\mathcal{B}}_{1}}\mathds{1}_{\overline{
\mathcal{A}}_{1}}\nabla_1q_1\Psi \|^2
+ \|\mathds{1}_{\mathcal{B}_{1}}\mathds{1}_{\overline{\mathcal{A}}_{1}}\nabla_1q_1\Psi \|^2
\\&+(N-1)\|\mathds{1}_{\overline{\mathcal{B}}_{1}}\sqrt{V}_1(x_1-x_2)\Psi \|^2\\&+\laa\Psi ,\sum_{j\neq
1}\mathds{1}_{\mathcal{B}_{1}}\left(V_1-W_{\beta_1}\right)(x_1-x_j)\Psi \raa
\\& +\laa\Psi ,\left(\sum_{j\neq
1}\mathds{1}_{\mathcal{B}_{1}}W_{\beta_1}(x_1-x_j)-2a|\phi(x_1)|^2\right)\Psi \raa\;.
\end{align*}
Using that $q_1=1-p_1$ and symmetry gives (after reordering)
\begin{align} =&(N-1)\|\mathds{1}_{\overline{\mathcal{B}}_{1}}\sqrt{V}_1(x_1-x_2)\Psi \|^2
+ \|\mathds{1}_{\mathcal{A}_{1}}\nabla_1q_1\Psi \|^2
\label{vielezeilen}\\\label{vielezeilen0}&+ \|\mathds{1}_{\overline{\mathcal{B}}_{1}}\mathds{1}_{\overline{
\mathcal{A}}_{1}}\nabla_1q_1\Psi \|^2+ \|\mathds{1}_{ \mathcal{B}_{1}}\mathds{1}_{\overline{
\mathcal{A}}_{1}}\nabla_1p_1\Psi \|^2
\\\label{vielezeilena}&-2 \Re\left(\laa\nabla_1q_1\Psi ,
\mathds{1}_{\mathcal{B}_{1}}\mathds{1}_{\overline{\mathcal{A}}_{1}}
\nabla_1p_1\Psi \raa\right)
\\\label{vielezeilenb}&+ \|\mathds{1}_{\mathcal{B}_{1}}\mathds{1}_{\overline{\mathcal{A}}_{1}}\nabla_1\Psi \|^2
+\laa\Psi ,\sum_{j\neq
1}\mathds{1}_{\mathcal{B}_{1}}\left(V_1-W_{\beta_1}\right)(x_1-x_j)\Psi \raa
\\&\label{vielezeilenc}+\laa\Psi ,\left(\sum_{j\neq
1}\mathds{1}_{\mathcal{B}_{1}}W_{\beta_1}(x_1-x_j)-2a|\phi(x_1)|^2\right)\Psi \raa
\end{align}

Proposition \ref{propo} (b) and (\ref{ablabsch}) yields that for some $\kinf$
\begin{align*}
|(\ref{vielezeilena})|\leq& 2\left|\Re\left(\laa \nabla_1 q_1\Psi ,
\mathds{1}_{\overline{\mathcal{A}}_{1}}\nabla_1p_1\mathds{1}_{\mathcal{B}_{1}}\Psi \raa\right)\right|
\\\leq& \|\nabla_1q_1\Psi \|\;\|\nabla\phi \|\|\mathds{1}_{\mathcal{B}_{1}}\Psi \|
\\\leq& \mathcal{K}(\phi) N^{-7/54}\;.
\end{align*}
Choosing $\beta_1<1$ large enough the support of the
potentials $V_{1}(x_1-x_j)$ and $W_{\beta_1}(x_1-x_j)$ are
subsets of
$\overline{\mathcal{A}}_{1}:=\mathbb{R}^{3N}\backslash\mathcal{A}_{1}$ (c.f. Definition \ref{hdetail}).
Furthermore we have
that the support of the potentials
$$\mathds{1}_{\mathcal{B}_{1}}\left(V_{1}(x_1-x_j)-W_{\beta_1}(x_1-x_j)\right)$$
are pairwise disjoint for different $j$. It follows with Lemma
\ref{defAlemma} (c) that
 (\ref{vielezeilenb}) is positive.

Corollary \ref{potdiffer2} gives a abound on (\ref{vielezeilenc}) it follows that
\begin{align*}(\ref{vielezeilen})+(\ref{vielezeilen0})
\leq& \C\;.
\end{align*}

Since all the summands in (\ref{vielezeilen}) and (\ref{vielezeilen0}) are positive, it follows that each of them is
bounded by $\C$
and we get (a), (b) as well as
\be\label{gleich}\|\mathds{1}_{\overline{\mathcal{B}}_{1}}\mathds{1}_{\overline{\mathcal{A}}_{1}}\nabla_1q_1\Psi
\|^2\leq \C\;.\ee

\item[(c)] (\ref{gleich}) and (a) give
\begin{align*}
 \|\mathds{1}_{\overline{\mathcal{B}}_{1}}\nabla_1q_1\Psi \| ^2
\leq&  \|\mathds{1}_{\overline{\mathcal{B}}_{1}}\mathds{1}_{\overline{\mathcal{A}}_{1}}\nabla_1q_1\Psi \|^2
+ \|\mathds{1}_{\overline{\mathcal{B}}_{1}}\mathds{1}_{\mathcal{A}_{1}}\nabla_1q_1\Psi \|^2
\\\leq&  \|\mathds{1}_{\overline{\mathcal{B}}_{1}}\mathds{1}_{\overline{\mathcal{A}}_{1}}\nabla_1q_1\Psi \|^2
+ \|\mathds{1}_{\mathcal{A}_{1}}\nabla_1q_1\Psi \|^2
\\\leq& \C\;.
\end{align*}

\end{itemize}
\end{proof}

\subsection{Generalizing Lemma \ref{hnorms} (d)}\label{secqqq}

For $\beta=1$ our plan is again to smoothen out the interaction using the microscopic structure and use Lemma \ref{hnorms}
for the smoothed interaction terms. To be able to do so we first have to estimate Lemma \ref{hnorms} (d)  in terms of
$\|\mathds{1}_{\mathcal{A}_{1}}\nabla_1q_1
\Psi\|$ (Remember that for $\beta=1$  $\|\nabla_1q_1
\Psi\|$ is not small).

\begin{lemma}\label{qqqterm}
 Let for $0<\beta<1$
$V_\beta\in \mathcal{V}_\beta$, $g\in L^2$ and $m:\mathbb{N}^2\to\mathbb{R}^+$ with $m\leq  n^{-1}$.
Then there exists a $\kinf$ and a $\eta>0$ such that for any $\Psi \in\mathcal{M}$ with
$\{1,2\}\subset\mathcal{M}$
\begin{align}\label{l24}&N\laa \Psi q_1p_2V_\beta(x_1-x_2)\widehat{m}q_1q_2\Psi\raa
\\\nonumber&\hspace{0.2cm}\leq
\mathcal{K}(\phi)(\|\phi\|_\infty+(\ln N)^{1/3}\|\nabla\phi\|_{6,loc})
\left(\laa\Psi,\widehat{n}\Psi\raa+\|\mathds{1}_{\mathcal{A}_{1}}\nabla_1q_1\Psi\|^2+N^{-\eta}\right)\end{align}


\end{lemma}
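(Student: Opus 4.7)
\emph{Proof proposal.} The strategy mirrors the proof of Lemma \ref{hnorms} (d), but adapted to the left projector structure $q_1p_2$ (rather than $p_1q_2$) and to the need to produce a \emph{localized} kinetic gradient $\mathds{1}_{\mathcal{A}_1}\nabla_1q_1\Psi$ in place of the full $\nabla_1q_1\Psi$, the latter no longer being small when $\beta=1$.

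\textbf{Step 1 (decomposition).} Write $V_\beta(x_1-x_2)=U_{\beta_1,\beta}(x_1-x_2)+\Delta h_{\beta_1,\beta}(x_1-x_2)$ via Definition \ref{udef} and Lemma \ref{ulemma}, with $\beta_1$ chosen so that $N^{-\beta_1}>N^{-26/27}$ (e.g.\ $\beta_1=8/9$), so that the support of $h_{\beta_1,\beta}(x_1-x_2)$ genuinely intersects $\mathcal{A}_1$ and the localization later is nontrivial.

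\textbf{Step 2 (soft scaling piece).} For $N\laa\Psi,q_1p_2\,U_{\beta_1,\beta}\,\widehat m\,q_1q_2\Psi\raa$ the potential $U_{\beta_1,\beta}$ lies in $\mathcal V_{\beta_1}$ with $\beta_1<1$, so one can adapt the symmetrization and Cauchy--Schwarz argument from the proof of part (c) of Lemma \ref{hnorms}: using $\Psi\in\mathcal H_{\mathcal M}$ to symmetrize the index $2$ into a sum over $j\notin\mathcal M$, applying Lemma \ref{kombinatorik} (c) to commute $\widehat m$ past the projectors, and invoking estimate (\ref{symback}). Since the analogue of $\alpha'$ controlled there is present via the additional $q_1$ on the left, the bound takes the form $\mathcal K(\phi)\bigl(\laa\Psi,\widehat n\Psi\raa+N^{-\eta}\bigr)$ which fits in the RHS of (\ref{l24}).

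\textbf{Step 3 (integration by parts on the distribution piece).} For $N\laa\Psi,q_1p_2\,\Delta h_{\beta_1,\beta}\,\widehat m\,q_1q_2\Psi\raa$, use $\Delta h=\Delta_1 h$ and integrate by parts once in $x_1$. Since $p_2$ is independent of $x_1$ and $\nabla_1 q_1=-|\nabla\phi(x_1)\rangle\langle\phi(x_1)|$ on the left, the derivatives distribute onto $\Psi^*$, onto $q_1$ (producing $\nabla\phi$-factors which are handled exactly as in Lemma \ref{hnorms} (d)), and onto the right factor $\widehat m\,q_1q_2\Psi$. After using (\ref{nablamitm}) to replace $\nabla_1(\widehat m q_1 q_2)\Psi$ by a combination of $\widehat m_{d}q_1q_2\nabla_1q_1\Psi$ plus $\nabla\phi$-terms, the remaining nontrivial contributions are of the schematic form
\begin{align*}
\text{(I)}&\;=\;N\bigl|\laa\Psi,q_1p_2\,(\nabla_1 h_{\beta_1,\beta})\,q_2\,\widehat m_d\,\nabla_1q_1\Psi\raa\bigr|,\\
\text{(II)}&\;=\;N\bigl|\laa\nabla_1q_1\Psi,\,p_2\,(\nabla_1 h_{\beta_1,\beta})\,\widehat m\,q_1q_2\Psi\raa\bigr|.
\end{align*}

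\textbf{Step 4 (localization to $\mathcal A_1$).} In both (I) and (II) decompose $\nabla_1q_1\Psi=\mathds{1}_{\mathcal A_1}\nabla_1q_1\Psi+\mathds{1}_{\overline{\mathcal A}_1}\nabla_1q_1\Psi$. The $\mathds{1}_{\mathcal A_1}$-piece is estimated by Cauchy--Schwarz together with the norm bounds $\|h_{\beta_1,\beta}\|$, $\|\nabla h_{\beta_1,\beta}\|$, $\|h_{\beta_1,\beta}\|_3$ from Lemma \ref{ulemma}, the operator estimates of Lemma \ref{kombinatorik} (e) (giving $\|\phi\|_\infty$ and $(\ln N)^{1/3}\|\nabla\phi\|_{6,loc}$), and Lemma \ref{kombinatorikb} to remove the $\widehat m$-factors. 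This matches exactly the structure of the $\mathcal K(\phi)(\|\phi\|_\infty+(\ln N)^{1/3}\|\nabla\phi\|_{6,loc})\|\mathds{1}_{\mathcal A_1}\nabla_1q_1\Psi\|^2$ term on the RHS of (\ref{l24}).

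\textbf{Step 5 (small-set remainder).} For the $\mathds{1}_{\overline{\mathcal A}_1}\nabla_1q_1\Psi$ contribution, commute $\mathds{1}_{\overline{\mathcal A}_1}$ through the multiplication operators $h,\nabla h$ and arrange it next to the projector $p_2$; Proposition \ref{propo} (a) gives $\|\mathds{1}_{\overline{\mathcal A}_1}p_2\|_{op}\leq C\|\phi\|_\infty N^{-17/18}$. Combined with the a priori bound $\|\nabla_1q_1\Psi\|\leq\mathcal K(\phi)(1+\sqrt{\alpha(\Psi,\phi)})$ from (\ref{ablabsch}) and the norm estimates of Lemma \ref{ulemma}, the $N^{-17/18}$ factor absorbs the explicit prefactor $N$ as well as any polynomial growth from $\|\nabla h\|$, producing a remainder of order $\mathcal K(\phi)(\alpha(\Psi,\phi)+N^{-\eta})$.

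\textbf{Main obstacle.} The delicate bookkeeping is in Step 5. Because $\mathds{1}_{\overline{\mathcal A}_1}$ does not commute with the rank-one projector $p_2$, the small factor $N^{-17/18}$ from Proposition \ref{propo} (a) can only be extracted by inserting $\mathds{1}_{\overline{\mathcal A}_1}$ against a \emph{single} $p_2$ via Cauchy--Schwarz, and one must verify, term by term in the IBP expansion, that this insertion is always available and that the resulting product of small factors (from $\|h\|$, $\|\nabla h\|$, $N^{-17/18}$) beats the unbounded prefactor $N$. The choice $\beta_1=8/9$ provides enough slack: $N\cdot\|h\|\cdot N^{-17/18}\sim N^{-17/18-\beta_1/2}$ is well below $1$, and analogous powers appear for the terms carrying $\|\nabla h\|$.
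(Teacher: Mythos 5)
Your overall plan (decompose $V_\beta=U_{\beta_1,\beta}+\Delta h_{\beta_1,\beta}$, integrate by parts, split $\nabla_1q_1\Psi$ using $\mathds{1}_{\mathcal A_1}+\mathds{1}_{\overline{\mathcal A}_1}$, then control the small-set piece) is the right idea and matches the paper's strategy in outline, but Step 5 --- the step you yourself flag as the main obstacle --- rests on a false operator-norm bound, and Step 2 does not go through with your choice of $\beta_1$.

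The claim $\|\mathds{1}_{\overline{\mathcal A}_1}p_2\|_{op}\leq C\|\phi\|_\infty N^{-17/18}$ is not what Proposition \ref{propo} (a) gives, and it is in fact false. That proposition gives a small operator norm only when the index of the excluded set matches the index of the projector, i.e.\ $\|\mathds{1}_{\overline{\mathcal A}_1}p_1\|_{op}$, because there the $x_1$-slice of $\overline{\mathcal A}_1$ is a union of $N-1$ balls of radius $N^{-26/27}$. In contrast, $\overline{\mathcal A}_1=\bigcup_{k\neq 1}a_{1,k}$ contains the events $|x_1-x_k|<N^{-26/27}$ for $k\geq 3$, which do not depend on $x_2$ at all. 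For configurations with $x_1$ already close to some $x_k$, $k\geq 3$, the $x_2$-slice of $\overline{\mathcal A}_1$ is all of $\mathbb{R}^3$, so $\|\mathds{1}_{\overline{\mathcal A}_1}p_2\|_{op}=1$. The paper's proof never uses such a bound: it applies Proposition \ref{propo} (b), namely $\|\mathds{1}_{\overline{\mathcal A}_1}\chi\|\leq CN^{-17/27}\|\nabla_1\chi\|$, to the full vector (e.g.\ $\chi=q_2(\nabla_1 h)(x_1-x_2)\widehat m_1 q_1 p_2\Psi$), and then pays the price by carefully expanding the extra $\nabla_1$. This expansion produces $\Delta_1 h$ and further derivative terms, and keeping track that all of them are beaten by the combination $N\cdot N^{-17/27}\cdot(\text{norms of }h,\nabla h,\Delta h)$ is exactly where the real work lies. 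Your proposal skips this expansion entirely and incorrectly converts the small-set indicator into a factor next to $p_2$, which does not yield any smallness.

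Separately, the choice $\beta_1=8/9$ breaks Step 2. With $\beta_1=8/9$ you have $\|U_{8/9,\beta}\|_\infty\sim N^{-1+8/3}$, so $N\|U_{8/9,\beta}\|_\infty\alpha(\Psi,\phi)\sim N^{8/3}\alpha(\Psi,\phi)$, which is not bounded; and the Cauchy--Schwarz/symmetrization argument of Lemma \ref{hnorms} (c) that you propose to ``adapt'' is explicitly restricted to $\beta<1/3$. The paper circumvents this by a two-stage argument: first it proves the Lemma for $\beta\leq\beta_0$ with $\beta_1=0$ (so $\|U_{0,\beta}\|_\infty\sim N^{-1}$ and the $U$-piece is trivial), and then bootstraps to $\beta_0<\beta<1$ by writing $V_\beta=U_{\beta_0,\beta}+(V_\beta-U_{\beta_0,\beta})$, reusing the just-proved case for $U_{\beta_0,\beta}\in\mathcal V_{\beta_0}$, and invoking the already-established estimate (\ref{hnormd}) (which carries the crucial $N^{-\beta_0}$ prefactor that, combined with (\ref{ablabsch}), absorbs $\|\nabla_1 q_1\Psi\|^2$). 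You would need an analogous two-stage structure; a single pass with $\beta_1=8/9$ does not close.
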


\begin{proof}

We first prove the Lemma for some small $0<\beta$ and generalize to all $0<\beta<1$ thereafter.

In view of Definition \ref{udef}
\begin{align}\nonumber N|\laa\Psi ,q_1  p_2 V_{\beta}(x_1-&x_2) \widehat{m}q_1 q_2
\Psi\raa|=N|\laa\Psi ,q_1  p_2 U_{0,\beta} (x_1-x_2)\widehat{m}q_1 q_2 \Psi\raa|
\\& +N|\laa\Psi ,\widehat{m}_1q_1  p_2 (\Delta
h_{0,\beta})(x_1-x_2) q_1 q_2 \Psi\raa|\nonumber
\\ \leq&  N|\laa\Psi ,q_1  p_2 U_{0,\beta} (x_1-x_2)\widehat{m}q_1 q_2 \Psi\raa|\label{saa}
\\& +N|\laa\Psi ,q_1  p_2\widehat{m}_1 (\nabla_1
h_{0,\beta}(x_1-x_2))q_2\mathds{1}_{\overline{\mathcal{A}}_1}\nabla_1q_1
\Psi\raa|\label{sab}
\\& +N|\laa\mathds{1}_{\overline{\mathcal{A}}_1}\nabla_1q_1  \Psi ,p_2
(\nabla_1 h_{0,\beta}(x_1-x_2)) q_1 q_2\widehat{m} \Psi\raa|\label{sac}
\\& +
N|\laa\Psi ,q_1  p_2 \widehat{m}_1 (\nabla_1
h_{0,\beta}(x_1-x_2))q_2\mathds{1}_{\mathcal{A}_1}\nabla_1q_1
\Psi\raa|\label{sad}
\\& +N|\laa\mathds{1}_{\mathcal{A}_1}\nabla_1q_1  \Psi ,p_2
(\nabla_1 h_{0,\beta}(x_1-x_2)) q_1 q_2\widehat{m}\Psi\raa|\label{sae}
\;. \end{align}
For $(\ref{saa})$ we have
\begin{align*}
(\ref{saa})\leq  N\|q_1\Psi\|\;\| U_{0,\beta}\|_\infty\| \widehat{m}q_1 q_2 \Psi\|
\leq  CN\alpha(\Psi,\phi)\|V_\beta\|_1\;.
\end{align*}
For $(\ref{sab})$ and $(\ref{sac})$.
\begin{align*}
(\ref{sab})+(\ref{sac})\leq& N\|\mathds{1}_{\overline{\mathcal{A}}_1}q_2(\nabla_1
h_{0,\beta}(x_1-x_2))\widehat{m}_1 q_1  p_2\Psi\|\;\|\nabla_1q_1
\Psi\|
\\&+N\|\nabla_1q_1  \Psi\|\;\| \mathds{1}_{\overline{\mathcal{A}}_1}p_2
(\nabla_1 h_{0,\beta}(x_1-x_2)) q_1 q_2\widehat{m} \Psi\|
\end{align*}
Using Proposition \ref{propo} and Lemma \ref{ulemma}
\begin{align*}
(\ref{sab})+(\ref{sac})\leq& CN^{10/27} \|\nabla_1 q_2(\nabla_1
h_{0,\beta}(x_1-x_2))\widehat{m}_1 q_1  p_2\Psi\|\;\|\nabla_1q_1
\Psi\|
\\&+CN^{10/27}\|\nabla_1q_1 \Psi\|\; \|\nabla_1p_2
(\nabla_1 h_{0,\beta}(x_1-x_2)) q_1 q_2\widehat{m} \Psi\|
\\\leq& CN^{10/27} \| q_2(\Delta_1
h_{0,\beta}(x_1-x_2))\widehat{m}_1 q_1  p_2\Psi\|
\\&+CN^{10/27} \| q_2(\nabla_1
h_{0,\beta}(x_1-x_2))\nabla_1\widehat{m}_1 q_1  p_2\Psi\|
\\&+CN^{10/27}\|p_2
(\Delta_1 h_{0,\beta}(x_1-x_2)) q_1 q_2\widehat{m} \Psi\|
\\&+CN^{10/27}\|p_2
(\nabla_1 h_{0,\beta}(x_1-x_2)) \nabla_1q_1 q_2\widehat{m} \Psi\|
\end{align*}

Note that for any $m\leq  Cn^{-1}$ and any $\Psi\in\mathcal{H}_{\{1\}}$
\begin{align}\label{nablacomm}
\|\nabla_1q_1q_2\widehat{m}\Psi\|\leq& \|p_1\nabla_1q_1q_2\widehat{m}\Psi\|+\|q_1\nabla_1q_1q_2\widehat{m}\Psi\|
\nonumber\\=&\|\widehat{m}_{1}q_2p_1\nabla_1q_1\Psi\|+\|\widehat{m}q_2q_1\nabla_1q_1\Psi\|
\nonumber\\\leq& C\|\nabla_1q_1\Psi\|
\end{align}
and similarly
\begin{align*}
\|\nabla_1q_1p_2\widehat{m}\Psi\|
\leq& C\|\widehat{m}\|_{op}\|\nabla_1q_1\Psi\|\;.
\end{align*}
Since $\|\nabla_1q_1\Psi\|$ is bounded (see (\ref{ablabsch}))
\begin{align*}
(\ref{sab})+(\ref{sac})\leq& CN^{10/27}\|\phi\|_\infty\| \big(\| \Delta_1
h_{0,\beta}\|
\\& +\| \nabla_1
h_{0,\beta} \|N^{-1/2}
 + \|
\Delta_1 h_{0,\beta}\|
+
\|\nabla_1 h_{0,\beta} \|\big)
\end{align*}
Using the bounds on $\|
\Delta_1 h_{0,\beta}\|$ and $\| \nabla_1
h_{0,\beta} \|$ from Lemma \ref{ulemma} it follows that for $\beta$ small enough we can find a $\eta>0$ and a $\kinf$ such that
$$(\ref{sab})+(\ref{sac})\leq  \mathcal{K}(\phi)\|\phi\|_\infty N^{-\eta}\;.$$

\begin{align*}
(\ref{sad})\leq& \frac{N}{N-1} |\laa\Psi ,\sum_{k=2}^N q_1  p_k \widehat{m}_1 (\nabla_1
h_{0,\beta}(x_1-x_k))q_k\mathds{1}_{\mathcal{A}_1}\nabla_1q_1
\Psi\raa|
\\\leq&  \frac{N}{N-1}\|\sum_{k=2}^N    q_k(\nabla_1
h_{0,\beta}(x_1-x_k))\widehat{m}_1q_1  p_k\Psi\|\;\|\mathds{1}_{\mathcal{A}_1}\nabla_1q_1
\Psi\|
\end{align*}
Due to (\ref{ablabsch}) $\|\mathds{1}_{\mathcal{A}_1}\nabla_1q_1
\Psi\|$ is bounded.
For the other factor we write
\begin{align}
\nonumber\|&\sum_{k=2}^N    q_k(\nabla_1
h_{0,\beta}(x_1-x_k))\widehat{m}_1q_1  p_k\Psi\|^2
\\=&\sum_{2\leq  j<k\leq  N}\laa \Psi,  \widehat{m}_1q_1  p_j(\nabla_1
h_{0,\beta}(x_1-x_j))q_jq_k(\nabla_1
h_{0,\beta}(x_1-x_k))\widehat{m}_1q_1  p_k\Psi\raa\label{saf}
\\&+\sum_{k=2}^N    \laa \Psi,  \widehat{m}_1q_1  p_k(\nabla_1
h_{0,\beta}(x_1-x_k))q_k(\nabla_1
h_{0,\beta}(x_1-x_k))\widehat{m}_1q_1  p_k\Psi\raa\label{sag}
 \;.
\end{align}
For $(\ref{saf})$ we use that for any $2\leq  k\leq  N$ $\nabla_1
h_{0,\beta}(x_1-x_k)=\nabla_k
h_{0,\beta}(x_1-x_k)$. Partial integrations  yield
\begin{align*}
(\ref{saf})&\\\leq& \sum_{2\leq  j<k\leq  N}|\laa \nabla_k\nabla_jp_j\widehat{m}_1q_k\Psi,  q_1
h_{0,\beta}(x_1-x_j)
h_{0,\beta}(x_1-x_k)\widehat{m}_1q_1  p_kq_j\Psi\raa|
\\&+\sum_{2\leq  j<k\leq  N}|\laa  \nabla_jp_j\widehat{m}_1q_k\Psi,  q_1
h_{0,\beta}(x_1-x_j)
h_{0,\beta}(x_1-x_k)\nabla_k\widehat{m}_1q_1  p_kq_j\Psi\raa|
\\&+\sum_{2\leq  j<k\leq  N}|\laa \nabla_k\widehat{m}_1q_k\Psi,  q_1  p_j
h_{0,\beta}(x_1-x_j)
h_{0,\beta}(x_1-x_k)\nabla_j\widehat{m}_1q_1  p_kq_j\Psi\raa|
\\&+\sum_{2\leq  j<k\leq  N}|\laa \widehat{m}_1q_k\Psi,  q_1  p_j
h_{0,\beta}(x_1-x_j)
h_{0,\beta}(x_1-x_k)\nabla_j\nabla_k\widehat{m}_1q_1  p_kq_j\Psi\raa|\:.
\end{align*}
Due to symmetry the first and the fourth line are equal and
\begin{align*}
(\ref{saf})\leq& 2N^2|\laa \nabla_2p_2\nabla_3\widehat{m}_1q_3\Psi,  q_1
h_{0,\beta}(x_1-x_3)p_3h_{0,\beta}(x_1-x_2)\widehat{m}_1q_1  q_2\Psi\raa|
\\&+N^2|\laa  \nabla_2\widehat{m}_1p_2q_3\Psi,  q_1
h_{0,\beta}(x_1-x_2)h_{0,\beta}(x_1-x_3)\nabla_3\widehat{m}_1q_1p_3  q_2\Psi\raa|
\\&+N^2|\laa \nabla_3\widehat{m}_1q_3\Psi,  q_1
h_{0,\beta}(x_1-x_3) p_3 p_2h_{0,\beta}(x_1-x_2)\nabla_2\widehat{m}_1q_1 q_2\Psi\raa|
\\\leq& 2N^2\| p_3 h_{0,\beta}(x_1-x_3)\nabla_3\widehat{m}_1q_1q_3\Psi\|
\|h_{0,\beta}(x_1-x_2)\nabla_2p_2 \|_{op}\|\widehat{m}_1q_1  q_2\Psi\|
\\&+N^2\|h_{0,\beta}(x_1-x_2)\nabla_2p_2
\|_{op}^2\|\widehat{m}_1q_1q_3\Psi\|^2
\\&+N^2\|p_2 h_{0,\beta}(x_1-x_2) \nabla_2\widehat{m}_1q_1q_2\Psi\|^2
\\\leq& CN\|p_2 h_{0,\beta}(x_1-x_2)
\nabla_2\widehat{m}_1q_1q_2\Psi\|\;\|\nabla\phi\|_{6,loc}(\ln N)^{1/3}\sqrt{\alpha(\Psi,\phi)}
\\&+C\|\nabla\phi\|_{6,loc}^2(\ln N)^{2/3}
\alpha(\Psi,\phi)
\\&+N^2\|p_2 h_{0,\beta}(x_1-x_2) \nabla_2\widehat{m}_1q_1q_2\Psi\|^2
\;.
\end{align*}
Let us next control the factor $\|p_2 h_{0,\beta}(x_1-x_2) \nabla_2\widehat{m}_1q_1q_2\Psi\|$. Using
$1=\mathds{1}_{\mathcal{A}_1}+\mathds{1}_{\overline{\mathcal{A}}_1}$ and exchanging the variables $x_1$ and $x_2$ we have
\begin{align*}
\|p_2 h_{0,\beta}(x_1-x_2) \nabla_2\widehat{m}_1q_1q_2\Psi\|
\leq& \|p_1 h_{0,\beta}(x_1-x_2)\|_{op} \|\nabla_1q_1q_2\widehat{m}_1\mathds{1}_{\mathcal{A}_1}\Psi\|
\\&+\|p_1 h_{0,\beta}(x_1-x_2) \nabla_1\|_{op}\|q_2\widehat{m}_1\mathds{1}_{\overline{\mathcal{A}}_1}q_1\Psi\|\;.
\end{align*}
Using formula (\ref{nablacomm}) on the first and Lemma \ref{kombinatorikb} on the second summand (note, that $\mathds{1}_{\mathcal{A}_1}\Psi\in\mathcal{H}_{\{1\}}$)  we get
\begin{align*}
\|p_2 h_{0,\beta}(x_1-x_2) \nabla_2\widehat{m}_1q_1q_2\Psi\|
\leq& C\|p_1 h_{0,\beta}(x_1-x_2)\|_{op} \|\nabla_1q_1\mathds{1}_{\mathcal{A}_1}q_1\Psi\|
\\&+\|p_1 h_{0,\beta}(x_1-x_2) \nabla_1\|_{op}\|\mathds{1}_{\overline{\mathcal{A}}_1}\Psi\|\;.
\end{align*}
The first summand is in view of Lemma \ref{kombinatorik} (e) and Lemma \ref{ulemma} bounded by
$$CN^{-1}\|\phi\|_{\infty} \|\mathds{1}_{\mathcal{A}_1}\nabla_1q_1\Psi\|\:.$$

Partial integration, Proposition (\ref{propo}) and again  Lemma \ref{kombinatorik} (e) with Lemma \ref{ulemma} give for the second summand
\begin{align*}
\|&p_1 h_{0,\beta}(x_1-x_2) \nabla_1\|_{op}\|\mathds{1}_{\overline{\mathcal{A}}_1}q_1\Psi\|
\\\leq& CN^{-17/27}\left(\| h_{0,\beta}(x_1-x_2) \nabla_1p_1\|_{op}+\| (\nabla_1h_{0,\beta}(x_1-x_2)) p_1\|_{op}\right)\|\nabla_1q_1\Psi\|
\\\leq& CN^{-17/27}\left(N^{-1}\| \nabla\phi\|_{6,loc}(\ln N)^{1/3}+N^{-1}\|\phi\|_{\infty}\right)\|\nabla_1q_1\Psi\|
\end{align*}
 Using (\ref{ablabsch}) and choosing $\beta$ sufficiently small there exists a $\eta>0$ and a $\kinf$ such that
\begin{align}\label{superform}
&N\|p_2 h_{0,\beta}(x_1-x_2)
\nabla_2\widehat{m}_1q_1q_2\Psi\|
\\\nonumber&\hspace{1cm}\leq \mathcal{K}(\phi)(\|\phi\|_\infty+(\ln N)^{1/3}\|\nabla\phi\|_{6,loc})
\left( \|\mathds{1}_{\mathcal{A}_{1}}\nabla_1q_1\Psi\|^2+N^{-\eta}\right)\;.
\end{align}
Thus
$(\ref{saf})$ is bounded by the right hand side of (\ref{l24}). For $(\ref{sag})$ we have
\begin{align*}
(\ref{sag})\leq& N \| \widehat{m}_1q_1\Psi\|^2\;\|   p_2(\nabla_1
h_{0,\beta}(x_1-x_2))\|_{op}^2
\\\leq& CN \|\phi\|_\infty^2 \|\nabla_1
h_{0,\beta}\|^2 \leq  C N^{-1/2+2\beta }\|\phi\|_\infty^2
\end{align*}
For small enough $\beta$
  that there exists a $\eta>0$ such that $(\ref{sag})\leq  CN^{-\eta}\|\phi\|_\infty^2$ and thus
$ (\ref{sad})$ is bounded by the right hand side of (\ref{l24}) for some $\kinf$.

Again using that $\nabla_1 h_{0,\beta}(x_1-x_2)=-\nabla_2 h_{0,\beta}(x_1-x_2)$ and partial integration yields for
$(\ref{sae})$
\begin{align*}
(\ref{sae})\leq& N|\laa h_{0,\beta}(x_1-x_2)\nabla_2p_2\mathds{1}_{\mathcal{A}_1}\nabla_1q_1  \Psi ,
  q_1 q_2\widehat{m}\Psi\raa|
\\&+N|\laa\mathds{1}_{\mathcal{A}_1}\nabla_1q_1  \Psi ,p_2
h_{0,\beta}(x_1-x_2) q_1 \nabla_2q_2\widehat{m}\Psi\raa|
\\\leq& N\|h_{0,\beta}(x_1-x_2)\nabla_2p_2\|_{op}\|\mathds{1}_{\mathcal{A}_1}\nabla_1q_1  \Psi\|
 \|q_1 q_2\widehat{m}\Psi\|
\\&+N\|\mathds{1}_{\mathcal{A}_1}\nabla_1q_1  \Psi\|\;\|p_2
h_{0,\beta}(x_1-x_2) q_1 \nabla_2q_2\widehat{m}\Psi\|\;.
\end{align*}
Using (\ref{superform}) it follows that $N(\ref{sae})\leq  C$ which completes the proof of the Lemma for  $0<\beta\leq \beta_0$ for some $0<\beta_0<1$.

To prove the Lemma for $\beta_0<\beta<1$ we write
\begin{align*}|\laa\Psi ,q_1  p_2 V_{\beta}(x_1-x_2) \widehat{m}q_1 q_2
\Psi\raa|=&|\laa\Psi ,q_1  p_2 U_{\beta_0,\beta} \widehat{m}q_1 q_2 \Psi\raa|
\\&\hspace{-2cm}+N\laa\Psi p_1q_2(V_\beta(x_1-x_2)-U_{\beta_1,\beta}(x_1-x_2))\widehat{m}q_1q_2\Psi\raa\;.\end{align*}
In view of Lemma \ref{ulemma} $U_{\beta_0,\beta}\in\mathcal{V}_{\beta_0}$. The case  $\beta=\beta_0$ has just been shown, so
 the first summand is bounded by the right hand side of (\ref{l24}). The second summand is controlled by (\ref{hnormd})  and the Lemma
follows in full generality.

\end{proof}

\subsection{Second adjustment: Making $\alpha'_2$ controllable}\label{secadj2}

Next we adjust the functional $\Gamma$ from Corollary \ref{corproof2}  such that $\xi$  (defined in Corollary \ref{newalpha2}) becomes controllable.

Recall that (see (\ref{xiformel}))
\begin{align}\label{xixi}\xi=&-2N(N-1)\Im\left(\laa\Psi,\potdiff_{\beta}(x_1,x_2)p_1q_2(\widehat{m}^0-\widehat{m}
^0_1)\Psi\raa\right)
%
\;.
\end{align}

Following the ideas in section \ref{secadj1} we define now a functional which smoothens the ``bad'' interaction term in $\xi$.
\begin{definition}\label{lambda2}
Let  $V_{1}\in\mathcal{V}_1$, let $\widehat{m}=\widehat{m}^0-\widehat{m}^0_1$.

We define the functional $\gamma:\LZN\otimes\LZ\to\mathbb{R}^+$ by \begin{align*}\gamma(\Psi,\phi) :=&-N(N-1)\Im\left(\llaa\Psi
, q_{1}q_{2}g_{8/9,1}(x_{1}-x_{2}) \widehat{m}p_{1}q_{2}
\Psi\rraa\right) \end{align*}
and the functional $\gamma':\LZN\otimes\LZ\to\mathbb{R} $ by
$$\gamma'(\Psi,\phi):=\as+\bs+\cs+\ds+\xi\;,$$
where the different summands are
\begin{enumerate}
 \item [(a)] The mixed derivative term
\begin{align*} \as=&-N(N-1)\Im\left(\laa\Psi ,q_{1}q_{2}\left[H,
g_{8/9,1}(x_{1}-x_{2})
\right]\widehat{m}p_{1}q_{2}
\Psi\raa\right)\\&-N(N-1)\Im\left(\laa\Psi ,
q_{1}q_{2}(W_{8/9}-V_{\beta})f_{8/9,1}(x_1-x_2)\widehat{m}p_{1}q_{2}
\Psi\raa\right)\;.\end{align*}

\item [(b)] The new interaction term
\begin{align*}\bs=&-\xi+N(N-1)\Im\left(\laa\Psi ,\left[\potdiff_\beta(x_1,x_2),q_{1}q_{2} g_{8/9,1}(x_{1}-x_{2})
\widehat{m}p_{1}q_{2}\right] \Psi\raa\right)
\\&+N(N-1)\Im\left(\laa\Psi ,q_{1}q_{2}
(W_{8/9}-V_{\beta})f_{8/9,1}(x_1-x_2)\widehat{m}p_{1}q_{2}
\Psi\raa\right)
\end{align*}
\item [(c)] Three particle interactions
\begin{align*}&\cs=N(N-1)(N-2)\\&\;\;\;\;\;\;\;\;\;\;\Im\left(\laa\Psi ,
\left[\potdiff_\beta(x_1,x_3)+\potdiff_\beta(x_2,x_3),q_{1}q_{2}g_{8/9,1}(x_{1}-x_{2})
\widehat{m}p_{1}q_{2}\right] \Psi\raa\right)\end{align*}

\item [(d)] Interaction terms of the correction
\begin{align*}&\ds=N(N-1)(N-2)(N-3)\\&\;\;\;\;\;\;\;\;\;\;\Im\left(\laa\Psi ,q_{1}q_{2}
g_{8/9,1}(x_{1}-x_{2})p_{1}q_{2}\left[\potdiff_\beta(x_3,x_4),
\widehat{m}\right] \Psi\raa\right)\end{align*}

\end{enumerate}

\end{definition}

Lemma \ref{hnorms} (a) together with Lemma \ref{defAlemma} (a) imply directly
\begin{corollary}\label{replacealpa2}
There exists a $\eta>0$ such that
$$|\gamma(\Psi,\phi)|\leq  C N^{-\eta}\|\phi\|_\infty\;.$$
\end{corollary}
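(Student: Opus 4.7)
The plan is to bound $|\gamma(\Psi,\phi)|$ by a single Cauchy--Schwarz estimate after exploiting that $\widehat{m}$ commutes with all the single-particle projectors. Since $\widehat{m}=\widehat{m}^0-\widehat{m}^0_1$ commutes with both $p_1$ and $q_2$ by Lemma \ref{kombinatorik} (a), I would first push $\widehat{m}$ to the rightmost position and rewrite
$$\gamma(\Psi,\phi)=-N(N-1)\Im\laa\Psi,\,q_1q_2\,g_{8/9,1}(x_1-x_2)\,p_1q_2\,\widehat{m}\,\Psi\raa.$$
Applying Cauchy--Schwarz in the form $|\laa q_1q_2\Psi,\,g_{8/9,1}(x_1-x_2)p_1(q_2\widehat{m}\Psi)\raa|$ and bounding $\|q_1q_2\Psi\|\leq 1$ then gives
$$|\gamma(\Psi,\phi)|\leq N^2\,\|g_{8/9,1}(x_1-x_2)p_1\|_{op}\,\|q_2\widehat{m}\Psi\|.$$

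The operator norm is estimated by Lemma \ref{kombinatorik} (e), inequality (\ref{kombeqb}), combined with the bound $\|g_{8/9,1}\|\leq CN^{-13/9}$ from Lemma \ref{defAlemma} (b) (with $\beta_1=8/9$), yielding $\|g_{8/9,1}(x_1-x_2)p_1\|_{op}\leq CN^{-13/9}\|\phi\|_\infty$. For $\|q_2\widehat{m}\Psi\|$, Lemma \ref{mhut} (c) gives $|m^0(k,N)-m^0(k+1,N)|^2\leq CN^{-1}(k+1)^{-1}$, so the $P_k$-eigenvalue of $\widehat{m}^2$ is at most $CN^{-1}(k+1)^{-1}$. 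Since $\widehat{m}^2$ commutes with every $q_j$, the symmetry of $\Psi$ together with $N^{-1}\sum_j q_j=(\widehat{n})^2$ (Lemma \ref{kombinatorik} (b)) yields
$$\|q_2\widehat{m}\Psi\|^2=\laa\Psi,\widehat{m}^2q_2\Psi\raa=\laa\Psi,\widehat{m}^2(\widehat{n})^2\Psi\raa\leq CN^{-2},$$
since the $P_k$-eigenvalue of $\widehat{m}^2(\widehat{n})^2$ is $\leq CN^{-1}(k+1)^{-1}\cdot k/N\leq CN^{-2}$.

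Combining the three pieces gives $|\gamma(\Psi,\phi)|\leq N^2\cdot CN^{-13/9}\|\phi\|_\infty\cdot CN^{-1}=CN^{-4/9}\|\phi\|_\infty$, so $\eta=4/9$ works. The only subtlety --- and the main obstacle to a fully naive estimate --- is that the trivial operator bound $\|\widehat{m}\|_{op}\leq CN^{-1/2}$ is not sufficient to win against the prefactor $N^2\cdot N^{-13/9}$; one must exploit the symmetry of $\Psi$ to extract an additional $N^{-1/2}$ in $\|q_2\widehat{m}\Psi\|$, which reflects the extra smallness that $\widehat{m}=\widehat{m}^0-\widehat{m}^0_1$ acquires once paired with an additional projector $q_j$.
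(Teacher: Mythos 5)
Your argument is correct and fills in the detail that the paper leaves implicit (the paper's reference to ``Lemma \ref{hnorms} (a) together with Lemma \ref{defAlemma} (a)'' appears to be a misprint --- the ingredients actually needed are exactly the ones you invoke, namely Lemma \ref{kombinatorik} (e), Lemma \ref{defAlemma} (b), Lemma \ref{mhut} (c) and Lemma \ref{kombinatorik} (b)).

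The one conceptual point worth emphasizing, which you correctly identify, is that the naive bound $\|\widehat{m}\|_{op}\leq CN^{-1/2}$ would give $N^2\cdot N^{-13/9}\cdot N^{-1/2}=N^{1/18}$, which is not small; the factor $\|q_2\widehat{m}\Psi\|\leq CN^{-1}$ is genuinely better than $\|q_2\|_{op}\|\widehat{m}\|_{op}$, and this gain relies on the symmetry of $\Psi$ (so that $\laa\Psi,\widehat{m}^2 q_2\Psi\raa=\laa\Psi,\widehat{m}^2(\widehat{n})^2\Psi\raa$) together with the fact that the eigenvalue bound $|m^0(k)-m^0(k+1)|\leq CN^{-1}n^{-1}(k+1,N)$ from Lemma \ref{mhut} (c) decays in $k$, so that multiplying by the $P_k$-eigenvalue $k/N$ of $(\widehat{n})^2$ extracts a uniform $N^{-2}$. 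This is precisely the mechanism the paper exploits repeatedly (cf.\ the proof of Lemma \ref{Rnorm}). Your value $\eta=4/9$ follows, so the claim holds.
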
 Next we show that in fact  $\gamma'$ satisfies
$\gamma'(\Psi_t,\phi_t)=\dt \gamma(\Psi_t,\phi_t)$:
\begin{lemma}\label{secondadjlemma}
$$\dt\gamma(\Psi_t,\phi_t)=\gamma'(\Psi_t,\phi_t) $$

\end{lemma}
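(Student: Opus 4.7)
The proof follows the strategy of Lemma \ref{ableitung} and Lemma \ref{firstadjlemma}. Put
\[
O \;:=\; q_1 q_2\, g_{8/9,1}(x_1-x_2)\, \widehat m\, p_1 q_2,
\]
so that $\gamma(\Psi_t,\phi_t)=-N(N-1)\Im\laa\Psi_t,O\Psi_t\raa$. Differentiate using the Schr\"odinger equation $\im\dt\Psi_t=H\Psi_t$ together with the induced evolution of the $\phi_t$-dependent projectors, which as in Lemma \ref{ableitung} contributes a commutator with $H^{GP}$. Because $O$ contains the time-independent factor $g_{8/9,1}(x_1-x_2)$ sandwiched between $\phi_t$-dependent operators, the product rule produces one additional piece: writing $O=A_1 g A_2$ with $A_1=q_1 q_2$ and $A_2=\widehat m p_1 q_2$, and using $[H^{GP},A_1 g A_2]=[H^{GP},A_1]g A_2+A_1[H^{GP},g]A_2+A_1 g[H^{GP},A_2]$, one obtains
\[
\dt\laa\Psi_t,O\Psi_t\raa \;=\; -\im\laa\Psi_t,[H-H^{GP},O]\Psi_t\raa-\im\laa\Psi_t,\,q_1 q_2[H^{GP},g_{8/9,1}(x_1-x_2)]\widehat m p_1 q_2\Psi_t\raa.
\]

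Since every non-kinetic summand of $H$ and $H^{GP}$ is a multiplication operator and commutes with $g_{8/9,1}(x_1-x_2)$, we have $[H^{GP},g_{8/9,1}(x_1-x_2)]=[H,g_{8/9,1}(x_1-x_2)]=[-\Delta_1-\Delta_2,g_{8/9,1}(x_1-x_2)]$. The expansion (\ref{commu}) (with $1/4$ replaced by $8/9$) rewrites this commutator as $-(W_{8/9}-V_\beta)f_{8/9,1}(x_1-x_2)$ plus the mixed-derivative gradient terms. This matches $\as$: its first summand is the full $[H,g_{8/9,1}(x_1-x_2)]$ commutator, and its second summand subtracts the $(W_{8/9}-V_\beta)f_{8/9,1}$ piece; the third summand of $\bs$ adds the same piece back, so these two corrections cancel when $\gamma'$ is assembled.

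Next expand $[H-H^{GP},O]=\sum_{j<k}[\potdiff_\beta(x_j,x_k),O]$. Using that $O$ is symmetric in $x_3,\ldots,x_N$ and $\Psi$ is fully symmetric, equivalent pairs contribute the same expectation and the sum collapses to three inequivalent classes:
\begin{itemize}
\item The pair $(1,2)$ contributes $[\potdiff_\beta(x_1,x_2),O]$, matching the second summand of $\bs$. The $+\xi$ in $\gamma'$ and the $-\xi$ in $\bs$ cancel algebraically; they are included only to bookkeep (via Lemma \ref{kombinatorik}\,(d)) the part of this commutator that coincides with the remainder $\xi$ from Corollary \ref{newalpha2}.
\item The pairs $(1,k)$ and $(2,k)$ for $k\geq 3$ combine into $(N-2)[\potdiff_\beta(x_1,x_3)+\potdiff_\beta(x_2,x_3),O]$, producing $\cs$.
\item The pairs $(j,k)$ with $j,k\in\{3,\ldots,N\}$, $j<k$, collapse into $\binom{N-2}{2}[\potdiff_\beta(x_3,x_4),O]$; since $\potdiff_\beta(x_3,x_4)$ is a multiplication operator in $x_3,x_4$ and commutes with each of $q_1,q_2,g_{8/9,1}(x_1-x_2),p_1$, only the commutator with $\widehat m$ survives, giving $\ds$.
\end{itemize}
Multiplying by $-N(N-1)\Im\{\cdot\}$ and converting the $-\im$ times each commutator into an $\Im$ of a product exactly as in the derivation of $\alpha_1'$ and $\alpha_2'$ in Lemma \ref{ableitung}, one obtains $\gamma'(\Psi_t,\phi_t)=\as+\bs+\cs+\ds+\xi$.

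The main obstacle is meticulous bookkeeping: tracking the combinatorial factors from the symmetry reduction on pairs, and verifying that the signs, factors of $\im$ and $-1$, and the $\Im$ operation combine to reproduce the precise formulas of Definition \ref{lambda2}. No new analytic estimate is needed; the proof is entirely an algebraic commutator calculation.
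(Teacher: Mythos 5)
Your proof is correct and follows the paper's own argument step for step: differentiate $\laa\Psi_t,O\Psi_t\raa$ to get the $[H-H^{GP},O]$ commutator plus the extra $q_1q_2[H^{GP},g_{8/9,1}]\widehat m p_1q_2$ term coming from the time-independent factor $g_{8/9,1}$; split the pair sum $\sum_{j<k}$ by symmetry into the $(1,2)$, $(1/2,k\geq3)$ and $(j,k\geq3)$ classes; and then add and subtract the $(W_{8/9}-V_\beta)f_{8/9,1}$ piece and $\xi$ to match Definition~\ref{lambda2}. One small remark: the third class has $\binom{N-2}{2}=(N-2)(N-3)/2$ pairs, so the natural coefficient in front of $[\potdiff_\beta(x_3,x_4),\widehat m]$ is $N(N-1)(N-2)(N-3)/2$ rather than the $N(N-1)(N-2)(N-3)$ appearing in $\ds$ of Definition~\ref{lambda2}; this factor of $2$ is harmless for the estimate in Lemma~\ref{secondadjcontrol}, but your count is the precise one.
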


\begin{proof}

\begin{align*}\dt\gamma(\Psi_t,\phi_t)=&-N(N-1)\Im\left(\llaa\Psi ,q_{1}q_{2}\left[H,
g_{8/9,1}(x_{1}-x_{2})
\right]\widehat{m}p_{1}q_{2}
\Psi\rraa\right)
\\&+N(N-1)\Im\left(\llaa\Psi , \left[H-H^{GP},q_{1}q_{2}g_{8/9,1}(x_{1}-x_{2})
\widehat{m}p_{1}q_{2}\right] \Psi\rraa\right)
\end{align*}
Using symmetry
\begin{align*}
=&-N(N-1)\Im\left(\llaa\Psi ,q_{1}q_{2}\left[H,
g_{8/9,1}(x_{1}-x_{2})
\right]\widehat{m}p_{1}q_{2}
\Psi\rraa\right)
\\&+N(N-1)\Im\left(\laa\Psi ,\left[\potdiff_\beta(x_1,x_2),q_{1}q_{2} g_{8/9,1}(x_{1}-x_{2})
\widehat{m}p_{1}q_{2}\right] \Psi\raa\right)
\\&+N(N-1)(N-2)\Im\left(\laa\Psi ,
\left[\potdiff_\beta(x_1,x_3)+\potdiff_\beta(x_2,x_3),q_{1}q_{2}g_{8/9,1}(x_{1}-x_{2})
\widehat{m}p_{1}q_{2}\right] \Psi\raa\right)
\\&+N(N-1)(N-2)(N-3)\Im\left(\laa\Psi ,q_{1}q_{2}
g_{8/9,1}(x_{1}-x_{2})p_{1}q_{2}\left[\potdiff_\beta(x_3,x_4),
\widehat{m}\right] \Psi\raa\right)
\;.
\end{align*}

\end{proof}

Subtracting
$$N(N-1)\Im\left(\laa\Psi ,
q_{1}q_{2}(W_{8/9}-V_{\beta})f_{8/9,1}(x_1-x_2)\widehat{m}p_{1}q_{2}
\Psi\raa\right)$$
from the first line
 and adding it to the second line, as well as subtracting $\xi$ from the second line
and adding it to the total gives that the right hand side of (\ref{toprove}) equals
$\as+\bs+\cs+\ds+\xi$
which proves the Lemma.

\begin{lemma}\label{secondadjcontrol}
Let $\beta=1$. There exists a $\kinf$ such that
$$ \gamma'(\Psi,\phi)-\xi\leq \Cphi\C\;.$$

\end{lemma}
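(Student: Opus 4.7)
The plan is to estimate each of the four summands $\as,\bs,\cs,\ds$ of $\gamma'-\xi$ separately, following the strategy already developed in Section~\ref{secadj1} for Lemma~\ref{firstadjest}, but substituting Lemma~\ref{qqqterm} and Lemma~\ref{kineticenergy} wherever the old estimates based on $\|\nabla_1 q_1\Psi\|$ fail at $\beta=1$.

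First, I would exploit the cancellation built into $\as$. Using formula (\ref{commu}) with $(\beta_1,\beta)=(8/9,1)$,
\[
[H,g_{8/9,1}(x_1-x_2)]=-(W_{8/9}-V_1)f_{8/9,1}(x_1-x_2)+(\nabla_1 g_{8/9,1})\nabla_1-(\nabla_2 g_{8/9,1})\nabla_2,
\]
the potential contributions in $\as$ exactly cancel the second summand, leaving only the mixed-derivative terms. Integrating by parts these can be rewritten with $\nabla_1$ (or $\nabla_2$) hitting $q_1$ (resp.\ $p_2$), yielding an expression of the form treated in Lemma~\ref{hnorms}(a) of \cite{firstadjest}. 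Since at $\beta=1$ the quantity $\|\nabla_1 q_1\Psi\|$ is not small, I would split $1=\mathds{1}_{\mathcal{A}_1}+\mathds{1}_{\overline{\mathcal{A}}_1}$ on the factor $\nabla_1 q_1\Psi$. On $\mathcal{A}_1$ use Lemma~\ref{kineticenergy}(a) directly; on $\overline{\mathcal{A}}_1$ absorb the $\mathds{1}_{\overline{\mathcal{A}}_1}$ into the interaction factor via Proposition~\ref{propo}, which produces the extra small power $N^{-17/27}$ that beats the bad $N$-factor. Combined with the decay of $g_{8/9,1}$ from Lemma~\ref{defAlemma}(b), this yields the required bound.

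For $\bs$ the crucial observation is that after expanding the commutator $[\potdiff_\beta(x_1,x_2),q_1q_2 g_{8/9,1}\widehat m p_1 q_2]$ and combining with $-\xi$ and the $+(W_{8/9}-V_1)f_{8/9,1}$ term, the bare interaction $V_1$ in $\xi$ is replaced by the smoothed interaction $W_{8/9}f_{8/9,1}$. By Lemma~\ref{defAlemma}(a) this smoothed interaction lies in $\mathcal{V}_{8/9}$, so Lemma~\ref{qqqterm} (and Lemma~\ref{hnorms}(b),(d) for the $p_1p_2$, $p_1q_2$ components analogous to \eqref{alpha1}--\eqref{alpha2b}) applies. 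The only remaining pieces involve $g_{8/9,1}$ explicitly, which carry an additional smallness via Lemma~\ref{defAlemma}(b), and the bounds $(\ref{hilfe1})$--$(\ref{hilfe3})$. The output is controlled by $\mathcal{K}(\phi)(\|\phi\|_\infty+(\ln N)^{1/3}\|\nabla\phi\|_{6,loc})(\alpha(\Psi,\phi)+\|\mathds{1}_{\mathcal{A}_1}\nabla_1 q_1\Psi\|^2+N^{-\eta})$, and by Lemma~\ref{kineticenergy}(a) $\|\mathds{1}_{\mathcal{A}_1}\nabla_1 q_1\Psi\|^2\leq \mathcal{K}(\phi)(\alpha(\Psi,\phi)+N^{-\eta})$.

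For $\cs$ (three-particle interactions) and $\ds$ (correction interactions) I would use the same type of commutator expansion (Lemma~\ref{kombinatorik}(c),(d)) as in the proofs of $\cjk$ and $\djk$ in Lemma~\ref{firstadjest}. The estimates there are uniform in $\beta\leq 1$ and used only Lemma~\ref{Rnorm}-type bounds on $g_{1/4,\beta}$, which carry over verbatim with $g_{8/9,1}$ (in fact giving extra smallness because $\|g_{8/9,1}\|,\|g_{8/9,1}\|_3$ are smaller; cf.\ Lemma~\ref{defAlemma}(b)). The extra factors $N$ in $\cs$ and $\ds$ are beaten by Lemma~\ref{defAlemma}(b) together with $p_j$-sandwiches as in (\ref{hilfe1})--(\ref{hilfe3}). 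The main obstacle is the bookkeeping in $\as$: one must verify carefully that all mixed-derivative terms can be reduced to quantities estimable by Lemma~\ref{kineticenergy}(a), since for $\beta=1$ the gradient $\nabla_1 q_1\Psi$ is only controllable outside a small neighbourhood of the other scattering centers. Once this is done, summing $\as+\bs+\cs+\ds$ yields the asserted bound.
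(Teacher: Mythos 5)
Your high-level plan (control $\as,\bs,\cs,\ds$ separately; use Lemma~\ref{qqqterm} to handle the smoothed interaction appearing in $\bs$; feed the result back through Lemma~\ref{kineticenergy}(a)) matches the paper's structure, and your treatment of $\bs$ and $\ds$ is essentially right in spirit. However, there are two genuine gaps. The most serious is your claim that the $\cs$ estimate "carries over verbatim" from $\cjk$ of Lemma~\ref{firstadjest}. It does not: the operator in $\cs$ is $q_1q_2\,g_{8/9,1}(x_1-x_2)\,\widehat{m}\,p_1 q_2$, carrying a $q_2$ on the right, whereas $R_{j,k}$ in $\cjk$ sandwiches everything between $p$'s. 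When the three-particle interaction $\potdiff_1(x_1,x_3)$ lands next to that $q_2$, $V_1$ cannot be projected against $\phi$ on both sides, and at $\beta=1$ the raw $L^\infty$-norm of $V_1$ (which grows like $N^2$) ruins the Lemma~\ref{Rnorm}/(\ref{hilfe1})--(\ref{hilfe3}) bookkeeping. The paper handles exactly these terms ((\ref{csl}), (\ref{csm})) by splitting on $\mathds{1}_{\mathcal{B}_2}$ versus $\mathds{1}_{\overline{\mathcal{B}}_2}$ and invoking Lemma~\ref{kineticenergy}(b),(c) together with a H\"older--Sobolev step under the $x_2$-integration. That is the technical heart of the $\beta=1$ extension, and your sketch asserts it away.

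The second gap is in $\as$. You propose splitting $\nabla_1 q_1\Psi$ on $\mathcal{A}_1$ versus $\overline{\mathcal{A}}_1$, using Lemma~\ref{kineticenergy}(a) on the first set and Proposition~\ref{propo} on the second; but Proposition~\ref{propo}(b) bounds $\|\mathds{1}_{\overline{\mathcal{A}}_1}\chi\|$ by a gradient of $\chi$, so applied to $\chi=\nabla_1 q_1\Psi$ it would require control of a second gradient of $\Psi$, which you do not have. The paper's proof of part (a) does not use Lemma~\ref{kineticenergy} at all; it exploits that $g_{8/9,1}$ is supported on $\{|x_1-x_2|\leq R N^{-8/9}\}$ and applies the H\"older--Sobolev estimate $\|\mathds{1}_{\{|x_1-x_2|\leq RN^{-8/9}\}}\chi\|\leq CN^{-8/9}\|\nabla_1\chi\|$ to the factor that does \emph{not} carry the gradient from integration by parts (together with (\ref{nablamitm})), so that the resulting power $N^{-8/9}$ beats the $N^2$ prefactor even though $\|\nabla_1 q_1\Psi\|$ is merely $\landau(1)$ at $\beta=1$. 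You need that kind of small-support argument, not the $\mathcal{A}_1$-split, to close (a).
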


\begin{proof}
\begin{enumerate}
 \item
 Using (\ref{commu}) and $\nabla_1g_{8/9,1}(x_1-x_2)=-\nabla_2g_{8/9,1}(x_1-x_2)$
and integrating by parts we have
\begin{align}\nonumber
|\as(\Psi,\phi)|\leq& N^2|\laa\Psi
,q_{1}q_{2}\widehat{m}_{-1}(\nabla_2g_{8/9,1}(x_1-x_2))\nabla_2
p_1q_2\Psi\raa|
\\\nonumber&+N^2|\laa\Psi
,q_{1}q_{2}\widehat{m}_{-1}(\nabla_1g_{8/9,1}(x_1-x_2))\nabla_1
p_1q_2\Psi\raa|
\\\leq&\label{asa} 2N^2|\laa\Psi
,q_{1}q_{2}\widehat{m}_{-1}g_{8/9,1}(x_1-x_2)\nabla_1\nabla_2
p_1q_2\Psi\raa|
\\&\label{asb}+N^2|\laa\nabla_1\widehat{m}_{-1}q_{1}q_{2}\Psi
,g_{8/9,1}(x_1-x_2)\nabla_2
p_1q_2\Psi\raa|
\\&\label{asc}+N^2|\laa\nabla_2q_{1}q_{2}\Psi
,g_{8/9,1}(x_1-x_2)\nabla_1
\widehat{m}p_1q_2\Psi\raa|
\;.
\end{align}

We use that  for any $\chi\in\LZN$ by H\"older- and Sobolev's inequality
\begin{align*}
\|\mathds{1}_{\{(x_1-x_2)\leq
RN^{-8/9}\}}\chi\|^2=&\laa\chi,\mathds{1}_{\{(x_1-x_2)\leq
RN^{-8/9}\}}\chi\raa\\\leq&  \|\mathds{1}_{\{(x_1-x_2)\leq
RN^{-8/9}\}}\|_{3/2}\|\chi^2\|_3
\\\leq&  C N^{-16/9}\|\nabla_1\chi\|^2\;.
\end{align*}
This, (\ref{nablamitm}), Lemma \ref{mhut} and Lemma \ref{kombinatorik} (e) give
\begin{align*}
(\ref{asa})\leq&
N^2\|\mathds{1}_{\{(x_1-x_2)\leq
RN^{-8/9}\}}\widehat{m}_{-1}q_1q_2\Psi\|\\&\|g_{8/9,1}(x_1-x_2)\nabla_1p_1\|_{op}\|\nabla_2
q_2\Psi\|
\\\leq& CN^{2-8/9}\|q_2\nabla_1\widehat{m}_{-1}q_1\Psi\|\;\|g_{8/9,1}\|_3\|\nabla\phi\|_{6,loc}
\\\leq& CN^{-8/9}(\ln N)^{1/3}\|\nabla\phi\|_{6,loc}\;.
\end{align*}
For $(\ref{asb})$ we get
\begin{align*}
(\ref{asb})\leq& N^2\|\widehat{m}_{-1}\nabla_1q_{1}q_{2}\Psi\|\;
\|g_{8/9,1}(x_1-x_2)p_1\|_{op}\|\nabla_2
q_2\Psi\|
\\\leq& CN^{-4/9}\|\phi\|_{\infty}\;.
\end{align*}
To control $(\ref{asc})$ we use symmetry and write
\begin{align*}
(\ref{asc})=&\frac{N^2}{N-1}|\laa\nabla_2q_{2}\Psi
,\sum_{j\neq 2}q_{j}g_{8/9,1}(x_j-x_2)\nabla_j
\widehat{m}p_jq_2\Psi\raa|
\\\leq& \frac{N^2}{N-1}\|\nabla_2q_{2}\Psi\|
\|\sum_{j=2}^Nq_{j}g_{8/9,1}(x_j-x_1)\nabla_j
\widehat{m}p_jq_1\Psi\|
\end{align*}
As above $\|\nabla_2q_{2}\Psi\|\leq  C$. For the second factor we write
\begin{align}\nonumber
\|&\sum_{j=2}^Nq_{j}g_{8/9,1}(x_j-x_1)\nabla_j
\widehat{m}p_jq_1\Psi\|^2\\=&
\sum_{j=2}^N\laa\nabla_jp_j\widehat{m}q_1\Psi,g_{8/9,1}(x_j-x_1)q_{j}g_{8/9,1}(x_j-x_1)\nabla_j
p_j\widehat{m}q_1\Psi\raa\label{asd}
\\&+\sum_{j\neq k=2}^N\laa\nabla_jp_j\widehat{m}q_1\Psi,g_{8/9,1}(x_j-x_1)q_{j}q_kg_{8/9,1}(x_k-x_1)\nabla_k
p_k\widehat{m}q_1\Psi\raa\label{ase}
\;.
\end{align}
For $(\ref{asd})$ we use symmetry, Lemma \ref{kombinatorik} (e) and Lemma \ref{kombinatorikb} (recall that in view of Lemma \ref{mhut} $m(k,N)\leq  CN^{-1}n^{-1}(k+1,N)$) and get
\begin{align*}
(\ref{asd})\leq& (N-1) \laa\nabla_2p_2\widehat{m}q_1\Psi,g_{8/9,1}(x_j-x_1) g_{8/9,1}(x_2-x_1)\nabla_2
p_2\widehat{m}q_1\Psi\raa
\\\leq& N\| g_{8/9,1}(x_2-x_1)\nabla_2
p_2\|_{op}^2\|\widehat{m}q_1\Psi\|^2
\\\leq& N^{-3}(\ln N)^{2/3}\|\nabla\phi\|_{6,loc}^2\;.
\end{align*}
For $(\ref{ase})$ we get
\begin{align*}
|(\ref{ase})|\leq& N^2 \laa\nabla_2p_2\mathds{1}_{\{(x_1-x_3)\leq
RN^{-8/9}\}}\widehat{m}q_1q_3\Psi,\\&g_{8/9,1}(x_2-x_1)g_{8/9,1}(x_3-x_1)\nabla_3
p_3\mathds{1}_{\{(x_1-x_2)\leq
RN^{-8/9}\}}q_1q_{2}\Psi\raa
\\\leq& N^2\|g_{8/9,1}(x_2-x_1)\nabla_2p_2\|_{op}^2\|\mathds{1}_{\{(x_1-x_2)\leq
RN^{-8/9}\}}\widehat{m}q_1q_{2}\Psi\|^2\;.
\end{align*}
Since $\{(x_1-x_2)\leq
RN^{-8/9}\}\subset \mathcal{A_1}$ we get with Proposition \ref{propo} and (\ref{nablacomm})
\begin{align*}
\|\mathds{1}_{\{(x_1-x_2)\leq
RN^{-8/9}\}}\widehat{m}q_1q_{2}\Psi\|\leq&
CN^{-17/27}\|\nabla_1\widehat{m}q_1q_{2}\Psi\|\\\leq&  CN^{-1-17/27}\|\nabla_1q_1\Psi\|
\end{align*}
It follows that
\begin{align*}|(\ref{ase})| \leq  CN^{-2-34/27}\|\nabla\phi\|^2_{6,loc}(\ln N)^{2/3}\|\nabla_1q_1\Psi\|^2\;,
\end{align*}
thus
$$\|\sum_{j=2}^Nq_{j}g_{8/9,1}(x_j-x_1)\nabla_j
\widehat{m}p_jq_1\Psi\|\leq  C N^{-3/2}\|\nabla\phi\|_{6,loc}(\ln N)^{1/3}(1+\|\nabla_1q_1\Psi\|)\:.$$
With (\ref{ablabsch}) it follows that also $|(\ref{asc})|$ has the right bound and (a) follows.

\item
For $\bs$ we can write in view of (\ref{xixi}) and using $q_1|\phi(x_2)|^2p_1=0$
\begin{align}&\bs(\Psi,\phi)\leq \label{bsa}N^2|\laa\Psi ,
q_{1}q_{2}\widehat{m}_{-1}g_{8/9,1}(x_{1}-x_{2})
p_{1}q_{2} \potdiff_1(x_1,x_2)\Psi\raa|
\\\label{bsb}&+N^2|\laa\Psi ,  \potdiff_1(x_1,x_2)
(q_1q_2-1)g_{8/9,1}(x_{1}-x_{2})
p_{1}q_{2}\widehat{m}\Psi\raa|
\\\label{bsc}&+N^2|\laa\Psi ,q_{1}q_{2}
(W_{8/9}(x_1-x_2)-V_1(x_1-x_2)+\potdiff_1(x_1,x_2))\\&\hspace{6.0cm}f_{8/9,1}(x_{1}-x_{2})\widehat{m}
p_1q_2
\Psi\raa|\nonumber
\;.
\end{align}
For $(\ref{bsa})$ we have
\begin{align*}
|(\ref{bsa})|\leq& N^2\|\widehat{m}_{-1}q_2\Psi\|\;\|g_{8/9,1}(x_{1}-x_{2})
p_{1}\|_{op}\|p_1\potdiff_1(x_1,x_2)\Psi\|\\\leq&  CN^{-4/9} \|\phi\|_\infty^2\;.
\end{align*}
For $(\ref{bsb})$
\begin{align*}
(\ref{bsb})\leq& N^2|\laa\Psi ,  \potdiff_1(x_1,x_2)
p_1p_2g_{8/9,1}(x_{1}-x_{2})\widehat{m}
p_{1}q_{2}\Psi\raa|
\\&+N^2|\laa\Psi ,  \potdiff_1(x_1,x_2)
p_1q_2g_{8/9,1}(x_{1}-x_{2})\widehat{m}
p_{1}q_{2}\Psi\raa|
\\&+N^2|\laa\Psi ,
q_1q_2g_{8/9,1}(x_{1}-x_{2})\widehat{m}
p_{1}q_{2}\Psi\raa|
\\\leq& N^2\|p_1\potdiff_1(x_1,x_2)\Psi\|\;\|p_1g_{8/9,1}(x_{1}-x_{2})
p_{1}\|_{op}\|\widehat{m}q_2\Psi\|
\\&+N^2 \|p_1\potdiff_1(x_1,x_2)\Psi\|\;\|p_1g_{8/9,1}(x_{1}-x_{2})
p_{1}\|_{op}\|\widehat{m}q_2\Psi\|
\\&+N^2\|p_1\potdiff_1(x_1,x_2)\Psi\|\;\|g_{8/9,1}(x_{1}-x_{2})
p_{1}\|_{op}\;.\|\widehat{m}q_2\Psi\|
\end{align*}
Recall that $m(k,N)\leq  C N^{-1}n(k+1,N)^{-1}$, thus
\begin{align*}
(\ref{bsb}) \leq  C N^{2-2-16/9}\|\phi\|^3_\infty+C N^{2-2-4/9}\|\phi\|^2_\infty\;.
\end{align*}
 For $(\ref{bsc})$ we have
\begin{align*} &(\ref{bsc})\leq
N |\laa\Psi ,q_{1}q_{2}
W_{8/9}(x_1-x_2)f_{8/9,1}(x_{1}-x_{2})N\widehat{m}
p_1q_2
\Psi\raa|
\\&+\frac{N^2}{N-1}|\laa\Psi ,
 q_1q_2\widehat{m}_{-1} \left(2a|\phi(x_1)|^2+2a|\phi(x_2)|^2\right)f_{8/9,1}(x_{1}-x_{2})
p_{1}q_{2}\Psi\raa|
\;.
\end{align*}
We get with Lemma \ref{qqqterm} that the first line is bounded by
$$\mathcal{K}(\phi)(\|\phi\|_\infty+(\ln N)^{1/3}\|\nabla\phi\|_{6,loc})
\left(\laa\Psi,\widehat{n}\Psi\raa+\|\mathds{1}_{\mathcal{A}_{1}}\nabla_1q_1\Psi\|^2+N^{-\eta}\right)\;.$$

For the second line recall that $\|f_{8/9,1}\|_\infty=1$, thus it is controlled by
$$C N\|q_1q_2\widehat{m}_{-1}\Psi\|\;\|\phi\|_\infty^2\|q_{2}\Psi\|\leq
C\|\phi\|_\infty^2\laa\Psi,\widehat{n}\Psi\raa\;.$$

\item

Using $q_2=1-p_2$ the left hand side of (c) is bounded by
\begin{align}
|\cs(\Psi,\phi)|\leq&
\label{csa}N^3|\Im\laa\Psi ,\left[\potdiff_1(x_2,x_3),
q_{1}q_2\widehat{m}_{-1}g_{{8/9},1}(x_{1}-x_{2}) p_1p_2\right] \Psi\raa|
\\\label{csb} &+N^3|\Im\laa\Psi ,\left[\potdiff_1(x_2,x_3),
q_{1}q_2\widehat{m}_{-1} g_{{8/9},1}(x_{1}-x_{2}) p_1\right] \Psi\raa|
\\\label{newcsc} &+N^3|\Im\laa\Psi ,\left[\potdiff_1(x_1,x_3)
q_{1}q_2g_{{8/9},1}(x_{1}-x_{2})  \widehat{m}p_1q_2\right] \Psi\raa|
\;.
\end{align}
Using symmetry $(\ref{csa})$  can be controlled like in the proof of Lemma \ref{firstadjlemma} (c). For easier reference we repeat the formulas here: Using $1=p_3+q_3$ and $q_1=1-p_1$
\begin{align*}|(\ref{csa})|
\leq& N^3|\laa\Psi ,
q_{1}q_2\widehat{m}_{-1}g_{8/9,1}(x_{1}-x_{2})p_1p_2  \potdiff_1(x_2,x_3) \Psi\raa|
\\&+N^3|\laa\Psi ,\potdiff_1(x_2,x_3)p_3\
q_{1}q_2\widehat{m}_{-1}g_{8/9,1}(x_{1}-x_{2})p_1p_2  \Psi\raa|
\\&+N^3|\laa\Psi ,\potdiff_1(x_2,x_3)
p_{1}q_2g_{8/9,1}(x_{1}-x_{2})\widehat{m}p_1p_2q_3\widehat{m}  \Psi\raa|
\\&+N^3|\laa\Psi ,\sqrt{\potdiff_1(x_2,x_3)}
q_2g_{8/9,1}(x_{1}-x_{2})p_2\sqrt{\potdiff_1(x_2,x_3)}p_1q_3\widehat{m}  \Psi\raa|
\\\leq& N^3\|\widehat{m}_{-1}q_1\Psi\|\;\|g_{8/9,1}(x_{1}-x_{2})p_1\|_{op}\|p_1\potdiff_1(x_2,x_3) \Psi\|
\\&+N^3 \|q_2\widehat{m}_{-1}p_3\potdiff_1(x_2,x_3) \Psi\|\;\|g_{8/9,1}(x_{1}-x_{2})p_1\|_{op}
\\&+N^3\|p_1\potdiff_1(x_2,x_3) \Psi\|\;\|g_{8/9,1}(x_{1}-x_{2})p_1\|_{op}\|q_3\widehat{m}  \Psi\|
\\&+N^3\|\sqrt{\potdiff_1(x_2,x_3)}\Psi\|\;\|g_{8/9,1}(x_{1}-x_{2})p_2\|_{op}\|\sqrt{\potdiff_1(x_2,x_3)}p_1\|_{op}\|q_3\widehat{m}  \Psi\|
\\\leq& CN^3 \|\phi\|^2_\infty N^{-1-1-4/9-1}=C N^{-4/9} \|\phi\|^2_\infty\;.
\end{align*}
Using $q_2=1-p_2$ we can write for $(\ref{csb})$
\begin{align}
(\ref{csb})\leq& N^3|\Im\laa\Psi ,  \left[\potdiff_1(x_2,x_3),
q_1\widehat{m}_{-1}g_{8/9,1}(x_{1}-x_{2}) p_1\right]\Psi\raa|\label{csd}
\\&+N^3|\Im\laa\Psi ,  q_1p_2\widehat{m}_{-1}g_{8/9,1}(x_{1}-x_{2}) p_1\potdiff_1(x_2,x_3)\Psi\raa|\label{cse}
\\&+N^3|\Im\laa\Psi ,\potdiff_1(x_2,x_3)
q_1p_2\widehat{m}_{-1}g_{8/9,1}(x_{1}-x_{2}) p_1\Psi\raa|\label{csf}
\;.
\end{align}
Using that $q_1g_{8/9,1}(x_{1}-x_{2}) p_1$ commutes with $\potdiff_1(x_2,x_3)$ and then
Lemma \ref{kombinatorik} (d) we have
\begin{align*}
(\ref{csd})\leq& N^3|\Im\laa\Psi ,  \left[\potdiff_1(x_2,x_3),\widehat{m}_{-1}\right]
q_1g_{8/9,1}(x_{1}-x_{2}) p_1\Psi\raa|
\\\leq& N^3|\Im\laa\Psi ,  \left[\potdiff_1(x_2,x_3),p_2p_3(\widehat{m}_{-1}-\widehat{m}_{1})\right]
q_1g_{8/9,1}(x_{1}-x_{2}) p_1\Psi\raa|
\\&+
N^3|\Im\laa\Psi ,  \left[\potdiff_1(x_2,x_3),p_2q_3(\widehat{m}_{-1}-\widehat{m}_{1})\right]
q_1g_{8/9,1}(x_{1}-x_{2}) p_1\Psi\raa|
\\&+
N^3|\Im\laa\Psi ,  \left[\potdiff_1(x_2,x_3),q_2p_3(\widehat{m}_{-1}-\widehat{m}_{1})\right]
q_1g_{8/9,1}(x_{1}-x_{2}) p_1\Psi\raa|
\\\leq& N^3|\Im\laa\Psi ,  \potdiff_1(x_2,x_3)p_2p_3(\widehat{m}_{-1}-\widehat{m}_{1})
q_1g_{8/9,1}(x_{1}-x_{2}) p_1\Psi\raa|
\\&+N^3|\Im\laa\Psi ,  q_1(\widehat{m}_{-1}-\widehat{m}_{1})p_2p_3
\potdiff_1(x_2,x_3)g_{8/9,1}(x_{1}-x_{2}) p_1\Psi\raa|
\\&+
N^3|\Im\laa\Psi ,  \potdiff_1(x_2,x_3)p_2q_3(\widehat{m}_{-1}-\widehat{m}_{1})
q_1g_{8/9,1}(x_{1}-x_{2}) p_1\Psi\raa|
\\&+
N^3|\Im\laa\Psi , q_1(\widehat{m}_{-1}-\widehat{m}_{1}) p_2q_3\potdiff_1(x_2,x_3)
g_{8/9,1}(x_{1}-x_{2}) p_1\Psi\raa|
\\&+
N^3|\Im\laa\Psi , \potdiff_1(x_2,x_3)q_2p_3(\widehat{m}_{-1}-\widehat{m}_{1})
q_1g_{8/9,1}(x_{1}-x_{2}) p_1\Psi\raa|
\\&+
N^3|\Im\laa\Psi ,q_1(\widehat{m}_{-1}-\widehat{m}_{1}) q_2p_3\potdiff_1(x_2,x_3)
g_{8/9,1}(x_{1}-x_{2}) p_1\Psi\raa|
\end{align*}
Using Lemma \ref{kombinatorikb}
\begin{align*}
\leq& CN^2\|p_2\potdiff_1(x_2,x_3)\Psi\|\;\|g_{8/9,1}(x_{1}-x_{2}) p_1\|_{op}
\\&+CN^2 \|p_2\sqrt{
\potdiff_1(x_2,x_3)}\|_{op}\|g_{8/9,1}(x_{1}-x_{2}) p_1\|_{op}\|\sqrt{
\potdiff_1(x_2,x_3)}\Psi\|
\\&+
CN^2\|p_2\potdiff_1(x_2,x_3)\Psi\|\;  \|g_{8/9,1}(x_{1}-x_{2}) p_1\|_{op}
\\&+
CN^2\|p_2\sqrt{\potdiff_1(x_2,x_3)}\|_{op}
\|g_{8/9,1}(x_{1}-x_{2}) p_1\|_{op}\|\sqrt{
\potdiff_1(x_2,x_3)}\Psi\|
\\&+
CN^2\|p_3\potdiff_1(x_2,x_3)\Psi\|\;\|g_{8/9,1}(x_{1}-x_{2}) p_1\|_{op}
\\&+
CN^2\|p_3\sqrt{\potdiff_1(x_2,x_3)}\|_{op}\|
g_{8/9,1}(x_{1}-x_{2}) p_1\|_{op} \|\sqrt{
\potdiff_1(x_2,x_3)}\Psi\|
\\\leq& CN^{2-2-4/9}\|\phi\|_\infty^2=CN^{-4/9}\|\phi\|_\infty^2\;.
\end{align*}
Using Lemma \ref{kombinatorikb} $(\ref{cse})$ is bounded by
\begin{align*}
(\ref{cse})=&
N^3|\Im\laa\Psi ,  q_1\widehat{m}_{-1}p_2\sqrt{\potdiff_1(x_2,x_3)}g_{8/9,1}(x_{1}-x_{2}) p_1\sqrt{\potdiff_1(x_2,x_3)}\Psi\raa|
\\\leq& CN^2 \|p_2\sqrt{\potdiff_1(x_2,x_3)}\|_{op}\|g_{8/9,1}(x_{1}-x_{2}) p_1\|_{op}\|\sqrt{\potdiff_1(x_2,x_3)}\Psi\|
\\\leq& CN^{-4/9}\|\phi\|_\infty^2\;.
\end{align*}
For $(\ref{csf})$ we have again with Lemma \ref{kombinatorikb}
\begin{align*}(\ref{csf})\leq  CN^2\|p_2\potdiff_1(x_2,x_3)\Psi\|
\|g_{8/9,1}(x_{1}-x_{2}) p_1\|_{op}\leq& CN^{-4/9}\|\phi\|_\infty^2\;.
\end{align*}
Having controlled $(\ref{csa})$ and all terms in $(\ref{csb})$ we split up $(\ref{newcsc})$ using $1=p_3+q_3$ and $q_1=1-p_1$
\begin{align}
(\ref{newcsc})=&
N^3|\laa\Psi ,
q_{1}q_{2} \widehat{m}_{-1}g_{8/9,1}(x_{1}-x_{2})p_1q_2 \potdiff_1(x_1,x_3) \Psi\raa|\label{csg}
\\&+\frac{N^3}{N-1}|\laa\Psi ,a(|\phi(x_1)|^2+|\phi(x_3)|^2)
q_1q_{2} g_{8/9,1}(x_{1}-x_{2}) \widehat{m}p_1q_2\Psi\raa|\label{csh}
\\&+N^3|\laa\Psi ,V_1(x_1,x_3)
q_1q_{2} g_{8/9,1}(x_{1}-x_{2}) \widehat{m}p_1q_2p_3\Psi\raa|\label{csi}
\\&+N^3|\laa\Psi ,V_1(x_1,x_3)
p_1q_{2} g_{8/9,1}(x_{1}-x_{2}) \widehat{m}p_1q_2q_3\Psi\raa|\label{csj}
\\&+N^3|\laa\Psi ,V_1(x_1,x_3)
p_{2} g_{8/9,1}(x_{1}-x_{2}) \widehat{m}p_1q_2q_3\Psi\raa|\label{csk}
\\&+N^3|\laa\Psi ,V_1(x_1,x_3)
 q_2 g_{8/9,1}(x_{1}-x_{2}) \widehat{m}p_1q_2q_3 \mathds{1}_{\overline{\mathcal{B}_2}}\Psi\raa|\label{csl}
\\&+N^3|\laa\Psi ,V_1(x_1,x_3)
 q_2g_{8/9,1}(x_{1}-x_{2}) \widehat{m}p_1q_2q_3 \mathds{1}_{\mathcal{B}_2}\Psi\raa|\label{csm}
 \;.
\end{align}
Using Lemma \ref{kombinatorikb}
$(\ref{csg})$, $(\ref{csh})$, $(\ref{csi})$, $(\ref{csj})$ and $(\ref{csk})$ are bounded by
\begin{align*}
&CN^3\| \widehat{m}_{-1}q_{2}\Psi\|;\|g_{8/9,1}(x_{1}-x_{2})p_1\|_{op}\|p_1 \potdiff_1(x_1,x_3) \Psi\|\;,
\\&CN^2\|\phi\|_\infty^2\|g_{8/9,1}(x_{1}-x_{2})p_1\|_{op} \|\widehat{m}q_2\Psi\|\;,
\\&
N^3|\|p_3V_1(x_1,x_3)\Psi\|\;\| g_{8/9,1}(x_{1}-x_{2})p_1\|_{op} \|\widehat{m}q_2\Psi\|\;,
\\&
N^3\|p_1V_1(x_1,x_3)\Psi\|\;
 \|g_{8/9,1}(x_{1}-x_{2})p_1\|_{op}\| \widehat{m}q_2\Psi\|\;,
\\&N^3\|\sqrt{V_1}(x_1,x_3)\Psi\|
p_{2} g_{8/9,1}(x_{1}-x_{2})\|_{op}\|\sqrt{V_1}(x_1,x_3)p_1\|_{op} \|\widehat{m}q_2\Psi\|
\:.
\end{align*}
All these are smaller than $C\|\phi\|^2_{\infty}N^{-4/9}$.

Next we turn to $(\ref{csl})$. Since the support of $g_{8/9,1}(x_{1}-x_{2})$ is a subset of $\overline{\mathcal{B}}_3$ we get
that $(\ref{csl})$ is bounded by \begin{align*}
N^3\|\mathds{1}_{\overline{\mathcal{B}}_3}\sqrt{V}_1(x_1,x_3)\Psi\|
 \|g_{8/9,1}(x_{1}-x_{2}) \sqrt{V}_1(x_1,x_3)\widehat{m}p_1q_2q_3\mathds{1}_{\overline{\mathcal{B}}_2} \Psi\|
\end{align*}
The first factor  is controlled by Lemma \ref{kineticenergy} (b)
\begin{align*}
(\ref{csl})\leq&  N^{5/2}\left(\C\right)^{1/2}
\\& \|g_{8/9,1}(x_{1}-x_{2}) \sqrt{V}_1(x_1,x_3)\widehat{m}p_1q_2q_3\mathds{1}_{\overline{\mathcal{B}}_2} \Psi\|
\;.
\end{align*}
 For the remaining factor we use for any fixed $x_1,x_2,\ldots,x_N$ H\"older and Sobolev under the $x_2$-integral. Setting $\chi:=\sqrt{V}_1(x_1,x_3)p_1q_2q_3\widehat{m}\mathds{1}_{\overline{\mathcal{B}}_2} \Psi$
\begin{align*}\|g_{8/9,1}(x_{1}-x_{2})\chi\|^2\leq& \|g_{8/9,1}^2\|_{3/2}\left\|\|\chi^2\|_{3\text{ in }x_2}\right\|
\\=&\|g_{8/9,1}\|_3^2\|\left\|\|\chi\|^2_{6\text{ in }x_2}\right\|\leq \|g_{8/9,1}\|_3^2\|\nabla_2\chi\|^2\;.
\end{align*}
With Lemma \ref{defAlemma} and (\ref{nablacomm}) we get
\begin{align*}
\|g_{8/9,1}(x_{1}-x_{2}) & \sqrt{V}_1(x_1,x_3)\widehat{m}p_1q_2q_3\mathds{1}_{\overline{\mathcal{B}}_2} \Psi\|
\\\leq& \|g_{8/9,1}\|_3\|\nabla_2\sqrt{V}_1(x_1,x_3)p_1q_2q_3\widehat{m}\mathds{1}_{\overline{\mathcal{B}}_2} \Psi\|
\\\leq& CN^{-1}(\ln N)^{1/3} \|\sqrt{V}_1(x_1,x_3)p_1\|_{op}\|\nabla_2q_2q_3\widehat{m}\mathds{1}_{\overline{\mathcal{B}}_2} \Psi\|
\\\leq& CN^{-5/2}(\ln N)^{1/3} \|\nabla_2q_2\mathds{1}_{\overline{\mathcal{B}}_2} \Psi\|
\end{align*}
Since \begin{align*}\|\nabla_2q_2\mathds{1}_{\overline{\mathcal{B}}_2} \Psi\|
\leq& 2\|\nabla_2\mathds{1}_{\overline{\mathcal{B}}_2} \Psi\|
+ 2\|\nabla_2p_2\mathds{1}_{\overline{\mathcal{B}}_2} \Psi\|
\\\leq& 2\|\nabla_2\mathds{1}_{\overline{\mathcal{B}}_2} \Psi\|
+ 2\|\nabla\phi\|\;\|\mathds{1}_{\overline{\mathcal{B}}_2} \Psi\|
\end{align*}
we get with  Lemma \ref{kineticenergy} (c) and Proposition \ref{propo} that the latter is bounded by $\left(\C\right)^{1/2}$.

Using symmetry $(\ref{csm})$ is bounded by
\begin{align}
CN^2&|\laa\Psi ,V_1(x_1,x_3)\sum_{j=4}^N
 g_{8/9,1}(x_{1}-x_{j}) \widehat{m}p_1q_3q_j \mathds{1}_{\mathcal{B}_j}\Psi\raa|\nonumber
\\=&CN^2|\laa\Psi ,\sqrt{V_1}(x_1,x_3)(p_1p_3+p_1q_3+q_1p_3+q_1q_3)\\&\sqrt{V_1}(x_1,x_3)\sum_{j=4}^N
 g_{8/9,1}(x_{1}-x_{j}) \widehat{m}p_1q_3q_j \mathds{1}_{\mathcal{B}_j}\Psi\raa|\nonumber
\\=&CN^2|\laa\Psi ,\sqrt{V_1}(x_1,x_3)(p_1p_3\widehat{m}_1+p_1q_3\widehat{m}+q_1p_3\widehat{m}+q_1q_3\widehat{m}_{-1})\nonumber
\\&\hspace{1cm}\sum_{j=4}^N
 g_{8/9,1}(x_{1}-x_{j})\sqrt{V_1}(x_1,x_3) p_1q_3q_j \mathds{1}_{\mathcal{B}_j}\Psi\raa|\nonumber
\\\leq& CN^{3}\|\sqrt{V_1}(x_1,x_3)\Psi\|\:\|\widehat{m}_1\|_{op}
\| p_1\sqrt{g_{8/9,1}}(x_{1}-x_{2})\|_{op}\label{csn}
\\&\hspace{1cm}\|p_3\sqrt{V_1}(x_1,x_3)\|_{op}\|\sqrt{g_{8/9,1}}(x_{1}-x_{2})p_1\|_{op}\nonumber
\\&+CN^{2}\|(p_1q_3\widehat{m}+q_1p_3\widehat{m}+q_1q_3\widehat{m}_{-1})\sqrt{V_1}(x_1,x_3)\Psi\|\label{cso}
\\&\hspace{1cm}\|\sum_{j=4}^N
 g_{8/9,1}(x_{1}-x_{j}) \sqrt{V_1}(x_1,x_3)p_1q_3q_j\mathds{1}_{\mathcal{B}_j}\Psi\|\nonumber
\;.
\end{align}
$(\ref{csn})$ is bounded by
\begin{align*}
CN^{1/2-16/9}\|\phi\|_\infty^3\;.
\end{align*}
Using Lemma \ref{kombinatorikb} and Lemma \ref{totalE} the first factor of $(\ref{cso})$ is bounded by $CN^{-3/2}$.
Using the abbreviation $\chi_j=\sqrt{V_1}(x_1,x_3)p_1q_3q_j\mathds{1}_{\mathcal{B}_j}\Psi$ we can write for the second factor
\begin{align*}
\|\sum_{j=4}^N
  g_{8/9,1}(x_{1}-x_{j})  \|^2
\leq& \nonumber
N^2\laa\chi_4 g_{8/9,1}(x_{1}-x_{4}) g_{8/9,1}(x_{1}-x_{5})\chi_5\raa
\\&+N\laa\chi_4(g_{8/9,1}(x_{1}-x_{4}))^2 \chi_4\raa
\end{align*}
Since $g_{8/9,1}(x_{1}-x_{5})$ and $g_{8/9,1}(x_{1}-x_{4})$ commute we get
\begin{align*}
\leq& N^2\|g_{8/9,1}(x_{1}-x_{4}) \chi_5\|^2
+N\|g_{8/9,1}(x_{1}-x_{4})) \chi_4\|^2
\end{align*}
Using this and $q_3=1-p_3$ it follows that
\begin{align}\label{threelines}
(\ref{cso})\leq& CN^{3/2}\|g_{8/9,1}(x_{1}-x_{4}) \sqrt{V_1}(x_1,x_3)p_1q_5\mathds{1}_{\mathcal{B}_5}\Psi\|
\\&\nonumber+CN^{3/2}\|g_{8/9,1}(x_{1}-x_{4}) \sqrt{V_1}(x_1,x_3)p_1p_3q_5\mathds{1}_{\mathcal{B}_5}\Psi\|
\\&\nonumber+CN\|g_{8/9,1}(x_{1}-x_{4})) \sqrt{V_1}(x_1,x_3)p_1q_3q_4\mathds{1}_{\mathcal{B}_4}\Psi\|
\end{align}
Note that for large enough $N$ the function $ \sqrt{V_1}(x_1,x_3)g_{8/9,1}(x_{1}-x_{4}) $ is different from zero only inside the set $a_{3,4}$. Since $\mathcal{B}_5$ and $a_{3,4}$ are by definition disjoint it follows that
\begin{align*}
g_{8/9,1}&(x_{1}-x_{4}) V_1(x_1,x_3) g_{8/9,1}(x_{1}-x_{5})p_1q_5 \mathds{1}_{\mathcal{B}_5}\Psi
\\=&g_{8/9,1}(x_{1}-x_{4}) V_1(x_1,x_3) g_{8/9,1}(x_{1}-x_{5})p_1q_5 \mathds{1}_{a_{3,4}}\mathds{1}_{\mathcal{B}_5}\Psi
\\=&0\;.
\end{align*}
Thus the second summand in (\ref{threelines}) is zero.
Using as above H\"older and Sobolev under the $x_2$-integral of the third  summand in (\ref{threelines})  we get that $(\ref{cso})$ is bounded by
\begin{align*}
&CN^{-23/18}\|\phi\|_\infty^3+
CN^{3/2}\| \sqrt{V_1}(x_1,x_3)p_3\|_{op}\|g_{8/9,1}(x_{1}-x_{4})p_1\|_{op}
\\&+N\|g_{8/9,1}\|_3^2\|\nabla_4 \sqrt{V_1}(x_1,x_3) g_{8/9,1}(x_{1}-x_{4})\widehat{m}p_1q_3q_4 \mathds{1}_{\mathcal{B}_4}\Psi\|^2
\\\leq&  CN^{-4/9}\|\phi\|_\infty^2+CN^{-1}(\ln N)^{2/3}\| \sqrt{V_1}(x_1,x_3)p_1\|_{op}^2 \|\nabla_4\widehat{m}q_3q_4 \mathds{1}_{\mathcal{B}_4}\Psi\|^2
\\\leq&  CN^{-4/9}\|\phi\|_\infty^2(\ln N)^{2/3}\;.
\end{align*}


\item

Using symmetry and Lemma \ref{kombinatorik} (d) $\ds$ is bounded by
\begin{align*}\ds\leq& N^4 \Im\left(\laa\Psi ,q_{1}q_{2} g_{8/9,1}(x_{1}-x_{2})p_{1}q_{2}\left[\potdiff_1(x_3,x_4),
p_3p_4(\widehat{m}-\widehat{m}_2)\right] \Psi\raa\right)
\\&+2N^4\Im\left(\laa\Psi ,q_{1}q_{2} g_{8/9,1}(x_{1}-x_{2})p_{1}q_{2}\left[\potdiff_1(x_3,x_4),
p_3q_4(\widehat{m}-\widehat{m}_1)\right] \Psi\raa\right)
\\\leq& N^4 \Im\left(\laa\Psi
,\potdiff_1(x_3,x_4)p_3p_4q_{1}q_{2}\widehat{n}^{-2}_1g_{8/9,1}(x_{1}-x_{2})p_{1}q_{2}\widehat{n}^2_2
(\widehat{m}-\widehat{m}_2)  \Psi\raa\right)
\\&+N^4 \Im\left(\laa\Psi ,q_{1}q_{2}\widehat{n}^{-2}_1g_{8/9,1}(x_{1}-x_{2})p_{1}q_{2} \widehat{n}^2_2
(\widehat{m}-\widehat{m}_2)p_3p_4\potdiff_1(x_3,x_4)  \Psi\raa\right)
\\&+2N^4\Im\left(\laa\Psi
,\potdiff_1(x_3,x_4)p_3q_4q_{1}q_{2}\widehat{n}^{-2}_1g_{8/9,1}(x_{1}-x_{2})p_{1}q_{2}\widehat{n}^2_2
(\widehat{m}-\widehat{m}_1)  \Psi\raa\right)
\\&+2N^4\Im\left(\laa\Psi ,q_{1}q_{2}\widehat{n}^{-2}_1 g_{8/9,1}(x_{1}-x_{2})p_{1}q_{2} \widehat{n}^2_2
(\widehat{m}-\widehat{m}_1)p_3q_4\potdiff_1(x_3,x_4)  \Psi\raa\right)
\end{align*}
With Lemma \ref{kombinatorikb} it follows that
\begin{align*}\ds\leq& CN^4 \|p_3\potdiff_1(x_3,x_4)\Psi\|
\;\|g_{8/9,1}(x_{1}-x_{2})p_{1}\|_{op}\|\widehat{n}\widehat{n}^2_2
(\widehat{m}-\widehat{m}_2)  \Psi\|
\\&+CN^4 \|g_{8/9,1}(x_{1}-x_{2})p_{1}\|_{op} \| \widehat{n}\widehat{n}^2_2
(\widehat{m}-\widehat{m}_2)\|_{op}\|p_3\potdiff_1(x_3,x_4)  \Psi\|
\\&+CN^4\|p_3\potdiff_1(x_3,x_4)\Psi\|\;\|g_{8/9,1}(x_{1}-x_{2})p_{1}\|_{op}\|\widehat{n}\widehat{n}^2_2
(\widehat{m}-\widehat{m}_1)  \Psi\|
\\&+CN^4\|g_{8/9,1}(x_{1}-x_{2})p_{1}\|_{op}\|\widehat{n} \widehat{n}^2_2
(\widehat{m}-\widehat{m}_1)\|_{op}\|p_3\potdiff_1(x_3,x_4)  \Psi\|\;.
\end{align*}

Recall that $\widehat{m}=\widehat{m}^0-\widehat{m}^0_1$. Due to Lemma \ref{mabsch}
\begin{align*} m(k)-m(k+1)=m^0(k)-2m^0(k+1)+m^0(k+2)
\leq  CN^{-2}n(k+1)^{-3}\end{align*} and \begin{align*} m(k)-m(k+2)=&m^0(k)-m^0(k+1)-m^0(k+2)+m^0(k+3)\\\leq&  CN^{-2}n(k+1)^{-3}\;.\end{align*} It follows that
\begin{align*}
\ds\leq& CN^4\|p_3\potdiff_1(x_3,x_4)\Psi\|\;\|g_{8/9,1}(x_{1}-x_{2})p_{1}\|_{op} N^{-2}
\;.
\end{align*}
In view of (\ref{hilfe3}) we get that also $\ds$ is bounded by the right hand side of (\ref{secondadjcontrol}).

\end{enumerate}

\end{proof}

\subsection{Proof of the Theorem for $\beta=  1$}\label{secproof3}

\begin{corollary}\label{corproof3}
Let  $\beta=1$. There exists a functional $\Gamma:\LZN\otimes\LZ\to\mathbb{R}^+$, a functional
$\Gamma':\LZN\otimes\LZ\to\mathbb{R}$ and a constant $c>0$ such that
\begin{enumerate}
 \item $$|\dt\Gamma(\Psi_t,\phi_t)|\leq |\Gamma'(\Psi_t,\phi_t)|\;.$$

\item $$ c\alpha(\Psi,\phi)-CN^{-\eta}\leq  \Gamma(\Psi,\phi)\leq  \alpha(\Psi,\phi)+CN^{-\eta}$$
uniform in $\Psi,\phi$
\item There exists a functional $\kinf$ such that $$|\Gamma'(\Psi,\phi)|\leq  \CphiA\C$$ uniform in $\Psi,\phi$.
\end{enumerate}

\end{corollary}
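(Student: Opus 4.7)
The plan is to take the functional $\Gamma$ from Corollary \ref{corproof2} and subtract the second adjustment $\gamma$ defined in Definition \ref{lambda2}, so that the uncontrollable $\xi$ term arising in the derivative for $\beta=1$ is cancelled by the $+\xi$ contribution inside $\gamma'$.

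First I would set
\begin{align*}
\Gamma(\Psi,\phi) :=& \sum_{j+k\leq 5}2^{-j-k}\gamma_{j,k}(\Psi,\phi)-\gamma(\Psi,\phi)+|\mathcal{E}(\Psi)-\mathcal{E}^{GP}(\phi)|,\\
\Gamma'(\Psi,\phi) :=& \sum_{j+k\leq 5}2^{-j-k}\gamma'_{j,k}(\Psi,\phi)-\gamma'(\Psi,\phi)+\dt|\mathcal{E}(\Psi)-\mathcal{E}^{GP}(\phi)|.
\end{align*}
Property (a) is then immediate from Lemma \ref{firstadjlemma}, Lemma \ref{secondadjlemma} and the identity (\ref{ablediff}). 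For (b), the dominant piece $\laa\Psi,\widehat{m}^0\Psi\raa+|\mathcal{E}(\Psi)-\mathcal{E}^{GP}(\phi)|$ is comparable to $\alpha(\Psi,\phi)$ by Lemma \ref{mhut}(b); the remaining $\gamma_{j,k}$ with $j+k\geq1$ contribute at most $CN^{-\eta}$ by Lemma \ref{Rnorm}, and the additional $\gamma$ contributes at most $CN^{-\eta}\|\phi\|_\infty$ by Corollary \ref{replacealpa2}. Choosing $c>0$ slightly smaller than the constant $c_0$ from Lemma \ref{mhut} absorbs these corrections.

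The main work is in (c). I would reorganize $\Gamma'$ exactly as in the proof of Corollary \ref{corproof2}, isolating the problematic $j=k=0$ summand and the top level $j+k=5$ remainders:
\begin{align*}
\Gamma'(\Psi,\phi)=&\sum_{1\leq j+k\leq 5}2^{-j-k}\!\left(\gamma'_{j,k}+2\Im(\xi_{j-1,k})-\Im(\xi_{j,k})\right)\\
&+\sum_{j+k=5}2^{-5}\Im(\xi_{j,k})+\bigl(\gamma'_{0,0}-\Im(\xi_{0,0})\bigr)-\gamma'(\Psi,\phi)+\dt|\mathcal{E}(\Psi)-\mathcal{E}^{GP}(\phi)|.
\end{align*}
Using the introductory remark to section \ref{secb3} that Lemma \ref{firstadjest} remains valid for $\beta=1$ (thanks to the replacement of Lemma \ref{hnorms}(d) by Lemma \ref{qqqterm}, which controls the bad $q_1q_2$ terms in terms of $\|\mathds{1}_{\mathcal{A}_1}\nabla_1 q_1\Psi\|^2$), the first sum is bounded by $\mathcal{K}(\phi)\,\Cphi(\alpha+N^{-\eta})$. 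Proposition \ref{xiest} handles the $j+k=5$ remainder. The key cancellation now occurs in the middle term: by Corollary \ref{newalpha2} we have $\gamma'_{0,0}-\Im(\xi_{0,0})=\xi$, while Lemma \ref{secondadjlemma} gives $\gamma'=\as+\bs+\cs+\ds+\xi$, so
$$\bigl(\gamma'_{0,0}-\Im(\xi_{0,0})\bigr)-\gamma'=-(\as+\bs+\cs+\ds),$$
and Lemma \ref{secondadjcontrol} bounds this by $\Cphi\,\mathcal{K}(\phi)(\alpha(\Psi,\phi)+N^{-\eta})$. Finally $\dt|\mathcal{E}-\mathcal{E}^{GP}|\leq\alpha'_0$, which is controlled by $\mathcal{K}(\phi)\|\dot A\|_\infty\alpha$ via Lemma \ref{hnorms}(a). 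Collecting all contributions produces the desired estimate.

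The main obstacle, which has already been resolved in the preparatory sections, is the cancellation of the uncontrollable $\xi$: for $\beta=1$ one cannot appeal to Corollary \ref{newalpha2} because Lemma \ref{hnorms}(d) fails; instead the kinetic energy outside $\overline{\mathcal{A}}_1$ (Lemma \ref{kineticenergy}) together with the smoothing furnished by the zero energy scattering state $f_{8/9,1}$ in the definition of $\gamma$ is precisely what makes $\as,\bs,\cs,\ds$ controllable in Lemma \ref{secondadjcontrol}. Once this cancellation is implemented, the remainder of the argument is mechanical, and the Gr\o nwall step completing the theorem proceeds exactly as at the end of section \ref{secproof2}.
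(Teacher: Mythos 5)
Your proposal follows the paper's structure: adjust the $\Gamma$ of Corollary \ref{corproof2} by the second-adjustment functional $\gamma$ of Definition \ref{lambda2}, re-expand $\Gamma'$ by the telescoping of Corollary \ref{corproof2}(c), and observe that the surviving $\xi$ from the $j=k=0$ slot is eaten by the $\xi$ inside $\gamma'$, leaving $\as,\bs,\cs,\ds$, which Lemma \ref{secondadjcontrol} controls. The one difference from the paper is a sign: you set $\Gamma := \sum 2^{-j-k}\gamma_{j,k}-\gamma+|\mathcal{E}-\mathcal{E}^{GP}|$, whereas the paper writes $\Gamma := \gamma+\sum 2^{-j-k}\gamma_{j,k}+|\mathcal{E}-\mathcal{E}^{GP}|$. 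Your sign is the algebraically consistent one. Indeed, both $\gamma'_{0,0}-\Im(\xi_{0,0})=\xi$ (Corollary \ref{newalpha2}) and $\gamma'=\as+\bs+\cs+\ds+\xi$ (Definition \ref{lambda2}) carry a $+\xi$, so only $-\gamma'$ produces the cancellation $\Gamma'=-(\gamma'-\xi)+\text{controllable}$; with the paper's $+\gamma$ one gets a residual $2\xi$ instead. The paper's displayed identity $\Gamma'=\gamma'-\xi+\cdots$ in the proof of Corollary \ref{corproof3} is consistent with your $-\gamma$ choice (up to an overall sign, which is immaterial for absolute-value estimates), so the printed $+\gamma$ appears to be a typo. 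Since $|\gamma|\leq CN^{-\eta}\|\phi\|_\infty$ by Corollary \ref{replacealpa2}, the sign flip is harmless for part (b), and part (a) is unaffected. Your proof is correct, and slightly cleaner than the paper's presentation on this point; the rest of the argument (telescoping via Lemma \ref{firstadjest}, remainder via Proposition \ref{xiest}, energy term via $\alpha_0'$) coincides with the paper.
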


\begin{proof}
Set \begin{align*}\Gamma(\Psi,\phi):=&\gamma(\Psi,\phi)+\sum_{ j+k\leq  5}2^{-j-k}\gamma_{j,k}(\Psi,\phi)+|\mathcal{E}(\Psi)-\mathcal{E}^{GP}(\phi)|\hspace{1cm}\text{ and}
\\ \Gamma'(\Psi,\phi):=&\gamma'(\Psi,\phi)+\sum_{ j+k\leq  5}2^{-j-k}\gamma_{j,k}'(\Psi,\phi)+\dt|\mathcal{E}(\Psi)-\mathcal{E}^{GP}(\phi)|\;.\end{align*}

\begin{enumerate}
 \item follows from Lemma \ref{firstadjlemma} with Lemma \ref{secondadjlemma} and (\ref{ablediff}).
\item follows from Corollary \ref{corproof2} (b) together with Corollary \ref{replacealpa2}\;.
\item Remember that $\xi:=\gamma'_{0,0}-\Im\left(\xi_{0,0}\right)$ (see Corollary \ref{newalpha2}). Thus
\begin{align*}
\Gamma'(\Psi,\phi)=&\gamma'(\Psi,\phi)-\xi(\Psi,\phi)\\&+\sum_{1\leq  j+k\leq  5}2^{-j-k}\left(\gamma_{j,k}'(\Psi,\phi)+2\Im(\xi_{j-1,k})-\Im(\xi_{j,k})\right)
\\&+\sum_{j+k=5}2^{-5}\Im(\xi_{j,k})
+\dt|\mathcal{E}(\Psi)-\mathcal{E}^{GP}(\phi)|
\end{align*}
The first line is controlled by Lemma \ref{secondadjcontrol}.
The second line  by Lemma \ref{firstadjest}. The  third line is bounded by Proposition \ref{xiest} and (\ref{ablediff}).

\end{enumerate}
\end{proof}
From (b) and (c)   it follows that $$\Gamma'(\Psi,\phi)\leq  \CphiA\Cgamma$$ and we get via Gr\o nwall
$$\Gamma(\Psi_t,\phi_t)\leq  e^{\int_0^t (\|\phi_s\|_\infty+(\ln N)^{1/3}\|\nabla\phi_s\|_{6,loc}+\|\dot A_s\|_\infty)K(\phi_s)ds}(\Gamma(\Psi_0,\phi_0)+ N^{-\eta})\;.$$
For $\phi\in\mathcal{G}$ we have that $\sup_{s\in\mathbb{R}}\{\mathcal{K}(\phi_s)\}<\infty$.
Again using (b) we get the bound on $\alpha(\Psi_t,\phi_t)$ as stated in Theorem \ref{theorem}.

\section*{Acknowledgments}

I wish to thank Jean-Bernard Bru, Detlef D\"urr, J\"urg Fr\"ohlich, Antti Knowles and
Jakob Yngvason for many helpful comments on the paper.

\end{document}